\DeclareSymbolFontAlphabet{\mathbbm}{bbold}
\DeclareSymbolFontAlphabet{\mathbb}{AMSb}%
\newcommand{\bfA}{\mathbf{A}}
\newcommand{\bfC}{\mathbf{C}}
\newcommand{\bfI}{\mathbf{I}}
\newcommand{\bfQ}{\mathbf{Q}}
\newcommand{\bfT}{\mathbf{T}}
\newcommand{\bfb}{\mathbf{b}}
\newcommand{\bfe}{\mathbf{e}}
\newcommand{\bff}{\mathbf{f}}
\newcommand{\bfn}{\mathbf{n}}
\newcommand{\bfu}{\mathbf{u}}
\newcommand{\bfv}{\mathbf{v}}
\newcommand{\bfx}{\mathbf{x}}
\newcommand{\bftau}{{ \boldsymbol{\tau} }}
\newcommand{\bfxi}{{\boldsymbol{\xi}}}
\newcommand{\bfeta}{{\boldsymbol{\eta}}}
\newcommand{\bfzeta}{\boldsymbol{\zeta}}
\newcommand{\tr}{\operatorname{tr}}
\let\originalleft\left
\let\originalright\right
\renewcommand{\left}{\mathopen{}\mathclose\bgroup\originalleft}
\renewcommand{\right}{\aftergroup\egroup\originalright}
\journalname{}
\begin{document}

\title{Anisotropic Two-Dimensional, Plane Strain, and Plane Stress Models in Classical Linear Elasticity and Bond-Based Peridynamics 
}

\titlerunning{Two-Dimensional and Planar Classical Linear Elasticity and Peridynamics}        

\author{Jeremy Trageser        \and
        Pablo Seleson 
}


\institute{Jeremy Trageser \at
			Computer Science and Mathematics Division, \\
			Oak Ridge National Laboratory, \\
			One Bethel Valley Road, P.O.~Box 2008, MS-6211, Oak Ridge, TN 37831-6211 \\
              \email{trageserje@ornl.gov}           
           \and
           Pablo Seleson \at
           Computer Science and Mathematics Division, \\
			Oak Ridge National Laboratory, \\
			One Bethel Valley Road, P.O.~Box 2008, MS-6211, Oak Ridge, TN 37831-6211 \\
			\email{selesonpd@ornl.gov} 
}

\date{Received: date / Accepted: date}

\maketitle

\begin{abstract}
This paper concerns anisotropic two-dimensional and planar elasticity models within the frameworks of classical linear elasticity and the bond-based peridynamic theory of solid mechanics. We begin by reviewing corresponding models from the classical theory of linear elasticity. This review includes a new elementary and self-contained proof that there are exactly four material symmetry classes of the elasticity tensor in two dimensions. We also summarize classical plane strain and plane stress linear elastic models and explore their connections to the pure two-dimensional linear elastic model, relying on the definitions of the engineering constants. We then provide a novel formulation for pure two-dimensional anisotropic bond-based linear peridynamic models, which accommodates all four material symmetry classes. We further present innovative formulations for peridynamic plane strain and plane stress, which are obtained using direct analogies of the classical planar elasticity assumptions, and we specialize these formulations to a variety of material symmetry classes. The presented anisotropic peridynamic models are constrained by Cauchy's relations, which are an intrinsic property of bond-based peridynamic models. The uniqueness of the presented peridynamic plane strain and plane stress formulations in this work is that we directly reduce three-dimensional models to two-dimensional formulations, as opposed to matching two-dimensional peridynamic models to classical plane strain and plane stress formulations. This results in significant computational savings, while retaining the dynamics of the original three-dimensional bond-based peridynamic problems under suitable assumptions. 
\keywords{peridynamics \and linear elasticity \and plane strain \and plane stress \and anisotropy \and two-dimensional \and Cauchy's relations \and engineering constants}
\end{abstract}

\section{Introduction}

Modeling material failure and damage is an essential consideration in the engineering and materials science communities. In classical continuum mechanics, the governing equations are based on spatial derivatives, which poses difficulties when discontinuities such as cracks develop. As a remedy, the nonlocal peridynamic theory of solid mechanics was proposed in \cite{SILLING2000,Silling2007}. Rather than utilizing spatial derivatives, peridynamics employs spatial integrals, which model long-range interactions between material points. The idea of long-range interactions parallels the molecular dynamics theory and, in some sense, peridynamics could be considered a continuum version of molecular dynamics \cite{Seleson2014,Pablo2009}.

With the exception of various works on fiber-reinforced composites (e.g., \cite{askari2006,hu2011,kilic2009,oterkus2012peridynamic,oterkus2012}) and a few other works on materials with some degree of anisotropy (e.g., \cite{Azdoud2013,de2016,Ghajari2014,zhang2018supershear}), most peridynamic models in the literature describe isotropic materials; however, in practice, many materials are anisotropic \cite{Ting1996}. To facilitate analysis of a more diverse group of materials, a three-dimensional anisotropic peridynamic model was proposed in \cite{STG2019}. The model was shown to be capable of describing any of the eight material symmetry classes in three-dimensional classical linear elasticity. In two dimensions, there are exactly four material symmetry classes in classical linear elasticity \cite{He1996}. In this work, we provide an elementary self-contained proof of this fact and present a two-dimensional peridynamic model able to accommodate each of those four material symmetry classes.  

A common characteristic of nonlocal models is a penchant to be computationally expensive, particularly in higher dimensions. Even in classical (local) continuum mechanics, three-dimensional models are frequently simplified to two-dimensional formulations, e.g., plane strain and plane stress \cite{Ting1996}. Notably, peridynamic plane strain and plane stress models were considered in, e.g., \cite{Gerstle2005,Ghajari2014,Le2014,Sarego2016}. However, those works simply employ two-dimensional peridynamic models rather than placing assumptions on three-dimensional peridynamic models to derive two-dimensional formulations. In this work, we look in depth at two-dimensional simplifications of three-dimensional anisotropic bond-based linear peridynamic models. The simplifications are facilitated by the use of nonlocal analogues of the planar elasticity assumptions commonly appearing in classical linear elasticity. 

It is well established that bond-based peridynamic models can only describe a limited number of materials. In \cite{SILLING2000}, it was demonstrated that isotropic three-dimensional bond-based peridynamic models can only describe materials with a Poisson's ratio of $\frac{1}{4}$. This was accomplished by introducing the concept of peridynamic traction. A similar derivation leads to a Poisson's ratio of $\frac{1}{3}$ in two dimensions ({\em cf}. Appendix~\ref{appendix:poissonratio}). When anisotropy is considered, additional lesser-known restrictions are imposed by bond-based peridynamic models. These restrictions are due to the utilization of a pair potential in bond-based peridynamics. In molecular dynamics, it is well known that such a potential imposes Cauchy's relations on a corresponding linear elasticity model~\cite{Stakgold1950}, and in this work we show that Cauchy's relations are similarly imposed by bond-based linear peridynamic models as demonstrated in \cite{STG2019}. Moreover, we show these Cauchy's relations restrictions are an inherent property of the bond-based peridynamic model and independent of the definition of peridynamic traction. Furthermore, we show the Poisson's ratio restrictions of~$\frac{1}{4}$ in three dimensions and $\frac{1}{3}$ in two dimensions are a specific case of Cauchy's relations for isotropic materials. 

The organization of this paper is as follows. 
In Section \ref{sec:ClassicalLinearElasticity}, we review the classical theory of linear elasticity to provide a background for material anisotropy and planar elasticity (specificially plane strain and plane stress), in order to make connections with peridynamic models. Specifically, in Section~\ref{sec:ClassicalSymm}, we provide an elementary proof of the fact that there are exactly four material symmetry classes in two-dimensional classical linear elasticity. In Section~\ref{sec:puretwodimclass}, we review the equations of motion for classical pure two-dimensional linear elasticity. We continue in Section~\ref{sec:planemodelsclass} by deriving plane strain and plane stress formulations in classical linear elasticity, following \cite{Ting1996}, whose derivations we mimic in peridynamics.  We finalize the discussion of classical linear elasticity by reviewing engineering constants and Cauchy's relations in Sections~\ref{sec:techconsts} and \ref{sec:CauchyRelations}, respectively. In Section~\ref{sec:LinearElastPeri}, we delve into the bond-based peridynamic theory. In particular, in Section~\ref{sec:TwoDPeridynamicModels}, we develop a two-dimensional linear bond-based peridynamic model capable of describing materials in any of the four material symmetry classes appearing in two-dimensional classical linear elasticity. We continue in Section \ref{sec:planemodelsperi} by deriving anisotropic peridynamic formulations of plane strain and plane stress. Finally, conclusions are presented in Section~\ref{sec:Conclusions}.


\section{Classical linear elasticity}\label{sec:ClassicalLinearElasticity}

In classical linear elasticity, the stress tensor $\boldsymbol{\sigma}$ and strain tensor $\boldsymbol{\varepsilon}$ are related via the generalized Hooke's law \cite{Ting1996}:
\begin{equation}\label{eqn:stressstrainrelation}
\sigma_{ij} = C_{ijkl} \varepsilon_{kl},
\end{equation}
where $C_{ijkl}$ are the components of the fourth-order elasticity tensor $\mathbb{C}$ and Einstein summation convention for repeated indices is employed. Due to the symmetries of the stress and strain tensors, $\mathbb{C}$ inherits the minor symmetries:
\begin{equation}\label{eqn:minorsymmetries}
C_{ijkl} = C_{jikl} = C_{ijlk}.
\end{equation}
Furthermore, the relation between the strain energy density $W = \frac{1}{2} \sigma_{ij} \varepsilon_{ij}$ and~$\mathbb{C}$,
\[
C_{ijkl} = \frac{\partial^2 W}{ \partial \varepsilon_{ij} \partial \varepsilon_{kl} } = \frac{\partial^2 W}{ \partial \varepsilon_{kl} \partial \varepsilon_{ij} } = C_{klij},
\]
guarantees the major symmetry of $\mathbb{C}$:
\begin{equation}\label{eqn:majorsymmetry}
C_{ijkl} = C_{klij}.
\end{equation}

\begin{remark}
Note that the relation $C_{ijkl} = C_{ikjl}$ does not hold in general. However, employing a molecular description of a material where interactions between particles are described by pairwise potentials will apply these restrictions to the elasticity tensor~\cite{Stakgold1950}. 
These relations are commonly referred to as Cauchy's relations~\cite{Love1892} and are discussed in further detail in Section \ref{sec:CauchyRelations}.
\end{remark}

Due to the major and minor symmetries, we may utilize Voigt notation to express the fourth-order tensor $\mathbb{C}$ as a symmetric second-order tensor $\mathbf{C}$ and, similarly, represent the second-order tensors $\boldsymbol{\sigma}$ and $\boldsymbol{\varepsilon}$ as vectors. In this formulation, we express \eqref{eqn:stressstrainrelation} in $\mathbb{R}^3$ as

\begin{equation}\label{eqn:3DElastTensor}
\left[
\begin{array}{c}
\sigma_{11} \\
\sigma_{22} \\
\sigma_{33} \\
\sigma_{23} \\
\sigma_{13} \\
\sigma_{12}
\end{array}
\right] = 
\left[
\begin{array}{cccccc}
C_{1111} & C_{1122} & C_{1133} & C_{1123} & C_{1113} & C_{1112} \\
\cdot & C_{2222} & C_{2233} & C_{2223} & C_{2213} & C_{2212} \\
\cdot & \cdot & C_{3333} & C_{3323} & C_{3313} & C_{3312} \\
\cdot&\cdot &\cdot & C_{2323} & C_{2313} & C_{2312} \\
\cdot& \cdot& \cdot& \cdot& C_{1313} & C_{1312} \\
\cdot& \cdot& \cdot& \cdot& \cdot & C_{1212}
\end{array}
\right]
\left[
\begin{array}{c}
\varepsilon_{11} \\
\varepsilon_{22} \\
\varepsilon_{33} \\
2 \varepsilon_{23} \\
2 \varepsilon_{13} \\
2 \varepsilon_{12}
\end{array}
\right]
\end{equation}
and in $\mathbb{R}^2$ as

\begin{equation}\label{eqn:2DElastTensor}
\left[
\begin{array}{c}
\sigma_{11} \\
\sigma_{22} \\
\sigma_{12}
\end{array}
\right] = 
\left[
\begin{array}{ccc}
C_{1111} & C_{1122} & C_{1112} \\
\cdot & C_{2222} &  C_{2212} \\
\cdot & \cdot & C_{1212}  
\end{array}
\right]
\left[
\begin{array}{c}
\varepsilon_{11} \\
\varepsilon_{22} \\
2 \varepsilon_{12}
\end{array}
\right].
\end{equation}

The inverse relation of \eqref{eqn:stressstrainrelation} can be expressed as
\begin{equation}\label{eqn:strainstressrelation}
\varepsilon_{ij} = S_{ijkl} \sigma_{ij},
\end{equation}
where $S_{ijkl}$ are the components of the fourth-order compliance tensor $\mathbb{S}$. Similarly to $\mathbb{C}$, we may employ Voigt notation to express the fourth-order tensor $\mathbb{S}$ as a symmetric second-order tensor $\mathbf{S}$ \cite{Ting1996}. The strain-stress relation~\eqref{eqn:strainstressrelation} may be expressed in $\mathbb{R}^3$ as
\begin{equation}\label{eqn:compliancetensoreqn}
\left[
\begin{array}{c}
\varepsilon_{11} \\
\varepsilon_{22} \\
\varepsilon_{33} \\
2 \varepsilon_{23} \\
2 \varepsilon_{13} \\
2 \varepsilon_{12}
\end{array}
\right]
 = 
\left[
\begin{array}{cccccc}
S_{1111} & S_{1122} & S_{1133} & 2 S_{1123} & 2S_{1113} & 2S_{1112} \\
\cdot & S_{2222} & S_{2233} & 2 S_{2223} & 2S_{2213} & 2S_{2212} \\
\cdot & \cdot & S_{3333} & 2 S_{3323} & 2S_{3313} & 2S_{3312} \\
\cdot&\cdot &\cdot & 4S_{2323} & 4S_{2313} & 4S_{2312} \\
\cdot& \cdot& \cdot& \cdot& 4S_{1313} & 4S_{1312} \\
\cdot& \cdot& \cdot& \cdot& \cdot & 4S_{1212}
\end{array}
\right]
\left[
\begin{array}{c}
\sigma_{11} \\
\sigma_{22} \\
\sigma_{33} \\
\sigma_{23} \\
\sigma_{13} \\
\sigma_{12}
\end{array}
\right]
\end{equation}
and in $\mathbb{R}^2$ as
\begin{equation*}
\left[
\begin{array}{c}
\varepsilon_{11} \\
\varepsilon_{22} \\
2 \varepsilon_{12}
\end{array}
\right]
 = 
\left[
\begin{array}{cccccc}
S_{1111} & S_{1122} & 2S_{1112} \\
\cdot & S_{2222}  & 2S_{2212} \\
\cdot & \cdot & 4S_{1212} 
\end{array}
\right]
\left[
\begin{array}{c}
\sigma_{11} \\
\sigma_{22} \\
\sigma_{12}
\end{array}
\right].
\end{equation*}

In classical linear elasticity, the strain tensor $\boldsymbol{\varepsilon}$ is related to the displacement field $\mathbf{u}$ through the relation
\begin{equation}\label{eqn:straindisplacementref}
\varepsilon_{ij} = \frac{1}{2} \left(\frac{\partial u_i}{\partial x_j} + \frac{\partial u_j}{\partial x_i} \right).
\end{equation}

We are now able to provide the equation of motion in classical linear elasticity~\cite{timoshenko1934theory}. Given a body $\mathcal{B} \subset \mathbb{R}^d$, where $d$ is the dimension, the equation of motion for a material point $\bfx \in \mathcal{B}$ at time $t \geqslant 0$ is given by
\begin{equation}\label{eqn:eqnofmotionclassicalvectorform}
    \rho(\bfx) \ddot{\bfu}(\bfx,t) = \nabla \cdot \boldsymbol{\sigma}(\bfx,t) + \bfb(\bfx,t),
\end{equation}
where $\rho$ is the mass density, $\ddot{\bfu}$ is the second derivative in time of the displacement field $\bfu$, and $\mathbf{b}$ is a prescribed body force density field. In component form, \eqref{eqn:eqnofmotionclassicalvectorform} may be written as ({\em cf.}~\eqref{eqn:stressstrainrelation} and~\eqref{eqn:straindisplacementref})
\begin{equation}\label{eqn:eqnofmotionclassical}
\begin{split}
\rho(\bfx) \ddot{u}_i(\bfx,t) =& \frac{\partial \sigma_{ij}}{\partial x_j}(\bfx,t) + b_i(\bfx,t) = C_{ijkl} \frac{\partial \varepsilon_{kl}}{\partial x_j}(\bfx,t) + b_i(\bfx,t) \\
=& \frac{C_{ijkl}}{2} \left( \frac{\partial^2 u_k}{\partial x_j \partial x_l}(\bfx,t) + \frac{\partial^2 u_l}{\partial x_j \partial x_k}(\bfx,t) \right) + b_i(\bfx,t) \\
=& C_{ijkl} \frac{\partial^2 u_k}{\partial x_j \partial x_l}(\bfx,t) + b_i(\bfx,t),
\end{split}  
\end{equation}
where we used the minor symmetries $C_{ijlk} = C_{ijkl}$ ({\em cf.}~\eqref{eqn:minorsymmetries}).
\begin{remark}
For the sake of brevity, we often omit the arguments $\bfx$ and $t$.
\end{remark}

\subsection{Material symmetry classes in two-dimensional classical linear elasticity}\label{sec:ClassicalSymm}

Suppose we have a fourth-order tensor $\mathbb{A}$ and a second-order tensor $\mathbf{T}$ with components $A_{ijkl}$ and $T_{pq}$, respectively, relative the basis $\left\{\bfe_{i} \right\}_{i=1,\ldots,d}$, where $d$ is the dimension. Then, we define $\mathbb{A}[\mathbf{T}]$ to be the second-order tensor with components given by $A_{ijkl}T_{kl}$ (this is sometimes called a double contraction and denoted by $\mathbb{A}{:}\mathbf{T}$ \cite{holzapfelnonlinear}). With this formulation, the components of $\mathbb{C}$ relative to the basis $\left\{\bfe_{i} \right\}_{i=1,\ldots,d}$ are

\begin{equation}\label{eqn:genCrelativetobasis}
C_{ijkl} = \tr \left\{ (\bfe_i \otimes \bfe_j ) \mathbb{C} [ \bfe_k \otimes \bfe_l] \right\},
\end{equation}
where $\tr$ denotes the trace. A transformation between orthonormal bases of $\mathbb{R}^d$, $\left\{\bfe_i \right\}_{i=1,\ldots,d}$ and $\left\{\bfe_i' \right\}_{i=1,\ldots,d}$ may be represented by an orthogonal matrix $\bfQ$, where the components are given by 
\begin{equation}\label{eqn:Qdef}
Q_{ij} = \bfe_{i}' \cdot \bfe_j.
\end{equation}
We call $\mathbf{Q}$ a symmetry transformation of $\mathbb{C}$ when the components of $\mathbb{C}$ are invariant under the transformation. In Definition \ref{def:symmtran} we formalize this concept.

\begin{definition}\label{def:symmtran}
An orthogonal transformation $\bfQ$ between orthonormal bases $\left\{ \mathbf{e}_i \right\}_{i=1,\ldots,d} $ and $\left\{ \mathbf{e}_{i}'\right\}_{i=1,\ldots,d}$ ({\em cf.}~\eqref{eqn:Qdef}) is a symmetry transformation of $\mathbb{C}$ if
\begin{equation}\label{eqn:symmtran}
\tr \left\{ (\bfe_i \otimes \bfe_j ) \mathbb{C} [\bfe_k \otimes \bfe_l] \right\} = \tr \left\{ (\bfe_i' \otimes \bfe_j') \mathbb{C} [\bfe_k' \otimes \bfe_l'] \right\}.
\end{equation}
Equivalently, one may write
\begin{equation}\label{eqn:symmtranbasic}
C_{ijkl} = Q_{ip} Q_{jq} Q_{kr} Q_{ls} C_{pqrs}.
\end{equation}
\end{definition}

In view of \eqref{eqn:symmtranbasic}, one may show that if $\bfQ_1$ and $\bfQ_2$ are symmetry transformations of $\mathbb{C}$ then $\bfQ_1^{-1}$ (as well as $\bfQ_2^{-1}$) and $\bfQ_1 \bfQ_2$ are also symmetry transformations of $\mathbb{C}$ \cite{Ting1996}. Clearly, the identity transformation, $\bfI$, is also a symmetry transformation of $\mathbb{C}$. Therefore, the set of symmetry transformations of $\mathbb{C}$ forms a group (see e.g.~\cite{dummit2004abstract}), which we call the symmetry group of $\mathbb{C}$ and denote it by $\mathcal{G}_{\mathbb{C}}$. We call the set of symmetry groups that are equivalent up to a change in orientation, the symmetry class of $\mathbb{C}$. Given a material described by $\mathbb{C}$, its material symmetry class is the corresponding symmetry class of $\mathbb{C}$.

In two dimensions, it is well known that every orthogonal transformation is either a reflection or a rotation. 
For convenience, we recall the corresponding transformation matrices. For a (counterclockwise) rotation by an angle $\theta$ about the origin, the corresponding transformation matrix is given by
\begin{align}\label{eqn:RotMat}
&\mathbf{Rot}(\theta) = \left[ 
\begin{array}{cc}
\cos(\theta) & -\sin(\theta) \\
\sin(\theta) & \cos(\theta)
\end{array}
\right].
\end{align}

For a reflection about the line through the origin making an angle of $\theta$ with the $x$-axis, the corresponding transformation matrix is given by
\begin{align}\label{eqn:RefMat}
&\mathbf{Ref}(\theta) = \left[
\begin{array}{cc}
\cos(2\theta) & \sin(2\theta) \\
\sin(2\theta) & -\cos(2\theta)
\end{array}
\right]. 
\end{align}

From the periodicity of sine and cosine, it is clear that all rotations may be represented by a rotation of $\theta \in [0,2\pi)$, while all reflections may be represented by a reflection through a line making an angle of $\theta \in [0,\pi)$ with the $x$-axis. We recall a useful identity for later use:
\begin{equation}\label{eqn:identityrefreftorot}
\mathbf{Ref}(\phi) \mathbf{Ref}(\theta) = \mathbf{Rot}(2[\phi - \theta]).
\end{equation}
Since reflections are their own inverses, we additionally have
\begin{equation}\label{eqn:identityrefrotref}
 \mathbf{Rot}(2[\phi - \theta]) \mathbf{Ref}(\theta) = \mathbf{Ref}(\phi).
\end{equation}

It is well known that there are exactly eight symmetry classes of $\mathbb{C}$ in three-dimensional classical linear elasticity~\cite{CHADWICK2001,Forte1996,Ting1996}. In the two-dimensional case, there are exactly four symmetry classes of $\mathbb{C}$ \cite{He1996}. Our first result provides an alternate proof that there are exactly four symmetry classes of $\mathbb{C}$ in two dimensions. The proof utilizes elementary methods with minimal machinery from abstract algebra.

\begin{theorem}\label{thm:symmetryclasses}
Up to a change in orientation, there are exactly four symmetry groups of the elasticity tensor $\mathbb{C}$ in two dimensions: oblique, rectangular, square, and isotropic. The corresponding elasticity tensors and group generators for each symmetry group are given by:

\begin{tabular}{lcc}
{\bf Symmetry Group} & {\bf Elasticity Tensor} & {\bf Group Generators} \\[0.1in]
Oblique & $\left[ \begin{array}{ccc}
C_{1111} & C_{1122} & C_{1112} \\
\cdot & C_{2222} & C_{2212} \\
\cdot & \cdot & C_{1212}
\end{array} \right]$ & $\left\{-\bfI \right\}$ \\[0.3in]
Rectangular & $\left[ \begin{array}{ccc}
C_{1111} & C_{1122} & 0 \\
\cdot & C_{2222} & 0 \\
\cdot & \cdot & C_{1212}
\end{array} \right]$ & $\left\{-\bfI,\mathbf{Ref}(0) \right\}$ \\[0.3in] 
Square & $\left[ \begin{array}{ccc}
C_{1111} & C_{1122} & 0 \\
\cdot & C_{1111} & 0 \\
\cdot & \cdot & C_{1212}
\end{array} \right]$ & $\left\{-\bfI,\mathbf{Ref}(0), \mathbf{Ref} (\frac{\pi}{4}) \right\}$ \\[0.3in]
Isotropic & $\left[ \begin{array}{ccc}
C_{1111} & C_{1122} & 0 \\
\cdot & C_{1111} & 0 \\
\cdot & \cdot & \frac{C_{1111}-C_{1122}}{2}
\end{array} \right]$ & $\left\{\mathbf{Ref}(\theta): \theta \in [0,\pi) \right\}$
\end{tabular}

\end{theorem}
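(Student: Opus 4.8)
The plan is to reduce the classification to elementary group theory in $O(2)$, exploiting how a fourth-order tensor transforms under an orthogonal change of basis. First I would observe that, since $C_{ijkl}$ carries four copies of $\mathbf{Q}$ in \eqref{eqn:symmtranbasic} and $(-1)^{4}=1$, the point reflection $-\bfI=\mathbf{Rot}(\pi)$ is a symmetry transformation of \emph{every} elasticity tensor; hence $\{\bfI,-\bfI\}\subseteq\mathcal{G}_{\mathbb{C}}$ always. Because in two dimensions every orthogonal transformation is a rotation or a reflection, $\mathcal{G}_{\mathbb{C}}$ is determined by which rotations $\mathbf{Rot}(\theta)$ and which reflections $\mathbf{Ref}(\phi)$ it contains, and I would handle the rotations first and the reflections second.

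The crux is to insert $\mathbf{Q}=\mathbf{Rot}(\theta)$ into \eqref{eqn:symmtranbasic} and write out the six transformed Voigt components of \eqref{eqn:2DElastTensor}. Grouping terms by trigonometric frequency, the six invariance conditions decouple: the pair $(C_{1111}-C_{2222},\,C_{1112}+C_{2212})$ transforms like $(\cos2\theta,\sin2\theta)$ (a ``frequency-$2$'' covariant, with modulus I will call $|Z_2|$), the pair $(C_{1111}+C_{2222}-2C_{1122}-4C_{1212},\,C_{1112}-C_{2212})$ transforms like $(\cos4\theta,\sin4\theta)$ (a ``frequency-$4$'' covariant, modulus $|Z_4|$), and the remaining two independent combinations are rotation invariants. (Writing $Z_2,Z_4$ as complex numbers scaling by $e^{2i\theta}$, $e^{4i\theta}$ makes the bookkeeping transparent, but a direct trigonometric computation also works.) Since a trigonometric polynomial that does not vanish identically has only finitely many zeros, $\mathcal{G}_{\mathbb{C}}\cap SO(2)$ is either all of $SO(2)$ or a finite cyclic group; as it contains $\mathbf{Rot}(\pi)$, and $\mathbf{Rot}(2\pi/n)\in\mathcal{G}_{\mathbb{C}}$ forces $Z_2=0$ once $n\geq4$ and $Z_4=0$ once $n\geq6$ --- after which \emph{every} rotation is a symmetry --- the only finite possibilities are $C_2$ and $C_4$. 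Thus $\mathcal{G}_{\mathbb{C}}\cap SO(2)$ is $C_2$ when $Z_2\neq0$, is $C_4$ when $Z_2=0\neq Z_4$ (so that $C_{1111}=C_{2222}$ and $C_{1112}=-C_{2212}$), and is $SO(2)$ when $Z_2=Z_4=0$, the latter case forcing the full isotropic form $C_{1111}=C_{2222}$, $C_{1112}=C_{2212}=0$, $C_{1212}=\tfrac12(C_{1111}-C_{1122})$.

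Finally I would adjoin the reflections using \eqref{eqn:identityrefreftorot}--\eqref{eqn:identityrefrotref}. The product of two distinct reflections is a nontrivial rotation, so if $\mathcal{G}_{\mathbb{C}}\cap SO(2)=C_n$ then $\mathcal{G}_{\mathbb{C}}$ equals $C_n$ or the dihedral group $D_n$ obtained by adjoining a single reflection $\mathbf{Ref}(\phi_0)$ (whereupon it contains exactly the coset $\mathbf{Ref}(\phi_0)\,C_n$), and if $\mathcal{G}_{\mathbb{C}}\cap SO(2)=SO(2)$ then $\mathcal{G}_{\mathbb{C}}$ is $SO(2)$ or $O(2)$. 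A change of orientation (conjugation by a rotation) places any present reflection at $\phi_0=0$, where $\mathbf{Ref}(0)=\mathrm{diag}(1,-1)$ flips the sign of exactly $C_{1112}$ and $C_{2212}$, so $\mathbf{Ref}(0)\in\mathcal{G}_{\mathbb{C}}$ is equivalent to $C_{1112}=C_{2212}=0$. Combining with the three rotation cases: rotation part $C_2$ with no reflection gives $\mathcal{G}_{\mathbb{C}}=\{\pm\bfI\}$ and the generic (oblique) tensor; rotation part $C_2$ with a reflection gives, after orienting, $C_{1112}=C_{2212}=0$ and $\mathcal{G}_{\mathbb{C}}=\langle-\bfI,\mathbf{Ref}(0)\rangle$ (rectangular); rotation part $C_4$ forces (using $Z_4\neq0$ and a suitable rotation to standard position) $C_{1112}=C_{2212}=0$, so $\mathbf{Ref}(0)$ is automatically present and $\mathcal{G}_{\mathbb{C}}=\langle-\bfI,\mathbf{Ref}(0),\mathbf{Ref}(\tfrac\pi4)\rangle$, with nothing strictly between this and $O(2)$ (square); and rotation part $SO(2)$ gives the isotropic tensor, fixed by every reflection, so $\mathcal{G}_{\mathbb{C}}=O(2)$ (isotropic). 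To conclude, I would exhibit the tensors in the theorem's table as explicit realizations, and note the four groups have orders $2,4,8,\infty$, hence are pairwise inequivalent up to orientation; so there are exactly four symmetry classes.

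The step I expect to be the main obstacle is the frequency-decomposition computation: carrying out \eqref{eqn:symmtranbasic} for $\mathbf{Q}=\mathbf{Rot}(\theta)$ and recognizing the invariant and covariant combinations that diagonalize the rotation action on the six-dimensional space of two-dimensional elasticity tensors. Once those combinations are in hand, the remainder is routine manipulation inside $O(2)$ together with the identities \eqref{eqn:identityrefreftorot}--\eqref{eqn:identityrefrotref} and the two sign-bookkeeping facts noted above ($-\bfI$ fixes every component; $\mathbf{Ref}(0)$ flips exactly $C_{1112}$ and $C_{2212}$).
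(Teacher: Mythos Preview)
Your proposal is correct and takes a genuinely different route from the paper. The paper proceeds \emph{reflections first}: Lemma~\ref{lemma:reflections} classifies the symmetry groups generated by $-\bfI$ together with reflections (building up oblique $\to$ rectangular $\to$ square $\to$ isotropic by adjoining reflection axes one at a time and working out the resulting constraints on $\mathbb{C}$ directly), and then Lemma~\ref{lemma:rotations} shows by a case-by-case computation that adjoining a rotation to any of these four groups never produces a new class. Your approach is \emph{rotations first}: you decompose the six-dimensional space of plane elasticity tensors into $SO(2)$-isotypic pieces (two scalar invariants plus the frequency-$2$ and frequency-$4$ covariants $Z_2,Z_4$), read off immediately that the rotation subgroup must be $C_2$, $C_4$, or $SO(2)$, and only then adjoin reflections via the standard $O(2)$ structure and \eqref{eqn:identityrefreftorot}--\eqref{eqn:identityrefrotref}. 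The one non-obvious fact both arguments must confront---that a tensor with fourfold rotational symmetry automatically admits a reflection symmetry, so that a pure $C_4$ class does not occur---drops out cleanly in your framework (rotate to make $Z_4$ real; combined with $Z_2=0$ this forces $C_{1112}=C_{2212}=0$, and $\mathbf{Ref}(0)$ is then a symmetry), whereas the paper obtains it by an explicit Intermediate Value Theorem construction of the reflection angle $\alpha$ in Case~2 of the oblique part of Lemma~\ref{lemma:rotations}. Your harmonic-decomposition argument is more systematic and makes the underlying representation theory transparent; the paper's argument is deliberately more elementary, trading conceptual economy for self-containedness, in line with its stated aim.
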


\begin{proof}
Suppose $\mathcal{G}_{\mathbb{C}}$ is the symmetry group of $\mathbb{C}$. We first show that, up to a change in orientation, $\mathcal{G}_{\mathbb{C}}$ is one of the four symmetry groups from Lemma~\ref{lemma:reflections} below. By \eqref{eqn:symmtranbasic}, it is clear that $-\bfI \in \mathcal{G}_{\mathbb{C}}$. Consequently, at least one of the four symmetry groups described in Lemma \ref{lemma:reflections} is a subgroup of $\mathcal{G}_{\mathbb{C}}$. Let $\mathcal{N}$ be the most symmetric (largest quantity of reflection transformations) such subgroup of $\mathcal{G}_{\mathbb{C}}$. By definition, $\mathcal{N} \subseteq \mathcal{G}_{\mathbb{C}}$. We then show that $\mathcal{G}_{\mathbb{C}} \subseteq \mathcal{N}$, which implies $\mathcal{G}_{\mathbb{C}} = \mathcal{N}$. By Lemma \ref{lemma:reflections}, up to a change in orientation, every reflection in $\mathcal{G}_{\mathbb{C}}$ is contained in~$\mathcal{N}$. Furthermore, by Lemma \ref{lemma:rotations}, we know every rotation in $\mathcal{G}_{\mathbb{C}}$ is contained in $\mathcal{N}$. Since orthogonal transformations in two dimensions are either rotations or reflections, every element of $\mathcal{G}_{\mathbb{C}}$ is contained in $\mathcal{N}$, i.e. $\mathcal{G}_{\mathbb{C}} \subseteq \mathcal{N}$.  \qed 
\end{proof}

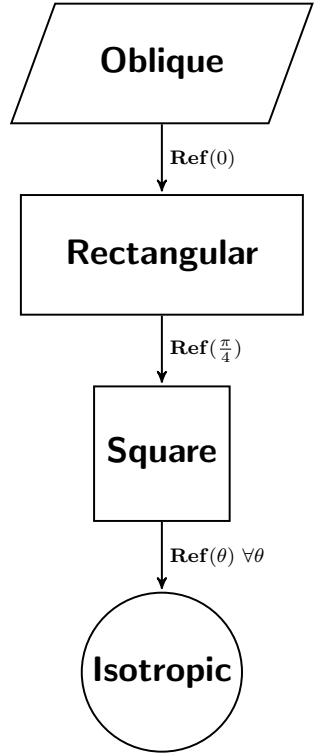
\begin{figure}
\begin{center}
\begin{tabular}{cc}
\resizebox{0.25\textwidth}{!}{
\begin{tikzpicture}[->,>=stealth',shorten >=1pt,auto,node distance=3cm,thick,main node/.style={circle,draw,font=\sffamily\Large\bfseries}, square/.style={regular polygon,regular polygon sides=4},oblique node/.style={font=\sffamily\Large\bfseries},use Hobby shortcut]
  \node[main node, trapezium,trapezium left angle=70,trapezium right angle=-70, inner sep = 18] (obl) {Oblique};
  \node[main node, rectangle, inner sep = 18] (rec) [below=1cm of obl] {Rectangular};
  \node[main node, square, inner sep = -3] (sqr) [below=1cm of rec] {Square};
  \node[main node] (iso) [below=1cm of sqr] {Isotropic};

  \path[every node/.style={font=\sffamily\small}]
    (obl) edge node {$\mathbf{Ref} (0)$} (rec)
    (rec) edge node {$\mathbf{Ref} ( \frac{\pi}{4} )$} (sqr)
    (sqr) edge node {$\mathbf{Ref}(\theta)$ $\forall \theta$} (iso);
\end{tikzpicture}}
\end{tabular}
\caption{The four symmetry groups (up to a change in orientation) of the elasticity tensor in two dimensions.}
\end{center}
\end{figure}

\begin{lemma}\label{lemma:reflections}
Up to a change in orientation, there are only four symmetry groups of the elasticity tensor $\mathbb{C}$ generated by reflections and $-\boldsymbol{I}$: \\[0.2in]
\begin{tabular}{llc}
{\bf Symmetry  Group} & {\bf Lines of Reflection Symmetry} & {\bf Group Generators} \\[0.1in]
Oblique & No lines of reflection symmetry & $ \left\{ -\bfI \right\}$ \\[0.1in]
Rectangular & Two lines of reflection symmetry & $\left\{-\bfI,\mathbf{Ref}(0) \right\}$ \\[0.1in]
Square & Four lines of reflection symmetry & $\left\{-\bfI,\mathbf{Ref}(0), \mathbf{Ref}\left(\frac{\pi}{4} \right) \right\}$ \\[0.1in]
Isotropic & All lines of reflection symmetry & $ \left\{\mathbf{Ref}(\theta) : \theta \in [0,\pi) \right\}$
\end{tabular}

\end{lemma}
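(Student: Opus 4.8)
The plan is to classify the possible sets of reflection symmetries of $\mathbb{C}$ directly, using only the composition identities \eqref{eqn:identityrefreftorot}--\eqref{eqn:identityrefrotref}, the group axioms, and one rigidity fact about four-index tensors in the plane; the four groups then emerge as the cases of this classification. Write $R\subseteq[0,\pi)$ for the set of angles $\theta$ with $\mathbf{Ref}(\theta)\in\mathcal{G}_{\mathbb{C}}$. If $R=\emptyset$, the group generated by the reflections in $\mathcal{G}_{\mathbb{C}}$ together with $-\mathbf{I}$ is $\{\mathbf{I},-\mathbf{I}\}$ — the oblique group — so assume $R\neq\emptyset$. I would first note that, by \eqref{eqn:identityrefrotref}, $\mathbf{Rot}(2\beta)\,\mathbf{Ref}(\psi)=\mathbf{Ref}(\psi+\beta)$ for all angles $\psi,\beta$, and by \eqref{eqn:identityrefreftorot}, $\mathbf{Ref}(\phi)\,\mathbf{Ref}(\theta)=\mathbf{Rot}(2[\phi-\theta])$. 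Since $-\mathbf{I}=\mathbf{Rot}(\pi)\in\mathcal{G}_{\mathbb{C}}$ (clear from \eqref{eqn:symmtranbasic}), the first identity with $\beta=\tfrac{\pi}{2}$ shows $\theta\in R\Rightarrow\theta+\tfrac{\pi}{2}\in R$ (angles read modulo $\pi$, as $\mathbf{Ref}$ is $\pi$-periodic): reflection lines come in perpendicular pairs. And for $\theta,\phi\in R$ the second identity puts $\mathbf{Rot}(2[\phi-\theta])$ in $\mathcal{G}_{\mathbb{C}}$, and then the first identity with $\beta=\phi-\theta$ and any $\psi\in R$ puts $\mathbf{Ref}(\psi+\phi-\theta)$ in $\mathcal{G}_{\mathbb{C}}$; hence $\psi+\phi-\theta\in R$ whenever $\theta,\phi,\psi\in R$.

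Next I would extract the group structure. Fix $\theta_0\in R$ and put $H:=\{\rho-\theta_0:\rho\in R\}\subseteq\mathbb{R}/\pi\mathbb{Z}$. The second closure property says $0\in H$ and $H$ is closed under addition and negation, i.e. $H$ is a subgroup of the circle group $\mathbb{R}/\pi\mathbb{Z}$ with $R=\theta_0+H$; the first says $\tfrac{\pi}{2}\in H$. A change of orientation translates $R$ (conjugating $\mathcal{G}_{\mathbb{C}}$ by a rotation rotates every reflection line), so we may assume $\theta_0=0$, hence $R=H$. Then $0,\tfrac{\pi}{2}\in R$, so $-\mathbf{I}=\mathbf{Ref}(\tfrac{\pi}{2})\,\mathbf{Ref}(0)$ already lies in the group generated by the reflections, and that group is exactly $\{\mathbf{Ref}(h):h\in H\}\cup\{\mathbf{Rot}(2h):h\in H\}$, so it is determined by $H$.

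Now I would invoke the rigidity fact: \emph{if $\mathbb{C}$ admits a rotational symmetry $\mathbf{Rot}(\alpha)$ with $\alpha\notin\tfrac{\pi}{2}\mathbb{Z}$, then $\mathbb{C}$ is isotropic}, and an isotropic $\mathbb{C}$ has $\mathcal{G}_{\mathbb{C}}=O(2)$. If every element of $H$ lies in $\tfrac{\pi}{4}\mathbb{Z}$, then $H\subseteq\langle\tfrac{\pi}{4}\rangle=\{0,\tfrac{\pi}{4},\tfrac{\pi}{2},\tfrac{3\pi}{4}\}$; the only subgroups of this group containing $\tfrac{\pi}{2}$ are $\langle\tfrac{\pi}{2}\rangle$ and $\langle\tfrac{\pi}{4}\rangle$, which yield respectively the rectangular group (two reflection lines, generators $\{-\mathbf{I},\mathbf{Ref}(0)\}$) and the square group (four reflection lines, generators $\{-\mathbf{I},\mathbf{Ref}(0),\mathbf{Ref}(\tfrac{\pi}{4})\}$). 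Otherwise some $h\in H$ has $h\notin\tfrac{\pi}{4}\mathbb{Z}$, hence $2h\notin\tfrac{\pi}{2}\mathbb{Z}$ and $\mathbf{Rot}(2h)\in\mathcal{G}_{\mathbb{C}}$ forces $\mathbb{C}$ isotropic, so $\mathcal{G}_{\mathbb{C}}=O(2)$, $R=[0,\pi)$, and we obtain the isotropic group. Together with the oblique case this gives exactly the four groups claimed.

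The main obstacle is the rigidity fact. I would prove it by harmonic decomposition: under $SO(2)$ the six-dimensional space of plane elasticity tensors (carrying the major and minor symmetries) splits as $2\,\mathbb{R}_0\oplus V_2\oplus V_4$, where $\mathbb{R}_0$ is the trivial representation and $V_k$ is the plane on which $\mathbf{Rot}(\theta)$ acts as rotation by $k\theta$ — in particular no harmonic of order exceeding $4$ appears. A rotation $\mathbf{Rot}(\alpha)$ fixes the $V_4$-component only when $4\alpha\in2\pi\mathbb{Z}$, i.e. $\alpha\in\tfrac{\pi}{2}\mathbb{Z}$, and fixes the $V_2$-component only when $\alpha\in\pi\mathbb{Z}$; so $\alpha\notin\tfrac{\pi}{2}\mathbb{Z}$ forces the $V_2$- and $V_4$-parts of $\mathbb{C}$ to vanish, leaving $\mathbb{C}$ in the isotropic subspace $2\,\mathbb{R}_0$, which is fixed by all of $O(2)$. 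For a fully self-contained argument one can instead impose $\mathbf{Ref}(0)$ and $\mathbf{Ref}(\tfrac{\pi}{4})$ via \eqref{eqn:symmtranbasic} to reduce $\mathbf{C}$ to the square form, then impose one further reflection $\mathbf{Ref}(\alpha)$ with $\alpha\notin\tfrac{\pi}{4}\mathbb{Z}$ and check that the resulting off-diagonal conditions on the transformed Voigt matrix can hold only if $C_{1212}=\tfrac{1}{2}(C_{1111}-C_{1122})$, i.e. only for the isotropic tensor; that trigonometric bookkeeping, though elementary, is where care is needed.
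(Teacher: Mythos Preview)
Your argument is correct and takes a genuinely different route from the paper. The paper proceeds case by case: it imposes one reflection, computes the resulting constraints on the components $C_{ijkl}$ directly from \eqref{eqn:symmtranbasic}, obtains the rectangular tensor, then adds a second non-orthogonal reflection and solves the explicit trigonometric system \eqref{eqn:twoplanerelations}--\eqref{eqn:fourrefrelation} to split into the square and isotropic cases. Your approach instead extracts the group structure first --- showing that the reflection angles form a coset $\theta_0+H$ of a subgroup $H\leqslant\mathbb{R}/\pi\mathbb{Z}$ containing $\tfrac{\pi}{2}$ --- and then invokes the harmonic decomposition $2\,\mathbb{R}_0\oplus V_2\oplus V_4$ of the elasticity-tensor space under $SO(2)$ to conclude that any rotation by an angle outside $\tfrac{\pi}{2}\mathbb{Z}$ forces isotropy, which pins $H$ down to $\langle\tfrac{\pi}{2}\rangle$, $\langle\tfrac{\pi}{4}\rangle$, or all of $\mathbb{R}/\pi\mathbb{Z}$. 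What the paper buys is a fully elementary, self-contained treatment that also yields the explicit restrictions on the $C_{ijkl}$ for each class as a by-product; what you buy is a structural explanation of \emph{why} only four classes appear (the tensor has no harmonic of order above $4$), and a template that generalizes to higher-order tensors.

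One small slip in your fallback ``self-contained'' paragraph: you propose first imposing $\mathbf{Ref}(0)$ and $\mathbf{Ref}(\tfrac{\pi}{4})$ to reach the square form, then adding $\mathbf{Ref}(\alpha)$ with $\alpha\notin\tfrac{\pi}{4}\mathbb{Z}$. But in the situation where $H\not\subseteq\langle\tfrac{\pi}{4}\rangle$, you do not yet know $\tfrac{\pi}{4}\in H$, so you cannot assume square symmetry at the outset. The correct elementary route --- which is exactly what the paper does in \eqref{eqn:twoplanerelations}--\eqref{eqn:fourrefrelation} --- is to impose only $\mathbf{Ref}(0)$ (rectangular form) and then $\mathbf{Ref}(h)$ for your $h\notin\tfrac{\pi}{4}\mathbb{Z}$; the resulting $2\times2$ linear system in $a=C_{1111}-C_{1122}-2C_{1212}$ and $b=C_{1122}-C_{2222}+2C_{1212}$ then forces $a=b=0$, hence isotropy. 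This does not affect your main harmonic-decomposition argument, which stands on its own.
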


\begin{proof} 

\textbf{No lines of reflection symmetry} \\
By \eqref{eqn:symmtranbasic}, the group generated by 
\begin{equation}\label{eqn:genoblique}
\left\{-\bfI \right\}
\end{equation} 
is a subgroup of any symmetry group of $\mathbb{C}$. This group, $\left\{\bfI, -\bfI \right\}$, imposes no restrictions on the elasticity tensor, i.e., \eqref{eqn:symmtranbasic} holds under any transformation in the group. We call the symmetry group generated by \eqref{eqn:genoblique} the \textit{oblique symmetry group} and the corresponding elasticity tensor is given by~({\em cf.}~\eqref{eqn:2DElastTensor})
\begin{equation}\label{eqn:tritensor}
\bfC = \left[ \begin{array}{ccc}
C_{1111} & C_{1122} & C_{1112} \\
\cdot & C_{2222} & C_{2212} \\
\cdot & \cdot & C_{1212}
\end{array} \right].
\end{equation}

\textbf{One line of reflection symmetry} \\
We now consider the implications of adding a reflection transformation to the subgroup $\left\{\bfI, -\bfI \right\}$. 
Without loss of generality, we choose an orthonormal basis $\left\{ \mathbf{e}_{i} \right\}_{i=1,2}$ so that the line of reflection symmetry coincides with the $x$-axis.  In this basis, we may write the corresponding reflection transformation as ({\em cf.}~\eqref{eqn:RefMat})
\begin{equation}\label{eqn:Ref0Mat}
\mathbf{Ref}(0) = \left[
\begin{array}{cc}
1 & 0 \\
0 & -1
\end{array}
\right].
\end{equation}
We are interested in the restrictions the group generated by 
\begin{equation}\label{eqn:generatorsforrectangular}
\left\{-\bfI, \mathbf{Ref}(0)\right\}
\end{equation} 
imposes on the elasticity tensor $\mathbb{C}$. 

In order for~\eqref{eqn:symmtranbasic} to be satisfied for $\mathbf{Q} = \mathbf{Ref}(0)$, any $C_{ijkl}$ where $2$ appears an odd number of times in the indices requires $C_{ijkl} = - C_{ijkl}$ and thus $C_{ijkl} = 0$. Consequently,
\begin{equation}\label{eqn:recelastrelations}
C_{1112} = C_{2212} = 0.
\end{equation}
The transformation in \eqref{eqn:generatorsforrectangular} impose no additional restrictions on $\mathbb{C}$. We call the symmetry group generated by \eqref{eqn:generatorsforrectangular} the \textit{rectangular symmetry group} and the corresponding elasticity tensor is given by
\begin{equation}\label{eqn:recttensor}
\bfC = \left[ \begin{array}{ccc}
C_{1111} & C_{1122} & 0 \\
\cdot & C_{2222} & 0 \\
\cdot & \cdot & C_{1212}
\end{array} \right].
\end{equation}

Notice $\mathbf{Rot}(\pi)$, a rotation by $\pi$, is equivalent to $-\bfI$ ({\em cf.}~\eqref{eqn:RotMat}). By \eqref{eqn:identityrefrotref}, we have
\begin{equation}\label{eqn:rotreftoref}
\mathbf{Rot}(\pi) \mathbf{Ref}(\theta) = \mathbf{Ref} \left( \theta + \frac{\pi}{2} \right).
\end{equation}
Since $-\bfI$ is in every symmetry group of $\mathbb{C}$, a line of reflection symmetry automatically induces a second line of reflection symmetry orthogonal to the first line of reflection symmetry by \eqref{eqn:rotreftoref}. In particular, this implies implies $\mathbf{Ref}\left( \frac{\pi}{2} \right)$ belongs to the rectangular symmetry group since the reflection transformation $\mathbf{Ref}(\bf0)$ belongs to the group. Thus, to consider a symmetry group distinct from the rectangular symmetry group, we next consider a group containing two non-orthogonal lines of reflection symmetry.

\textbf{Two non-orthogonal lines of reflection symmetry and isotropy} \\
Consider a group generated by $-\bfI$ and two non-orthogonal lines of reflection symmetry. Suppose the angle between the two lines of reflection symmetry is given by $\theta$. We further require $\theta \neq \frac{k \pi}{2}$ for $k \in \mathbb{Z}$, so that the lines of reflection symmetry are distinct and not orthogonal to each other. We may choose an orthonormal basis $\left\{ \bfe_i \right\}_{i=1,2}$ and the rotated basis $\left\{\bfe'_i \right\}_{i=1,2}$ such that
\begin{equation*}
\begin{split}
&\bfe_1' = \cos(\theta) \bfe_1 + \sin(\theta) \bfe_2, \\
&\bfe_2' = -\sin(\theta) \bfe_1 + \cos(\theta) \bfe_2, 
\end{split}
\end{equation*}

\begin{figure}
\begin{center}
\begin{tikzpicture}[scale=0.8]
\draw[dashed,black] (-5,0)--(5,0) node[above]{First line of reflection};
\draw[dashed,black] (-4.3301270,-5/2)--(4.3301270,5/2) node[above]{Second line of reflection};

\draw[->,thick] (-4,0)--(4,0) node[below]{$x$};
\draw[->,thick] (0,-4)--(0,4) node[left]{$y$};

\draw[->,ultra thick,blue] (0,0)--(2,0) node[below]{$\bfe_1$};
\draw[->,ultra thick,blue] (0,0)--(0,2) node[right]{$\bfe_2$};

\draw[->,ultra thick,red] (0,0)--(-3/2*2/3,2.598076*2/3) node[right]{$\, \bfe_2'$};
\draw[->,ultra thick,red] (0,0)--(2.598076*2/3,3/2*2/3) node[below]{$\bfe_1'$};

\node[color=black] at (0.93,0.27) {$\theta$};

\begin{scope}    
\path[clip] (0,0) -- (3,0) -- (2.598076,3/2);
\node[circle,draw=black,thick, minimum size=35pt] at (0,0) (circ) {};
\end{scope}

\end{tikzpicture}

\caption{Basis $\left\{\bfe_i \right\}_{i=1,2}$ and rotated basis $\left\{\bfe_i' \right\}_{i=1,2}$ for two non-orthogonal lines of reflection symmetry.} \label{fig:RotatedBasis}
\end{center}
\end{figure}

where $\bfe_1$ and $\bfe_1'$ coincide with the two lines of reflection symmetry, see Figure \ref{fig:RotatedBasis}. In the basis $\left\{\bfe_i \right\}_{i=1,2}$, the group is generated by
\begin{equation}\label{eqn:genrec}
\left\{-\bfI, \mathbf{Ref}(0), \mathbf{Ref}(\theta)\right\}.
\end{equation}

From the argument for one line of reflection, we know \eqref{eqn:recelastrelations} holds in both bases ({\em cf.} \cite{CHADWICK2001,Ting1996} for similar arguments in three dimensions). Consequently, by~\eqref{eqn:symmtran} we find

\begin{subequations}\label{eqn:twoplanerelations}
\begin{align}
\begin{split}
0 = C_{1112}' =& \tr \left\{ (\bfe_1' \otimes \bfe_1') \mathbb{C} [ \bfe_1' \otimes \bfe_2'] \right\} = \tr \left\{ (\bfe_1 \otimes \bfe_1) \mathbb{C} [ \bfe_1 \otimes \bfe_2] \right\} \\
=& \cos(\theta) \sin(\theta)[\cos^2(\theta) (C_{1111}-C_{1122} - 2 C_{1212} ) + \sin^2(\theta) ( C_{1122} - C_{2222} + 2C_{1212} ) ]
\end{split} \\
\begin{split}
0 = C_{2212}' =& \tr \left\{ (\bfe_2' \otimes \bfe_2') \mathbb{C} [ \bfe_1' \otimes \bfe_2'] \right\} = \tr \left\{ (\bfe_2 \otimes \bfe_2) \mathbb{C} [ \bfe_1 \otimes \bfe_2] \right\} \\
=& \cos(\theta) \sin(\theta)[ \sin^2(\theta) (C_{1111} - C_{1122} - 2 C_{1212}) + \cos^2(\theta) (C_{1122} - C_{2222}+ 2C_{1212} )].
\end{split}
\end{align}
\end{subequations}
We now consider all solutions of system \eqref{eqn:twoplanerelations}. First, recall $\sin(\theta)$ and $\cos(\theta)$ are nonzero since $\theta \neq \frac{k \pi}{2}$ with $k \in \mathbb{Z}$. Dividing both equations in \eqref{eqn:twoplanerelations} by $\cos(\theta) \sin(\theta)$, we are left with 
\begin{subequations}\label{eqn:fourrefrelation}
\begin{align}
&0 = a \cos^2(\theta) + b \sin^2(\theta), \label{eqn:fourrefrelationa} \\
&0 = a \sin^2(\theta) + b \cos^2(\theta), \label{eqn:fourrefrelationb}
\end{align}
\end{subequations}
where $a := C_{1111}-C_{1122} - 2 C_{1212}$ and $b := C_{1122} - C_{2222} + 2C_{1212}$. Summing~\eqref{eqn:fourrefrelationa} and~\eqref{eqn:fourrefrelationb}, and applying the Pythagorean identity, $\sin^2(\theta) + \cos^2(\theta)=~1$, we deduce $a+b = 0$ and therefore $a = -b$ or $a = b = 0$. 

If $a = -b \neq 0$, then by \eqref{eqn:fourrefrelation} we have $\cos^2(\theta) = \sin^2(\theta)$, which implies $\theta = \frac{\pi}{4}$ or $\frac{3 \pi}{4}$. Moreover, by the definitions of $a$ and $b$ we also have $C_{1111} = C_{2222}$. Recall from \eqref{eqn:rotreftoref} that a line of reflection symmetry induces a second line of reflection symmetry perpendicular to the first. Thus, the choice of $\theta = \frac{\pi}{4}$ or $\frac{3 \pi}{4}$ is irrelevant as either transformation generates the other (note $\mathbf{Ref} \left( \frac{5 \pi}{4} \right)$ is equivalent to $\mathbf{Ref} \left( \frac{\pi}{4} \right)$ by \eqref{eqn:RefMat}). The transformation $\mathbf{Ref}\left(\frac{\pi}{4} \right)$ imposes no additional restrictions on $\mathbb{C}$.  We call the symmetry group generated by 
\begin{equation}\label{eqn:gensqr}
\left\{-\bfI,\mathbf{Ref}(0), \mathbf{Ref}\left(\frac{\pi}{4} \right) \right\}
\end{equation}
the \textit{square symmetry group} and the corresponding elasticity tensor has the restrictions
\begin{equation}\label{eqn:sqrelastrelations}
C_{1112} = C_{2212} = 0 \text{ and } C_{1111} = C_{2222}.
\end{equation}
The elasticity tensor corresponding to the square symmetry group is given by
\begin{equation}\label{eqn:sqrtensor}
\bfC = \left[ \begin{array}{ccc}
C_{1111} & C_{1122} & 0 \\
\cdot & C_{1111} & 0 \\
\cdot & \cdot & C_{1212}
\end{array} \right].
\end{equation}

Alternatively, if $a=b=0$, we have from $a=0$, $C_{1212} = \frac{C_{1111}-C_{1122}}{2}$, and substituting this into $b=0$, we obtain $C_{1111} = C_{2222}$. The corresponding elasticity tensor has the restrictions
\begin{equation}\label{eqn:isoelastrelations}
C_{1112} = C_{2212} = 0, C_{1111} = C_{2222}, \text{ and } C_{1212} = \frac{C_{1111}-C_{1122}}{2},
\end{equation}
which produces the elasticity tensor
\begin{equation}\label{eqn:isotensor}
\bfC = \left[ \begin{array}{ccc}
C_{1111} & C_{1122} & 0 \\
\cdot & C_{1111} & 0 \\
\cdot & \cdot & \frac{C_{1111}-C_{1122}}{2}
\end{array} \right].
\end{equation}
Moreover, given \eqref{eqn:isotensor} one may show \eqref{eqn:symmtranbasic} holds for $\bfQ = \mathbf{Ref}(\theta)$ with any choice of $\theta$. Thus, this elasticity tensor remains invariant under any choice of line of reflection. We call this group the \textit{isotropic symmetry group}. The generators are given by
\begin{equation}\label{eqn:genisoref}
\left\{\mathbf{Ref}(\theta) : \theta \in [0,\pi) \right\}.
\end{equation}
The last piece to tie up the proof is to consider adding an additional line of reflection to the square symmetry group. In this case, we would have two lines of reflection which do not intersect at an angle of $\frac{\pi}{4}$ or $\frac{3 \pi}{4}$, and consequently~\eqref{eqn:fourrefrelation} would yield only the trivial solution $a=b=0$. In this case, we immediately obtain \eqref{eqn:genisoref} as a set of generators for the group. \qed
\end{proof}

The next lemma shows that adding rotations to the symmetry groups in Lemma \ref{lemma:reflections} does not produce new symmetry groups and therefore those symmetry groups actually describe all symmetry groups of $\mathbb{C}$.

\begin{lemma}\label{lemma:rotations}

The set of symmetry groups described in Lemma~\ref{lemma:reflections} is closed under the introduction of rotation symmetry transformations of $\mathbb{C}$. More specifically, if $\mathbb{C}$ is invariant under a rotation transformation $\mathbf{Q}$ and the members of a symmetry group $\mathcal{G}$ from Lemma~\ref{lemma:reflections}, then the symmetry group generated by $\left\{ \mathbf{Q} \right\} \cup \mathcal{G}$ is one of the four symmetry groups from Lemma~\ref{lemma:reflections}.
\end{lemma}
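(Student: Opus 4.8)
The plan is to analyze what a rotation symmetry transformation $\mathbf{Q} = \mathbf{Rot}(\phi)$ can do to each of the four symmetry groups from Lemma~\ref{lemma:reflections}, using the key fact (already established via \eqref{eqn:symmtranbasic}) that $-\bfI \in \mathcal{G}_{\mathbb{C}}$ always, together with the reflection-composition identities \eqref{eqn:identityrefreftorot} and \eqref{eqn:identityrefrotref}. The first observation I would record is that for the three non-oblique groups, which already contain at least one reflection $\mathbf{Ref}(\theta_0)$, composing $\mathbf{Rot}(\phi)$ with $\mathbf{Ref}(\theta_0)$ produces, by \eqref{eqn:identityrefrotref}, a new reflection $\mathbf{Ref}(\theta_0 + \phi/2)$. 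Hence adjoining a rotation to any group containing a reflection is equivalent to adjoining a new line of reflection symmetry, which by Lemma~\ref{lemma:reflections} lands us back in one of the four listed groups (possibly a larger one). So the content of the lemma reduces almost entirely to the oblique case.

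For the oblique group $\{\bfI, -\bfI\}$: I would suppose $\mathbb{C}$ is invariant under a nontrivial rotation $\mathbf{Rot}(\phi)$ with $\phi \in (0,2\pi)$, $\phi \neq \pi$, and derive the restrictions that \eqref{eqn:symmtranbasic} places on the Voigt components $C_{1111}, C_{2222}, C_{1122}, C_{1112}, C_{2212}, C_{1212}$ of the oblique tensor \eqref{eqn:tritensor}. The cleanest route is to expand the components $C_{1112}'$ and $C_{2212}'$ in the rotated basis exactly as in \eqref{eqn:twoplanerelations}, but now the basis transformation is a rotation by $\phi$ rather than being tied to a reflection line, so I would also need the transformed equation coming from invariance of, say, $C_{1111}$ or from the companion component $C_{1211}'$; invariance forces a system whose only solutions are $a = C_{1111} - C_{1122} - 2C_{1212} = 0$, $b = C_{1122} - C_{2222} + 2C_{1212} = 0$ (the isotropic relations \eqref{eqn:isoelastrelations}), except possibly in degenerate cases when $\phi$ is a special angle such as $\phi = \pi/2$ or $2\pi/3$, which I would check directly and show still force the same relations. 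In short: any rotation symmetry other than $\pm\bfI$ forces $\mathbb{C}$ to be isotropic, so the group generated is the isotropic group. Then, combining this with the first observation, a rotation adjoined to the rectangular or square group also forces either no change or a jump straight to isotropic.

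Finally I would assemble the case analysis: (i) rotation added to oblique — either the rotation is $\pm\bfI$ (no change, still oblique) or $\mathbb{C}$ becomes isotropic; (ii) rotation added to rectangular or square — via \eqref{eqn:identityrefrotref} this is the same as adding a reflection line, so by Lemma~\ref{lemma:reflections} we stay rectangular/square or pass to a more symmetric listed group; (iii) rotation added to isotropic — isotropic is already invariant under all orthogonal transformations, so nothing changes. In every case the resulting group is one of the four from Lemma~\ref{lemma:reflections}, proving the claim. The main obstacle I anticipate is the explicit oblique-plus-rotation computation: verifying that a general rotation angle forces the full isotropic relations and, in particular, carefully handling the finitely many special angles (multiples of $\pi/2$, and $\pi/3$, $2\pi/3$ type angles arising from vanishing trigonometric coefficients) where the linear system \eqref{eqn:twoplanerelations}-type equations could a priori degenerate — one must confirm that even there the constraints collapse to $a = b = 0$ rather than admitting a spurious intermediate symmetry class.
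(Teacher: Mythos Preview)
Your reduction of the isotropic, square, and rectangular cases to Lemma~\ref{lemma:reflections} via \eqref{eqn:identityrefrotref} is correct and matches the paper's approach. The gap is in the oblique case: your claim that \emph{any rotation symmetry other than $\pm\bfI$ forces $\mathbb{C}$ to be isotropic} is false when the rotation is by $\pi/2$ (equivalently $3\pi/2$). A concrete counterexample is the canonical square tensor \eqref{eqn:sqrtensor} with $C_{1212}\neq\tfrac{1}{2}(C_{1111}-C_{1122})$: it is invariant under $\mathbf{Rot}(\pi/2)$ but is not isotropic. More generally, rotation-by-$\pi/2$ invariance imposed on an oblique tensor yields only $C_{1111}=C_{2222}$ and $C_{1112}=-C_{2212}$; the equations corresponding to \eqref{eqn:oblsimplifiedC1111C1112} are vacuous when $\sin(2\theta)=0$, so $a=C_{1111}-C_{1122}-2C_{1212}$ and $C_{1112}$ remain unconstrained.

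What actually happens in this degenerate case (and what the paper does) is subtler: one must show that the group generated by $\{-\bfI,\mathbf{Rot}(\pi/2)\}$, together with the constraints it forces on $\mathbb{C}$, is the \emph{square} group after a suitable change of orientation. The paper does this by exhibiting, via an intermediate value argument, an angle $\alpha\in(0,\pi/4)$ satisfying \eqref{assump:thetaref} such that $\mathbf{Ref}(\alpha)$ is a hidden symmetry of $\mathbb{C}$; then $\mathbf{Rot}(\pi/2)\mathbf{Ref}(\alpha)=\mathbf{Ref}(\alpha+\pi/4)$ is also a symmetry, and the group is square in the basis rotated by~$\alpha$. This is exactly the obstacle you flagged as ``special angles'' but dismissed --- the $\pi/2$ case genuinely does not collapse to $a=b=0$, and handling it requires finding the hidden reflection, not just checking a linear system. (See also the Remark following the proof: adding a rotation to oblique yields isotropic \emph{or square}, never rectangular.)
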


\begin{proof}
Let $\mathbb{C}$ be invariant under a rotation transformation $\mathbf{Q}$ and the members of a symmetry group $\mathcal{G}$ from Lemma \ref{lemma:reflections}. If $\mathbf{Q} \in \mathcal{G}$ then $\left\{ \mathbf{Q} \right\} \cup \mathcal{G} = \mathcal{G}$ and the conclusion of the lemma is immediate. Consequently, we suppose $\mathbf{Q} \notin \mathcal{G}$. We discuss each symmetry group $\mathcal{G}$: isotropic, square, rectangular, and oblique.

\textbf{Isotropic}: In this case $\mathcal{G}$ is generated by~\eqref{eqn:genisoref} and consequently contains all rotation transformations. This is easily seen through the closure property of groups and relation~\eqref{eqn:identityrefreftorot}. Ergo, $\mathbf{Q} \in \mathcal{G}$, which contradicts the assumption $\mathbf{Q} \notin \mathcal{G}$.

\textbf{Square}: In this case $\mathcal{G}$ is generated by~\eqref{eqn:gensqr}. Recall the composition of a reflection and a rotation is a reflection in two dimensions ({\em cf.}~\eqref{eqn:identityrefrotref}). Since $\mathbf{Q} \notin \mathcal{G}$, the closure property of groups implies the group generated by $\left\{\mathbf{Q} \right\} \cup \mathcal{G}$ contains a reflection transformation not in $\mathcal{G}$. Consequently, by Lemma~\ref{lemma:reflections}, the symmetry group generated by $\left\{ \mathbf{Q} \right\} \cup \mathcal{G}$ is the isotropic symmetry group. 

\textbf{Rectangular}: In this case $\mathcal{G}$ is generated by \eqref{eqn:genrec}. As in the square case, since $\mathbf{Q} \notin \mathcal{G}$, the group generated by $\left\{\mathbf{Q} \right\} \cup \mathcal{G}$ contains a reflection transformation not in $\mathcal{G}$. Therefore, by Lemma~\ref{lemma:reflections}, the isotropic symmetry group or square symmetry group is a subgroup of the group generated by $\left\{ \mathbf{Q} \right\} \cup \mathcal{G}$. Furthermore, the square and isotropic cases above guarantee the group generated by $\left\{ \mathbf{Q} \right\} \cup \mathcal{G}$ is either the square symmetry group or the isotropic symmetry group.


\textbf{Oblique}: In this case $\mathcal{G}$ is generated by \eqref{eqn:genoblique}. We further suppose the rotation transformation $\mathbf{Q}$ corresponds to a counterclockwise rotation by $\theta \neq k \pi$ for $k \in \mathbb{Z}$ since rotations by $k \pi$ are already in the group generated by \eqref{eqn:genoblique} ({\em cf.}~\eqref{eqn:RotMat}). Since $\mathbb{C}$ is invariant under the rotation transformation $\mathbf{Q}$, we have by \eqref{eqn:symmtranbasic} that
\begin{subequations}\label{eqn:CijklRelationsObl}
\begin{align}
\begin{split}
C_{1111} ={}& C_{1111}\cos^4(\theta) -4C_{1112} \sin(\theta) \cos^3(\theta) + 2 C_{1122} \cos^2(\theta) \sin^2(\theta) + 4C_{1212} \cos^2(\theta) \sin^2(\theta) \\
&-4C_{2212} \sin^3(\theta) \cos(\theta) +  C_{2222}\sin^4(\theta),  
\end{split} \label{eqn:CijklRelationsObla} \\
\begin{split}
C_{1112} ={}& C_{1111}\sin(\theta)\cos^3(\theta) + C_{1112}(4 \sin^4(\theta)-5\sin^2(\theta)+1) \\
&+C_{1122} (\sin^3(\theta)\cos(\theta) - \sin(\theta)\cos^3(\theta)) +2 C_{1212}(\sin^3(\theta)\cos(\theta) - \sin(\theta)\cos^3(\theta)) \\
&+ C_{2212}(3 \sin^2(\theta)-4 \sin^4(\theta)) - C_{2222} \sin^3(\theta), \cos(\theta) 
\end{split} \label{eqn:CijklRelationsOblb} \\
\begin{split}
C_{1122} ={}& C_{1111} \cos^2(\theta) \sin^2(\theta) + 2C_{1112}(\sin(\theta) \cos^3(\theta) - \sin^3(\theta) \cos(\theta)) \\
&+ C_{1122} (\sin^4(\theta) + \cos^4(\theta)) - 4 C_{1212} \cos^2(\theta) \sin^2(\theta) \\
&+ 2C_{2212}(\sin^3(\theta)\cos(\theta)-\sin(\theta)\cos^3(\theta)) + C_{2222} \cos^2(\theta) \sin^2(\theta),  
\end{split} \\
\begin{split}
C_{1212} ={}& C_{1111} \cos^2(\theta) \sin^2(\theta) +2C_{1112}(\sin(\theta)\cos^3(\theta)-\sin^3(\theta)\cos(\theta)) \\
&-2C_{1122} \cos^2(\theta) \sin^2(\theta) +C_{1212}(\sin^2(\theta)-\cos^2(\theta))^2 \\
&+2C_{2212}(\sin^3(\theta)\cos(\theta)-\sin(\theta)\cos^3(\theta))+C_{2222}\cos^2(\theta) \sin^2(\theta), 
\end{split} \\
\begin{split}
C_{2212} ={}& C_{1111} \sin^3(\theta) \cos(\theta) + C_{1112}(3 \sin^2(\theta)-4\sin^4(\theta)) \\
&+ C_{1122}(\sin(\theta)\cos^3(\theta)-\sin^3(\theta)\cos(\theta))+2C_{1212}(\sin(\theta)\cos^3(\theta)-\sin^3(\theta)\cos(\theta))\\
&+C_{2212}(4\sin^4(\theta)-5\sin^2(\theta)+1)-C_{2222} \sin(\theta)\cos^3(\theta), 
\end{split} \\
\begin{split}
C_{2222} ={}& C_{1111} \sin^4(\theta) +4C_{1112} \sin^3(\theta)\cos(\theta)  + 2C_{1122} \cos^2(\theta)\sin^2(\theta) + 4C_{1212} \cos^2(\theta) \sin^2(\theta) \\
&+4C_{2212} \sin(\theta)\cos^3(\theta) + C_{2222} \cos^4(\theta). \label{eqn:CijklRelationsOblf}
\end{split}
\end{align}
\end{subequations}

Take the difference of the equations for $C_{1111}$ and $C_{2222}$ in \eqref{eqn:CijklRelationsObl} and then simplify to find
\begin{equation}\label{eqn:c1111-c2222}
0 =  (C_{1111} - C_{2222}) \sin^2(\theta) + 2(C_{2212}+C_{1112})\sin(\theta)\cos(\theta). 
\end{equation}
Similarly, sum the equations for $C_{1112}$ and $C_{2212}$ and then simplify to find
\begin{equation}\label{eqn:c1112+c2212}
0 = (C_{1111}-C_{2222})\sin(\theta)\cos(\theta) - 2(C_{1112}+C_{2212})\sin^2(\theta).
\end{equation}
Since $\theta \neq k \pi$, we know $\sin(\theta) \neq 0$ and we may divide \eqref{eqn:c1111-c2222} and \eqref{eqn:c1112+c2212} by $\sin(\theta)$ to obtain
\begin{subequations}\label{eqn:simplifyc1111-c2222andC2212+C1112}
\begin{align}
&0 = (C_{1111}-C_{2222}) \sin(\theta) + 2(C_{1112}+C_{2212}) \cos(\theta), \label{eqn:simplifyc1111-c2222andC2212+C1112a} \\
&0 = (C_{1111}-C_{2222}) \cos(\theta) - 2(C_{1112}+C_{2212}) \sin(\theta). \label{eqn:simplifyc1111-c2222andC2212+C1112b} 
\end{align}
\end{subequations}
Multiplying \eqref{eqn:simplifyc1111-c2222andC2212+C1112a} by $\cos(\theta)$ and \eqref{eqn:simplifyc1111-c2222andC2212+C1112b} by $\sin(\theta)$, and taking the difference of the two equations, yields
\begin{equation}\label{eqn:c1112=c2212symmproof}
0 =  2(C_{1112}+C_{2212}) \Rightarrow C_{1112} = - C_{2212}.
\end{equation}
Taking the difference of \eqref{eqn:CijklRelationsObla} and \eqref{eqn:CijklRelationsOblf}, imposing \eqref{eqn:c1112=c2212symmproof}, and then simplifying produces
\begin{equation}\label{eqn:c1111=c2222symmproof}
(C_{1111} - C_{2222}) \sin^2(\theta) = 0 \Rightarrow C_{1111} = C_{2222}.
\end{equation}
The implication in \eqref{eqn:c1111=c2222symmproof} follows by recalling $\theta \neq k \pi$ for $k \in \mathbb{Z}$.  

Substituting \eqref{eqn:c1112=c2212symmproof} and \eqref{eqn:c1111=c2222symmproof} into \eqref{eqn:CijklRelationsObla} and \eqref{eqn:CijklRelationsOblb} and simplifying results in
\begin{subequations}\label{eqn:oblsimplifiedC1111C1112}
\begin{align}
&0 =  (C_{1111}-C_{1122}-2C_{1212})  \sin^2(2 \theta)+4C_{1112}\cos(2\theta)\sin(2\theta),  \label{eqn:oblsimplifiedC1111C1112a} \\
&0 =  -4 C_{1112}\sin^2(2\theta) + (C_{1111}-C_{1122}-2C_{1212})\cos(2\theta)\sin(2\theta). \label{eqn:oblsimplifiedC1111C1112b}
\end{align}
\end{subequations}
We will consider two cases: $\sin(2\theta) \neq 0$ and $\sin(2\theta) = 0$.

\textbf{Case 1}: Let us consider $\sin(2 \theta) \neq 0$.  Multiply~\eqref{eqn:oblsimplifiedC1111C1112a} by $\cot(2\theta)$ and then subtract the result by \eqref{eqn:oblsimplifiedC1111C1112b} to find ({\em cf.} \eqref{eqn:c1111=c2222symmproof})
\begin{equation}\label{eqn:C1112=C2212=0symm}
0 = 4C_{1112} \left( \sin^2(2\theta) + \cos^2(2\theta) \right) \Rightarrow 0 = C_{1112} = - C_{2212}.
\end{equation}
Imposing \eqref{eqn:C1112=C2212=0symm} on \eqref{eqn:oblsimplifiedC1111C1112a} results in
\begin{equation*}
0 = (C_{1111}-C_{1122}-2C_{1212})\sin^2(2 \theta).
\end{equation*}
Since $\sin(2\theta) \neq 0$, we are able to conclude 
\begin{equation}\label{eqn:c1212=(c1111-c1122)/2symm}
C_{1111}-C_{1122}-2C_{1212} = 0 \Rightarrow C_{1212} = \frac{C_{1111}-C_{1122}}{2}.
\end{equation}
Imposing \eqref{eqn:c1111=c2222symmproof}, \eqref{eqn:C1112=C2212=0symm}, and \eqref{eqn:c1212=(c1111-c1122)/2symm} on  the elasticity tensor \eqref{eqn:tritensor}, we obtain \eqref{eqn:isotensor} and thus the group generated by $\mathcal{G} \cup \left\{ \bfQ \right\}$ is the isotropic group.

\textbf{Case 2}: We suppose $\sin(2 \theta) = 0$, i.e., $\theta = \frac{(2k+1) \pi}{2}$ (recall $\theta \neq k \pi$ by assumption) for some $k \in \mathbb{Z}$. We further suppose $C_{1112} \neq 0$ as otherwise $\eqref{eqn:tritensor}$ reduces to the square tensor ({\em cf.} \eqref{eqn:c1112=c2212symmproof}) and we have already treated the case where a rotation was added to the square group. Note that 
\begin{equation*}
f(x): = 2(\cot(2x)-\tan(2 x) ) = 2 \left(\frac{\cos(2x)}{\sin(2x)} - \frac{\sin(2x)}{\cos(2x)} \right)
\end{equation*}
is continuous on $\left( 0, \frac{\pi}{4}\right)$. Moreover, 
\begin{equation*}
\lim_{x \rightarrow 0} f(x) = -\infty \quad \text{and} \quad \lim_{x \rightarrow \frac{\pi}{4}} f(x) = \infty,
\end{equation*}
and thus $f(x)$ has range $(-\infty, \infty)$ on the domain $\left(0, \frac{\pi}{4} \right)$ by the Intermediate Value Theorem. Consequently, we may find an $\alpha \in \left( 0, \frac{\pi}{4} \right)$ such that
\begin{equation}\label{assump:thetaref}
\frac{C_{1111} - C_{1122} - 2 C_{1212}}{C_{1112}} = 2\left( \cot(2\alpha) - \tan(2\alpha) \right).
\end{equation}
It turns out that $\mathbb{C}$ is invariant with respect to reflection transformation $\mathbf{Ref}(\alpha)$ ({\em cf.}~\eqref{eqn:RefMat}).
Thus, the group generated by $\mathcal{G} \cup \left\{ \bfQ \right\}$ contains the reflection transformation $\mathbf{Ref}(\alpha)$. Since $\mathbf{Rot}\left( \frac{\pi}{2} \right)$ is guaranteed to be in $\mathcal{G}$ because $\mathbf{Rot}\left( \frac{(2k+1)\pi}{2} \right)$ and $\mathbf{Rot}\left( \pi \right)$ are, using \eqref{eqn:identityrefreftorot} we see that
\begin{equation}\label{eqn:rotpi2isinsquaregroup}
\mathbf{Ref}\left( \alpha + \frac{\pi}{4} \right) = \mathbf{Rot} \left( \frac{\pi}{2} \right) \mathbf{Ref}( \alpha) 
\end{equation}
is also in the group generated by $\mathcal{G} \cup \left\{ \bfQ \right\}$. Thus, the group generated by $\mathcal{K} := \left\{ -\bfI, \mathbf{Ref}(\alpha), \mathbf{Ref}\left( \alpha + \frac{\pi}{4} \right) \right\}$ is a subgroup of the group generated by $\mathcal{G} \cup \left\{ \bfQ \right\}$. On the other hand, by noticing $\mathbf{Ref}( \alpha)$ is its own inverse, from \eqref{eqn:rotpi2isinsquaregroup} we deduce $\mathbf{Q}$ is in the group generated by $\mathcal{K}$. Consequently, the groups generated by $\mathcal{K}$ and $\mathcal{G} \cup \left\{ \bfQ \right\}$ are the same. By considering a change in orientation, specifically a clockwise rotation by $\alpha$, we see that the group generated by $\mathcal{K}$ is in the square symmetry class.
\qed

\end{proof}

\begin{remark}
Notice that adding a rotation to the oblique symmetry group produces either the isotropic or the square symmetry group, but not the rectangular symmetry group. In Table \ref{table:symtransin2d}, we summarize the symmetry transformations for each symmetry group. Unlike in the three-dimensional case, we cannot distinguish between the symmetry groups of classical linear elasticity solely based on the rotations contained in the group.
\end{remark}

\begin{table}[h!]
\begin{center}
\begin{tabular}{|l|c|c|l|}
\hline
Symmetry Group & Reflections & Rotations & \multicolumn{1}{c|}{$\mathbb{C}$ Restrictions } \\
\hline
Oblique & None  &  $\mathbf{Rot}( \pi )$ & \multicolumn{1}{c|}{None} \\
\hline
Rectangular  & $\mathbf{Ref}(0), \mathbf{Ref} ( \frac{\pi}{2})$ & $\mathbf{Rot}( \pi )$ & $C_{1112} = C_{2212} = 0$ \\
\hline
\multirow{2}{*}{Square} & \multirow{2}{*}{$\mathbf{Ref}(0), \mathbf{Ref}( \frac{\pi}{4} ), \mathbf{Ref} ( \frac{\pi}{2} ), \mathbf{Ref} ( \frac{3 \pi}{4} )$} & \multirow{2}{*}{$\mathbf{Rot}( \pi ), \mathbf{Rot}( \frac{\pi}{2} ), \mathbf{Rot} ( \frac{3\pi}{2} )$} & $C_{1112} = C_{2212} = 0$ \\
& & & $C_{1111} = C_{2222}$ \\
\hline
 &  &  & $C_{1112} = C_{2212} = 0$ \\
Isotropic & $\mathbf{Ref} ( \theta), \theta \in [0,\pi)$ & $\mathbf{Rot} ( \theta), \theta \in [0,2 \pi)$ & $C_{1111} = C_{2222}$ \\
& & & $C_{1212} = \frac{C_{1111} - C_{1122}}{2}$ \\
\hline
\end{tabular}
\caption{Symmetry transformations by symmetry class in two dimensions.}
\label{table:symtransin2d}
\end{center}
\end{table}

\subsection{Pure two-dimensional classical linear elasticity}\label{sec:puretwodimclass}
In this section, we review pure two-dimensional models in classical linear elasticity. We consider the equation of motion~\eqref{eqn:eqnofmotionclassical} for each of the four symmetry classes. The corresponding elasticity tensors are given in Theorem~\ref{thm:symmetryclasses}. 

\underline{Oblique}: There are no restrictions on $\mathbb{C}$ under oblique symmetry and thus the oblique equation of motion is given by \eqref{eqn:eqnofmotionclassical}, which we write out explicitly below:

\begin{subequations}\label{eqn:eqnmotionobl}
\begin{align}
\begin{split}
\rho \ddot{u}_{1} ={}& C_{1111} \frac{\partial^2 u_1}{\partial x^2} +2C_{1112} \frac{\partial^2 u_1}{\partial x \partial y}  + C_{1212} \frac{\partial^2 u_1}{\partial y^2}    \\
&+ C_{1112} \frac{\partial^2 u_2}{\partial x^2}  + \left( C_{1122} + C_{1212} \right)  \frac{\partial^2 u_2}{\partial x \partial y} + C_{2212} \frac{\partial^2 u_2}{\partial y^2}  + b_1,
\end{split} \label{eqn:eqnmotionobla} \\
\begin{split}
\rho \ddot{u}_{2} ={}& C_{1112} \frac{\partial^2 u_1}{\partial x^2} + \left( C_{1122} + C_{1212} \right) \frac{\partial^2 u_1}{\partial x \partial y} +  C_{2212} \frac{\partial^2 u_1}{\partial y^2} \\
&+ C_{1212} \frac{\partial^2 u_2}{\partial x^2} + 2 C_{2212} \frac{\partial^2 u_2}{\partial x \partial y} + C_{2222} \frac{\partial^2 u_2}{\partial y^2} + b_2.
\end{split} \label{eqn:eqnmotionoblb}
\end{align}
\end{subequations}

\underline{Rectangular}: Imposing the rectangular symmetry restrictions \eqref{eqn:recelastrelations} on \eqref{eqn:eqnmotionobl} produces the rectangular equation of motion:

\begin{subequations}\label{eqn:eqnmotionrec}
\begin{align}
\rho \ddot{u}_{1} ={}& C_{1111} \frac{\partial^2 u_1}{\partial x^2} + C_{1212}  \frac{\partial^2 u_1}{\partial y^2}  + \left( C_{1122} + C_{1212} \right) \frac{\partial^2 u_2}{\partial x \partial y}  +b_1, \label{eqn:eqnmotionreca} \\
\rho \ddot{u}_{2} ={}&  \left( C_{1122} + C_{1212} \right) \frac{\partial^2 u_1}{\partial x \partial y} + C_{1212} \frac{\partial^2 u_2}{\partial x^2} + C_{2222} \frac{\partial^2 u_2}{\partial y^2} + b_2. \label{eqn:eqnmotionrecb}
\end{align}
\end{subequations}

\underline{Square}: Imposing the square symmetry restrictions \eqref{eqn:sqrelastrelations} on \eqref{eqn:eqnmotionobl} produces the square equation of motion:

\begin{subequations}\label{eqn:eqnmotionsqr}
\begin{align}
\rho \ddot{u}_{1} ={}& C_{1111} \frac{\partial^2 u_1}{\partial x^2} + C_{1212} \frac{\partial^2 u_1}{\partial y^2} +  \left( C_{1122} + C_{1212} \right) \frac{\partial^2 u_2}{\partial x \partial y}  + b_1, \label{eqn:eqnmotionsqra} \\
\rho \ddot{u}_{2} ={}&   \left( C_{1122} + C_{1212} \right) \frac{\partial^2 u_1}{\partial x \partial y}  + C_{1212} \frac{\partial^2 u_2}{\partial x^2} + C_{1111} \frac{\partial^2 u_2}{\partial y^2} + b_2. \label{eqn:eqnmotionsqrb}
\end{align}
\end{subequations}
\underline{Isotropic}: Imposing the isotropic symmetry restrictions \eqref{eqn:isoelastrelations} on \eqref{eqn:eqnmotionobl} produces the isotropic equation of motion:
\begin{subequations}\label{eqn:eqnmotioniso}
\begin{align}
\rho \ddot{u}_{1} ={}& C_{1111} \frac{\partial^2 u_1}{\partial x^2} + \frac{1}{2} \left( C_{1111} - C_{1122} \right) \frac{\partial^2 u_1}{\partial y^2} +  \frac{1}{2} \left( C_{1111} + C_{1122} \right) \frac{\partial^2 u_2}{\partial x \partial y}  + b_1, \label{eqn:eqnmotionisoa}  \\
\rho \ddot{u}_{2} ={}&   \frac{1}{2} \left( C_{1111} + C_{1122} \right) \frac{\partial^2 u_1}{\partial x \partial y}  + \frac{1}{2} \left( C_{1111} - C_{1122} \right) \frac{\partial^2 u_2}{\partial x^2} + C_{1111} \frac{\partial^2 u_2}{\partial y^2} + b_2. \label{eqn:eqnmotionisob}
\end{align}
\end{subequations}
\subsection{Planar approximations in three-dimensional classical linear elasticity}\label{sec:planemodelsclass}
In some situations, a three-dimensional problem may be simplified to a two-dimensional formulation. This often greatly reduces the computational cost of numerically solving the problem, and sometimes allows solutions of intractable three-dimensional problems to be successfully approximated. In fact, some of the first successful applications of the finite element method were performed on two-dimensional elastic problems \cite{clough1960finite,mj1956stiffness}. In this section, we consider two-dimensional simplifications of the three-dimensional equations of motion of classical linear elasticity. In particular, we focus on planar approximations of anisotropic models, specifically plane strain and plane stress models, where the in-plane and out-of-plane deformations are decoupled. In Sections \ref{sec:PeridynamicPlaneStrain} and \ref{sec:PeridynamicPlaneStress} we will derive the peridynamic analogues of plane strain and plane stress, and the derivations will mirror those utilized in the classical theory. For this reason, we provide derivations of the classical planar elasticity models in this section, following~\cite{Ting1996}. Before we delve into the planar elasticity models, we briefly review symmetry classes of $\mathbb{C}$ in three-dimensional classical linear elasticity.

\subsubsection{Symmetry classes of the elasticity tensor in three-dimensional classical linear elasticity}\label{sec:threedimclassicalelasticity}
Recall there are exactly eight symmetry classes in classical linear elasticity \cite{CHADWICK2001,Forte1996,Ting1996}: triclinic, monoclinic, orthotropic, trigonal, tetragonal, transversely isotropic, cubic, and isotropic. Similarly to two-dimensional classical linear elasticity, the inversion transformation $-\bfI$ is a member of every symmetry group of $\mathbb{C}$. We now briefly consider a representative group from each symmetry class. For each symmetry group, we consider a generating set as well as the resulting elasticity tensor and its corresponding restrictions.

Let a given plane be defined by the unit normal $\bfn = \langle n_1, n_2, n_3 \rangle$. The transformation corresponding to reflection through the plane is given by
\begin{equation*}
\mathbf{Ref}(\bfn) = \bfI -2 \bfn \otimes \bfn.
\end{equation*}
It turns out that in classical linear elasticity, a symmetry group is entirely determined by the reflection transformations in the group \cite{CHADWICK2001}. Let a coordinate system be given by the basis $\left\{ \bfe_i \right\}_{i=1,\ldots,3}$. The following are our representative symmetry groups of the symmetry classes of $\mathbb{C}$ in this coordinate system:
 
\underline{Triclinic}:
The triclinic symmetry group may be generated by
\begin{equation*}
\left\{ - \bfI \right\}.
\end{equation*}
There are no restrictions on the elasticity tensor. The triclinic elasticity tensor is given by ({\em cf.}~\eqref{eqn:3DElastTensor})
\begin{equation}\label{eqn:3Delasttensortricinic}
\mathbf{C} = \left[
\begin{array}{cccccc}
C_{1111} & C_{1122} & C_{1133} & C_{1123} & C_{1113} & C_{1112} \\
\cdot & C_{2222} & C_{2233} & C_{2223} & C_{2213} & C_{2212} \\
\cdot & \cdot & C_{3333} & C_{3323} & C_{3313} & C_{3312} \\
\cdot&\cdot &\cdot & C_{2323} & C_{2313} & C_{2312} \\
\cdot& \cdot& \cdot& \cdot& C_{1313} & C_{1312} \\
\cdot& \cdot& \cdot& \cdot& \cdot & C_{1212}
\end{array}
\right]. 
\end{equation}

\underline{Monoclinic}:
A monoclinic symmetry group may be generated by
\begin{equation}\label{eqn:generatorsmonoclinic}
\left\{ - \bfI, \mathbf{Ref}(\mathbf{e}_3) \right\}.
\end{equation}
The corresponding monoclinic symmetry restrictions on the elasticity tensor are given by
\begin{equation}\label{eqn:monoelastrelations}
C_{1123} = C_{1113} = C_{2223} = C_{2213} = C_{3323} = C_{3313} = C_{2312} = C_{1312} = 0.
\end{equation}
The resulting monoclinic elasticity tensor is given by
\begin{equation*}
\mathbf{C} = \left[
\begin{array}{cccccc}
C_{1111} & C_{1122} & C_{1133} & 0 & 0 & C_{1112} \\
\cdot & C_{2222} & C_{2233} & 0 & 0 & C_{2212} \\
\cdot & \cdot & C_{3333} & 0 & 0 & C_{3312} \\
\cdot&\cdot &\cdot & C_{2323} & C_{2313} & 0 \\
\cdot& \cdot& \cdot& \cdot& C_{1313} & 0 \\
\cdot& \cdot& \cdot& \cdot& \cdot & C_{1212}
\end{array}
\right]. 
\end{equation*}

\underline{Orthotropic}:
An orthotropic symmetry group may be generated by
\begin{equation*}
\left\{-\bfI, \mathbf{Ref}(\bfe_1), \mathbf{Ref}(\bfe_2), \mathbf{Ref}(\bfe_3) \right\}.
\end{equation*}
The corresponding orthotropic symmetry restrictions on the elasticity tensor are given by
\begin{subequations}\label{eqn:orthoelastrelations}
\begin{align}
&C_{1123} = C_{1113} = C_{2223} = C_{2213} = C_{3323} = C_{3313} = C_{2312} = C_{1312} = 0, \\
&C_{1112} = C_{2212} = C_{3312} = C_{2313} = 0.
\end{align}
\end{subequations}
The resulting orthotropic elasticity tensor is given by
\begin{equation*}
\mathbf{C} = \left[
\begin{array}{cccccc}
C_{1111} & C_{1122} & C_{1133} & 0 & 0 & 0 \\
\cdot & C_{2222} & C_{2233} & 0 & 0 & 0 \\
\cdot & \cdot & C_{3333} & 0 & 0 & 0 \\
\cdot&\cdot &\cdot & C_{2323} & 0 & 0 \\
\cdot& \cdot& \cdot& \cdot& C_{1313} & 0 \\
\cdot& \cdot& \cdot& \cdot& \cdot & C_{1212}
\end{array}
\right].
\end{equation*}

\underline{Trigonal}:
A trigonal symmetry group\footnote{Typically, trigonal symmetry is presented with the normals in the plane $z=0$, but the present formulation has advantages that will become apparent when we discuss planar linear elasticity models in Sections \ref{sec:classplanestrain} and \ref{sec:ClassicalPlaneStress}.} may be generated by
\begin{equation*}
\left\{ - \bfI, \mathbf{Ref}(\bfe_3), \mathbf{Ref} \left( \frac{1}{2} \left\langle 0, \sqrt{3}, \pm 1 \right\rangle\right) \right\}.
\end{equation*}
The corresponding trigonal symmetry restrictions on the elasticity tensor are given by
\begin{subequations}\label{eqn:trigonalelastrelations}
\begin{align}
&C_{1123} = C_{1113} = C_{2223} = C_{2213} = C_{3323} = C_{3313} = C_{2312} = C_{1312} = 0, \\
&C_{1112} = 0, C_{3333} = C_{2222}, C_{1133} = C_{1122},  C_{1212} = C_{1313}, \\
&C_{2313} = C_{3312} = - C_{2212}, C_{2323} = \frac{C_{2222} - C_{2233}}{2} .
\end{align}
\end{subequations}
The resulting trigonal elasticity tensor is given by 
\begin{equation*}
\mathbf{C} = \left[
 \begin{array}{cccccc}
C_{1111} & C_{1122} & C_{1122} & 0 & 0 & 0 \\
\cdot & C_{2222} & C_{2233} & 0 & 0 & C_{2212} \\
\cdot & \cdot & C_{2222} & 0 & 0 & -C_{2212} \\
\cdot&\cdot &\cdot & \frac{C_{2222} - C_{2233}}{2} & -C_{2212} & 0 \\
\cdot& \cdot& \cdot& \cdot& C_{1313} & 0 \\
\cdot& \cdot& \cdot& \cdot& \cdot & C_{1313}
\end{array}
\right]. 
\end{equation*}

\underline{Tetragonal}:
A tetragonal symmetry group may be generated by

\begin{equation*}
\left\{-\bfI, \mathbf{Ref}(\bfe_1), \mathbf{Ref}(\bfe_2), \mathbf{Ref}(\bfe_3), \mathbf{Ref} \left( \frac{1}{2} \left\langle  \sqrt{2} , \pm \sqrt{2}, 0 \right\rangle \right) \right\}.
\end{equation*}
The corresponding tetragonal symmetry restrictions on the elasticity tensor are given by
\begin{subequations}\label{eqn:tetraelastrelations}
\begin{align}
&C_{1123} = C_{1113} = C_{2223} = C_{2213} = C_{3323} = C_{3313} = C_{2312} = C_{1312} = 0, \\
&C_{1112} = C_{2212} = C_{3312} = C_{2313} = 0, \\
&C_{2233} = C_{1133}, C_{1313} = C_{2323}, C_{2222} = C_{1111}.
\end{align}
\end{subequations}
The resulting tetragonal elasticity tensor is given by
\begin{equation*}
\mathbf{C} = \left[
\begin{array}{cccccc}
C_{1111} & C_{1122} & C_{1133} & 0 & 0 & 0 \\
\cdot & C_{1111} & C_{1133} & 0 & 0 & 0 \\
\cdot & \cdot & C_{3333} & 0 & 0 & 0 \\
\cdot&\cdot &\cdot & C_{2323} & 0 & 0 \\
\cdot& \cdot& \cdot& \cdot& C_{2323} & 0 \\
\cdot& \cdot& \cdot& \cdot& \cdot & C_{1212}
\end{array}
\right]. 
\end{equation*}

\underline{Transversely Isotropic}:
A transversely isotropic symmetry group may be generated by
\begin{equation*}
\left\{ - \bfI, \mathbf{Ref}(\bfe_3), \mathbf{Ref}\left( \langle \cos(\theta), \sin(\theta), 0\rangle \right): \theta \in [0,\pi) \right\}.
\end{equation*}

The corresponding transversely isotropic symmetry restrictions on the elasticity tensor are given by
\begin{subequations}\label{eqn:tisoelastrelations}
\begin{align}
&C_{1123} = C_{1113} = C_{2223} = C_{2213} = C_{3323} = C_{3313} = C_{2312} = C_{1312} = 0, \\
&C_{1112} = C_{2212} = C_{3312} = C_{2313} = 0, \\
&C_{2233} = C_{1133}, C_{1313} = C_{2323}, C_{2222} = C_{1111}, \\
&C_{1212} = \frac{C_{1111} - C_{1122}}{2}.
\end{align}
\end{subequations}
The resulting transversely isotropic elasticity tensor is given by
\begin{equation*}
\mathbf{C} = \left[
\begin{array}{cccccc}
C_{1111} & C_{1122} & C_{1133} & 0 & 0 & 0 \\
\cdot & C_{1111} & C_{1133} & 0 & 0 & 0 \\
\cdot & \cdot & C_{3333} & 0 & 0 & 0 \\
\cdot&\cdot &\cdot & C_{2323} & 0 & 0 \\
\cdot& \cdot& \cdot& \cdot& C_{2323} & 0 \\
\cdot& \cdot& \cdot& \cdot& \cdot & \frac{C_{1111} - C_{1122}}{2}
\end{array}
\right]. 
\end{equation*}

\underline{Cubic}:
A cubic symmetry group may be generated by
\begin{equation*}
\left\{
- \bfI, \mathbf{Ref}(\bfe_1), \mathbf{Ref}(\bfe_2), \mathbf{Ref}(\bfe_3), \mathbf{Ref}\left(\frac{1}{2} \langle \sqrt{2}, \pm \sqrt{2}, 0 \rangle \right), \mathbf{Ref}\left(\frac{1}{2} \langle 0, \sqrt{2}, \pm \sqrt{2} \rangle \right), \mathbf{Ref}\left(\frac{1}{2} \langle \sqrt{2}, 0, \pm \sqrt{2} \rangle \right)
\right\}.
\end{equation*}
The corresponding cubic symmetry restrictions on the elasticity tensor are given by
\begin{subequations}\label{eqn:cubicelastrelations}
\begin{align}
&C_{1123} = C_{1113} = C_{2223} = C_{2213} = C_{3323} = C_{3313} = C_{2312} = C_{1312} = 0, \\
&C_{1112} = C_{2212} = C_{3312} = C_{2313} = 0, \\
&C_{3333} = C_{2222} =  C_{1111}, C_{1212} = C_{1313} = C_{2323}, \\
&C_{2233} = C_{1133} = C_{1122} .
\end{align}
\end{subequations}
The resulting cubic elasticity tensor is given by
\begin{equation*}
\mathbf{C} = \left[
\begin{array}{cccccc}
C_{1111} & C_{1122} & C_{1122} & 0 & 0 & 0 \\
\cdot & C_{1111} & C_{1122} & 0 & 0 & 0 \\
\cdot & \cdot & C_{1111} & 0 & 0 & 0 \\
\cdot&\cdot &\cdot & C_{2323} & 0 & 0 \\
\cdot& \cdot& \cdot& \cdot& C_{2323} & 0 \\
\cdot& \cdot& \cdot& \cdot& \cdot & C_{2323}
\end{array}
\right].
\end{equation*}

\underline{Isotropic}: 
The isotropic symmetry group is $O(3)$, the set of orthogonal transformations in $\mathbb{R}^3$.

The isotropic symmetry restrictions on the elasticity tensor are given by
\begin{subequations}\label{eqn:iso3Delastrelations}
\begin{align}
&C_{1123} = C_{1113} = C_{2223} = C_{2213} = C_{3323} = C_{3313} = C_{2312} = C_{1312} = 0, \\
&C_{1112} = C_{2212} = C_{3312} = C_{2313} = 0, \\
&C_{3333} = C_{2222} = C_{1111} ,   C_{2233} = C_{1133} = C_{1122}, \\
&C_{1212} = C_{1313} = C_{2323} = \frac{C_{1111} - C_{1122}}{2}.
\end{align}
\end{subequations}
The isotropic elasticity tensor is given by
\begin{equation*}
 \mathbf{C} = \left[
\begin{array}{cccccc}
C_{1111} & C_{1122} & C_{1122} & 0 & 0 & 0 \\
\cdot & C_{1111} & C_{1122} & 0 & 0 & 0 \\
\cdot & \cdot & C_{1111} & 0 & 0 & 0 \\
\cdot&\cdot &\cdot & \frac{C_{1111} - C_{1122}}{2} & 0 & 0 \\
\cdot& \cdot& \cdot& \cdot& \frac{C_{1111} - C_{1122}}{2} & 0 \\
\cdot& \cdot& \cdot& \cdot& \cdot & \frac{C_{1111} - C_{1122}}{2}
\end{array}
\right]. 
\end{equation*}

\subsubsection{Classical plane strain}\label{sec:classplanestrain}

Classical plane strain is often associated with thick structures \cite{Ting1996}. Due to the thickness of the structure, deformations in the thickness direction are often constrained. Provided certain assumptions are met, each cross-section of the material perpendicular to the thickness direction is in approximately the same deformed state. In this situation, a two-dimensional formulation of the dynamics in one cross-section is sufficient to provide information about the dynamics of the entire structure. 

The classical plane strain assumptions are given as follows:
 
\begin{enumerate}[(C$\varepsilon$1)]\label{assmp:classplanestrain}
\item The geometric form and mass density of the body, and the external loads exerted on it, do not change along some axis, which we take as the $z$-axis. In particular, \label{assmp:classplanestrain1}
\begin{equation*}
\rho = \rho(x,y) \text{ and } \bfb = \bfb(x,y,t).
\end{equation*}
\item The deformation of any arbitrary cross-section perpendicular to the $z$-axis is identical, i.e., the displacements are function of $x$, $y$, and $t$ only:
\begin{equation*}
\bfu = \bfu(x,y,t).
\end{equation*} \label{assmp:classplanestrain2}
\item \label{assmp:classplanestrain3} The material has at least monoclinic symmetry with a plane of reflection corresponding to the plane $z=0$, i.e., \eqref{eqn:monoelastrelations} holds\footnote{In \cite{Ting1996} it is shown the slightly weaker condition 
\begin{equation*}
C_{1123} = C_{1113} = C_{2223} = C_{2213} = C_{2312} = C_{1312} = 0
\end{equation*}
is sufficient for the derivation of generalized plane stress.}. 
\end{enumerate}
A system satisfying Assumptions (C$\varepsilon$\ref{assmp:classplanestrain1}) and (C$\varepsilon$\ref{assmp:classplanestrain2}) is said to be in a state of classical generalized plane strain. See Figure \ref{fig:planestrain} for an illustration of a body in a state of plane strain.
 
\begin{figure}
\begin{center}
\begin{tikzpicture}
\path [left color=black!50, right color=black!50, middle color=black!25] (-1.5,-5.1) arc (180:360:1.5-.01 and 1-.05*1) -- cycle; 
      \path [top color=black!25, bottom color=white] 
        (0,0) ellipse [x radius=1.5-.01, y radius=1-.01*2/3]; 
      \path [left color=black!25, right color=black!25, middle color=white]  (-1.5,0) -- (-1.5,-5) arc (180:360:1.5 and 1) -- (1.5,0) arc (360:180:1.5 and 1); 
      \draw[->,thick] (0,0)--(0,2) node[right]{$z$};
       \draw[->, thick]  (-2.3,-3) -- (-1.5,-3) ;
       \draw[->, thick]  (-2.3,-4) -- (-1.5,-4) ;
       \draw[->, thick]  (-2.3,-5) -- (-1.5,-5) ;
       \draw[->, thick]  (-2.3,-2) -- (-1.5,-2) ;
       \draw[->, thick]  (-2.3,-1) -- (-1.5,-1) ;
       \draw[->, thick]  (-2.3,0) -- (-1.5,0);
       \draw[-, thick] (-2.3,-5)  -- (-2.3,0) node[midway,left]{$\bfb$};

       \draw[->, thick]  (2.3,-3) -- (1.5,-3) ;
       \draw[->, thick]  (2.3,-4) -- (1.5,-4) ;
       \draw[->, thick]  (2.3,-5) -- (1.5,-5) ;
       \draw[->, thick]  (2.3,-2) -- (1.5,-2) ;
       \draw[->, thick]  (2.3,-1) -- (1.5,-1) ;
       \draw[->, thick]  (2.3,0) -- (1.5,0);
       \draw[-, thick] (2.3,-5)  -- (2.3,0) node[midway,right]{$\bfb$};
\end{tikzpicture}
\caption{Illustration of plane strain.}
\label{fig:planestrain}
\end{center}
\end{figure}
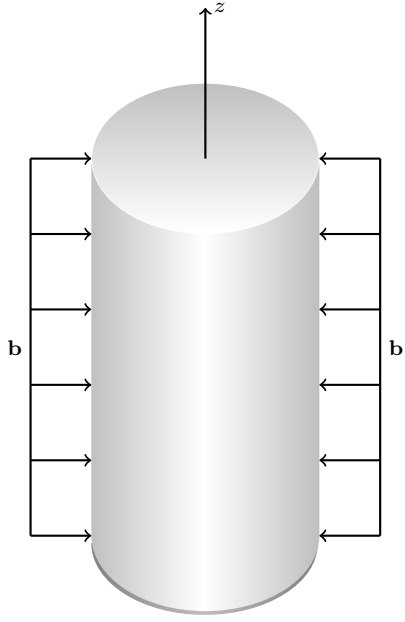

We start with the generalized Hooke's law~\eqref{eqn:stressstrainrelation}. By Assumption (C$\varepsilon$\ref{assmp:classplanestrain2}), the stress-strain relation reduces to
\begin{equation}\label{eqn:StrainDisplacementRelations}
\sigma_{ij} = C_{ij11} \frac{\partial u_1}{\partial x} + C_{ij12} \left(\frac{\partial u_1}{\partial y} + \frac{\partial u_2}{\partial x} \right) + C_{ij22} \frac{\partial u_2}{\partial y} + C_{ij13} \frac{\partial u_3}{\partial x} + C_{ij23} \frac{\partial u_3}{\partial y}. 
\end{equation}

\underline{Classical generalized plane strain}: Combining \eqref{eqn:StrainDisplacementRelations} with the three-dimensional equation of motion~\eqref{eqn:eqnofmotionclassical} and (C$\varepsilon$\ref{assmp:classplanestrain1}) results in 

\begin{subequations}\label{eqn:classeqnmot}
\begin{align}
\begin{split}
\rho \ddot{u}_1 ={}& C_{1111} \frac{\partial^2 u_1}{\partial x^2} + 2 C_{1112} \frac{\partial^2 u_1}{\partial x \partial y} + C_{1212} \frac{\partial^2 u_1}{\partial y^2} + C_{1112}\frac{\partial^2 u_2}{\partial x^2} + \left( C_{1122} + C_{1212} \right) \frac{\partial^2 u_2}{\partial x \partial y}   \\
&+ C_{2212} \frac{\partial^2 u_2}{\partial y^2}  + C_{1113} \frac{\partial^2 u_3}{\partial x^2}  + \left( C_{1123} + C_{1312} \right) \frac{\partial^2 u_3}{\partial x \partial y}   + C_{2312} \frac{\partial^2 u_3}{\partial y^2} + b_1, \label{eqn:classeqnmota}
\end{split}\\
\begin{split}
\rho \ddot{u}_2 ={}& C_{1112} \frac{\partial^2 u_1}{\partial x^2}+ \left(C_{1122} + C_{1212} \right) \frac{\partial^2 u_1}{\partial x \partial y} + C_{2212} \frac{\partial^2 u_1}{\partial y^2}  + C_{1212} \frac{\partial^2 u_2}{\partial x^2} + 2 C_{2212} \frac{\partial^2 u_2}{\partial x \partial y}   \\
&+ C_{2222} \frac{\partial^2 u_2}{\partial y^2} + C_{1312} \frac{\partial^2 u_3}{\partial x^2} + \left(C_{2213} + C_{2312}\right) \frac{\partial^2 u_3}{\partial x \partial y}  + C_{2223} \frac{\partial^2 u_3}{\partial y^2} + b_2, \label{eqn:classeqnmotb}
\end{split}\\
\begin{split}
\rho \ddot{u}_3 ={}& C_{1113} \frac{\partial^2 u_1}{\partial x^2}+  \left( C_{1123} + C_{1312}\right) \frac{\partial^2 u_1}{\partial x \partial y} +  C_{2312} \frac{\partial^2 u_1}{\partial y^2} + C_{1312} \frac{\partial^2 u_2}{\partial x^2} + \left( C_{2213} + C_{2312} \right) \frac{\partial^2 u_2}{\partial x \partial y} \\
&+ C_{2223} \frac{\partial^2 u_2}{\partial y^2} + C_{1313} \frac{\partial^2 u_3}{\partial x^2}  + 2 C_{2313} \frac{\partial^2 u_3}{\partial x \partial y} + C_{2323} \frac{\partial^2 u_3}{\partial y^2} + b_3. \label{eqn:classeqnmotc}
\end{split}
\end{align}
\end{subequations}

Equation \eqref{eqn:classeqnmot} is the \textit{classical generalized plane strain} equation of motion and reduces the degrees of freedom in the system significantly ($\bfu$ is a function of only $x$, $y$, and $t$, and thus the system may be solved on a two-dimensional spatial domain). However, the in-plane displacements, $u_1$ and $u_2$, are coupled with the out-of-plane displacement, $u_3$. Nevertheless, provided the material has sufficient symmetry, it is possible to further simplify the model. In fact, having a single plane of reflection symmetry (Assumption (C$\varepsilon$\ref{assmp:classplanestrain3})) is sufficient to decouple the in-plane and out-of-plane displacements provided the plane of reflection symmetry coincides with the plane $z= 0$, as demonstrated below. A system satisfying (C$\varepsilon$\ref{assmp:classplanestrain1})~--~(C$\varepsilon$\ref{assmp:classplanestrain3}) is said to be in a state of classical plane strain. 

\underline{Classical plane strain}: Imposing the monoclinic restrictions \eqref{eqn:monoelastrelations} on \eqref{eqn:classeqnmot}, we obtain the classical plane strain model. The in-plane equations of motion are given by
\begin{subequations}\label{eqn:classicalmonoeqnmotion}
\begin{align}
\begin{split}
\rho \ddot{u}_1 ={}& C_{1111} \frac{\partial^2 u_1}{\partial x^2}+ 2 C_{1112} \frac{\partial^2 u_1}{\partial x \partial y} + C_{1212} \frac{\partial^2 u_1}{\partial y^2}  \\
&+ C_{1112} \frac{\partial^2 u_2}{\partial x^2} + \left( C_{1122} + C_{1212} \right) \frac{\partial^2 u_2}{\partial x \partial y} + C_{2212} \frac{\partial^2 u_2}{\partial y^2} + b_1,
\end{split} \label{eqn:classicalmonoeqnmotion1} \\ 
\begin{split}
\rho \ddot{u}_2 ={}& C_{1112} \frac{\partial^2 u_1}{\partial x^2}+ \left( C_{1122} + C_{1212}\right) \frac{\partial^2 u_1}{\partial x \partial y} + C_{2212} \frac{\partial^2 u_1}{\partial y^2}    \\
&+ C_{1212} \frac{\partial^2 u_2}{\partial x^2} + 2 C_{2212} \frac{\partial^2 u_2}{\partial x \partial y} + C_{2222} \frac{\partial^2 u_2}{\partial y^2} + b_2, 
\end{split} \label{eqn:classicalmonoeqnmotion2}
\end{align}
\end{subequations}
and the out-of-plane equation of motion is given by
\begin{align}\label{eqn:classicalmonoeqnmotion3}  
\rho \ddot{u}_3 ={}& C_{1313} \frac{\partial^2 u_3}{\partial x^2} + 2 C_{2313} \frac{\partial^2 u_3}{\partial x \partial y} + C_{2323} \frac{\partial^2 u_3}{\partial y^2} + b_3. 
\end{align}

The in-plane equations of motion \eqref{eqn:classicalmonoeqnmotion1} and \eqref{eqn:classicalmonoeqnmotion2} are equivalent mathematically to the pure two-dimensional oblique equations of motion \eqref{eqn:eqnmotionobla} and \eqref{eqn:eqnmotionoblb}, respectively. However, it should be noted the elasticity coefficients $C_{ijkl}$ do not have the same physical meaning between the two sets of equations, as we will see in Section \ref{sec:techconsts} where we discuss the engineering constants.

For monoclinic symmetry, there is only one choice of plane of reflection orientation which decouples the in-plane displacements, $u_1$ and $u_2$, and the out-of-plane displacement, $u_3$, i.e., when the plane of reflection symmetry coincides with the plane $z=0$. However, for the higher symmetry classes listed in Section \ref{sec:threedimclassicalelasticity}, there are multiple planes of reflection symmetry. Therefore, there are multiple orientations which decouple the in-plane and out-of-plane displacements, and consequently it is not possible to propose a unique plane strain model for those symmetry classes. For each symmetry class, we consider the equations of motion for the representative symmetry group presented in Section \ref{sec:threedimclassicalelasticity}.

\underline{Orthotropic}: Imposing \eqref{eqn:orthoelastrelations} on \eqref{eqn:classeqnmot}, we arrive at the following orthotropic plane strain model:
\begin{subequations}
\begin{align}
\rho \ddot{u}_1 ={}& C_{1111} \frac{\partial^2 u_1}{\partial x^2} + C_{1212} \frac{\partial^2 u_1}{\partial y^2} + \left( C_{1122} + C_{1212} \right) \frac{\partial^2 u_2}{\partial x \partial y}  + b_1, \label{eqn:orthoclassicalplanestrainu1} \\
\rho \ddot{u}_2 ={}& \left(C_{1122} + C_{1212} \right) \frac{\partial^2 u_1}{\partial x \partial y} + C_{1212} \frac{\partial^2 u_2}{\partial x^2} + C_{2222} \frac{\partial^2 u_2}{\partial y^2} + b_2,  \label{eqn:orthoclassicalplanestrainu2} \\
\rho \ddot{u}_3  ={}& C_{1313} \frac{\partial^2 u_3}{\partial x^2} + C_{2323} \frac{\partial^2 u_3}{\partial y^2} + b_3. \label{eqn:orthoclassicalplanestrainu3}
\end{align}
\end{subequations}

Notice the in-plane equations of motion \eqref{eqn:orthoclassicalplanestrainu1} and \eqref{eqn:orthoclassicalplanestrainu2} are equivalent mathematically to the two-dimensional rectangular equations of motion \eqref{eqn:eqnmotionreca} and \eqref{eqn:eqnmotionrecb}, respectively.

\underline{Trigonal}: Imposing \eqref{eqn:trigonalelastrelations} on \eqref{eqn:classeqnmot} we arrive at the following trigonal plane strain model:

\begin{subequations}\label{eqn:classicaltrigonaleqnmotion}
\begin{align}
\begin{split}
\rho \ddot{u}_1 ={}& C_{1111} \frac{\partial^2 u_1}{\partial x^2} + C_{1212} \frac{\partial^2 u_1}{\partial y^2} + \left( C_{1122} + C_{1212} \right) \frac{\partial^2 u_2}{\partial x \partial y} + C_{2212} \frac{\partial^2 u_2}{\partial y^2} + b_1, 
\end{split} \label{eqn:trigonalclassicalplanestrainu1} \\ 
\begin{split}
\rho \ddot{u}_2 ={}& \left( C_{1122} + C_{1212}\right) \frac{\partial^2 u_1}{\partial x \partial y} + C_{2212} \frac{\partial^2 u_1}{\partial y^2} + C_{1212} \frac{\partial^2 u_2}{\partial x^2} + 2 C_{2212} \frac{\partial^2 u_2}{\partial x \partial y} + C_{2222} \frac{\partial^2 u_2}{\partial y^2} + b_2, 
\end{split} \label{eqn:trigonalclassicalplanestrainu2}\\
\rho \ddot{u}_3 ={}& C_{1313} \frac{\partial^2 u_3}{\partial x^2} - 2 C_{2212} \frac{\partial^2 u_3}{\partial x \partial y} + \left( \frac{C_{2222} - C_{2233}}{2} \right) \frac{\partial^2 u_3}{\partial y^2} + b_3. \label{eqn:trigonalclassicalplanestrainu3} 
\end{align}
\end{subequations}
The in-plane equations of motion of motion \eqref{eqn:trigonalclassicalplanestrainu1} and \eqref{eqn:trigonalclassicalplanestrainu2} are mathematically equivalent to the two-dimensional oblique equations of motion \eqref{eqn:eqnmotionobla} and \eqref{eqn:eqnmotionoblb}, respectively (with $C_{1112} = 0$ imposed).

\underline{Tetragonal and Cubic}: 

Imposing \eqref{eqn:tetraelastrelations} or \eqref{eqn:cubicelastrelations} on \eqref{eqn:classeqnmot}, we arrive at the following in-plane equations of motion for both the tetragonal and cubic plane strain models:
\begin{subequations}
\begin{align}
\rho \ddot{u}_1 ={}& C_{1111} \frac{\partial^2 u_1}{\partial x^2} + C_{1212} \frac{\partial^2 u_1}{\partial y^2} + \left(C_{1122} + C_{1212} \right) \frac{\partial^2 u_2}{\partial x \partial y}  + b_1, \label{eqn:tetraclassicalplanestrainu1}  \\
\rho \ddot{u}_2 ={}& \left(C_{1122} + C_{1212} \right) \frac{\partial^2 u_1}{\partial x \partial y} + C_{1212} \frac{\partial^2 u_2}{\partial x^2} + C_{1111} \frac{\partial^2 u_2}{\partial y^2} + b_2. \label{eqn:tetraclassicalplanestrainu2}  
\end{align}
\end{subequations}

With \eqref{eqn:tetraelastrelations} imposed, the tetragonal out-of-plane deformation satisfies
\begin{align}
\rho \ddot{u}_3 = C_{2323} \left( \frac{\partial^2 u_3}{\partial x^2}  + \frac{\partial^2 u_3}{\partial y^2} \right) +b_3, \label{eqn:tetraclassicalplanestrainu3} 
\end{align}
whereas with \eqref{eqn:cubicelastrelations} imposed, the cubic out-of-plane deformation satisfies
\begin{align}
\rho \ddot{u}_3 = C_{1212} \left( \frac{\partial^2 u_3}{\partial x^2} +  \frac{\partial^2 u_3}{\partial y^2} \right) + b_3. \label{eqn:cubicclassicalplanestrainu3} 
\end{align}

Notice the in-plane equations of motion \eqref{eqn:tetraclassicalplanestrainu1} and \eqref{eqn:tetraclassicalplanestrainu2} are mathematically equivalent to the two-dimensional square equations of motion \eqref{eqn:eqnmotionsqra} and \eqref{eqn:eqnmotionsqrb}, respectively.

\underline{Transversely Isotropic and Isotropic}: 
Imposing \eqref{eqn:tisoelastrelations} or \eqref{eqn:iso3Delastrelations} on \eqref{eqn:classeqnmot}, we arrive at the following in-plane equations of motion for both the transversely isotropic and isotropic plane strain models:
\begin{subequations}\label{eqn:isoplanestrain}
\begin{align}
\rho \ddot{u}_1 ={}& C_{1111} \frac{\partial^2 u_1}{\partial x^2} + \frac{1}{2} \left( C_{1111}-C_{1122} \right) \frac{\partial^2 u_1}{\partial y^2} + \frac{1}{2} \left( C_{1111}+C_{1122} \right)  \frac{\partial^2 u_2}{\partial x \partial y}  + b_1, \label{eqn:isoclassicalplanestrainu1}  \\
\rho \ddot{u}_2 ={}& \frac{1}{2} \left( C_{1111}+C_{1122} \right) \frac{\partial^2 u_1}{\partial x \partial y} + \frac{1}{2} \left( C_{1111}-C_{1122} \right) \frac{\partial^2 u_2}{\partial x^2} + C_{1111} \frac{\partial^2 u_2}{\partial y^2} + b_2. \label{eqn:isoclassicalplanestrainu2} 
\end{align}
\end{subequations}

With \eqref{eqn:tisoelastrelations} imposed, the transversely isotropic out-of-plane deformation satisfies
\begin{align}
\rho \ddot{u}_3= C_{2323} \left( \frac{\partial^2 u_3}{\partial x^2} + \frac{\partial^2 u_3}{\partial y^2} \right) + b_3, \label{eqn:tisoclassicalplanestrainu3} 
\end{align}
whereas with \eqref{eqn:iso3Delastrelations} imposed, the isotropic out-of-plane deformation satisfies
\begin{align}
\rho \ddot{u}_3 =  \left( \frac{C_{1111}-C_{1122}}{2} \right) \left( \frac{\partial^2 u_3}{\partial x^2} +  \frac{\partial^2 u_3}{\partial y^2}\right) + b_3. \label{eqn:isoclassicalplanestrainu3} 
\end{align}

Notice the in-plane equations of motion \eqref{eqn:isoclassicalplanestrainu1} and \eqref{eqn:isoclassicalplanestrainu2} are mathematically equivalent to the two-dimensional isotropic equations of motion \eqref{eqn:eqnmotionisoa} and \eqref{eqn:eqnmotionisob}, respectively.

\subsubsection{Classical plane stress}\label{sec:ClassicalPlaneStress}

Classical plane stress is often associated with thin plate-like structures subjected to edge forces which produce no normal displacement of the mid-plane of the structure \cite{reiss1961theory,Ting1996}. Due to the thinness of the structure, if certain assumptions are met, the stress tensor components $\sigma_{13}, \sigma_{23},$ and $\sigma_{33}$ are approximately constant over the thickness of the structure. A generalization of this, classical generalized plane stress, permits some variation in those components over the thickness of the structure. In this case, a two-dimensional formulation is obtained by considering averages of the displacement field $\bfu$ over the thickness of the structure. Derivations of classical generalized plane stress appear in a variety of sources. Derivations may be found in 
\cite{Filon1930,Love1892} for the isotropic case and in \cite{Ting1996} for the anisotropic case. We begin this section by considering classical generalized plane stress. Classical plane stress will be discussed later in Section \ref{sec:techconsts}. Since we are dealing with anisotropic material models, the following derivation mirrors the work presented in \cite{Ting1996}. 

The classical generalized plane stress assumptions are given as follows:

\begin{enumerate}[(C$\sigma$1)]
\item The body is a thin plate of thickness $2h$ occupying the region $-h~\leqslant~z~\leqslant~h$. \label{assump:PSs1}
\item The density is constant in the third dimension: $\rho = \rho(x,y)$. \label{assump:PSs5}
\item The body is subjected to a loading parallel and symmetric relative to the plane $z = 0$:  \label{assump:PSs2}
\begin{equation}
b_3(\bfx,t) = 0 \quad \text{ and } \quad \bfb(x,y,z,t) = \bfb(x,y,-z,t).
\end{equation}
\item The first and second components of the initial and boundary conditions are symmetric while their third component is antisymmetric relative to the plane $z=0$. \label{assump:PSsSymm} 
\item The surfaces of the plate are stress-free, i.e., $\sigma_{13} = \sigma_{23} = \sigma_{33} = 0$ for $z = \pm h$. \label{assump:PSs3}
\item The average stress $\overline{\sigma}_{33}$ ({\em cf.} \eqref{eqn:averagedefinition}) is zero\footnote{
It is possible to show that $\overline{\sigma}_{33}  = O(h^2)$. To see this,
perform a Taylor series about $z = 0$ to obtain (omitting $x$, $y$, and $t$ dependence for brevity)
\begin{equation*}
\sigma_{33} (z) = \sigma_{33} (0) + \frac{\partial \sigma_{33} }{\partial z}(0) z + O(z^2). 
\end{equation*}
Since the plate is traction-free on the top and bottom surfaces by (C$\sigma$\ref{assump:PSs3}), we obtain 
\begin{equation*}
0 = \sigma_{33}(h) + \sigma_{33}(-h) = 2\sigma_{33}(0) + O(h^2) \Rightarrow \sigma_{33}(0) = O(h^2). 
\end{equation*}
Thus,
\begin{equation}\label{eqn:sigma33small}
\overline{\sigma}_{33} = \frac{1}{2h} \int_{-h}^h \sigma_{33}(z) d z = \frac{1}{2h} \int_{-h}^h \sigma_{33}(0) + \frac{\partial \sigma_{33} }{\partial z}(0) z + O(z^2) d z = O(h^2).
\end{equation}
Consequently, if the plate is very thin, so that terms of order $O(h^2)$ may be neglected, we may suppose $\overline{\sigma}_{33} \approx 0$.
}
throughout the body. \label{assump:PSs4}
\item The material has at least monoclinic symmetry with a plane of reflection corresponding to the plane $z=0$, i.e., \eqref{eqn:monoelastrelations} holds\footnote{ If one were to consider a triclinic material, the in-plane stresses $\sigma_{11}, \sigma_{12},$ and $\sigma_{22}$ would generate shear strains $\varepsilon_{13}$ and $\varepsilon_{23}$ ({\em cf.} \eqref{eqn:compliancetensoreqn}). This, in turn, would cause the mid-plane of the plate, the plane $z = 0$, to no longer remain planar after the in-plane loading. Much like in the plane strain case, this can be alleviated by assuming monoclinic symmetry with the plane of reflection symmetry coinciding with the plane $z = 0$. }. \label{assump:PSs6}
\end{enumerate}  

A system satisfying Assumptions (C$\sigma$\ref{assump:PSs1})--(C$\sigma$\ref{assump:PSs6}) is said to be in a state of classical generalized plane stress. See Figure \ref{fig:planestress} for an illustration of a body in a state of plane stress. For a system in a state of classical generalized plane stress, the displacement field $\bfu$ has certain symmetries imposed on it, which we summarize in Lemma~\ref{lem:planestressymm}. 

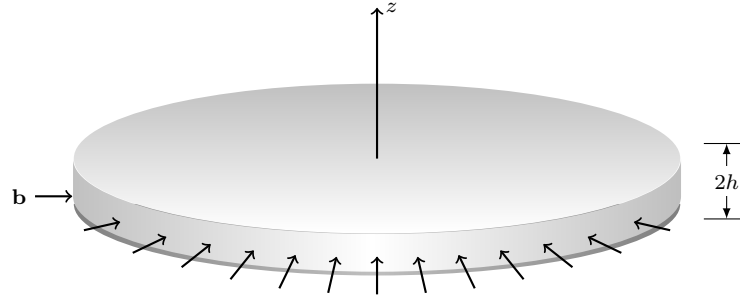
\begin{figure}
\begin{center}
\begin{tikzpicture}
\path [left color=black!50, right color=black!50, middle color=black!25] (-4+.01,-0.6) arc (180:360:4-.01 and 1-.05*1) -- cycle; 
      \path [top color=black!25, bottom color=white] 
        (0,0) ellipse [x radius=4-.01, y radius=1-.01*2/3]; 
      \path [left color=black!25, right color=black!25, middle color=white]  (-4,0) -- (-4,-0.5) arc (180:360:4 and 1) -- (4,0) arc (360:180:4 and 1); 
      \draw[->,thick] (0,0)--(0,2) node[right]{$z$};
	   \draw (4.3,0.2) -- ++(0.3,0) coordinate (D1) -- +(5pt,0);
        \draw (4.3,-0.8) -- ++(0.3,0) coordinate (D2) -- +(5pt,0);
        \draw [<->,>=latex,thin,every rectangle node/.style={fill=white,midway,font=\sffamily}] (D1) -- (D2) node {$2h$};
        \draw[->,thick] (-4.5,-0.5) node[left]{$\bfb$} -- (-4.01,-0.5);
       
        \draw[->,thick] (-3.857,-0.95) -- (-3.857+0.47771,-0.95+0.10904);
         \draw[->,thick] (3.857,-0.95) -- (3.857-0.47771,-0.95+0.10904);
         
        \draw[->,thick] (-3.2143,-1.25) -- (-3.2143+0.44147,-1.25+0.2126);
        \draw[->,thick] (3.2143,-1.25) -- (3.2143-0.44147,-1.25+0.2126);
        
        \draw[->,thick] (-2.5714,-1.44) -- (-2.5714+0.3831,-1.44+0.30551);
	    \draw[->,thick] (2.5714,-1.44) -- (2.5714-0.3831,-1.44+0.30551);        
        
        \draw[->,thick] (-1.92857,-1.6) -- (-1.92857+0.30551,-1.6+0.3831);
        \draw[->,thick] (1.92857,-1.6) -- (1.92857-0.30551,-1.6+0.3831);
       
        \draw[->,thick] (-1.28571,-1.72) -- (-1.2857+0.2126,-1.72+0.44147);
         \draw[->,thick] (1.28571,-1.72) -- (1.2857-0.2126,-1.72+0.44147);
        
        \draw[->,thick] (-0.6429,-1.78) -- (-0.6429+0.10904,-1.78+0.477771);
        \draw[->,thick] (0.6429,-1.78) -- (0.6429-0.10904,-1.78+0.477771);        
        
        \draw[->,thick](0,-1.8)--(0,-1.3);
\end{tikzpicture}
\caption{Illustration of plane stress.}
\label{fig:planestress}
\end{center}
\end{figure}

\begin{lemma}\label{lem:planestressymm} Under Assumptions (C$\sigma$\ref{assump:PSs1})--(C$\sigma$\ref{assump:PSsSymm}) and (C$\sigma$\ref{assump:PSs6}), the displacement field $\bfu$ of \eqref{eqn:eqnofmotionclassical} satisfies

\begin{align}\label{eqn:planestressmonoclinicdisplacement}
u_i(x,y,-z,t) = \left\{ \begin{array}{ll}
u_i(x,y,z,t),   & i = 1,2  \\
-u_i(x,y,z,t), & i =3
\end{array} \right. ,
\end{align}

i.e., the in-plane displacements, $u_1$ and $u_2$, are symmetric while the out-of-plane displacement, $u_3$, is antisymmetric relative to the plane $z = 0$. 
 \end{lemma}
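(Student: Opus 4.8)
The plan is a reflection--uniqueness argument. Set $\bfQ := \mathbf{Ref}(\bfe_3) = \operatorname{diag}(1,1,-1)$, so that $\bfQ\bfx = (x,y,-z)$, and introduce the reflected displacement field $\bfv(\bfx,t) := \bfQ\,\bfu(\bfQ\bfx,t)$; in components, $v_i(\bfx,t) = s_i\,u_i(\bfQ\bfx,t)$ with $s_1 = s_2 = 1$, $s_3 = -1$, and no summation on $i$. With this notation, \eqref{eqn:planestressmonoclinicdisplacement} is precisely the assertion $\bfv \equiv \bfu$, so it is enough to show that $\bfv$ solves the same initial-boundary-value problem for \eqref{eqn:eqnofmotionclassical} as $\bfu$, and then to invoke uniqueness of solutions of the linear elastodynamic problem (the standard energy argument, under the usual positivity hypotheses on $\mathbb{C}$ and $\rho$).

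The first step is to verify that $\bfv$ satisfies \eqref{eqn:eqnofmotionclassical} with the \emph{same} density $\rho$ and body force $\bfb$. By the chain rule, each differentiation with respect to $z$ contributes a factor $-1$, so $\frac{\partial^2 v_k}{\partial x_j \partial x_l}(\bfx) = s_k s_j s_l\,\frac{\partial^2 u_k}{\partial x_j \partial x_l}(\bfQ\bfx)$ (no summation). The decisive input is Assumption (C$\sigma$\ref{assump:PSs6}): monoclinic symmetry \eqref{eqn:monoelastrelations} says exactly that $\mathbb{C}$ is invariant under $\bfQ$ (Definition~\ref{def:symmtran}), equivalently that $C_{ijkl} = 0$ unless the value $3$ occurs an even number of times among $i,j,k,l$, i.e., $s_i s_j s_k s_l = 1$ whenever $C_{ijkl} \neq 0$. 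Since $s_i^2 = 1$, this forces $s_j s_k s_l = s_i$ on that support, whence
\[
C_{ijkl}\,\frac{\partial^2 v_k}{\partial x_j \partial x_l}(\bfx) = s_i\,\Bigl(C_{ijkl}\,\frac{\partial^2 u_k}{\partial x_j \partial x_l}\Bigr)(\bfQ\bfx).
\]
For the remaining terms, $\rho(\bfx)\,\ddot v_i(\bfx) = s_i\,\rho(x,y)\,\ddot u_i(\bfQ\bfx) = s_i\,\rho(\bfQ\bfx)\,\ddot u_i(\bfQ\bfx)$ by (C$\sigma$\ref{assump:PSs5}), and $s_i\,b_i(\bfQ\bfx) = b_i(\bfx)$ because (C$\sigma$\ref{assump:PSs2}) makes $b_1,b_2$ even in $z$ and $b_3 \equiv 0$. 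Evaluating \eqref{eqn:eqnofmotionclassical} for $\bfu$ at the point $\bfQ\bfx$ and multiplying through by $s_i$ then reproduces \eqref{eqn:eqnofmotionclassical} for $\bfv$.

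The second step is to match the data. The map $\bfQ$ takes $\mathcal{B}$ onto itself by (C$\sigma$\ref{assump:PSs1}), and $\bfv(\cdot,0) = \bfQ\,\bfu(\cdot,0)\circ\bfQ$, $\dot{\bfv}(\cdot,0) = \bfQ\,\dot{\bfu}(\cdot,0)\circ\bfQ$. By (C$\sigma$\ref{assump:PSsSymm}), the first two components of the initial and boundary data are even in $z$ and the third is odd, which is exactly what makes $\bfv$ reproduce the initial displacement, initial velocity, and prescribed boundary data of $\bfu$; for traction-type conditions one uses in addition $\sigma^{\bfv}_{ij}(\bfx) = C_{ijkl}\,\frac{\partial v_k}{\partial x_l}(\bfx) = s_i s_j\,\sigma^{\bfu}_{ij}(\bfQ\bfx)$ (the same sign cancellation as above), which in particular preserves the stress-free faces $z = \pm h$. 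Hence $\bfv$ and $\bfu$ solve one and the same problem, so by uniqueness $\bfv = \bfu$, which is \eqref{eqn:planestressmonoclinicdisplacement}.

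The only substantive point is the sign bookkeeping of the first step, together with the recognition that the monoclinic hypothesis \eqref{eqn:monoelastrelations} is \emph{precisely} the parity condition $s_i s_j s_k s_l = 1$ (for $C_{ijkl} \neq 0$) that lets the reflected field satisfy the same equation; everything else is routine once one observes that the plate, its density, its loading, and its data are each symmetric or antisymmetric under $z \mapsto -z$, and that the elastodynamic problem has a unique solution. The one place calling for a little care is the treatment of the faces $z = \pm h$, where it is the stress, rather than the displacement, that must be reflected.
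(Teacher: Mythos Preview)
Your proof is correct and follows essentially the same reflection--uniqueness strategy as the paper: define the reflected field $\bfv(\bfx,t)=\bfQ\,\bfu(\bfQ\bfx,t)$, show it satisfies the same equation of motion and data, and conclude $\bfv=\bfu$ by uniqueness. The only difference is presentational---the paper writes out all three component equations explicitly and tracks the sign changes by hand, whereas your sign calculus $s_i s_j s_k s_l=1$ on the support of $C_{ijkl}$ packages the monoclinic invariance more cleanly; your added remark on how the stress transforms under reflection (relevant for the traction-free faces) is a useful detail the paper leaves implicit.
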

\begin{proof}
See Appendix \ref{Appendix:classplanestresssymm}. \qed
\end{proof}
Before we derive the classical generalized plane stress equation of motion, let us first introduce a shorthand notation. Given a function $f(x,y,z,t)$, we define
\begin{equation}\label{eqn:averagedefinition}
\overline{f}(x,y,t) := \frac{1}{2h} \int_{-h}^h f(x,y,z,t) d z \quad \text{and} \quad 
[f](x,y,t) := \frac{f(x,y,h,t) - f(x,y,-h,t)}{2h},
\end{equation}

where $\overline{f}$ is the average of $f$ along the $z$-direction and $[f]$ is the average of the values of $f$ at the top and bottom surfaces of the plate. We observe that~\eqref{eqn:planestressmonoclinicdisplacement} implies
\[
\left[ u_1 \right] = \left[ u_2 \right] = \overline{u}_3 = 0.
\]
Consequently, we have
\begin{equation}\label{eqn:AvgDerivativePlainStress}
\frac{1}{2h} \int_{-h}^h \frac{\partial u_k}{\partial x_s} d z = \left\{
\begin{array}{ll}
\dfrac{\partial \overline{u}_k}{\partial x_s}, & k,s \neq 3 \\[8pt]
\left[u_3\right], & k=s=3 \\
0, & \text{otherwise}
\end{array} \quad \text{and} \quad 
\overline{\varepsilon}_{ks} = \left\{
\begin{array}{ll}
 \frac{1}{2} \left( \frac{\partial \overline{u}_k}{\partial x_s} + \frac{\partial \overline{u}_s}{\partial x_k} \right), & k,s \neq 3 \\
 \left[u_3\right], & k=s=3 \\
 0, & \text{otherwise}
\end{array}
\right.
\right. .
\end{equation}
Recalling the stress-strain relation \eqref{eqn:stressstrainrelation} and imposing Assumption (C$\sigma$\ref{assump:PSs6}), we obtain ({\em cf.}~\eqref{eqn:monoelastrelations})
\begin{subequations}\label{eqn:stressstrainplanestressmono}
\begin{align}
\sigma_{11} &= C_{1111} \varepsilon_{11} + C_{1122} \varepsilon_{22} + C_{1133} \varepsilon_{33} + 2 C_{1112} \varepsilon_{12}, \\
\sigma_{22} &= C_{1122} \varepsilon_{11}  + C_{2222} \varepsilon_{22} + C_{2233} \varepsilon_{33} + 2 C_{2212} \varepsilon_{12}, \\
\sigma_{33} &= C_{1133} \varepsilon_{11} + C_{2233} \varepsilon_{22} + C_{3333} \varepsilon_{33} + 2 C_{3312} \varepsilon_{12}, \\
\sigma_{23} &= 2 C_{2323} \varepsilon_{23} + 2 C_{2313} \varepsilon_{13}, \\
\sigma_{13} &= 2 C_{2313} \varepsilon_{23} + 2 C_{1313} \varepsilon_{13}, \\
\sigma_{12} &= C_{1112} \varepsilon_{11}  + C_{2212} \varepsilon_{22} + C_{3312} \varepsilon_{33} + 2 C_{1212} \varepsilon_{12}.
\end{align}
\end{subequations}

Taking the average along the $z$-direction for each of the stress-strain relations in \eqref{eqn:stressstrainplanestressmono}, we obtain ({\em cf.}~\eqref{eqn:AvgDerivativePlainStress})
\begin{subequations}\label{eqn:AvgStressRelation}
\begin{align}
\overline{\sigma}_{11} &= C_{1111} \frac{\partial \overline{u}_1}{\partial x} + C_{1112} \left(\frac{\partial \overline{u}_1}{\partial y} + \frac{\partial \overline{u}_2}{\partial x} \right) + C_{1122} \frac{\partial \overline{u}_2}{\partial y} + C_{1133} [u_3], \\
\overline{\sigma}_{22} &= C_{1122} \frac{\partial \overline{u}_1}{\partial x} + C_{2212} \left(\frac{\partial \overline{u}_1}{\partial y} + \frac{\partial \overline{u}_2}{\partial x} \right) + C_{2222} \frac{\partial \overline{u}_2}{\partial y} + C_{2233} [u_3], \\
\overline{\sigma}_{33} &= C_{1133} \frac{\partial \overline{u}_1}{\partial x} + C_{3312}\left(\frac{\partial \overline{u}_1}{\partial y} + \frac{\partial \overline{u}_2}{\partial x} \right) + C_{2233} \frac{\partial \overline{u}_2}{\partial y} + C_{3333} [u_3], \label{eqn:AvgStressRelationc} \\
\overline{\sigma}_{23} &= 0, \\ 
\overline{\sigma}_{13} &= 0, \\ 
\overline{\sigma}_{12} &= C_{1112} \frac{\partial \overline{u}_1}{\partial x} + C_{1212} \left(\frac{\partial \overline{u}_1}{\partial y} + \frac{\partial \overline{u}_2}{\partial x} \right) + C_{2212} \frac{\partial \overline{u}_2}{\partial y} + C_{3312} [u_3]. 
\end{align}
\end{subequations}

By \eqref{eqn:AvgStressRelationc} and Assumption (C$\sigma$\ref{assump:PSs4}), we obtain
\[
0 = \overline{\sigma}_{33} = C_{1133} \frac{\partial \overline{u}_1}{\partial x} + C_{3312}\left(\frac{\partial \overline{u}_1}{\partial y} + \frac{\partial \overline{u}_2}{\partial x} \right) + C_{2233} \frac{\partial \overline{u}_2}{\partial y} + C_{3333} [u_3]. 
\]
Solving for $[u_3]$, we find\footnote{Note for material stability, $C_{3333} > 0$ is required \cite{born1954dynamical,cowley1976acoustic}.} 
\begin{equation}\label{eqn:[u3]solnplanestress}
[u_3] = -\left[ \frac{C_{1133}}{C_{3333}} \frac{\partial \overline{u}_1}{\partial x} + \frac{C_{3312}}{C_{3333}} \left(\frac{\partial \overline{u}_1}{\partial y} + \frac{\partial \overline{u}_2}{\partial x} \right) + \frac{C_{2233}}{C_{3333}} \frac{\partial \overline{u}_2}{\partial y} \right]. 
\end{equation}
Plugging \eqref{eqn:[u3]solnplanestress} into the equations of \eqref{eqn:AvgStressRelation}, we find 

\begin{subequations}\label{eqn:avgstressforclassicalplanestress}
\begin{align}
\overline{\sigma}_{11} &= \left(C_{1111} - \frac{C_{1133}^2}{C_{3333}} \right)\frac{\partial \overline{u}_1}{\partial x} + \left( C_{1112} - \frac{C_{1133} C_{3312}}{C_{3333}} \right)\left(\frac{\partial \overline{u}_1}{\partial y} + \frac{\partial \overline{u}_2}{\partial x} \right) + \left( C_{1122} - \frac{C_{1133} C_{2233}}{C_{3333}}\right) \frac{\partial \overline{u}_2}{\partial y}, \\
\overline{\sigma}_{22} &= \left(C_{1122} - \frac{C_{2233} C_{1133}}{C_{3333}} \right) \frac{\partial \overline{u}_1}{\partial x} + \left( C_{2212} - \frac{C_{2233} C_{3312}}{C_{3333}}\right) \left(\frac{\partial \overline{u}_1}{\partial y} + \frac{\partial \overline{u}_2}{\partial x} \right) + \left(C_{2222}- \frac{C_{2233}^2}{C_{3333}}\right) \frac{\partial \overline{u}_2}{\partial y}, \\
\overline{\sigma}_{33} &= 0, \\
\overline{\sigma}_{23} &= 0, \\
\overline{\sigma}_{13} &= 0, \\
\overline{\sigma}_{12} &= \left(C_{1112} - \frac{C_{3312} C_{1133} }{C_{3333}}\right) \frac{\partial \overline{u}_1}{\partial x} + \left( C_{1212}- \frac{C_{3312}^2}{C_{3333}}\right) \left(\frac{\partial \overline{u}_1}{\partial y} + \frac{\partial \overline{u}_2}{\partial x} \right) + \left(C_{2212}- \frac{C_{3312} C_{2233}}{C_{3333}}\right) \frac{\partial \overline{u}_2}{\partial y}. 
\end{align}
\end{subequations}

\underline{Classical generalized plane stress}: 
 Averaging the equation of motion \eqref{eqn:eqnofmotionclassical} along the $z$-direction (recall $\rho$ is constant in $z$ by Assumption~(C$\sigma$\ref{assump:PSs5}) and $\left[ \sigma_{13} \right] = \left[ \sigma_{23} \right] = 0$ by Assumption~(C$\sigma$\ref{assump:PSs3})) and then employing \eqref{eqn:avgstressforclassicalplanestress}, results in
\begin{subequations}\label{eqn:planestressequationsofmotionclassical}
\begin{align}
\begin{split}
\rho \ddot{\overline{u}}_1 ={}& \frac{\partial \overline{\sigma}_{11}}{\partial x} + \frac{\partial \overline{\sigma}_{12}}{\partial y} + \left[ \sigma_{13} \right] + \overline{b}_1 \\
={}&\left(C_{1111} - \frac{C_{1133}^2}{C_{3333}} \right)\frac{\partial^2 \overline{u}_1}{\partial x^2} + 2 \left( C_{1112} - \frac{C_{1133}C_{3312}}{C_{3333}} \right) \frac{\partial^2 \overline{u}_1}{\partial x \partial y}  +  \left( C_{1212}- \frac{C_{3312}^2}{C_{3333}}\right) \frac{\partial^2 \overline{u}_1}{\partial y^2}  \\
&+ \left( C_{1112} - \frac{C_{1133}C_{3312}}{C_{3333}} \right) \frac{\partial^2 \overline{u}_2}{\partial x^2} + \left( C_{1122} - \frac{C_{1133}C_{2233}}{C_{3333}} + C_{1212}- \frac{C_{3312}^2}{C_{3333}} \right) \frac{\partial^2 \overline{u}_2}{\partial x \partial y} \\
&+\left(C_{2212}- \frac{C_{2233}C_{3312}}{C_{3333}}\right) \frac{\partial^2 \overline{u}_2}{\partial y^2} + \overline{b}_1, 
\end{split} \label{eqn:planestressequationsofmotionclassicala} \\
\begin{split}
\rho \ddot{\overline{u}}_2 ={}& \frac{\partial \overline{\sigma}_{21}}{\partial x} + \frac{\partial \overline{\sigma}_{22}}{\partial y} + \left[ \sigma_{23} \right] +\overline{b}_2 \\
={}& \left(C_{1112} - \frac{C_{1133} C_{3312} }{C_{3333}}\right) \frac{\partial^2 \overline{u}_1}{\partial x^2} + \left( C_{1122} - \frac{C_{1133}C_{2233}}{C_{3333}} + C_{1212}- \frac{C_{3312}^2}{C_{3333}} \right) \frac{\partial^2 \overline{u}_1}{\partial x \partial y}    \\
&+ \left( C_{2212} - \frac{C_{2233}C_{3312}}{C_{3333}}\right) \frac{\partial^2 \overline{u}_1}{\partial y^2} + \left( C_{1212}- \frac{C_{3312}^2}{C_{3333}}\right) \frac{\partial^2 \overline{u}_2}{\partial x^2}    \\
&+ 2 \left( C_{2212} - \frac{C_{2233}C_{3312}}{C_{3333}}\right) \frac{\partial^2 \overline{u}_2}{\partial x \partial y} + \left(C_{2222}- \frac{C_{2233}^2}{C_{3333}}\right) \frac{\partial^2 \overline{u}_2}{\partial y^2} + \overline{b}_2. \label{eqn:planestressequationsofmotionclassicalb}
\end{split}
\end{align}
\end{subequations}
Equation \eqref{eqn:planestressequationsofmotionclassical} is the \textit{classical generalized plane stress} equation of motion. By performing the substitution
\begin{equation}\label{eqn:planestressto2Dcoefficientrelationship}
\tilde{C}_{ijkl} = C_{ijkl} -  \frac{C_{ij33} C_{33kl}}{C_{3333}}
\end{equation}
in \eqref{eqn:planestressequationsofmotionclassical}, we see that up to a change in coefficients, the pure two-dimensional oblique equation of motion \eqref{eqn:eqnmotionobl} and the generalized plane stress equation of motion \eqref{eqn:planestressequationsofmotionclassical} are equivalent mathematically (if we replace $\bfb$ and $\bfu$ in \eqref{eqn:eqnmotionobl} by $\overline{\bfb}$ and $\overline{\bfu}$, respectively). Moreover, in Section \ref{sec:techconsts} we demonstrate that, with respect to engineering constants, these two equations are equivalent.

\begin{remark}\label{rmk:planestressteps}
It is possible to perform the steps for the derivation of the classical generalized plane stress model \eqref{eqn:planestressequationsofmotionclassical} in an alternative order (given below), which facilitates the comparison with the derivation of the peridynamic generalized plane stress model in Section \ref{sec:PeridynamicPlaneStress}: 
\begin{enumerate}[Step 1:]
    \item Take the average of the equation of motion \eqref{eqn:eqnofmotionclassical} along the thickness of the plate.
    \item Utilize Assumption (C$\sigma$\ref{assump:PSs3}) to eliminate $[\sigma_{13}]$ and $[\sigma_{23}]$.
    \item Employ Assumption (C$\sigma$\ref{assump:PSs4}) to replace $[u_3]$ by expressions in $\overline{u}_1$ and $\overline{u}_2$.
\end{enumerate}
In the derivation, Lemma \ref{lem:planestressymm} is utilized to eliminate various terms.
\end{remark}

Similarly to the plane strain case, for monoclinic symmetry in generalized plane stress, there is only choice of plane of reflection orientation which decouples the in-plane and out-of-plane displacements (i.e., when the plane of reflection coincides with the plane $z=0$). However, for many of the other symmetry classes in Section \ref{sec:threedimclassicalelasticity} there are multiple planes of reflection symmetry. Therefore, there are multiple orientations that decouple the in-plane and out-of-plane displacements and, consequently, it is not possible to propose a unique generalized plane stress model for those symmetry classes. We consider the equation of motion for one possible orientation for each symmetry class. We suppose the elasticity tensors for each symmetry group are given by those found in Section \ref{sec:threedimclassicalelasticity}.

\underline{Orthotropic}: Imposing \eqref{eqn:orthoelastrelations} on \eqref{eqn:planestressequationsofmotionclassical}, we arrive at the following orthotropic generalized plane stress model:
\begin{subequations}\label{eqn:planestressequationsofmotionclassicalortho}
\begin{align}
\rho \ddot{\overline{u}}_1 =& \left(C_{1111} - \frac{C_{1133}^2}{C_{3333}} \right)\frac{\partial^2 \overline{u}_1}{\partial x^2} + C_{1212} \frac{\partial^2 \overline{u}_1}{\partial y^2} + \left(  C_{1122} + C_{1212} - \frac{C_{1133}C_{2233}}{C_{3333}} \right) \frac{\partial^2 \overline{u}_2}{\partial x \partial y} + \overline{b}_1, \\
\rho \ddot{\overline{u}}_2 =&  \left( C_{1122} + C_{1212} - \frac{C_{1133}C_{2233}}{C_{3333}} \right) \frac{\partial^2 \overline{u}_1}{\partial x \partial y}  + C_{1212}  \frac{\partial^2 \overline{u}_2}{\partial x^2} + \left(C_{2222}- \frac{C_{2233}^2}{C_{3333}}\right) \frac{\partial^2 \overline{u}_2}{\partial y^2} + \overline{b}_2.
\end{align}
\end{subequations}

By performing the substitution \eqref{eqn:planestressto2Dcoefficientrelationship} in \eqref{eqn:planestressequationsofmotionclassicalortho}, we see that up to a change in coefficients, the orthotropic generalized plane stress equation of motion \eqref{eqn:planestressequationsofmotionclassicalortho} and the pure two-dimensional rectangular equation of motion \eqref{eqn:eqnmotionrec} are equivalent mathematically (if we replace $\bfb$ and $\bfu$ in \eqref{eqn:eqnmotionrec} by $\overline{\mathbf{b}}$ and $\overline{\mathbf{u}}$, respectively).

\underline{Trigonal}: Imposing \eqref{eqn:trigonalelastrelations} on \eqref{eqn:planestressequationsofmotionclassical}, we arrive at the following trigonal generalized plane stress model:
\begin{subequations}\label{eqn:planestressequationsofmotionclassicaltrigonal}
\begin{align}
\begin{split}
\rho \ddot{\overline{u}}_1 ={}&\left(C_{1111} - \frac{C_{1122}^2}{C_{2222}} \right)\frac{\partial^2 \overline{u}_1}{\partial x^2} + 2 \frac{C_{1122}C_{2212}}{C_{2222}} \frac{\partial^2 \overline{u}_1}{\partial x \partial y}  +  \left( C_{1212}- \frac{C_{2212}^2}{C_{2222}}\right) \frac{\partial^2 \overline{u}_1}{\partial y^2}  + \frac{C_{1122}C_{2212}}{C_{2222}} \frac{\partial^2 \overline{u}_2}{\partial x^2}  \\
&+ \left( C_{1122} - \frac{C_{1122}C_{2233}}{C_{2222}} + C_{1212}- \frac{C_{2212}^2}{C_{2222}} \right) \frac{\partial^2 \overline{u}_2}{\partial x \partial y}+\left(C_{2212} + \frac{C_{2233}C_{2212}}{C_{2222}}\right) \frac{\partial^2 \overline{u}_2}{\partial y^2} + \overline{b}_1, 
\end{split} \\
\begin{split}
\rho \ddot{\overline{u}}_2 ={}& \frac{C_{1122} C_{2212} }{C_{2222}}  \frac{\partial^2 \overline{u}_1}{\partial x^2} + \left( C_{1122} - \frac{C_{1122}C_{2233}}{C_{2222}} + C_{1212} - \frac{C_{2212}^2}{C_{2222}} \right) \frac{\partial^2 \overline{u}_1}{\partial x \partial y}  + \left( C_{2212} + \frac{C_{2233}C_{2212}}{C_{2222}}\right) \frac{\partial^2 \overline{u}_1}{\partial y^2} \\
&+ \left( C_{1212}- \frac{C_{2212}^2}{C_{2222}}\right) \frac{\partial^2 \overline{u}_2}{\partial x^2} + 2 \left( C_{2212} + \frac{C_{2233}C_{2212}}{C_{2222}}\right) \frac{\partial^2 \overline{u}_2}{\partial x \partial y} + \left(C_{2222}- \frac{C_{2233}^2}{C_{2222}}\right) \frac{\partial^2 \overline{u}_2}{\partial y^2} + \overline{b}_2.
\end{split}
\end{align}
\end{subequations}

By performing the substitution \eqref{eqn:planestressto2Dcoefficientrelationship} in \eqref{eqn:planestressequationsofmotionclassicaltrigonal}, we see that up to a change in coefficients, the trigonal generalized plane stress equation of motion \eqref{eqn:planestressequationsofmotionclassicaltrigonal}  and the pure two-dimensional oblique equation of motion \eqref{eqn:eqnmotionobl} are equivalent mathematically (if we replace $\bfb$ and $\bfu$ in \eqref{eqn:eqnmotionobl} by $\overline{\mathbf{b}}$ and $\overline{\mathbf{u}}$, respectively).

\underline{Tetragonal}: Imposing \eqref{eqn:tetraelastrelations} on \eqref{eqn:planestressequationsofmotionclassical}, we arrive at the following tetragonal generalized plane stress model:
\begin{subequations}\label{eqn:planestressequationsofmotionclassicaltetra}
\begin{align}
\rho \ddot{\overline{u}}_1 =& \left(C_{1111} - \frac{C_{1133}^2}{C_{3333}} \right)\frac{\partial^2 \overline{u}_1}{\partial x^2} + C_{1212} \frac{\partial^2 \overline{u}_1}{\partial y^2} + \left( C_{1122} + C_{1212} - \frac{C_{1133}^2}{C_{3333}} \right) \frac{\partial^2 \overline{u}_2}{\partial x \partial y} + \overline{b}_1, \\
\rho \ddot{\overline{u}}_2 =& \left( C_{1122} + C_{1212} - \frac{C_{1133}^2}{C_{3333}} \right) \frac{\partial^2 \overline{u}_1}{\partial x \partial y} +C_{1212}  \frac{\partial^2 \overline{u}_2}{\partial x^2} + \left(C_{1111}- \frac{C_{1133}^2}{C_{3333}}\right) \frac{\partial^2 \overline{u}_2}{\partial y^2}  + \overline{b}_2.
\end{align}
\end{subequations}

By performing the substitution \eqref{eqn:planestressto2Dcoefficientrelationship} in \eqref{eqn:planestressequationsofmotionclassicaltetra}, we see that up to a change in coefficients, the tetragonal generalized plane stress equation of motion \eqref{eqn:planestressequationsofmotionclassicaltetra} and the pure two-dimensional square equation of motion \eqref{eqn:eqnmotionsqr}  are equivalent mathematically (if we replace $\bfb$ and $\bfu$ in \eqref{eqn:eqnmotionsqr}  by $\overline{\mathbf{b}}$ and $\overline{\mathbf{u}}$, respectively).

\underline{Transversely Isotropic}: Imposing \eqref{eqn:tisoelastrelations} on \eqref{eqn:planestressequationsofmotionclassical}, we arrive at the following transversely isotropic generalized plane stress model:
\begin{subequations}\label{eqn:planestressequationsofmotionclassicaltiso}
\begin{align}
\rho \ddot{\overline{u}}_1 =& \left(C_{1111} - \frac{C_{1133}^2}{C_{3333}} \right)\frac{\partial^2 \overline{u}_1}{\partial x^2} + \frac{1}{2} \left(C_{1111}-C_{1122}\right) \frac{\partial^2 \overline{u}_1}{\partial y^2} + \frac{1}{2} \left( C_{1111} + C_{1122} - 2\frac{C_{1133}^2}{C_{3333}} \right) \frac{\partial^2 \overline{u}_2}{\partial x \partial y} + \overline{b}_1, \\
\rho \ddot{\overline{u}}_2 =& \frac{1}{2} \left( C_{1111} + C_{1122} - 2 \frac{C_{1133}^2}{C_{3333}} \right) \frac{\partial^2 \overline{u}_1}{\partial x \partial y} +\frac{1}{2} \left(C_{1111}-C_{1122}\right)  \frac{\partial^2 \overline{u}_2}{\partial x^2} + \left(C_{1111}- \frac{C_{1133}^2}{C_{3333}}\right) \frac{\partial^2 \overline{u}_2}{\partial y^2}  + \overline{b}_2.
\end{align}
\end{subequations}
By performing the substitution \eqref{eqn:planestressto2Dcoefficientrelationship} in \eqref{eqn:planestressequationsofmotionclassicaltiso}, we see that up to a change in coefficients, the transversely isotropic generalized plane stress equation of motion \eqref{eqn:planestressequationsofmotionclassicaltiso} and the pure two-dimensional isotropic equation of motion \eqref{eqn:eqnmotioniso}  are equivalent mathematically (if we replace $\bfb$ and $\bfu$ in \eqref{eqn:eqnmotioniso}  by $\overline{\mathbf{b}}$ and $\overline{\mathbf{u}}$, respectively).

\underline{Cubic}: Imposing \eqref{eqn:cubicelastrelations} on \eqref{eqn:planestressequationsofmotionclassical}, we arrive at the following cubic generalized plane stress model:
\begin{subequations}\label{eqn:planestressequationsofmotionclassicalcubic}
\begin{align}
\rho \ddot{\overline{u}}_1 =& \left(C_{1111} - \frac{C_{1122}^2}{C_{1111}} \right)\frac{\partial^2 \overline{u}_1}{\partial x^2} + C_{1212} \frac{\partial^2 \overline{u}_1}{\partial y^2} + \left( C_{1122} + C_{1212} - \frac{C_{1122}^2}{C_{1111}} \right) \frac{\partial^2 \overline{u}_2}{\partial x \partial y} + \overline{b}_1, \\
\rho \ddot{\overline{u}}_2 =& \left( C_{1122} + C_{1212} - \frac{C_{1122}^2}{C_{1111}} \right) \frac{\partial^2 \overline{u}_1}{\partial x \partial y} +C_{1212}  \frac{\partial^2 \overline{u}_2}{\partial x^2} + \left(C_{1111}- \frac{C_{1122}^2}{C_{1111}}\right) \frac{\partial^2 \overline{u}_2}{\partial y^2}  + \overline{b}_2.
\end{align}
\end{subequations}
By performing the substitution \eqref{eqn:planestressto2Dcoefficientrelationship} in \eqref{eqn:planestressequationsofmotionclassicalcubic}, we see that up to a change in coefficients, the cubic generalized plane stress equation of motion \eqref{eqn:planestressequationsofmotionclassicalcubic} and the pure two-dimensional square equation of motion \eqref{eqn:eqnmotionsqr}  are equivalent mathematically (if we replace $\bfb$ and $\bfu$ in \eqref{eqn:eqnmotionsqr}  by $\overline{\mathbf{b}}$ and $\overline{\mathbf{u}}$, respectively).

\underline{Isotropic}: Imposing \eqref{eqn:iso3Delastrelations} on \eqref{eqn:planestressequationsofmotionclassical}, we arrive at the following isotropic generalized plane stress model:
\begin{subequations}\label{eqn:planestressequationsofmotionclassicaliso}
\begin{align}
\rho \ddot{\overline{u}}_1 =& \left(C_{1111} - \frac{C_{1122}^2}{C_{1111}} \right)\frac{\partial^2 \overline{u}_1}{\partial x^2} + \frac{1}{2}\left(C_{1111} - C_{1122} \right) \frac{\partial^2 \overline{u}_1}{\partial y^2} + \frac{1}{2} \left( C_{1111} + C_{1122} - 2\frac{C_{1122}^2}{C_{1111}} \right) \frac{\partial^2 \overline{u}_2}{\partial x \partial y} + \overline{b}_1, \\
\rho \ddot{\overline{u}}_2 =& \frac{1}{2} \left( C_{1111} + C_{1122} - 2 \frac{C_{1122}^2}{C_{1111}} \right) \frac{\partial^2 \overline{u}_1}{\partial x \partial y} +\frac{1}{2}\left(C_{1111} - C_{1122} \right)  \frac{\partial^2 \overline{u}_2}{\partial x^2} + \left(C_{1111}- \frac{C_{1122}^2}{C_{1111}}\right) \frac{\partial^2 \overline{u}_2}{\partial y^2}  + \overline{b}_2.
\end{align}
\end{subequations}
By performing the substitution \eqref{eqn:planestressto2Dcoefficientrelationship} in \eqref{eqn:planestressequationsofmotionclassicaliso}, we see that up to a change in coefficients, the isotropic generalized plane stress equation of motion \eqref{eqn:planestressequationsofmotionclassicaltetra} and the pure two-dimensional isotropic equation of motion \eqref{eqn:eqnmotioniso}  are equivalent mathematically (if we replace $\bfb$ and $\bfu$ in \eqref{eqn:eqnmotioniso}  by $\overline{\mathbf{b}}$ and $\overline{\mathbf{u}}$, respectively).

\begin{remark}
In Table \ref{tab:modelequivalenceplanestrainstressto2D}, we summarize the equivalence, up to a change in coefficients, between the in-plane equations of motion for the classical planar models and the corresponding classical pure two-dimensional models. Note that, unlike in classical plane strain ({\em cf.} Section \ref{sec:classplanestrain}), in classical generalized plane stress, the in-plane equations of motion for the tetragonal and cubic models as well as those for the transversely isotropic and isotropic models are not identical.  
\end{remark}

\begin{table}
\begin{center}
\begin{tabular}{|c|c|}
\hline
\textbf{Pure Two-Dimensional Classical Model} & \textbf{Classical Plane Strain or Stress Model} \\
\hline
Oblique & Monoclinic and Trigonal \\
\hline
Rectangular & Orthotropic \\
\hline
Square & Tetragonal and Cubic \\
\hline
Isotropic & Transversely Isotropic and Isotropic \\
\hline
\end{tabular}
\caption{Model equivalence (up to a change in constants) between classical pure two-dimensional models and classical planar models (provided the three-dimensional elasticity tensors have the form of those in Section~\ref{sec:threedimclassicalelasticity}).}\label{tab:modelequivalenceplanestrainstressto2D}
\end{center}
\end{table}

\begin{remark}
In classical plane stress, Assumption (C$\sigma$\ref{assump:PSs4}) is replaced by the assumption $\sigma_{13} = \sigma_{23} = \sigma_{33} = 0$ throughout the body.  Note in classical generalized plane stress, we analogously obtain $\overline{\sigma}_{13} = \overline{\sigma}_{23} = \overline{\sigma}_{33} = 0$; however, only the assumption $\overline{\sigma}_{33} = 0$ is directly imposed. It turns out the classical plane stress model is equivalent, in terms of engineering constants ({\em cf}. Section \ref{sec:techconsts}), to the classical generalized plane stress model \eqref{eqn:planestressequationsofmotionclassical} if one replaces the displacement field $\bfu$ with its average $\overline{\bfu}$ and the body force density field $\bfb$ with its average $\overline{\bfb}$. This is shown in more detail in Section \ref{sec:techconsts}. Consequently, we only consider classical generalized plane stress in this paper.
\end{remark}

\subsection{Engineering constants}\label{sec:techconsts}
An important consideration is the abuse of notation in the elasticity tensor between the two- and three-dimensional formulations of classical linear elasticity. In particular, even though the in-plane equations of motion for classical plane strain, \eqref{eqn:classicalmonoeqnmotion1} and \eqref{eqn:classicalmonoeqnmotion2}, look identical to the classical pure two-dimensional oblique equations of motion, \eqref{eqn:eqnmotionobla} and \eqref{eqn:eqnmotionoblb}, the coefficients are not equivalent.  For instance, the elasticity constant $C_{1111}$ in two dimensions does not have the same physical meaning as the one in three dimensions. In fact, it can be shown that the classical pure two-dimensional model~\eqref{eqn:eqnmotionobl} is, in some sense, equivalent to the classical generalized plane stress model~\eqref{eqn:planestressequationsofmotionclassical}. To facilitate these comparisons, it is useful to express the elasticity constants in terms of engineering constants \cite{jones2014mechanics,nemeth2011depth,Vannucci:2018:AE}, which we now briefly introduce. 

\textbf{Young's moduli}: 
For a given $i$, under a uniaxial stress $\sigma_{ii}$, we define the Young's modulus
\[
E_i := \frac{\sigma_{ii}}{\varepsilon_{ii}}. 
\]

\textbf{Shear moduli}: Given $i$ and $j$, under a shear stress $\sigma_{ij}$, we define the shear modulus
\[
G_{ij} := \frac{\sigma_{ij}}{2 \varepsilon_{ij}}. 
\]

\textbf{Poisson's ratios}: 
Given $i$ and $j$, under a uniaxial stress $\sigma_{ii}$, we define the Poisson's ratio
\[
\nu_{ij} := -\frac{\varepsilon_{jj}}{\varepsilon_{ii}}.
\] 

\textbf{Chentsov's coefficients}\footnote{As we only have one shear strain component in two dimensions, the Chentsov's coefficients do not play a role in the classical pure two-dimensional model. However, we include them here for the comparisons between the classical pure two-dimensional model and the classical plane strain and classical generalized plane stress models. }: 
Given $i,j,k,$ and $l$, under a shear stress $\sigma_{ij}$, we define the Chentsov's coefficient
\[
\mu_{kl,ij} := \frac{\varepsilon_{kl}}{\varepsilon_{ij}}, \quad \left\{ij\right\} \neq \left\{kl \right\}.
\]

\textbf{Coefficients of mutual influence of the first type}: Given $i,j,$ and $k$, under a shear stress $\sigma_{ij}$, we define the coefficient of mutual influence of the first type
\[
\eta_{kk,ij} := \frac{\varepsilon_{kk}}{2 \varepsilon_{ij}}.
\]

\textbf{Coefficients of mutual influence of the second type}: Given $i,j,$ and $k$, under a uniaxial stress $\sigma_{ii}$, we define the coefficient of mutual influence of the second type
\[
\eta_{jk,ii} := \frac{2 \varepsilon_{jk}}{\varepsilon_{ii}}.
\]

\begin{remark}
Note these coefficients can be divided into two groups, those which are measured under a uniaxial stress: $E_i$, $\nu_{ij}$, and $\eta_{jk,ii}$, and those which are measured under a shear stress: $G_{ij}$, $\mu_{kl,ij}, \eta_{kk,ij}$. 
\end{remark}

To express the elasticity constants in terms of the engineering constants, one typically relates the compliance tensor $\mathbb{S}$ ({\em cf.} \eqref{eqn:strainstressrelation}) to the engineering  constants and then inverts $\mathbb{S}$ to obtain $\mathbb{C}$. In order to relate a component of the compliance tensor $S_{ijkl}$ to the engineering constants, one imposes a single nonzero component of the stress in \eqref{eqn:compliancetensoreqn}, either a uniaxial stress $\sigma_{ii}$ for a given $i$ to obtain
\begin{align}
S_{iiii} = \frac{1}{E_i}, S_{jjii} = - \frac{\nu_{ij}}{E_i}, \text{ and } 2 S_{jkii} = \frac{\eta_{jk,ii}}{E_i},
\end{align}
or a shear stress $\sigma_{ij}$ for a given $i$ and $j$ to obtain
\begin{align}
4 S_{ijij} = \frac{1}{G_{ij}}, 4 S_{klij} = \frac{\mu_{kl,ij}}{G_{ij}}, \text{ and } 2 S_{kkij} = \frac{\eta_{kk,ij}}{G_{ij}}.
\end{align}

Additionally, the major symmetry ($S_{ijkl} = S_{klij}$) of the compliance tensor \eqref{eqn:compliancetensoreqn} implies that not all of these constants are independent. In fact, 
\begin{equation}\label{eqn:techconstrelations2}
\frac{\nu_{ij}}{E_i} = \frac{\nu_{ji}}{E_j}, \frac{\eta_{ij,kk}}{E_{k}} = \frac{\eta_{kk,ij}}{G_{ij}}, \text{ and } \frac{\mu_{ij,kl}}{G_{kl}} = \frac{\mu_{kl,ij}}{G_{ij}}.
\end{equation}

In three dimensions, the strain-stress relation in \eqref{eqn:compliancetensoreqn} can be expressed in terms of engineering constants as
\begin{equation}\label{eqn:compliancetensor3dtechnicalconstants}
\left[
\begin{array}{c}
\varepsilon_{11} \\
\varepsilon_{22} \\
\varepsilon_{33} \\
2\varepsilon_{23} \\
2\varepsilon_{13} \\
2\varepsilon_{12}
\end{array}
\right] =
\left[
\begin{array}{cccccc}
\frac{1}{E_1} & - \frac{\nu_{12}}{E_1} & - \frac{\nu_{13}}{E_1} & \frac{\eta_{23,11}}{E_1} & \frac{\eta_{13,11}}{E_1} & \frac{\eta_{12,11}}{E_1} \\
\cdot & \frac{1}{E_2} & - \frac{\nu_{23}}{E_2} & \frac{\eta_{23,22}}{E_2} & \frac{\eta_{13,22}}{E_2} & \frac{\eta_{12,22}}{E_2} \\
\cdot & \cdot & \frac{1}{E_3} & \frac{\eta_{23,33}}{E_3} & \frac{\eta_{13,33}}{E_3} & \frac{\eta_{12,33}}{E_3} \\
\cdot & \cdot & \cdot & \frac{1}{G_{23}} & \frac{\mu_{13,23}}{G_{23}} & \frac{\mu_{12,23}}{G_{23}} \\
\cdot & \cdot & \cdot & \cdot & \frac{1}{G_{13}} & \frac{\mu_{12,13}}{G_{13}} \\
\cdot & \cdot & \cdot & \cdot & \cdot & \frac{1}{G_{12}}
\end{array}
\right]
\left[ 
\begin{array}{c}
\sigma_{11} \\
\sigma_{22} \\
\sigma_{33} \\
\sigma_{23} \\
\sigma_{13} \\
\sigma_{12}
\end{array}
\right] .
\end{equation}

In classical plane strain and classical generalized plane stress monoclinic symmetry is assumed. Under monoclinic symmetry, we may simplify \eqref{eqn:compliancetensor3dtechnicalconstants} to
\begin{equation}\label{eqn:compliancetensor3dtechnicalconstantsmono}
\left[
\begin{array}{c}
\varepsilon_{11} \\ 
\varepsilon_{22} \\
\varepsilon_{33} \\
2\varepsilon_{12}
\end{array}
\right] =
\left[
\begin{array}{cccc}
\frac{1}{E_1} & - \frac{\nu_{12}}{E_1} & - \frac{\nu_{13}}{E_1} & \frac{\eta_{12,11}}{E_1} \\
\cdot & \frac{1}{E_2} & - \frac{\nu_{23}}{E_2} & \frac{\eta_{12,22}}{E_2} \\
\cdot & \cdot & \frac{1}{E_3}  & \frac{\eta_{12,33}}{E_3} \\
\cdot & \cdot & \cdot  & \frac{1}{G_{12}}
\end{array}
\right]
\left[ 
\begin{array}{c}
\sigma_{11} \\
\sigma_{22} \\
\sigma_{33} \\
\sigma_{12}
\end{array}
\right] \text{ and } 
\left[
\begin{array}{c} 2 \varepsilon_{23} \\ 2 \varepsilon_{13} \end{array}
\right] = \left[ \begin{array}{cc} \frac{1}{G_{23}} & \frac{\mu_{13,23}}{G_{23}} \\ \cdot & \frac{1}{G_{13}} 
\end{array}
\right]
\left[ \begin{array}{c} \sigma_{23} \\ \sigma_{13} \end{array} \right].
\end{equation}
\begin{remark}
In classical plane strain and classical generalized plane stress, $\varepsilon_{23} = \varepsilon_{13} = 0$ and $\overline{\sigma}_{23} = \overline{\sigma}_{13} = 0$, respectively. Immediately from the second system in~\eqref{eqn:compliancetensor3dtechnicalconstantsmono}, we obtain $\varepsilon_{23} = \varepsilon_{13} = \sigma_{23} = \sigma_{13} = 0$ for classical plane strain and $\overline{\varepsilon}_{23} = \overline{\varepsilon}_{13} = \overline{\sigma}_{23} = \overline{\sigma}_{13} = 0$ for classical generalized plane stress.
\end{remark}

By inverting the systems in \eqref{eqn:compliancetensor3dtechnicalconstantsmono}, we obtain expressions for the elasticity constants $C_{ijkl}$ for monoclinic symmetry in terms of the engineering constants:
\begin{equation}\label{eqn:monoclinicInTermsofTechnicalConstants}
\begin{split}
&C_{1111} = -\frac{2 \, E_{3} G_{12} \eta_{12,22} \eta_{12,33} \nu_{23} + E_{3} G_{12} \eta_{12,22}^{2} + E_{2} G_{12} \eta_{12,33}^{2} + E_{3}^{2} \nu_{23}^{2} - E_{2} E_{3}}{E_{2}^{2} E_{3}^{2} G_{12}|\mathbf{A}|}, \\
&C_{1122} = \frac{E_{3} G_{12} \eta_{12,22} \eta_{12,33} \nu_{13} + E_{3} G_{12} \eta_{12,11} \eta_{12,22} - {\left(E_{2} G_{12} \eta_{12,33}^{2} - E_{2} E_{3}\right)} \nu_{12} + {\left(E_{3} G_{12} \eta_{12,11} \eta_{12,33} + E_{3}^{2} \nu_{13}\right)} \nu_{23}}{E_{1} E_{2} E_{3}^{2} G_{12}|\mathbf{A}|}, \\
&C_{1133} = \frac{E_{2} G_{12} \eta_{12,22} \eta_{12,33} \nu_{12} + E_{2} G_{12} \eta_{12,11} \eta_{12,33} - {\left(E_{3} G_{12} \eta_{12,22}^{2} - E_{2} E_{3}\right)} \nu_{13} + {\left(E_{3} G_{12} \eta_{12,11} \eta_{12,22} + E_{2} E_{3} \nu_{12}\right)} \nu_{23}}{E_{1} E_{2}^{2} E_{3} G_{12}|\mathbf{A}|}, \\
&C_{1112} = \frac{E_{3} \eta_{12,11} \nu_{23}^{2} - E_{2} \eta_{12,22} \nu_{12} - E_{2} \eta_{12,33} \nu_{13} - E_{2} \eta_{12,11} - {\left(E_{2} \eta_{12,33} \nu_{12} + E_{3} \eta_{12,22} \nu_{13}\right)} \nu_{23}}{E_{1} E_{2}^{2} E_{3}|\mathbf{A}|},\\
&C_{2222} = -\frac{2 \, E_{3} G_{12} \eta_{12,11} \eta_{12,33} \nu_{13} + E_{3} G_{12} \eta_{12,11}^{2} + E_{1} G_{12} \eta_{12,33}^{2} + E_{3}^{2} \nu_{13}^{2} - E_{1} E_{3}}{E_{1}^{2} E_{3}^{2} G_{12}|\mathbf{A}|}, \\
&C_{2233} = \frac{E_{2} G_{12} \eta_{12,11} \eta_{12,33} \nu_{12} + E_{1} G_{12} \eta_{12,22} \eta_{12,33} + {\left(E_{3} G_{12} \eta_{12,11} \eta_{12,22} + E_{2} E_{3} \nu_{12}\right)} \nu_{13} - {\left(E_{3} G_{12} \eta_{12,11}^{2} - E_{1} E_{3}\right)} \nu_{23}}{E_{1}^{2} E_{2} E_{3} G_{12}|\mathbf{A}|}, \\
&C_{2212} = -\frac{E_{2} \eta_{12,33} \nu_{12} \nu_{13} - E_{3} \eta_{12,22} \nu_{13}^{2} + E_{2} \eta_{12,11} \nu_{12} + E_{1} \eta_{12,22} + {\left(E_{3} \eta_{12,11} \nu_{13} + E_{1} \eta_{12,33}\right)} \nu_{23}}{E_{1}^{2} E_{2} E_{3}|\mathbf{A}| }, \\
&C_{3333} = -\frac{2 \, E_{2} G_{12} \eta_{12,11} \eta_{12,22} \nu_{12} + E_{2} G_{12} \eta_{12,11}^{2} + E_{1} G_{12} \eta_{12,22}^{2} + E_{2}^{2} \nu_{12}^{2} - E_{1} E_{2}}{E_{1}^{2} E_{2}^{2} G_{12}|\mathbf{A}|}, \\
&C_{3312} = \frac{E_{2}^{2} \eta_{12,33} \nu_{12}^{2} - E_{1} E_{2} \eta_{12,33} - {\left(E_{2} E_{3} \eta_{12,22} \nu_{12} + E_{2} E_{3} \eta_{12,11}\right)} \nu_{13} - {\left(E_{2} E_{3} \eta_{12,11} \nu_{12} + E_{1} E_{3} \eta_{12,22}\right)} \nu_{23}}{E_{1}^{2} E_{2}^{2} E_{3}|\mathbf{A}|}, \\
&C_{2323} = \frac{G_{23}^{2}}{G_{23} - G_{13} \mu_{13,23}^{2}}, \\
&C_{2313} = -\frac{G_{13} G_{23} \mu_{1323}}{G_{23} - G_{13} \mu_{13,23}^{2}}, \\
&C_{1313} = \frac{G_{13} G_{23}}{G_{23} - G_{13} \mu_{13,23}^{2}}, \\
&C_{1212} =-\frac{2 \, E_{2} E_{3} \nu_{12} \nu_{13} \nu_{23} + E_{2}^{2} \nu_{12}^{2} + E_{2} E_{3} \nu_{13}^{2} + E_{1} E_{3} \nu_{23}^{2} - E_{1} E_{2}}{E_{1}^{2} E_{2}^{2} E_{3}|\mathbf{A}|},
\end{split} 
\end{equation}
where 
\begin{equation*}
|\mathbf{A}| := \det \left( \left[
\begin{array}{cccc}
\frac{1}{E_1} & - \frac{\nu_{12}}{E_1} & - \frac{\nu_{13}}{E_1} & \frac{\eta_{12,11}}{E_1} \\
\cdot & \frac{1}{E_2} & - \frac{\nu_{23}}{E_2} & \frac{\eta_{12,22}}{E_2} \\
\cdot & \cdot & \frac{1}{E_3}  & \frac{\eta_{12,33}}{E_3} \\
\cdot & \cdot & \cdot  & \frac{1}{G_{12}}
\end{array}
\right] \right).
\end{equation*}
In classical plane stress, one typically supposes $\sigma_{13} = \sigma_{23} = \sigma_{33} = 0$. Imposing this on the systems in \eqref{eqn:compliancetensor3dtechnicalconstantsmono}, we may relate the strains $\varepsilon_{11}, \varepsilon_{22},$ and $\varepsilon_{12}$ to the stresses $\sigma_{11}, \sigma_{22}$, and $\sigma_{12}$ as follows:
\begin{equation}\label{eqn:compliancetensorpure2dtechnicalconstants}
\left[
\begin{array}{c}
\varepsilon_{11} \\
\varepsilon_{22} \\
2 \varepsilon_{12}
\end{array}
\right] =
\left[
\begin{array}{ccc}
\frac{1}{E_1} & - \frac{\nu_{12}}{E_1} & \frac{\eta_{12,11}}{E_1} \\
\cdot & \frac{1}{E_2} & \frac{\eta_{12,22}}{E_2} \\
\cdot & \cdot & \frac{1}{G_{12}}
\end{array}
\right]
\left[ 
\begin{array}{c}
\sigma_{11} \\
\sigma_{22} \\
\sigma_{12}
\end{array}
\right].
\end{equation}
Note that~\eqref{eqn:compliancetensorpure2dtechnicalconstants} is exactly the strain-stress relationship one obtains for classical pure two-dimensional linear  elasticity, i.e., the two-dimensional analogue of~\eqref{eqn:compliancetensor3dtechnicalconstants}. Moreover, since $\sigma_{13}$ and $\sigma_{23}$ are null in classical plane stress, the resulting in-plane equations of motion are identical in terms of engineering constants to the classical pure two-dimensional equation of motion \eqref{eqn:eqnmotionobl}. 

By inverting the system \eqref{eqn:compliancetensorpure2dtechnicalconstants}, we obtain expressions for the elasticity constants $C_{ijkl}$ in terms of the engineering constants for the classical pure two-dimensional model (and the classical plane stress model):
\begin{equation}\label{eqn:2DCijklInTermsOfTechConstants}
\begin{split}
&C_{1111} = \frac{E_2 - G_{12} \eta_{12,22}^{2} }{E_{2}^{2} G_{12}| \mathbf{S}|}, \\
&C_{1122} = \frac{E_{2} \nu_{12} + G_{12} \eta_{12,11} \eta_{12,22}}{E_{1} E_{2} G_{12}|\mathbf{S}|}, \\
&C_{1112} = -\frac{\eta_{12,22} \nu_{12} + \eta_{12,11}}{E_{1} E_{2}|\mathbf{S}|}, \\
&C_{2222} = \frac{E_1 - G_{12} \eta_{12,11}^{2} }{E_{1}^{2} G_{12}|\mathbf{S}|}, \\
&C_{2212} = -\frac{E_{1} \eta_{12,22} + E_{2} \eta_{12,11} \nu_{12}}{E_{1}^{2} E_{2}|\mathbf{S}|}, \\
&C_{1212} = \frac{E_1 - E_{2} \nu_{12}^{2}}{E_{1}^{2} E_{2}|\mathbf{S}|},
\end{split}
\end{equation}
where
\begin{equation*}
|\mathbf{S}|:= \det \left( \left[
\begin{array}{ccc}
\frac{1}{E_1} & - \frac{\nu_{12}}{E_1} & \frac{\eta_{12,11}}{E_1} \\
\cdot & \frac{1}{E_2} & \frac{\eta_{12,22}}{E_2} \\
\cdot & \cdot & \frac{1}{G_{12}}
\end{array}
\right] \right).
\end{equation*}
 
From this analysis, it is clear that the elasticity constants $C_{ijkl}$ appearing in the classical plane strain and classical generalized plane stress models are not equivalent to those in classical pure two-dimensional linear elasticity. However, it is interesting to note that substituting \eqref{eqn:monoclinicInTermsofTechnicalConstants} into the classical generalized plane stress equation of motion \eqref{eqn:planestressequationsofmotionclassical} and substituting \eqref{eqn:2DCijklInTermsOfTechConstants} into the classical pure two-dimensional equation of motion \eqref{eqn:eqnmotionobl} results in the exact same equation in terms of engineering constants (if we replace $\bfb$ and $\bfu$ in \eqref{eqn:eqnmotionobl} by $\overline{\mathbf{b}}$ and $\overline{\mathbf{u}}$, respectively). Further details on engineering constants in three dimensions can be found in \cite{STG2019}.

\subsection{Cauchy's Relations}\label{sec:CauchyRelations}
In the nineteenth century, there was some contention about the quantity of independent constants in the equations of classical linear elasticity. Proponents of the multi-constant theory, such as Green, Stokes, and Thomson, supported a model with twenty-one elasticity constants. Alternatively, proponents of the rari-constant theory, such as Navier, Poisson, and Saint-Venant, supported a model with fifteen elasticity constants. The fifteen elasticity constant formulation, first derived by Cauchy, is obtained by assuming a molecular description of materials based on central forces between pairs of molecules. This fifteen constant formulation places six additional restrictions on the elasticity tensor $\mathbb{C}$:

\begin{equation}\label{eqn:cauchyrelationgeneral}
    C_{ijkl} = C_{ikjl}.
\end{equation}
In \cite{Love1892}, the term Cauchy's relations was coined to refer to these additional symmetries.  A far more detailed and complete historical account of the controversy can be found in \cite{Love1892}.  

Eventually the fifteen constant formulation from the rari-constant theory was shown to be invalid for many materials; however, as we will see later, there are materials whose properties satisfy \eqref{eqn:cauchyrelationgeneral}. Cauchy's relations make appearances in a variety of theories of mechanics. For example, it is well known that a molecular theory of elasticity based on pairwise potentials imposes Cauchy's relations \eqref{eqn:cauchyrelationgeneral} on the corresponding elasticity tensor \cite{Stakgold1950}. Since bond-based peridynamics is similarly based on a pairwise potential formulation, it is perhaps unsurprising that bond-based peridynamics suffers from the same limitations, i.e., it should only be utilized to describe materials with properties satisfying Cauchy's relations. This is most evident when considering isotropic materials where it is well known that Poisson's ratio is restricted to $\frac{1}{4}$ in three dimensions \cite{SILLING2000} and $\frac{1}{3}$ in two dimensions ({\em cf.} Appendix \ref{appendix:poissonratio}). Since this work deals with bond-based peridynamics, a discussion of Cauchy's relations is essential.

In three dimensions, there are six Cauchy's relations:
\begin{equation}\label{eqn:3DCauchyRelations}
\begin{split}
&C_{1212} = C_{1122}, \; C_{1313} = C_{1133}, \;C_{2323} = C_{2233},\\
&C_{1312} = C_{1123}, \; C_{2312} = C_{2213}, \; C_{2313} = C_{3312}.
\end{split}
\end{equation}
With the addition of Cauchy's relations \eqref{eqn:3DCauchyRelations} to the major symmetry~\eqref{eqn:majorsymmetry}  and the minor symmetries~\eqref{eqn:minorsymmetries}, the number of independent elasticity constants reduces from twenty-one to fifteen, and $\mathbb{C}$ becomes a completely symmetric tensor. See \cite{STG2019} for an in-depth discussion of the implications of Cauchy's relations in three dimensions.

In two dimensions, there is a single Cauchy's relation: 
\begin{equation}\label{eqn:cauchyrelation2d}
C_{1122} = C_{1212}. 
\end{equation}
With the addition of Cauchy's relation \eqref{eqn:cauchyrelation2d} to the major symmetry~\eqref{eqn:majorsymmetry}  and the minor symmetries~\eqref{eqn:minorsymmetries}, the number of independent elasticity constants reduces from six to five ({\em cf.} Table \ref{tab:NumberOfElastConstants}), and $\mathbb{C}$ becomes a completely symmetric tensor. In terms of the engineering constants, \eqref{eqn:cauchyrelation2d} can be expressed as ({\em cf.}~\eqref{eqn:2DCijklInTermsOfTechConstants} and~\eqref{eqn:techconstrelations2})
\begin{equation}\label{eqn:CauchyRelationTechConstant}
G_{12} = \displaystyle\frac{E_2 \nu_{12}}{1- \nu_{12}\nu_{21} - \eta_{12,11} \eta_{12,22}}.
\end{equation}
Imposing \eqref{eqn:cauchyrelation2d}  on the two-dimensional elasticity tensor for each of the symmetry classes ({\em cf.} Theorem \ref{thm:symmetryclasses}) results in the following forms for the elasticity tensor:

\begin{subequations}
\begin{align}
&\textbf{Symmetry Group} & \textbf{Elasticity Tensor} \hspace*{0.4in} \nonumber \\[-0.05in]
& & \textbf{(Cauchy's relation imposed)} \nonumber \\[0.05in]
&\text{Oblique} & \left[ \begin{array}{ccc}
C_{1111} & C_{1122} & C_{1112} \\
\cdot & C_{2222} & C_{2212} \\
\cdot & \cdot & C_{1122}
\end{array} \right] \hspace*{0.29in} \label{eqn:ObliqueTensorCauchy} \\[0.08in]
& \text{Rectangular} & \left[ \begin{array}{ccc}
C_{1111} & C_{1122} & 0 \\
\cdot & C_{2222} & 0 \\
\cdot & \cdot & C_{1122}
\end{array} \right] \hspace*{0.29in} \label{eqn:RectangularTensorCauchy} \\[0.08in]
& \text{Square} & \left[ \begin{array}{ccc}
C_{1111} & C_{1122} & 0 \\
\cdot & C_{1111} & 0 \\
\cdot &  \cdot& C_{1122}
\end{array} \right] \hspace*{0.29in} \label{eqn:SquareTensorCauchy} \\[0.08in]
& \text{Isotropic}\footnotemark & \left[ \begin{array}{ccc}
C_{1111} & \frac{1}{3} C_{1111} & 0 \\
\cdot & C_{1111} & 0 \\
\cdot & \cdot & \frac{1}{3} C_{1111}
\end{array} \right]. \hspace*{0.15in}  \label{eqn:IsoTensorCauchy}
\end{align}
\end{subequations}
\footnotetext{In the isotropic case, Cauchy's relation reduces to $C_{1122} = \frac{C_{1111} - C_{1122}}{2}$ and thus $C_{1122} = C_{1212} = \frac{1}{3} C_{1111}$.\label{footnote:CauchyRelation}}

The restrictions each symmetry class imposes on the engineering constants as well as the corresponding Cauchy's relation is summarized in Table \ref{tab:CauchyRelationsTechnicalConstants}. To obtain the engineering constant symmetry restrictions, simply invert each elasticity tensor in Theorem \ref{thm:symmetryclasses} and equate it to the compliance tensor in \eqref{eqn:compliancetensorpure2dtechnicalconstants}. Then, to get the corresponding Cauchy's relation in terms of technical constants, impose the engineering constant symmetry restrictions on \eqref{eqn:CauchyRelationTechConstant}. Additionally, due to Cauchy's relation, we lose one degree of freedom in each symmetry class. The number of independent constants in the elasticity tensor with and without the Cauchy's relation imposed is summarized for each symmetry class in Table \ref{tab:NumberOfElastConstants}.  

\begin{table}[h!]
\begin{center}
\caption{Restrictions on engineering constants by symmetry class in two-dimensional classical linear elasticity.}\label{tab:CauchyRelationsTechnicalConstants}
\begin{tabular}{|l|c|c|}
\hline 
Symmetry Class & Symmetry Restrictions & Cauchy's Relation \\
\hline
\rule{0pt}{4ex}  Oblique &  &  $G_{12} = \displaystyle\frac{E_2 \nu_{12}}{1- \nu_{12}\nu_{21} - \eta_{12,11} \eta_{12,22}}$ \\[0.1in]
\hline
\rule{0pt}{4ex} Rectangular & $\eta_{12,11}=\eta_{12,22} = 0$ & $G_{12} = \displaystyle\frac{E_2 \nu_{12} }{1-\nu_{12} \nu_{21}}$ \\[0.1in]
\hline
\rule{0pt}{4ex} Square & $\eta_{12,11}=\eta_{12,22} = 0, \nu_{12} = \nu_{21}, E_1 = E_2$ & $G_{12} = \displaystyle\frac{E_1 \nu_{12}}{1-\nu_{12}^2}$ \\[0.1in]
\hline
\rule{0pt}{4ex} Isotropic & $\eta_{12,11}=\eta_{12,22} = 0, \nu_{12} = \nu_{21}, E_1 = E_2, G_{12} = \displaystyle\frac{E_1}{2(1+\nu_{12})}$ & $G_{12} = \displaystyle\frac{3 E_1}{8} \text{ or } \nu_{12} = \displaystyle\frac{1}{3}$ \\[0.1in]
\hline
\end{tabular}
\end{center}
\end{table}

\begin{table}[h!]
\begin{center}
\caption{Number of independent constants by symmetry class in two-dimensional classical linear elasticity.}\label{tab:NumberOfElastConstants}
\begin{tabular}{|l|c|c|}
\hline
Symmetry Class & Number of Constants & \begin{tabular}{c} Number of Constants \\ (Cauchy's relation imposed) \end{tabular} \\
\hline
Oblique & 6 & 5 \\
\hline
Rectangular & 4 & 3 \\
\hline
Square & 3 & 2 \\
\hline
Isotropic & 2 & 1 \\
\hline
\end{tabular}
\end{center}
\end{table}


\section{Bond-based peridynamics}\label{sec:LinearElastPeri}

A goal of this paper is to develop bond-based peridynamic analogues of the classical two-dimensional and planar linear elastic anisotropic models. In the bond-based peridynamic theory \cite{SILLING2000}, given a body $\mathcal{B} \subset \mathbb{R}^d$, where $d$ is the dimension, the equation of motion for a material point $\bfx \in \mathcal{B}$ at time $t \geqslant 0$ is given by ({\em cf}. \cite{emmrich2007peridynamic})
\begin{equation}\label{eqn:perieqnmotiongeneral}
\rho(\bfx) \ddot{\bfu}(\bfx,t) = \int_{\mathcal{H}_{\bfx}} \bff(\bfu(\bfx',t),\bfu(\bfx,t),\bfx',\bfx) d V_{\bfx'} + \bfb(\bfx,t),
\end{equation}
where $\rho$ is the mass density, $\ddot{\bfu}$ is the second derivative in time of the displacement field $\bfu$, $\bff$ is the pairwise force function representing the nonlocal interaction a material point $\bfx'$ exerts on the material point $\bfx$, and $\bfb$ is a prescribed body force density field. The neighborhood of $\bfx$, $\mathcal{H}_{\bfx}$, represents the domain over which the material point $\bfx$ is able to directly interact. A material point $\bfx$ is typically assumed to only directly interact with material points in the body $\mathcal{B}$ within some prescribed distance $\delta$, called the peridynamic horizon. In this case\footnote{When interactions are limited to the body, boundary conditions are effectively imposed on the model.\label{footnote:imposingboundary}},
\begin{equation}\label{def:Hx}
\mathcal{H}_{\bfx} = \mathcal{B} \cap B_{\delta}(\bfx),
\end{equation}
where $B_{\delta}(\bfx):= \left\{ \bfx' \in \mathbb{R}^d: \|\bfx' - 
\bfx \| < \delta \right\}$ is the ball in $\mathbb{R}^d$ of radius $\delta$ centered at $\bfx$. In particular, when a material point $\bfx$ is in the bulk of the body, i.e., further than $\delta$ from the boundary of the body~$\partial \mathcal{B}$, $\mathcal{H}_{\bfx} = B_{\delta}(\bfx)$.

For convenience, we introduce the usual shorthand notation $\bfxi := \bfx' - \bfx$ and $\bfeta := \bfu(\bfx',t) - \bfu(\bfx,t)$, which represent the relative position in the undeformed configuration and the relative displacement, respectively, of the material points $\bfx$ and $\bfx'$; we refer to $\bfxi$ as a peridynamic bond.

There are several essential conditions which must be placed upon the pairwise force function $\bff$ in \eqref{eqn:perieqnmotiongeneral}. To ensure 
invariance under rigid body motion, we require
\begin{equation}\label{eqn:periinvunderrigid}
\begin{split}
\bff(\bfu(\bfx',t),\bfu(\bfx,t),\bfx',\bfx) = \bff(\bfeta,\bfx',\bfx), \quad \forall \bfeta,\bfx',\bfx \in \mathbb{R}^d.
\end{split}
\end{equation}
To enforce conservation of linear momentum, we require
\begin{equation}\label{eqn:periinvunderlinearmom}
\begin{split}
\bff(\bfeta,\bfx',\bfx) = - \bff(-\bfeta,\bfx,\bfx'), \quad \forall \bfeta,\bfx',\bfx \in \mathbb{R}^d.
\end{split}
\end{equation}

To guarantee balance of angular momentum, we require
\begin{equation}\label{eqn:peribalanceangmomentum}
( \bfxi + \bfeta) \times \bff(\bfeta,\bfx',\bfx) = \mathbf{0}, \quad \forall \bfeta,\bfx',\bfx \in \mathbb{R}^d,
\end{equation}
i.e., the pairwise force function acts along the direction of the deformed bond $\bfxi + \bfeta$. We further suppose the material is microelastic, i.e., the pairwise force function $\bff$ derives from a scalar-valued pairwise potential function $w$:
\begin{equation}\label{eqn:microelastic}
\bff(\bfeta,\bfx',\bfx) = \frac{\partial w}{\partial \bfeta}(\bfeta,\bfx',\bfx), \quad \forall \bfeta,\bfx',\bfx \in \mathbb{R}^d.
\end{equation}

In order to obtain a linear bond-based peridynamic model, we suppose a small deformation, in particular $\| \bfeta \| \ll \delta$ \cite{silling2010linearized}, and linearize the pairwise force function $\bff(\bfeta, \cdot, \cdot)$ while holding $\bfx'$ and $\bfx$ fixed to obtain
\begin{equation}\label{eqn:genpairwiseforcefunction}
\bff(\bfeta,\bfx',\bfx) = \bff(\bf0,\bfx',\bfx) + \bfC(\bfx',\bfx) \bfeta,
\end{equation}
where $\bfC(\bfx',\bfx)$ is the second-order micromodulus tensor given by
\begin{equation}\label{eqn:initialCform}
\bfC(\bfx',\bfx):= \frac{\partial \bff}{\partial \bfeta} (\bf0,\bfx',\bfx)
\end{equation}
and terms of order $O(\| \bfeta \|^2)$ have been omitted. Next, we look at the implications of enforcing conditions \eqref{eqn:periinvunderrigid}, \eqref{eqn:periinvunderlinearmom}, \eqref{eqn:peribalanceangmomentum}, and \eqref{eqn:microelastic}  on the linearized pairwise force function \eqref{eqn:genpairwiseforcefunction}. Combining  \eqref{eqn:periinvunderlinearmom} and \eqref{eqn:genpairwiseforcefunction} we obtain
\begin{equation}\label{eqn:Csymmetricfinal}
\bfC(\bfx,\bfx') = \bfC(\bfx',\bfx), \quad \forall \bfx',\bfx \in \mathbb{R}^d.
\end{equation}
Additionally, by considering \eqref{eqn:microelastic} and \eqref{eqn:initialCform}, we deduce
\begin{equation}\label{eqn:CperiSymmetric}
\bfC^\text{T}(\bfx',\bfx) = \bfC(\bfx',\bfx), \quad \forall \bfx',\bfx \in \mathbb{R}^d.
\end{equation}
An immediate consequence of \eqref{eqn:peribalanceangmomentum}, is the existence of a scalar-valued function $F(\bfeta,\bfx',\bfx)$ such that
\begin{equation}\label{eqn:altformfwithscalarF}
\bff(\bfeta,\bfx',\bfx) = ( \bfxi + \bfeta) F(\bfeta,\bfx',\bfx), \quad \forall \bfeta,\bfx',\bfx \in \mathbb{R}^d.
\end{equation} 
Differentiating \eqref{eqn:altformfwithscalarF} with respect to $\bfeta$, we obtain from \eqref{eqn:initialCform} that 
\begin{equation}\label{eqn:Csecondtolastform}
\bfC(\bfx',\bfx) = \bfxi \otimes \frac{\partial F}{\partial \bfeta} (\mathbf{0} ,\bfx',\bfx) + F(\mathbf{0},\bfx',\bfx) \bfI, \quad \forall \bfx',\bfx \in \mathbb{R}^d.
\end{equation}
A necessary and sufficient condition\footnote{Sufficiency is trivial. To see necessary, set $\bfA(\bfx',\bfx) :=\frac{\partial F}{\partial \bfeta} (\mathbf{0} ,\bfx',\bfx)$. Then $\bfxi \otimes \bfA(\bfx',\bfx)$ is symmetric if and only if 
\begin{equation}\label{eqn:symmneccsuffcond}
\xi_i A_j(\bfx',\bfx) = \xi_j A_i(\bfxi,\bfxi), \quad \forall i,j.
\end{equation} 
If $\xi_i = 0$ and $\xi_j \neq 0$ then $A_i(\bfx',\bfx) = 0$ by \eqref{eqn:symmneccsuffcond}. Thus we may suppose there exists a function $a_i(\bfx',\bfx)$ such that $A_i(\bfx',\bfx) = \xi_i a_i(\bfx',\bfx)$ (no summation implied by repeated indices). From \eqref{eqn:symmneccsuffcond} we find that $a_i(\bfx',\bfx) = a_j(\bfx',\bfx)$ for all $i,j$. Consequently there is a scalar-valued function $a(\bfx',\bfx)$ such that $\bfA(\bfx',\bfx) = a(\bfx',\bfx) \bfxi$.} for \eqref{eqn:CperiSymmetric} to be imposed on \eqref{eqn:Csecondtolastform} is the existence of a scalar-valued function $\lambda(\bfx',\bfx)$ such that
\begin{equation}
\bfxi \otimes \frac{\partial F}{\partial \bfeta}(\mathbf{0},\bfx',\bfx) = \lambda(\bfx',\bfx) \bfxi \otimes \bfxi, \quad \forall \bfx',\bfx \in \mathbb{R}^d.
\end{equation}
The general form of the linearized pairwise force function \eqref{eqn:genpairwiseforcefunction} satisfying conditions \eqref{eqn:periinvunderrigid}, \eqref{eqn:periinvunderlinearmom}, \eqref{eqn:peribalanceangmomentum}, and \eqref{eqn:microelastic} is therefore given by
\begin{equation}\label{eqn:generalBBpairwiseforceform}
\bff (\bfeta,\bfx',\bfx) = \bff(\mathbf{0},\bfx',\bfx) + \left[ \lambda(\bfx',\bfx) \bfxi \otimes \bfxi  +F(\bf0,\bfx',\bfx) \bfI \right] \bfeta.
\end{equation}
Henceforth, we refer to $\lambda(\bfx',\bfx)$ as the micromodulus function. Due to \eqref{eqn:Csymmetricfinal}, the micromodulus function has the symmetry
\begin{equation}\label{eqn:lambdasymm}
\lambda(\bfx',\bfx) = \lambda(\bfx,\bfx'), \quad \forall \bfx',\bfx \in \mathbb{R}^d. 
\end{equation}

Commonly in bond-based peridynamics, one supposes the system is pairwise equilibrated in the reference configuration, i.e., $\bff(\mathbf{0},\bfx',\bfx) = \mathbf{0}, \; \forall \bfx',\bfx \in \mathbb{R}^d$. In this case~\eqref{eqn:generalBBpairwiseforceform} becomes ({\em cf.}~\eqref{eqn:altformfwithscalarF})
\begin{equation}\label{eqn:generalBBpairwiseforceformfinal}
\bff(\bfeta,\bfx',\bfx) =  \lambda(\bfx',\bfx) \bfxi \otimes \bfxi  (\bfu(\bfx',t)-\bfu(\bfx,t)).
\end{equation}
Substituting \eqref{eqn:generalBBpairwiseforceformfinal} into the peridynamic equation of motion \eqref{eqn:perieqnmotiongeneral} ({\em cf.} \eqref{eqn:periinvunderrigid}), we obtain the linear bond-based peridynamic equation of motion:
\begin{equation}\label{eqn:linearperieqn}
\rho(\bfx) \ddot{\bfu}(\bfx,t) = \int_{\mathcal{H}_\bfx} \lambda(\bfx',\bfx) \bfxi \otimes \bfxi ( \bfu(\bfx',t) - \bfu(\bfx,t)) d \bfx' + \bfb(\bfx,t).
\end{equation}
Written in component form, \eqref{eqn:linearperieqn} is expressed as (for $i \in \left\{1,\ldots,d \right\}$)
\begin{equation}\label{eqn:linearperieqncomponentform}
\rho(\bfx) \ddot{u}_i(\bfx,t) = \int_{\mathcal{H}_\bfx} \lambda(\bfx',\bfx) \xi_i \xi_k ( u_k(\bfx',t) - u_k(\bfx,t)) d \bfx' + b_i(\bfx,t).
\end{equation}

In order to consider material symmetry classes and facilitate the utilization of~\eqref{eqn:linearperieqncomponentform}, we relate the classical elasticity tensor $\mathbb{C}$ from \eqref{eqn:stressstrainrelation}  to the micromodulus function $\lambda$ appearing in \eqref{eqn:linearperieqncomponentform}. For this purpose, we suppose $\bfu$ is smooth, perform a Taylor expansion, and equate the coefficients of the derivatives of~$\bfu$ in the resulting peridynamic equation of motion with the coefficients of the corresponding derivatives of~$\bfu$ in the classical equation of motion. More formally, by expanding $\bfu(\bfx',t)$ about $\bfx$, we obtain
\begin{equation}\label{eqn:TaylorExpansionuprimeminusu}
u_k(\bfx',t) - u_k(\bfx,t) = \frac{\partial u_k}{\partial x_j}(\bfx,t) \xi_j + \frac{1}{2} \frac{\partial^2 u_k}{\partial x_j \partial x_l}(\bfx,t) \xi_j \xi_l + \cdots, \quad j,l \in \left\{1,\ldots, d \right\}.
\end{equation}
Substituting \eqref{eqn:TaylorExpansionuprimeminusu} into \eqref{eqn:linearperieqncomponentform}, we obtain
\begin{equation}\label{eqn:perieqnmotionTaylor}
\begin{split}
\rho(\bfx) \ddot{u}_i(\bfx,t) =& \int_{\mathcal{H}_\bfx} \lambda(\bfx',\bfx) \xi_i \xi_k \left( \frac{\partial u_k}{\partial x_j}(\bfx,t) \xi_j + \frac{1}{2} \frac{\partial^2 u_k}{\partial x_j \partial x_l}(\bfx,t) \xi_j \xi_l + \cdots \right) d \bfx' + b_i(\bfx,t).
\end{split}
\end{equation}
Equating \eqref{eqn:perieqnmotionTaylor} with \eqref{eqn:eqnofmotionclassical} and comparing coefficients for terms up to second order, we find the following conditions imposed on $\lambda(\bfx',\bfx)$:
\begin{subequations}\label{eqn:lambdacondfullgen}
\begin{align}
0 ={}& \int_{\mathcal{H}_{\bfx}} \lambda(\bfx',\bfx) \xi_i \xi_j \xi_k d \bfx', \label{eqn:lambdacondfullgena} \\
C_{ijkl} ={}& \frac{1}{2} \int_{\mathcal{H}_{\bfx}} \lambda(\bfx',\bfx) \xi_i \xi_j \xi_k \xi_l d \bfx'. \label{eqn:lambdacondfullgenb}
\end{align}
\end{subequations}
Notice that \eqref{eqn:lambdacondfullgenb} can only hold if $\mathbb{C}$ is a completely symmetric tensor,  i.e., in addition to the minor symmetries \eqref{eqn:minorsymmetries} and major symmetry \eqref{eqn:majorsymmetry} of $\mathbb{C}$, Cauchy's relations \eqref{eqn:cauchyrelationgeneral} hold. Consequently, \textit{the linear bond-based peridynamic model~\eqref{eqn:linearperieqn} agrees with the classical linear elasticity model~\eqref{eqn:eqnofmotionclassical}  up to second-order terms only when describing a material satisfying Cauchy's relations}. Throughout the remainder of the paper, we assume Cauchy's relations are imposed on $\mathbb{C}$.

We now define the peridynamic analogue of Definition \ref{def:symmtran}.
\begin{definition}\label{def:generalsymm}
An orthogonal transformation $\mathbf{Q}$ is a symmetry transformation of a micromodulus function $\lambda$ if
\begin{equation}
\lambda(\mathbf{Q} \bfx',\mathbf{Q} \bfx) = \lambda(\bfx', \bfx), \quad \forall \bfx,\bfx' \in \mathbb{R}^d.
\end{equation}
\end{definition} 
Note Definition \ref{def:generalsymm} is a generalization of the symmetry definition introduced in~\cite{SILLING2000} for the case when $\lambda$ is not strictly a function of the bond $\bfxi$. Up to this point, no assumptions have been made on the homogeneity of the material response. With the assumption of homogeneity, further simplifications are possible. In particular, in the bulk of the body, the micromodulus function may be assumed to be a function solely of the bond ({\em cf}. \cite{SILLING2000}):
\begin{equation}\label{eqn:lambhomogeneous}
\lambda(\bfx',\bfx) = \lambda(\bfx'-\bfx) = \lambda(\bfxi), \quad \forall \bfx,\bfx' \in \mathbb{R}^d.
\end{equation}
If we assume a homogeneous material response, for a given material point $\bfx$ in the bulk of the body ($\mathcal{H}_{\bfx} = B_\delta(\bfx)$), the peridynamic equation of motion~\eqref{eqn:linearperieqn} becomes (after the change of variables $\bfx' \rightarrow \bfx + \bfxi$):
\begin{equation}\label{eqn:linearperieqnhomogeneous}
\rho(\bfx) \ddot{\bfu}(\bfx,t) = \int_{B_\delta(\mathbf{0})} \lambda(\bfxi) \bfxi \otimes \bfxi ( \bfu(\bfx + \bfxi,t) - \bfu(\bfx,t)) d \bfxi + \bfb(\bfx,t).
\end{equation}
Written in component form, \eqref{eqn:linearperieqnhomogeneous} becomes 
\begin{equation}\label{eqn:linearperieqncomponentformhomogeneous}
\rho(\bfx) \ddot{u}_i(\bfx,t) = \int_{B_\delta(\mathbf{0})} \lambda(\bfxi) \xi_i \xi_k ( u_k(\bfx+\bfxi,t) - u_k(\bfx,t)) d \bfxi + b_i(\bfx,t).
\end{equation}
Moreover, for a material point in the bulk of a body, \eqref{eqn:lambdacondfullgena} holds trivially by antisymmetry since $\lambda(\bfxi) = \lambda(-\bfxi)$ by \eqref{eqn:lambdasymm}. Additionally, \eqref{eqn:lambdacondfullgenb} reduces to (see \cite{STG2019} and related expressions in \cite{Azdoud2013,seleson2015concurrent})
\begin{equation}\label{eqn:Cperiexpression}
C_{ijkl} = \frac{1}{2} \int_{B_\delta(\bf0)}  \lambda(\bfxi) \xi_i \xi_j \xi_k \xi_l d \bfxi. 
\end{equation}
The following proposition shows that if \eqref{eqn:Cperiexpression} holds, $\mathbb{C}$ inherits the symmetries of $\lambda(\bfxi)$. 

\begin{proposition}[{\em cf}. \cite{STG2019}]\label{prop:symminheritC}
Let $\mathbf{Q}$ be an orthogonal transformation. Assume the micromodulus function satisfies the symmetry property $\lambda(\bfQ \bfxi) = \lambda(\bfxi)$ for all $\bfxi  \in \mathbb{R}^d$. Then, a fourth-order tensor  $\mathbb{C}$ defined by \eqref{eqn:Cperiexpression} is symmetric with respect to $\mathbf{Q}$, i.e., \eqref{eqn:symmtranbasic} holds.
\end{proposition}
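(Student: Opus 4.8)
The plan is to perform a direct change of variables in the integral \eqref{eqn:Cperiexpression} and exploit the orthogonality of $\mathbf{Q}$. First I would start from the definition
\begin{equation*}
C_{pqrs} = \frac{1}{2} \int_{B_\delta(\mathbf{0})} \lambda(\bfxi) \xi_p \xi_q \xi_r \xi_s \, d\bfxi,
\end{equation*}
and observe that to verify \eqref{eqn:symmtranbasic} I must show $C_{ijkl} = Q_{ip}Q_{jq}Q_{kr}Q_{ls} C_{pqrs}$. Substituting the integral expression into the right-hand side and pulling the (constant) entries $Q_{ip},Q_{jq},Q_{kr},Q_{ls}$ inside the integral gives
\begin{equation*}
Q_{ip}Q_{jq}Q_{kr}Q_{ls} C_{pqrs} = \frac{1}{2} \int_{B_\delta(\mathbf{0})} \lambda(\bfxi) \,(Q_{ip}\xi_p)(Q_{jq}\xi_q)(Q_{kr}\xi_r)(Q_{ls}\xi_s)\, d\bfxi .
\end{equation*}
Recognizing $Q_{ip}\xi_p = (\mathbf{Q}\bfxi)_i$, the integrand becomes $\lambda(\bfxi)(\mathbf{Q}\bfxi)_i (\mathbf{Q}\bfxi)_j (\mathbf{Q}\bfxi)_k (\mathbf{Q}\bfxi)_l$.

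Next I would make the substitution $\bfzeta = \mathbf{Q}\bfxi$, i.e.\ $\bfxi = \mathbf{Q}^{-1}\bfzeta = \mathbf{Q}^\mathrm{T}\bfzeta$. Since $\mathbf{Q}$ is orthogonal, the Jacobian satisfies $|\det \mathbf{Q}| = 1$, so $d\bfxi = d\bfzeta$, and $\mathbf{Q}$ maps $B_\delta(\mathbf{0})$ onto itself because it preserves norms, so the domain of integration is unchanged. Applying the hypothesis $\lambda(\mathbf{Q}\bfxi) = \lambda(\bfxi)$ — equivalently $\lambda(\bfxi) = \lambda(\mathbf{Q}^\mathrm{T}\bfzeta) = \lambda(\bfzeta)$ when $\bfxi = \mathbf{Q}^\mathrm{T}\bfzeta$ (using the symmetry property with $\mathbf{Q}$ replaced by $\mathbf{Q}^\mathrm{T}$, which is also orthogonal and also a symmetry transformation) — the integral transforms into
\begin{equation*}
\frac{1}{2}\int_{B_\delta(\mathbf{0})} \lambda(\bfzeta)\, \zeta_i \zeta_j \zeta_k \zeta_l \, d\bfzeta = C_{ijkl}.
\end{equation*}
This establishes \eqref{eqn:symmtranbasic} and hence the conclusion.

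The only mild subtlety — and the step I would be most careful about — is the bookkeeping of whether the hypothesis should be used with $\mathbf{Q}$ or $\mathbf{Q}^\mathrm{T}$: the substitution naturally produces $\lambda$ evaluated at $\mathbf{Q}^\mathrm{T}\bfzeta$, and one needs $\lambda(\mathbf{Q}^\mathrm{T}\bfzeta) = \lambda(\bfzeta)$. This follows because the set of orthogonal transformations under which $\lambda$ is invariant is closed under inverses (as for any symmetry group), so invariance under $\mathbf{Q}$ is equivalent to invariance under $\mathbf{Q}^\mathrm{T} = \mathbf{Q}^{-1}$; alternatively one can simply substitute $\bfzeta = \mathbf{Q}^\mathrm{T}\bfxi$ instead and avoid the issue entirely. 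Everything else is routine: linearity to move the constant matrix entries inside the integral, the orthogonal change of variables with unit Jacobian, and invariance of the ball. There is no real obstacle here; the proposition is essentially the statement that the defining integral \eqref{eqn:Cperiexpression} is equivariant under orthogonal changes of coordinates, which is immediate once one writes $(\mathbf{Q}\bfxi)_i = Q_{ip}\xi_p$ and changes variables.
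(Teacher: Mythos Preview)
Your proof is correct. The paper does not actually include its own proof of this proposition; it simply cites \cite{STG2019}, so there is nothing to compare against directly. Your approach---pulling the constant entries $Q_{ip}$ inside the integral, recognizing $(\mathbf{Q}\bfxi)_i = Q_{ip}\xi_p$, and performing an orthogonal change of variables with unit Jacobian on the $\mathbf{Q}$-invariant ball---is the standard and essentially inevitable argument.

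The subtlety you flag about $\mathbf{Q}$ versus $\mathbf{Q}^{\mathrm{T}}$ is handled correctly: invariance under $\mathbf{Q}$ automatically implies invariance under $\mathbf{Q}^{-1}=\mathbf{Q}^{\mathrm{T}}$ (substitute $\bfxi \mapsto \mathbf{Q}^{\mathrm{T}}\bfxi$ in the hypothesis). If you prefer to avoid even this minor detour, you can run the computation in the other direction: start from $C_{ijkl}$, substitute $\bfxi = \mathbf{Q}\bfzeta$, and use the hypothesis $\lambda(\mathbf{Q}\bfzeta)=\lambda(\bfzeta)$ directly to obtain $C_{ijkl} = Q_{ip}Q_{jq}Q_{kr}Q_{ls}C_{pqrs}$ in one line.
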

Note the converse of Proposition \ref{prop:symminheritC} is not necessarily true for an arbitrarily chosen micromodulus function $\lambda(\bfxi)$. 

In order to develop anisotropic peridynamic models, we provide a specific form for $\lambda$. In \cite{STG2019}, the following micromodulus was proposed:
\begin{equation}\label{def:lambdagenform}
\lambda(\bfxi) =  \frac{1}{m} \frac{ (\bfxi \otimes \bfxi) \mathbbm{\Lambda} (\bfxi \otimes \bfxi) }{ \| \bfxi \|^4} \frac{\omega( \| \bfxi \|)}{ \| \bfxi \|^2},
\end{equation}
where $\mathbbm{\Lambda}$ is a fully symmetric fourth-order tensor, $\omega$ is an influence function, and $m$ is the weighted volume \cite{Silling2007}
\begin{equation}\label{def:weightvolumem}
m := \int_{B_\delta(\bf0)} \omega(\|\bfxi\|) \| \bfxi \|^2 d \bfxi.
\end{equation}
Influence functions in peridynamics are commonly utilized to control the radial dependence of the interaction between material points \cite{seleson2011role,Silling2007}. For convenience, we call $\mathbbm{\Lambda}$ the peridynamic tensor. 

Note we have not selected a dimension for the peridynamic model. For a suitable choice of $\omega$, the theory defined up to this point is valid in $\mathbb{R}^d$ for $d \in \mathbb{N}$. In particular, in $\mathbb{R}^3$, the micromodulus \eqref{def:lambdagenform} is given by \footnote{It is shown in \cite{STG2019} that under assumption \eqref{eqn:Cperiexpression}, the micromodulus \eqref{eqn:lambdatriclinic} in spherical coordinates may equivalently be formulated by replacing the anisotropic portion of \eqref{eqn:lambdatriclinic}, with a fourth-order spherical harmonics expansion. }
\begin{equation}\label{eqn:lambdatriclinic}
\begin{split}
\lambda(\bfxi) = &\frac{1}{m} \frac{\omega(\|\bfxi \|)}{\| \bfxi \|^2} \left[ \frac{\Lambda_{1111} \xi_1^4 + 4 \Lambda_{1112}\xi_1^3\xi_2 + 4 \Lambda_{1113}\xi_1^3\xi_3 + 6 \Lambda_{1122}\xi_1^2\xi_2^2 + 12 \Lambda_{1123}\xi_1^2\xi_2\xi_3 }{\| \bfxi \|^4}  \right. \\
&+ \frac{ 6 \Lambda_{1133}\xi_1^2\xi_3^2 + 4\Lambda_{2212}\xi_1\xi_2^3 + 12\Lambda_{2213}\xi_1\xi_2^2\xi_3 + 12\Lambda_{3312}\xi_1\xi_2\xi_3^2 + 4\Lambda_{3313}\xi_1\xi_3^3      }{\| \bfxi \|^4} \\
&\left. + \frac{\Lambda_{2222}\xi_2^4 + 4\Lambda_{2223}\xi_2^3\xi_3 +6\Lambda_{2233}\xi_2^2\xi_3^2 + 12\Lambda_{3323}\xi_2\xi_3^3  + \Lambda_{3333}\xi_3^4     }{\| \bfxi \|^4} \right],
\end{split}
\end{equation}
in $\mathbb{R}^2$, the micromodulus \eqref{def:lambdagenform} is given by \footnote{One can produce an equivalent formulation by replacing the anisotropic portion of \eqref{eqn:lambdaobl} with a fourth-order Fourier series formulation. In particular, changing to polar coordinates, $(\xi_1,\xi_2) = (r \cos(\theta), r \sin(\theta))$, in \eqref{eqn:lambdaobl} and setting
\begin{equation*}
\begin{split}
\Lambda_{1111} =& a_0 + a_2 + a_4, \\
\Lambda_{1122} =& 2a_0 - 6a_4, \\
\Lambda_{1112} =& 2 b_2 + 4 b_4, \\
\Lambda_{2222} =& a_0 - a_2 + a_4, \\
\Lambda_{2212} =& 2b_2-4b_4,
\end{split}
\end{equation*}
we immediately obtain
\begin{equation}\label{eqn:polarexpansion2}
\lambda(r,\theta) = \frac{1}{m} \left( \sum_{n=0}^2 a_{2n} \cos(2n \theta) + b_{2n} \sin(2n \theta)\right) \frac{\omega \left( r \right)}{r^2}.
\end{equation}
Equation \eqref{eqn:polarexpansion2} is equivalent to
\begin{equation}\label{eqn:polarexpansion1}
\lambda(r,\theta) = \frac{1}{m} \left( \sum_{n=0}^4 a_n \cos(n \theta) + b_n \sin(n \theta)\right) \frac{\omega \left( r \right)}{r^2},
\end{equation}
when invariance with respect to the inversion symmetry $\lambda(r,\theta+\pi) = \lambda(r,\theta)$ is assumed. To see this note that $\sin(n\theta)$ and $\cos(n \theta)$ do not have inversion symmetry when $n$ is odd and thus we must have $a_1=a_3=b_1=b_3=0$.

}

\begin{equation}\label{eqn:lambdaobl}
\lambda(\bfxi) = \frac{1}{m} \frac{\Lambda_{1111} \xi_1^4 + 4\Lambda_{1112} \xi_1^3 \xi_2 + 6 \Lambda_{1122} \xi_1^2 \xi_2^2 + 4 \Lambda_{2212} \xi_1 \xi_2^3 + \Lambda_{2222} \xi_2^4}{\|\bfxi\|^4} \frac{\omega(\| \bfxi \|)}{\| \bfxi \|^2},
\end{equation}
and in $\mathbb{R}^1$, the micromodulus \eqref{def:lambdagenform} is given by
\begin{equation*}
\lambda(\xi) = \frac{\Lambda_{1111}}{m} \frac{\omega(\| \xi \|)}{\| \xi \|^2},
\end{equation*}
where $\Lambda_{ijkl}$ are the components of the fourth-order peridynamic tensor $\mathbbm{\Lambda}$.

Analogously to classical linear elasticity, where the fourth-order elasticity tensor $\mathbb{C}$ can be represented by a second-order tensor $\mathbf{C}$ ({\em cf.} \eqref{eqn:3DElastTensor} and \eqref{eqn:2DElastTensor}), we may utilize Voigt notation 
to express the fourth-order peridynamic tensor $\mathbbm{\Lambda}$ as a second-order symmetric tensor $\mathbf{\Lambda}$. In this case, the anisotropic part of \eqref{eqn:lambdatriclinic} can be written as
\begin{equation}
 (\bfxi \otimes \bfxi) \mathbbm{\Lambda} (\bfxi \otimes \bfxi)  = \left[
\begin{array}{c}
\xi_1^2 \\
\xi_2^2 \\
\xi_3^2 \\
2 \xi_2 \xi_3 \\
2 \xi_1 \xi_3 \\
2 \xi_1 \xi_2
\end{array}
\right]^T \left[
\begin{array}{cccccc}
\Lambda_{1111} & \Lambda_{1122} & \Lambda_{1133} & \Lambda_{1123} & \Lambda_{1113} & \Lambda_{1112} \\
\cdot & \Lambda_{2222} & \Lambda_{2233} & \Lambda_{2223} & \Lambda_{2213} & \Lambda_{2212} \\
\cdot & \cdot & \Lambda_{3333} & \Lambda_{3323} & \Lambda_{3313} & \Lambda_{3312} \\
\cdot&\cdot &\cdot & \Lambda_{2233} & \Lambda_{3312} & \Lambda_{2213} \\
\cdot& \cdot& \cdot& \cdot& \Lambda_{1133} & \Lambda_{1123} \\
\cdot& \cdot& \cdot& \cdot& \cdot & \Lambda_{1122}
\end{array}
\right]
\left[
\begin{array}{c}
\xi_1^2 \\
\xi_2^2 \\
\xi_3^2 \\
2 \xi_2 \xi_3 \\
2 \xi_1 \xi_3 \\
2 \xi_1 \xi_2
\end{array}
\right]
\end{equation}
and the anisotropic part of \eqref{eqn:lambdaobl} can be written as
\begin{equation}\label{eqn:gen2Dmicromodulustensordescription}
(\bfxi \otimes \bfxi) \mathbbm{\Lambda} (\bfxi \otimes \bfxi) = \left[
\begin{array}{c}
\xi_1^2 \\
\xi_2^2 \\
2 \xi_1 \xi_2
\end{array}
\right]^T \left[
\begin{array}{ccc}
\Lambda_{1111} & \Lambda_{1122} & \Lambda_{1112} \\
\cdot & \Lambda_{2222} &  \Lambda_{2212} \\
\cdot & \cdot & \Lambda_{1122}  
\end{array}
\right] 
\left[
\begin{array}{c}
\xi_1^2 \\
\xi_2^2 \\
2 \xi_1 \xi_2
\end{array}
\right].
\end{equation}

The micromodulus function \eqref{def:lambdagenform} has many desirable properties such as:
\begin{itemize}
\item It can be informed by the classical elasticity tensor through~\eqref{eqn:Cperiexpression}.
\item Its peridynamic tensor $\mathbbm{\Lambda}$ and the elasticity tensor $\mathbb{C}$ (with Cauchy's relations imposed) have the same number of degrees of freedom.
\item When it is related to the classical elasticity tensor $\mathbb{C}$ through~\eqref{eqn:Cperiexpression}, it has the same symmetries as $\mathbb{C}$ with respect to  Definitions~\ref{def:symmtran} and~\ref{def:generalsymm} ({\em cf.}~ Proposition \ref{prop:lambinheritsymm}  for two dimensions and \cite{STG2019} for three dimensions). 
\end{itemize}

We now have sufficient background to develop pure two-dimensional, plane strain, and plane stress anisotropic linear bond-based peridynamic models.


\subsection{Pure two-dimensional bond-based peridynamics}\label{sec:TwoDPeridynamicModels}
In this section, we look at bond-based peridynamic counterparts of the pure two-dimensional classical linear elasticity models presented in Section \ref{sec:puretwodimclass}. We consider the bulk of a body and assume a homogeneous linear elastic material response, so that the two-dimensional peridynamic model is given by~\eqref{eqn:linearperieqnhomogeneous}, which in component form is:
\begin{subequations}\label{eqn:puretwocomponentmodel}
\begin{align}
\rho(\bfx) \ddot{u}_1(\bfx,t) = \int_{B_\delta(\mathbf{0})} \lambda(\bfxi) \xi_1 \left[ \xi_1 ( u_1(\bfx+\bfxi,t) - u_1(\bfx,t)) + \xi_2 ( u_2(\bfx+\bfxi,t) - u_2(\bfx,t)) \right] d \bfxi + b_1(\bfx,t), \\
\rho(\bfx) \ddot{u}_2(\bfx,t) = \int_{B_\delta(\mathbf{0})} \lambda(\bfxi) \xi_2 \left[ \xi_1 ( u_1(\bfx+\bfxi,t) - u_1(\bfx,t)) + \xi_2 ( u_2(\bfx+\bfxi,t) - u_2(\bfx,t)) \right] d \bfxi + b_2(\bfx,t). 
\end{align}
\end{subequations}

We further suppose the micromodulus function $\lambda(\bfxi)$ is described by \eqref{eqn:lambdaobl}. We relate $\Lambda_{ijkl}$ to $C_{ijkl}$ through relation \eqref{eqn:Cperiexpression} to obtain
\begin{subequations}\label{eqn:SijkltoCijkl}
\begin{align}
\Lambda_{{1111}} &= 10 C_{{1111}}-20C_{{1122}} +2 C_{{2222}}, \\
\Lambda_{{1112}} &= 20C_{{1112}} - 12 C_{{2212}}, \label{eqn:SijkltoCijklS1112} \\
\Lambda_{{1122}} &= \frac{1}{3} \left( -10C_{{1111}}+76 C_{{1122}}-10C_{{2222}} \right), \\
\Lambda_{{2212}} &= -12 C_{{1112}}+20 C_{{2212}}, \label{eqn:SijkltoCijklS2212} \\
\Lambda_{{2222}} &= 2 C_{{1111}}-20C_{{1122}}+10C_{{2222}}.
\end{align}
\end{subequations} 
We can relate the peridynamic tensor $\mathbbm{\Lambda}$ to the engineering constants by substituting \eqref{eqn:2DCijklInTermsOfTechConstants} into \eqref{eqn:SijkltoCijkl}.
\begin{remark}
It is interesting to observe the system \eqref{eqn:SijkltoCijkl} can be expressed as two decoupled subsystems of equations:
\begin{equation}\label{eqn:partitionlambdatensor}
\left[ 
\begin{array}{c}
\Lambda_{1111} \\
\Lambda_{1122} \\
\Lambda_{2222}
\end{array}
\right] =
\left[
\begin{array}{ccc}
10 & -20 & 2 \\
-\frac{10}{3} & \frac{76}{3} & -\frac{10}{3} \\
2 & -20 & 10
\end{array}
\right]
\left[ 
\begin{array}{c}
C_{1111} \\
C_{1122} \\
C_{2222}
\end{array}
\right] \quad \text{and} \quad 
\left[ 
\begin{array}{c}
\Lambda_{1112} \\
\Lambda_{2212}
\end{array}
\right] = \left[
\begin{array}{cc}
20 & -12 \\
-12 & 20
\end{array}
\right]
\left[ 
\begin{array}{c}
C_{1112} \\
C_{2212}
\end{array}
\right].
\end{equation}
\end{remark}

The most general peridynamic tensor $\mathbbm{\Lambda}$, the oblique peridynamic tensor, may be represented as ({\em cf.} \eqref{eqn:gen2Dmicromodulustensordescription}): 
\begin{align}\label{eqn:Stensorobl}
\mathbf{\Lambda} =  \left[
\begin{array}{ccc}
\Lambda_{1111} & \Lambda_{1122} & \Lambda_{1112} \\
\cdot & \Lambda_{2222} & \Lambda_{2212} \\
\cdot & \cdot & \Lambda_{1122}
\end{array}
\right].
\end{align}

In Proposition \ref{prop:lambinheritsymm} we prove the converse of Proposition \ref{prop:symminheritC} when the micromodulus function $\lambda(\bfxi)$ is given by \eqref{def:lambdagenform}. For an analogous proof in three dimensions see \cite{STG2019}.
\begin{proposition}\label{prop:lambinheritsymm} 
Let $\lambda(\bfxi)$ be given by \eqref{def:lambdagenform} and suppose \eqref{eqn:Cperiexpression} holds. If $\mathbb{C}$ is invariant with respect to one of the four symmetry groups in two-dimensional classical linear elasticity ({\em cf.} Theorem \ref{thm:symmetryclasses}), then $\lambda(\bfxi)$ is also invariant under the symmetry group transformations ({\em cf.} Definition \ref{def:generalsymm}). 
\end{proposition}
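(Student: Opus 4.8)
The plan is to exploit the explicit linear maps between the peridynamic tensor $\mathbbm{\Lambda}$ and the elasticity tensor $\mathbb{C}$ recorded in \eqref{eqn:SijkltoCijkl} (equivalently \eqref{eqn:partitionlambdatensor}), together with the closed-form micromodulus \eqref{eqn:lambdaobl}. Since $\lambda(\bfxi)$ depends on $\bfxi$ only through the completely symmetric tensor $\mathbbm{\Lambda}$ contracted with $\bfxi\otimes\bfxi\otimes\bfxi\otimes\bfxi$ (up to the radial factor $\omega(\|\bfxi\|)/\|\bfxi\|^6$, which is manifestly invariant under any orthogonal $\bfQ$), one has, for any orthogonal transformation $\bfQ$,
\begin{equation*}
\lambda(\bfQ\bfxi) = \frac{1}{m}\,\frac{(\bfQ\bfxi\otimes\bfQ\bfxi)\,\mathbbm{\Lambda}\,(\bfQ\bfxi\otimes\bfQ\bfxi)}{\|\bfxi\|^4}\,\frac{\omega(\|\bfxi\|)}{\|\bfxi\|^2} = \frac{1}{m}\,\frac{(\bfxi\otimes\bfxi)\,\mathbbm{\Lambda}^{\bfQ}\,(\bfxi\otimes\bfxi)}{\|\bfxi\|^4}\,\frac{\omega(\|\bfxi\|)}{\|\bfxi\|^2},
\end{equation*}
where $\mathbbm{\Lambda}^{\bfQ}$ has components $\Lambda^{\bfQ}_{ijkl}=Q_{pi}Q_{qj}Q_{rk}Q_{sl}\Lambda_{pqrs}$. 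Thus $\lambda(\bfQ\bfxi)=\lambda(\bfxi)$ for all $\bfxi$ is implied by $\mathbbm{\Lambda}^{\bfQ}=\mathbbm{\Lambda}$, i.e. by $\mathbbm{\Lambda}$ being invariant under $\bfQ$ in the sense of \eqref{eqn:symmtranbasic}. So it suffices to show that whenever $\mathbb{C}$ is invariant under a symmetry group $\mathcal{G}$ from Theorem~\ref{thm:symmetryclasses}, the tensor $\mathbbm{\Lambda}$ defined by \eqref{eqn:SijkltoCijkl} is invariant under the same generators.

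First I would reduce to checking the generators listed in Theorem~\ref{thm:symmetryclasses}: $-\bfI$, $\mathbf{Ref}(0)$, $\mathbf{Ref}(\tfrac{\pi}{4})$, and $\mathbf{Ref}(\theta)$ for arbitrary $\theta$. Invariance under $-\bfI$ is automatic since $\mathbbm{\Lambda}$ is fourth order (an even number of sign flips), exactly as for $\mathbb{C}$. For $\mathbf{Ref}(0)=\mathrm{diag}(1,-1)$, invariance forces $\Lambda_{1112}=\Lambda_{2212}=0$; by \eqref{eqn:SijkltoCijklS1112} and \eqref{eqn:SijkltoCijklS2212} (or the second block of \eqref{eqn:partitionlambdatensor}), $\Lambda_{1112}$ and $\Lambda_{2212}$ are linear combinations of $C_{1112}$ and $C_{2212}$, which vanish in the rectangular class \eqref{eqn:recelastrelations}; hence $\mathbbm{\Lambda}$ inherits the rectangular restriction. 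For the square class, I additionally use that $C_{1111}=C_{2222}$ (\eqref{eqn:sqrelastrelations}); the first block of \eqref{eqn:partitionlambdatensor} is symmetric under swapping $C_{1111}\leftrightarrow C_{2222}$ (the $3\times3$ matrix is persymmetric), so $C_{1111}=C_{2222}$ yields $\Lambda_{1111}=\Lambda_{2222}$, and together with $\Lambda_{1112}=\Lambda_{2212}=0$ this is precisely invariance under $\mathbf{Ref}(\tfrac{\pi}{4})$ — one may verify this directly from \eqref{eqn:symmtranbasic} with $\mathbf{Q}=\mathbf{Ref}(\tfrac{\pi}{4})=\bigl[\begin{smallmatrix}0&1\\1&0\end{smallmatrix}\bigr]$, which simply interchanges the indices $1\leftrightarrow 2$. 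For the isotropic class, impose additionally $C_{1212}=\tfrac{1}{2}(C_{1111}-C_{1122})$ (\eqref{eqn:isoelastrelations}); substituting into \eqref{eqn:SijkltoCijkl} gives $\Lambda_{1111}=\Lambda_{2222}$, $\Lambda_{1112}=\Lambda_{2212}=0$, and $\Lambda_{1122}=\tfrac{1}{3}\Lambda_{1111}$, which is exactly the completely symmetric isotropic form of a fourth-order tensor in two dimensions; invariance of such a tensor under all $\mathbf{Ref}(\theta)$ then follows because the only fully symmetric isotropic fourth-order tensors in $\mathbb{R}^2$ are scalar multiples of the symmetrized $\bfI\otimes\bfI$, whose contraction with $\bfxi\otimes\bfxi\otimes\bfxi\otimes\bfxi$ is $\|\bfxi\|^4$, manifestly rotation- and reflection-invariant. (Alternatively, one checks \eqref{eqn:symmtranbasic} for a general $\mathbf{Ref}(\theta)$ by the same trigonometric computation used in Lemma~\ref{lemma:reflections}.)

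The only mildly delicate point — and the place I expect to spend the most care — is the isotropic case: one must confirm that the three scalar conditions $\Lambda_{1111}=\Lambda_{2222}$, $\Lambda_{1112}=\Lambda_{2212}=0$, $\Lambda_{1122}=\tfrac{1}{3}\Lambda_{1111}$ genuinely force full rotational invariance of $\lambda$, not merely invariance under the finite square group. This is handled by the structural observation above: the space of completely symmetric fourth-order tensors on $\mathbb{R}^2$ invariant under the square group is one-dimensional once the isotropic relation among the diagonal and off-diagonal components holds, and that one-dimensional family is $\mathrm{Sym}(\bfI\otimes\bfI)$, which is $O(2)$-invariant. Everything else reduces to linear algebra with the fixed integer matrices in \eqref{eqn:partitionlambdatensor} and to the index-permutation action of $\mathbf{Ref}(0)$ and $\mathbf{Ref}(\tfrac{\pi}{4})$, both of which are routine. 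This establishes that $\lambda(\bfQ\bfxi)=\lambda(\bfxi)$ for every generator $\bfQ$ of the given symmetry group, and hence (the symmetry transformations forming a group, {\em cf.}~the discussion after Definition~\ref{def:symmtran}) for every element of the group, completing the proof. \qed
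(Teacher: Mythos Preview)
Your proposal is correct and follows essentially the same approach as the paper: both reduce to showing, via the explicit linear relations \eqref{eqn:SijkltoCijkl}/\eqref{eqn:partitionlambdatensor}, that the symmetry restrictions on $\mathbb{C}$ from Theorem~\ref{thm:symmetryclasses} induce the corresponding restrictions on $\mathbbm{\Lambda}$, and then check invariance under the listed generators case by case. The only cosmetic difference is packaging: the paper writes out $\lambda(\bfxi)$ explicitly for each class and verifies invariance of the resulting polynomial in $(\xi_1,\xi_2)$ (in the isotropic case simplifying to $\Lambda_{1111}(\xi_1^2+\xi_2^2)^2/\|\bfxi\|^4$ directly), whereas you argue at the tensor level via $\mathbbm{\Lambda}^{\bfQ}=\mathbbm{\Lambda}$ and identify the isotropic $\mathbbm{\Lambda}$ as a multiple of $\mathrm{Sym}(\bfI\otimes\bfI)$---the content is the same.
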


\begin{proof}
Let $\lambda(\bfxi)$ be given by \eqref{def:lambdagenform} and suppose \eqref{eqn:Cperiexpression} holds.
Recall that orthogonal transformations preserve length. Hence $\|\bfxi\|^6$ and $\omega(\| \bfxi \|)$ are always invariant under orthogonal transformations. Therefore, we only need to consider the anisotropic part of $\lambda(\bfxi)$ ({\em cf.} \eqref{eqn:gen2Dmicromodulustensordescription}), when showing $\lambda(\bfxi)$ is invariant with respect to a given orthogonal transformation. 

\textbf{Oblique}: Suppose $\mathbb{C}$ is given by \eqref{eqn:ObliqueTensorCauchy}. The corresponding symmetry group is generated by $\left\{ -\bfI \right\}$ ({\em cf.}~Lemma~\ref{lemma:reflections}), i.e., $\xi_1 \rightarrow -\xi_1$ and $\xi_2 \rightarrow -\xi_2$. Since the anisotropic part of $\lambda(\bfxi)$ is a sum of fourth-order monomials of components of~$\bfxi$ ({\em cf.}~\eqref{eqn:lambdaobl}), we clearly have $\lambda(\bfxi)$ is symmetric with respect to $-\mathbf{I}$.

\textbf{Rectangular}: Suppose $\mathbb{C}$ is given by \eqref{eqn:RectangularTensorCauchy}. Then, by \eqref{eqn:SijkltoCijkl} we have ({\em cf.}~\eqref{eqn:recelastrelations})

\begin{subequations}\label{eqn:recsymmSijklrelations}
\begin{align}
&\Lambda_{{1111}}= 10 C_{{1111}}-20C_{{1122}} +2 C_{{2222}}, \\
&\Lambda_{{1112}}= 0, \\
&\Lambda_{{1122}}= \frac{1}{3} \left( -10C_{{1111}}+76 C_{{1122}}-10C_{{2222}} \right), \\
&\Lambda_{{2212}}= 0, \\
&\Lambda_{{2222}}= 2 C_{{1111}}-20C_{{1122}}+10C_{{2222}},
\end{align}
\end{subequations}
and
\begin{equation}\label{eqn:lambdarec}
\lambda(\bfxi) = \frac{1}{m} \frac{\Lambda_{1111} \xi_1^4 + 6 \Lambda_{1122} \xi_1^2 \xi_2^2 + \Lambda_{2222} \xi_2^4}{\|\bfxi\|^4} \frac{\omega(\| \bfxi \|)}{\| \bfxi \|^2}.
\end{equation}

The rectangular peridynamic tensor $\mathbbm{\Lambda}$ may be represented as ({\em cf.} \eqref{eqn:Stensorobl})
\begin{align}\label{eqn:Stensorrec}
\mathbf{\Lambda} =  \left[
\begin{array}{ccc}
\Lambda_{1111} & \Lambda_{1122} & 0 \\
\cdot & \Lambda_{2222} & 0 \\
\cdot & \cdot & \Lambda_{1122}
\end{array}
\right].
\end{align}

The rectangular symmetry group is generated by $\left\{ -\mathbf{I}, \mathbf{Ref}\left( 0 \right) \right\}$ ({\em cf.}~Lemma~\ref{lemma:reflections}). From the oblique portion of the proof, we already know $\lambda(\bfxi)$ is symmetric with respect to $-\mathbf{I}$. The transformation $\mathbf{Ref}\left( 0 \right)$ implies $\xi_2 \rightarrow -\xi_2$ ({\em cf.} \eqref{eqn:Ref0Mat}). Since the anisotropic part of \eqref{eqn:lambdarec} is even in $\xi_2$, the invariance with respect to $\mathbf{Ref}\left( 0 \right)$ follows. Note that the functional form of $\lambda(\bfxi)$ is also invariant under the transformation $\xi_1 \rightarrow -\xi_1$, which is expected since this transformation is given by $\mathbf{Ref} ( \frac{\pi}{2} ) = -\bfI  \, \mathbf{Ref}(0)$.

\textbf{Square}: Suppose $\mathbb{C}$ is given by \eqref{eqn:SquareTensorCauchy}. Then, by \eqref{eqn:SijkltoCijkl} we have ({\em cf.}~\eqref{eqn:sqrelastrelations})
\begin{subequations}\label{eqn:sqrsymmSijklrelations}
\begin{align}
&\Lambda_{{1111}}= 12 C_{{1111}}-20C_{{1122}}, \label{eqn:sqrsymmSijklrelationsS1111} \\
&\Lambda_{{1112}}= 0, \\
&\Lambda_{{1122}}= \frac{1}{3} \left( -20C_{{1111}}+76 C_{{1122}} \right), \label{eqn:sqrsymmSijklrelationsS1122} \\
&\Lambda_{{2212}}= 0, \\
&\Lambda_{{2222}}=\Lambda_{1111}, 
\end{align}
\end{subequations}
and 
\begin{equation}\label{eqn:lambdasqr}
\lambda(\bfxi) = \frac{1}{m} \frac{\Lambda_{1111} (\xi_1^4+\xi_2^4) + 6 \Lambda_{1122} \xi_1^2 \xi_2^2}{\|\bfxi\|^4} \frac{\omega(\| \bfxi \|)}{\| \bfxi \|^2}.
\end{equation}
The square peridynamic tensor $\mathbbm{\Lambda}$ may be represented as ({\em cf.} \eqref{eqn:Stensorobl})
\begin{align}\label{eqn:Stensorsqr}
\mathbf{\Lambda} = \left[
\begin{array}{ccc}
\Lambda_{1111} & \Lambda_{1122} & 0 \\
\cdot & \Lambda_{1111} & 0 \\
\cdot & \cdot & \Lambda_{1122}
\end{array}
\right].
\end{align}

The square symmetry group is generated by $\left\{ -\mathbf{I}, \mathbf{Ref}\left( 0 \right), \mathbf{Ref}\left( \frac{\pi}{4} \right) \right\}$ ({\em cf.}~Lemma~\ref{lemma:reflections}). From the rectangular portion of the proof, we already know $\lambda(\bfxi)$ is symmetric with respect to $\left\{ -\mathbf{I}, \mathbf{Ref}\left( 0 \right) \right\}$. The transformation $\mathbf{Ref}\left( \frac{\pi}{4} \right)$ implies $\xi_1 \rightarrow \xi_2$ and $\xi_2 \rightarrow \xi_1$ ({\em cf.} \eqref{eqn:RefMat}). We note \eqref{eqn:lambdasqr} is clearly invariant when interchanging $\xi_1$ and $\xi_2$. 

\textbf{Isotropic}: Suppose that $\mathbb{C}$ is given by \eqref{eqn:IsoTensorCauchy}. Then, by \eqref{eqn:SijkltoCijkl} we have ({\em cf.}~\eqref{eqn:isoelastrelations})

\begin{subequations}\label{eqn:isosymmSijklrelations}
\begin{align}
&\Lambda_{{1111}}=\frac{16}{3} C_{1111},  \label{eqn:isosymmSijklrelationsS1111} \\
&\Lambda_{{1112}}= 0, \\
&\Lambda_{{1122}}= \frac{1}{3} \Lambda_{1111}, \\
&\Lambda_{{2212}}= 0, \\
&\Lambda_{{2222}}=\Lambda_{1111}, 
\end{align}
\end{subequations}
and
\begin{equation}\label{eqn:lambdaiso}
\lambda(\bfxi) = \frac{1}{m} \frac{\Lambda_{1111} (\xi_1^4 + 2 \xi_1^2 \xi_2^2 +\xi_2^4)}{\|\bfxi\|^4} \frac{\omega(\| \bfxi \|)}{\| \bfxi \|^2} = \frac{\Lambda_{1111}}{m} \frac{\omega(\| \bfxi \|)}{\| \bfxi \|^2}. 
\end{equation}
The isotropic peridynamic tensor $\mathbbm{\Lambda}$ may be represented as ({\em cf.} \eqref{eqn:Stensorobl})
\begin{align}\label{eqn:Stensoriso}
\mathbf{\Lambda} =  \left[
\begin{array}{ccc}
\Lambda_{1111} & \frac{1}{3}\Lambda_{1111} & 0 \\
\cdot & \Lambda_{1111} & 0 \\
\cdot & \cdot & \frac{1}{3}\Lambda_{1111} 
\end{array}
\right].
\end{align}
To obtain \eqref{eqn:isosymmSijklrelations}, note Cauchy's relation in the isotropic setting implies $C_{1122} = \frac{1}{3}C_{1111}$ ({\em cf.} Footnote \ref{footnote:CauchyRelation}). Notice~\eqref{eqn:lambdaiso} is invariant under any orthogonal transformation of $\bfxi$ as $\lambda(\bfxi)$ is only dependent on $\| \bfxi \|$, which is preserved by orthogonal transformations. \qed
\end{proof}

\begin{remark}
It is interesting to note that the peridynamic tensor $\mathbbm{\Lambda}$ has the same form as the elasticity tensor $\mathbb{C}$ for each of the symmetry classes when Cauchy's relation $C_{1122} = C_{1212}$ is imposed. This can be readily seen by comparing \eqref{eqn:Stensorobl}, \eqref{eqn:Stensorrec}, \eqref{eqn:Stensorsqr}, and \eqref{eqn:Stensoriso} with \eqref{eqn:ObliqueTensorCauchy}, \eqref{eqn:RectangularTensorCauchy}, \eqref{eqn:SquareTensorCauchy},  and \eqref{eqn:IsoTensorCauchy}, respectively. 
\end{remark}

\subsubsection{Visualization of kernels}
In this section, we study the angular dependence of the two-dimensional  micromodulus function in each of the four symmetry classes. We do this by plotting the angular portion of $\lambda(\bfxi)$, which we define as
\begin{equation}\label{eqn:AngularPartLambda}
\hat{\gamma}(\bfxi) := \frac{(\bfxi \otimes \bfxi) \mathbbm{\Lambda} (\bfxi \otimes \bfxi)}{\| \bfxi \|^4}.
\end{equation}

If we change to polar coordinates, $\xi_1 = r \cos(\theta)$ and $\xi_2 = r \sin(\theta)$, notice~\eqref{eqn:AngularPartLambda} is independent of the radial component $r = \| \bfxi \|$, and so one may write $\gamma(\theta):=\hat{\gamma}(\bfxi)$, where $\theta$ is the angle that $\bfxi$ makes with the positive $x$-axis. To provide a visualization of the variation in $\gamma$ with respect to this angle~$\theta$, we present plots in polar coordinates $(\theta,s)$ with $s = \gamma(\theta)$. Throughout this section, we normalize the elasticity constant $C_{1111}$ to $1$, and we consider multiples of $C_{1111}$ for the other elasticity constants.

\textbf{Isotropic}: As one would expect, for isotropic symmetry we obtain a circle (i.e., $\gamma \equiv \frac{16}{3} C_{1111}$ is independent of the bond orientation ({\em cf.} \eqref{eqn:isosymmSijklrelations})) as can be seen in Figure \ref{fig:isokernel}.

\begin{figure}
\begin{center}
\includegraphics[scale=0.25]{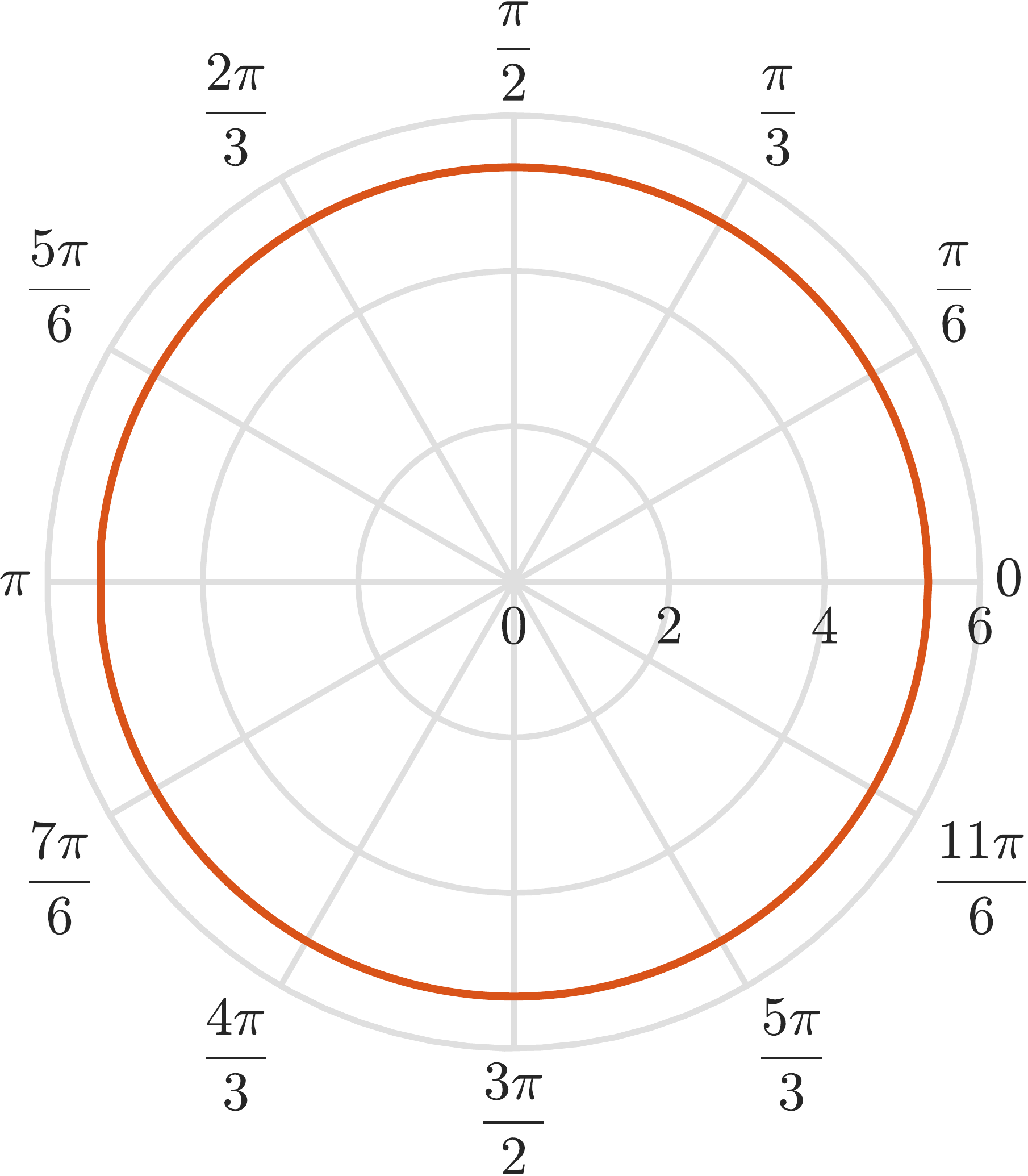}
\caption{Isotropic $\gamma(\theta)$ with $C_{1111} = 1$.}
\label{fig:isokernel}
\end{center}
\end{figure}

\textbf{Square}: For square symmetry, we have two independent constants $C_{1111}$ and $C_{1122}$ ({\em cf.}~\eqref{eqn:sqrsymmSijklrelations}). By varying the ratio between $C_{1122}$ and $C_{1111}$, one is able to increase or decrease the influence of $\gamma$ in the $\theta = \frac{k \pi}{2}$ directions (where $\xi_1 = 0$ or $\xi_2 = 0$ and $\gamma(\theta) = 12 C_{1111} - 20 C_{1122}$) while inversely influencing in the $\theta = \frac{(2k+1)\pi}{4}$ directions (where $\xi_1 = \xi_2$ and $\gamma(\theta) = -4C_{1111} + 28 C_{1122}$), where $k \in \mathbb{Z}$. In Figure \ref{fig:sqrkernel} we vary $C_{1122}$ to observe this dependence. When $C_{1122} = \frac{7}{27} < \frac{1}{3}$ (in solid blue), we see $\gamma$ favors the $\theta = \frac{k \pi}{2}$ directions over the $\theta = \frac{(2k+1) \pi}{4}$ directions, where $k \in \mathbb{Z}$. When $C_{1122} = \frac{1}{3}$ (in dash-dotted red), we have the isotropic case ($C_{1122} = \frac{1}{3} C_{1111}$). When $C_{1122} = \frac{11}{27} > \frac{1}{3}$ (in dotted yellow), we see $\gamma$ favors the $\theta = \frac{(2k+1) \pi}{4}$ directions over the $\theta = \frac{k \pi}{2}$ directions, where $k \in \mathbb{Z}$. 

\begin{figure}
\begin{center}
\begin{tabular}{c}
\includegraphics[scale=0.25,clip]{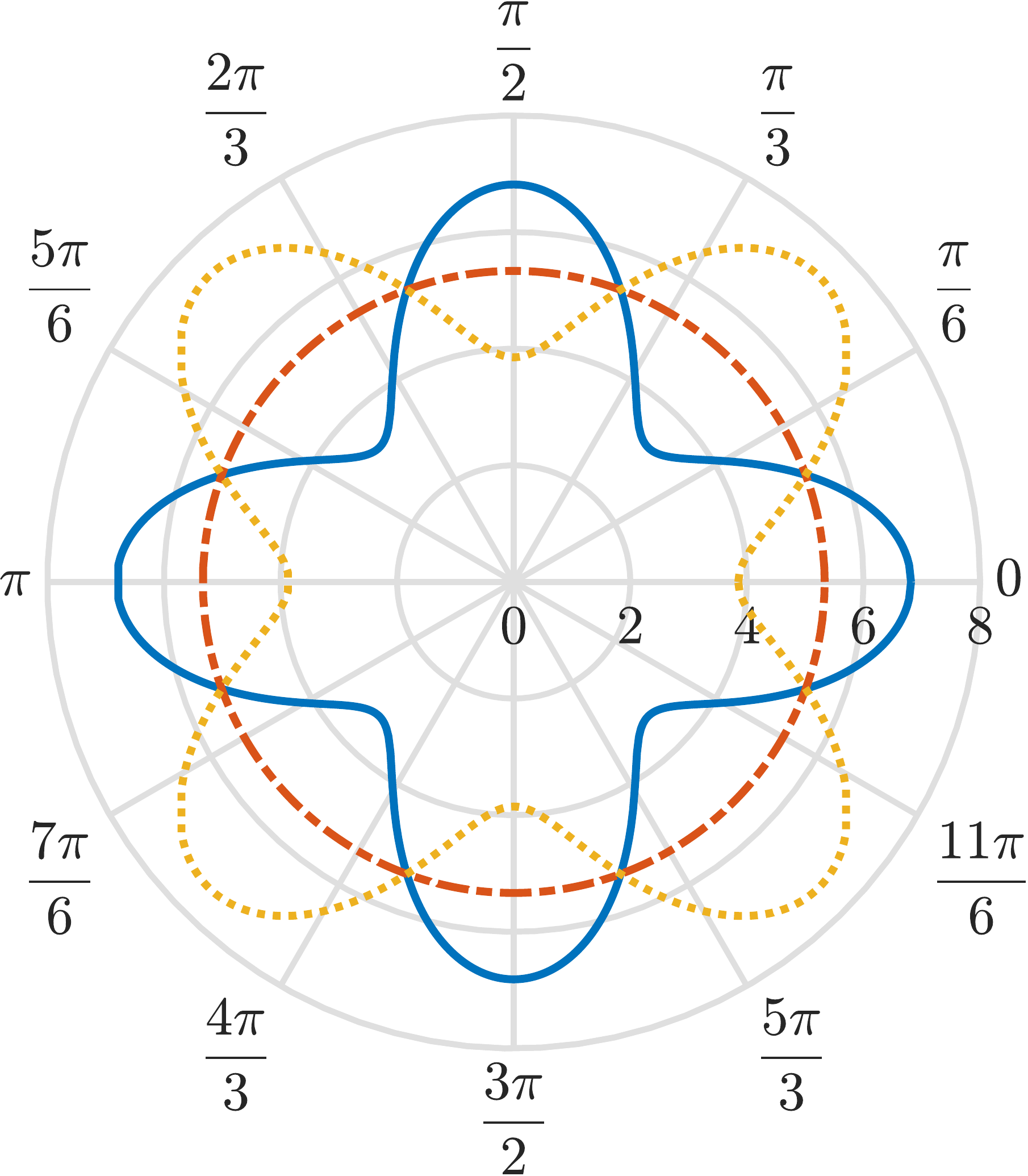}
\end{tabular}
\caption{Square $\gamma(\theta)$ with $C_{1111} = 1$ for $C_{1122} = \frac{7}{27}$ (solid blue), $\frac{1}{3}$ (dash-dotted red), $\frac{11}{27}$ (dotted yellow).}
\label{fig:sqrkernel}
\end{center}
\end{figure}

\textbf{Rectangular}: For rectangular symmetry, we have, in addition to $C_{1111}$ and $C_{1122}$, a third independent constant, $C_{2222}$ ({\em cf.} \eqref{eqn:recsymmSijklrelations}). By varying the ratio of $C_{1111}$ and $C_{2222}$ we are able to increase or decrease the influence of $\gamma$ in the $\theta = \frac{(2k+1)\pi}{2}$ directions (where $\xi_1 = 0$ and $\gamma(\theta) = 2C_{1111} - 20 C_{1122} + 10C_{2222}$) relative to the influence of $\gamma$ in the $\theta = k\pi$ directions (where $\xi_2 = 0$ and $\gamma(\theta) = 10 C_{1111} - 20 C_{1122} + 2 C_{2222}$), where $k \in \mathbb{Z}$. In Figure \ref{fig:reckernel}, we have three plots corresponding to $C_{1122} = \frac{7}{27}$ (left plot), $\frac{1}{3}$ (middle plot), and $\frac{11}{27}$ (right plot), which are the three cases we considered in the square case. For each of these cases of $C_{1122}$, we consider $C_{2222} = \frac{3}{5}$ (in solid blue), $1$ (in dash-dotted red), and $\frac{7}{5}$ (in dotted yellow). In each image in Figure \ref{fig:reckernel}, we see that when $C_{2222} = \frac{3}{5} < C_{1111}$, $\gamma$ favors the $\theta = k \pi$ directions over the $\theta = \frac{(2k+1)\pi}{2}$ directions. When $C_{2222} = 1 = C_{1111}$, there is no preference between the axes of the plot. Lastly, when $C_{2222} = \frac{7}{5} > C_{1111}$, we see the $\theta = \frac{(2k+1)\pi}{2}$ directions are favored over the $\theta = k \pi$ directions.   

\begin{figure}
\begin{center}
\begin{tabular}{ccc}
\includegraphics[scale=0.25,clip]{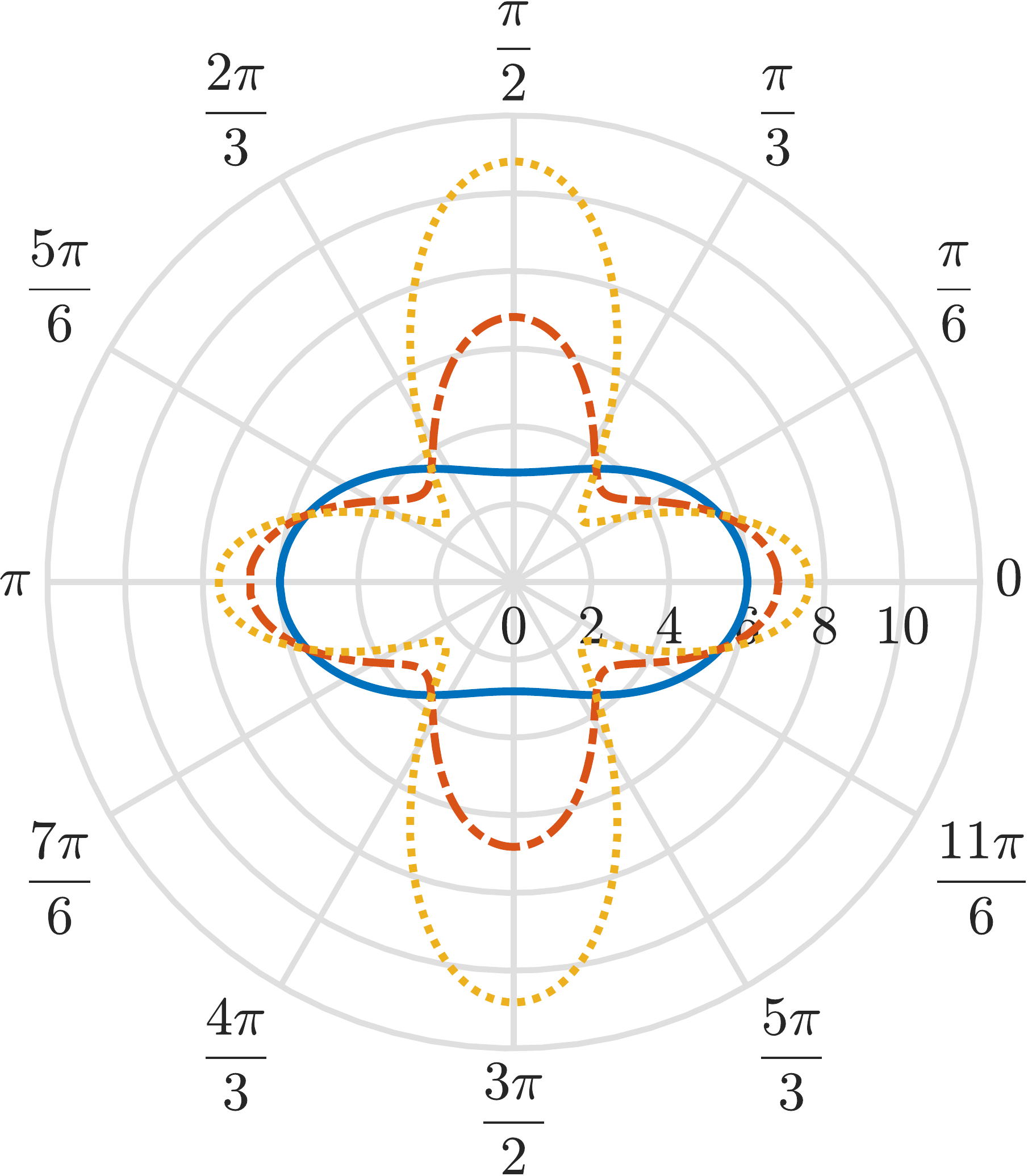}
&
\includegraphics[scale=0.25,clip]{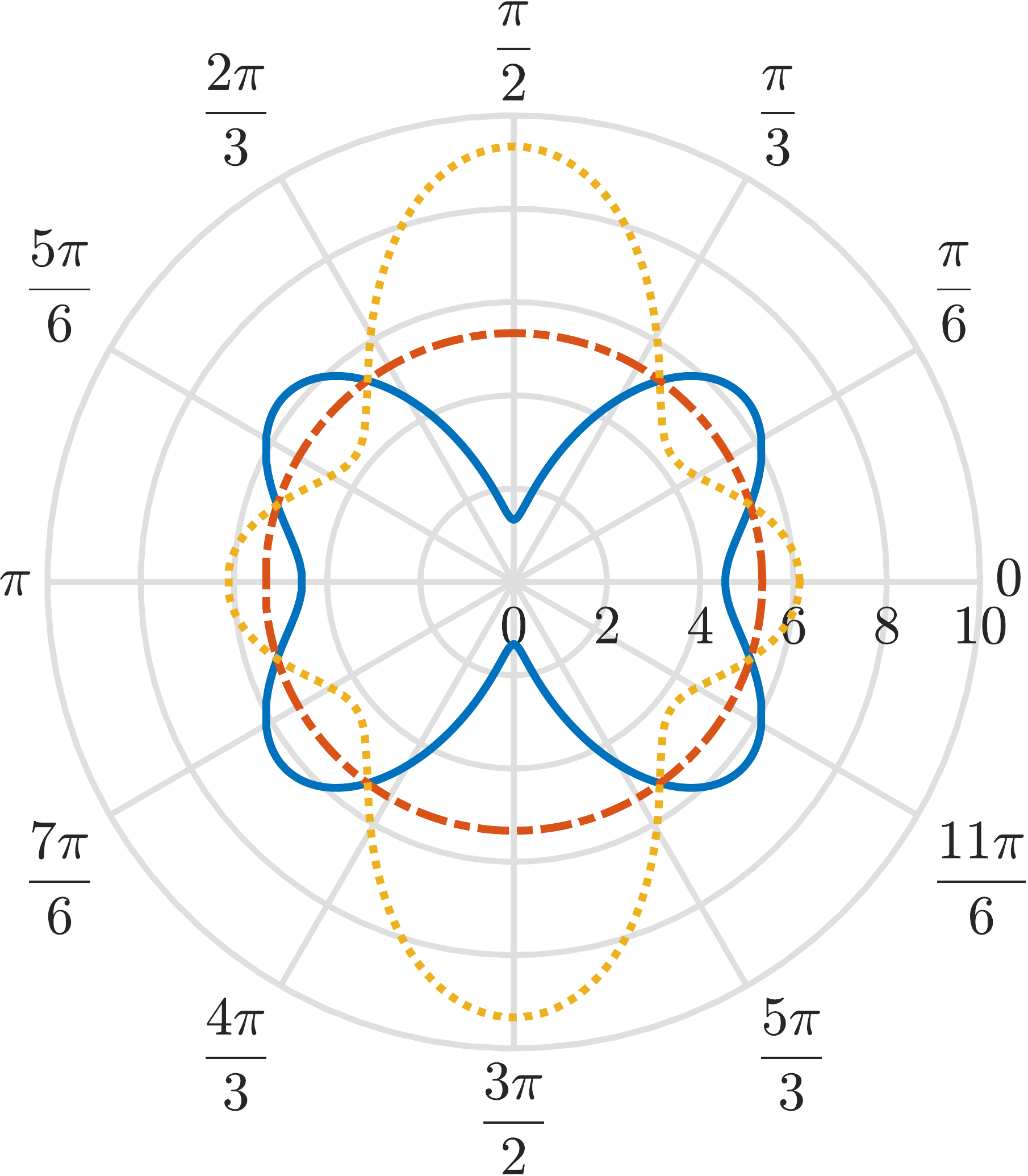}
&
\includegraphics[scale=0.25,clip]{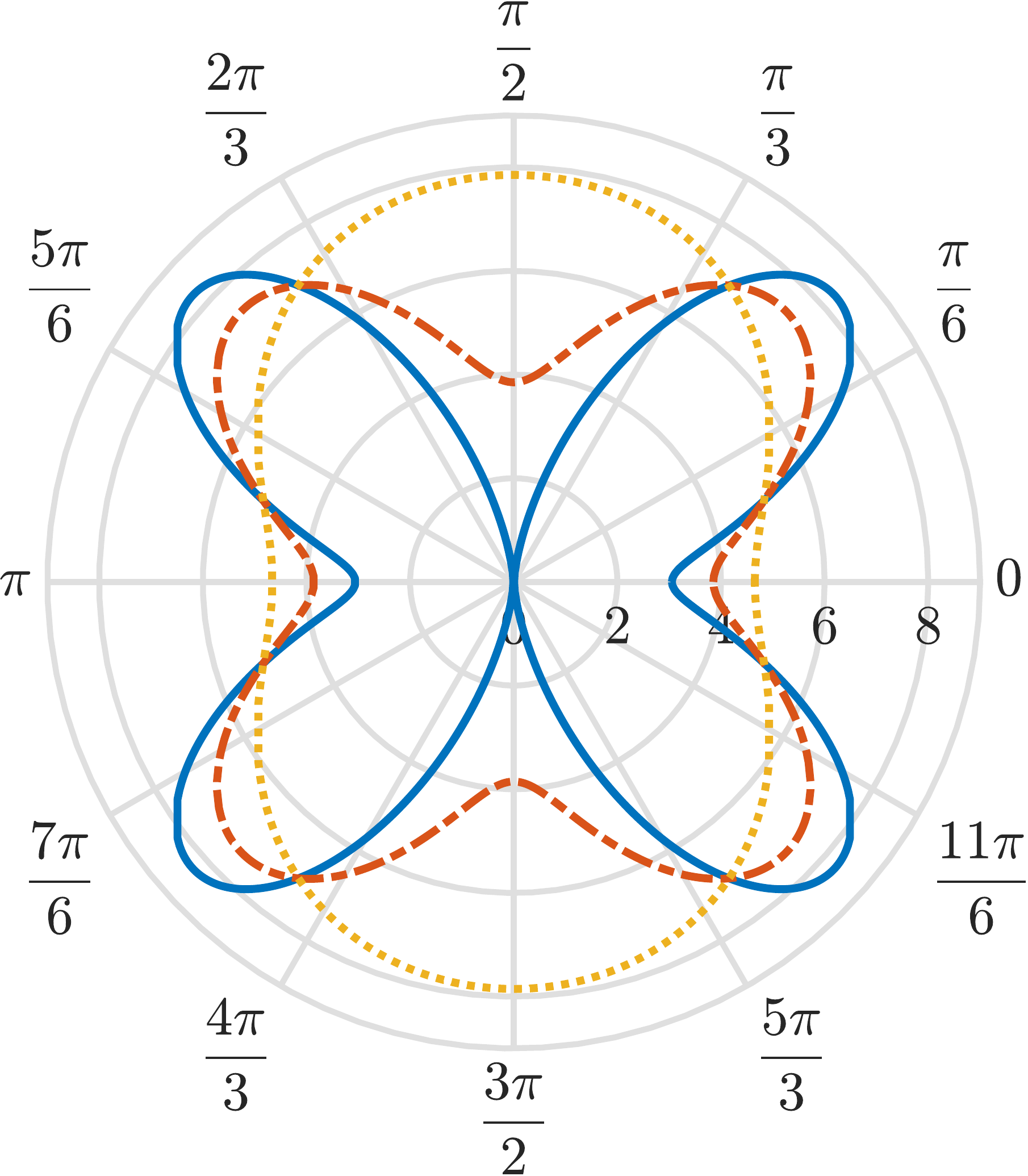}
\\
$C_{1122} = \frac{7}{27}$ & $C_{1122} = \frac{1}{3}$ & $C_{1122} = \frac{11}{27}$
\end{tabular}
\caption{Rectangular $\gamma(\theta)$ with $C_{1111} = 1$ for $C_{1122} = \frac{7}{27}$ (left), $\frac{1}{3}$ (middle), $\frac{11}{27}$ (right) and $C_{2222} = \frac{3}{5}$~(solid blue), $1$~(dash-dotted red), $\frac{7}{5}$~(dotted yellow).}
\label{fig:reckernel}
\end{center}
\end{figure}

\textbf{Oblique}: For oblique symmetry, in addition to $C_{1111}, C_{1122},$ and $C_{2222}$, there are two additional constants, $C_{1112}$ and $C_{2212}$. These two additional constants only contribute to the peridynamic tensor components $\Lambda_{1112}$ and $\Lambda_{2212}$ ({\em cf.}~\eqref{eqn:partitionlambdatensor}). The contribution to the angular portion of the micromodulus function from $C_{1112}$ and $C_{2212}$ can be entirely described by ({\em cf.}~\eqref{eqn:lambdaobl})
\begin{equation}\label{eqn:angularpartofobliquekernel}
\frac{4\xi_1 \xi_2 \left( \Lambda_{1112} \xi_1^2 + \Lambda_{2212} \xi_2^2 \right) }{\|\bfxi \|^4}.
\end{equation}

First, note the magnitude of expression \eqref{eqn:angularpartofobliquekernel} favors $\bfxi$ away from the main axes as $\xi_1 \xi_2$ approaches zero in those cases. Second, the term $\Lambda_{1112} \xi_1^2 + \Lambda_{2212} \xi_2^2$ favors either the $\xi_1$ ($x$-direction) or $\xi_2$ ($y$-direction) depending on the values of $C_{1112}$ and $C_{2212}$. There is a multitude of possible behaviors that may be represented by \eqref{eqn:lambdaobl}. In an attempt to consider the relative influences of $C_{1112}$ and $C_{2212}$ on the micromodulus function, we present a small subset of these behaviors in Figure \ref{fig:obl1kernel}. We consider the cases where $C_{1112} = -C_{2212} = -\frac{1}{8}$ (solid blue), $0$ (dash-dotted red), and $\frac{1}{8}$ (dotted yellow). 

\begin{figure}
\begin{center}
\begin{tabular}{ccc}
\includegraphics[scale=0.25,clip]{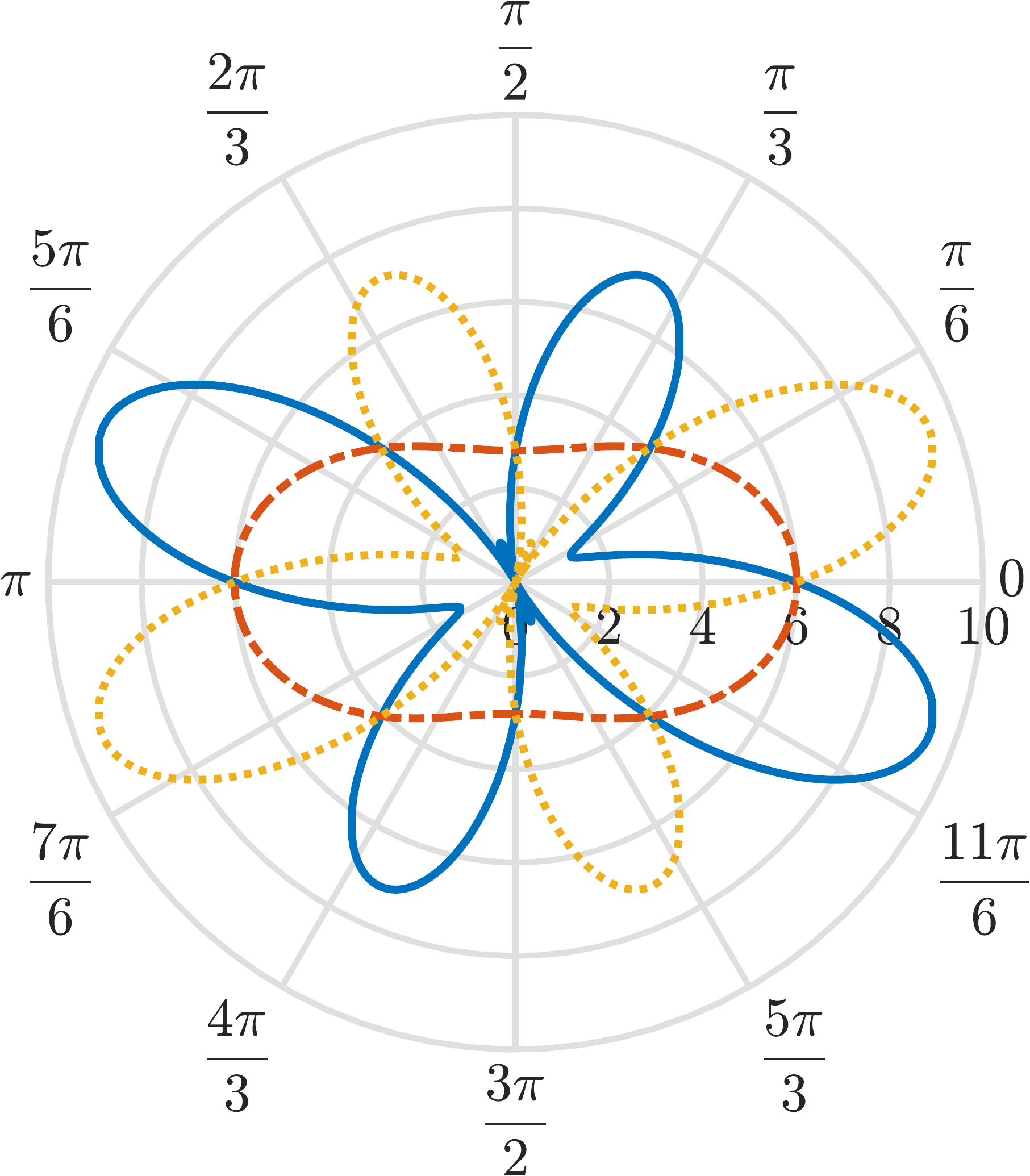}
&
\includegraphics[scale=0.25,clip]{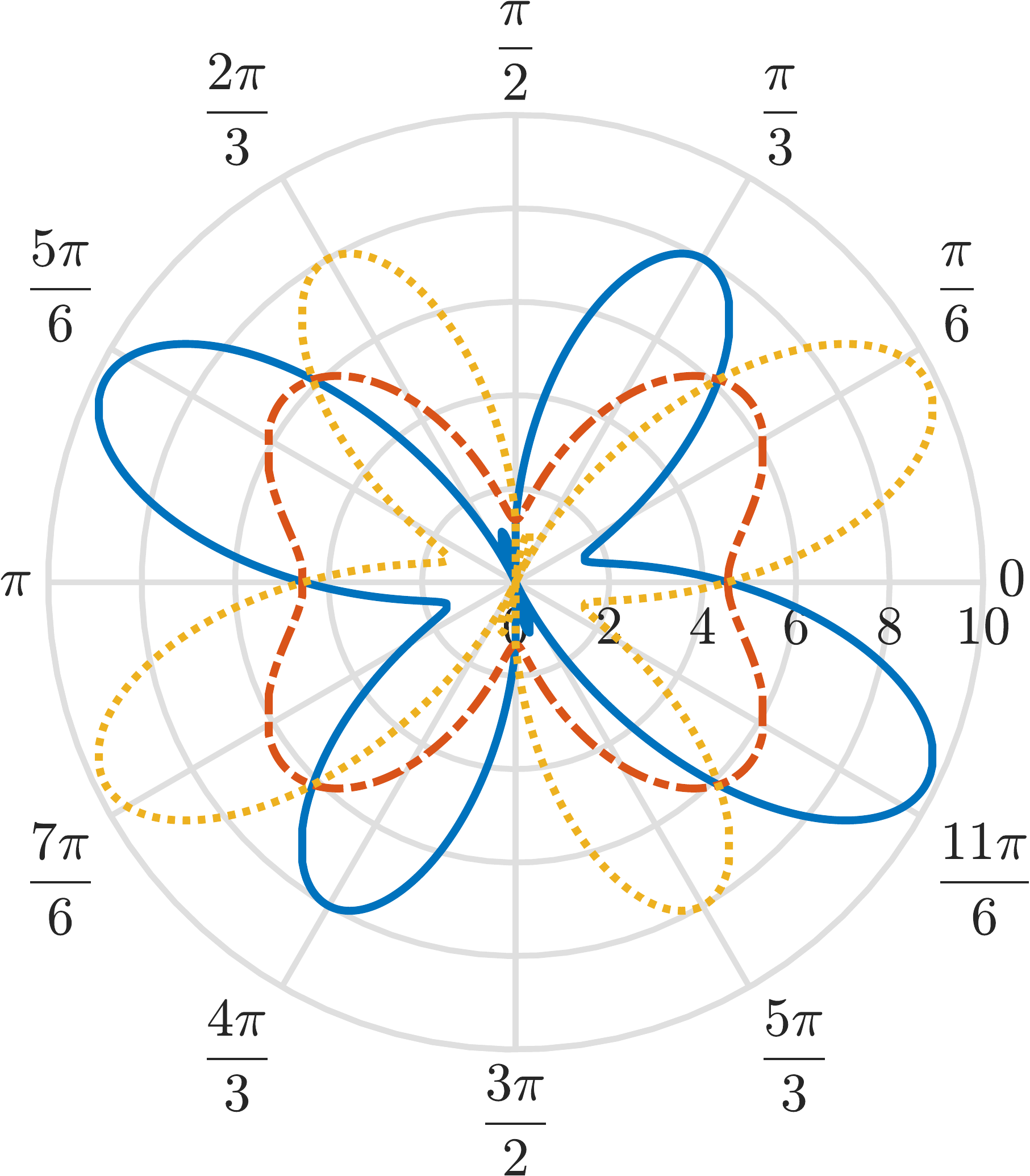}
&
\includegraphics[scale=0.25,clip]{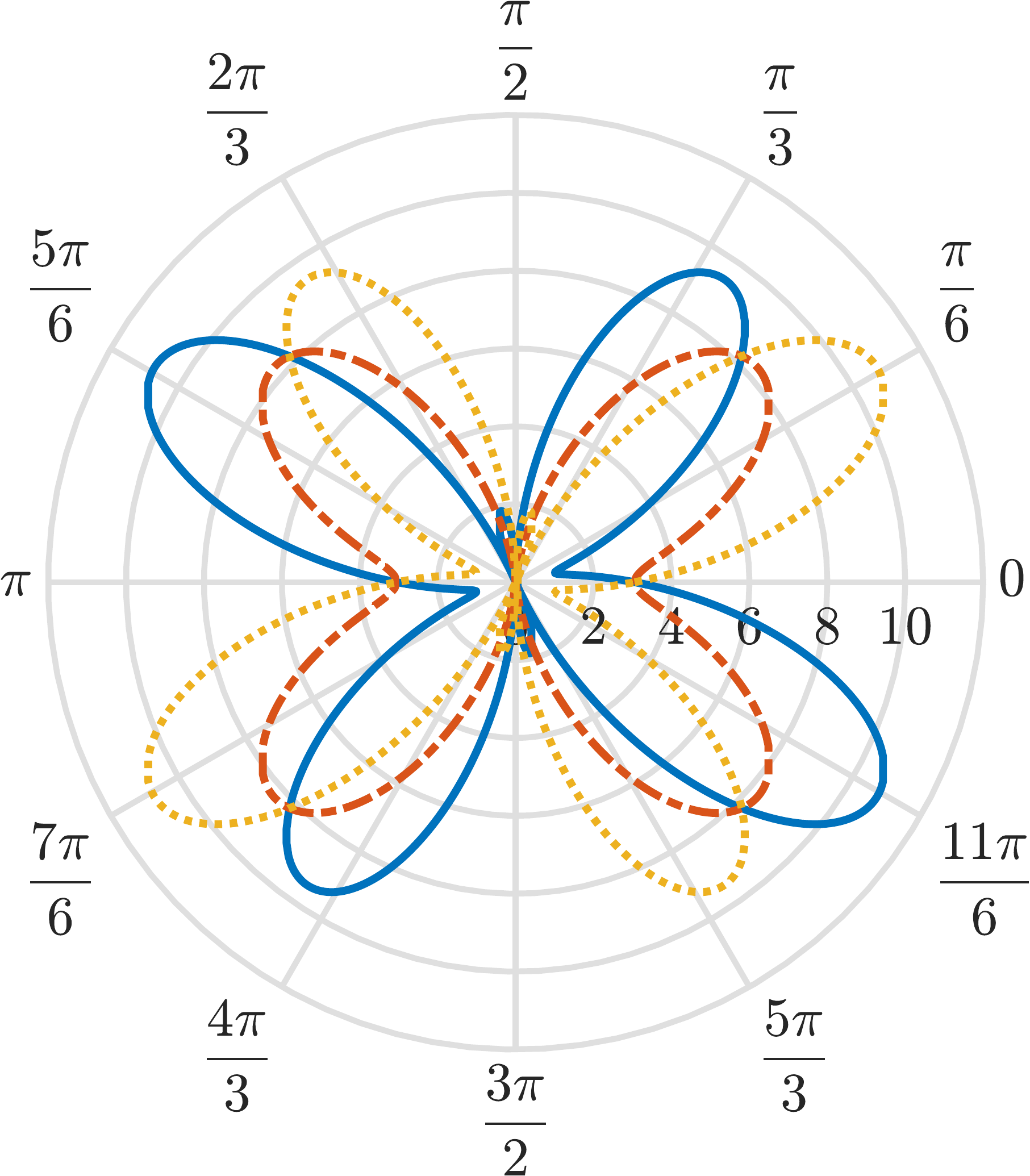}
\\
$C_{1122} = \frac{7}{27}$ & $C_{1122} = \frac{1}{3}$ & $C_{1122} = \frac{11}{27}$ \\[0.05in]
$C_{2222} = \frac{3}{5}$ & $C_{2222} = \frac{3}{5}$ &
$C_{2222} = \frac{3}{5}$ \\
\includegraphics[scale=0.25,clip]{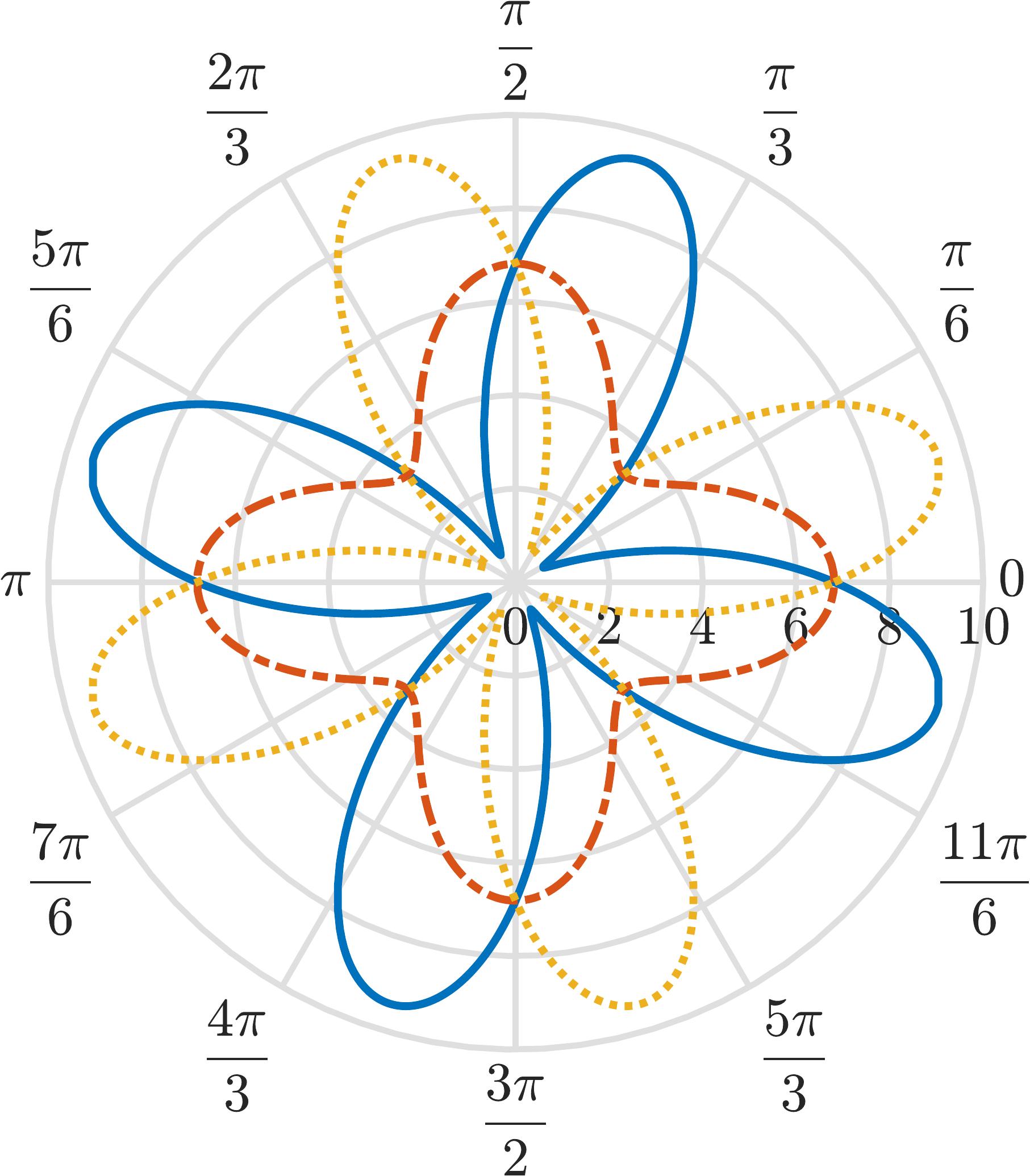}
&
\includegraphics[scale=0.25,clip]{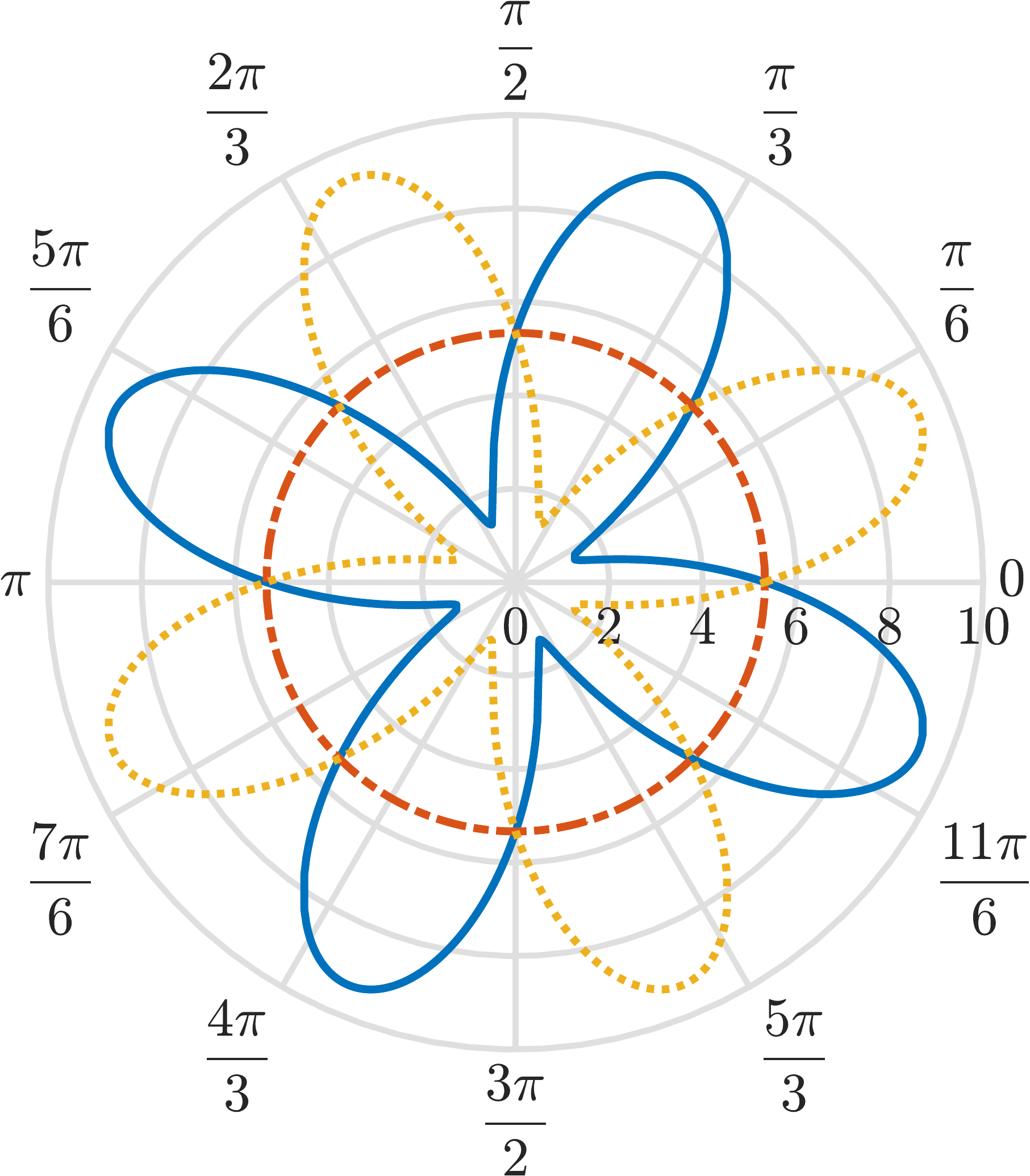}
&
\includegraphics[scale=0.25,clip]{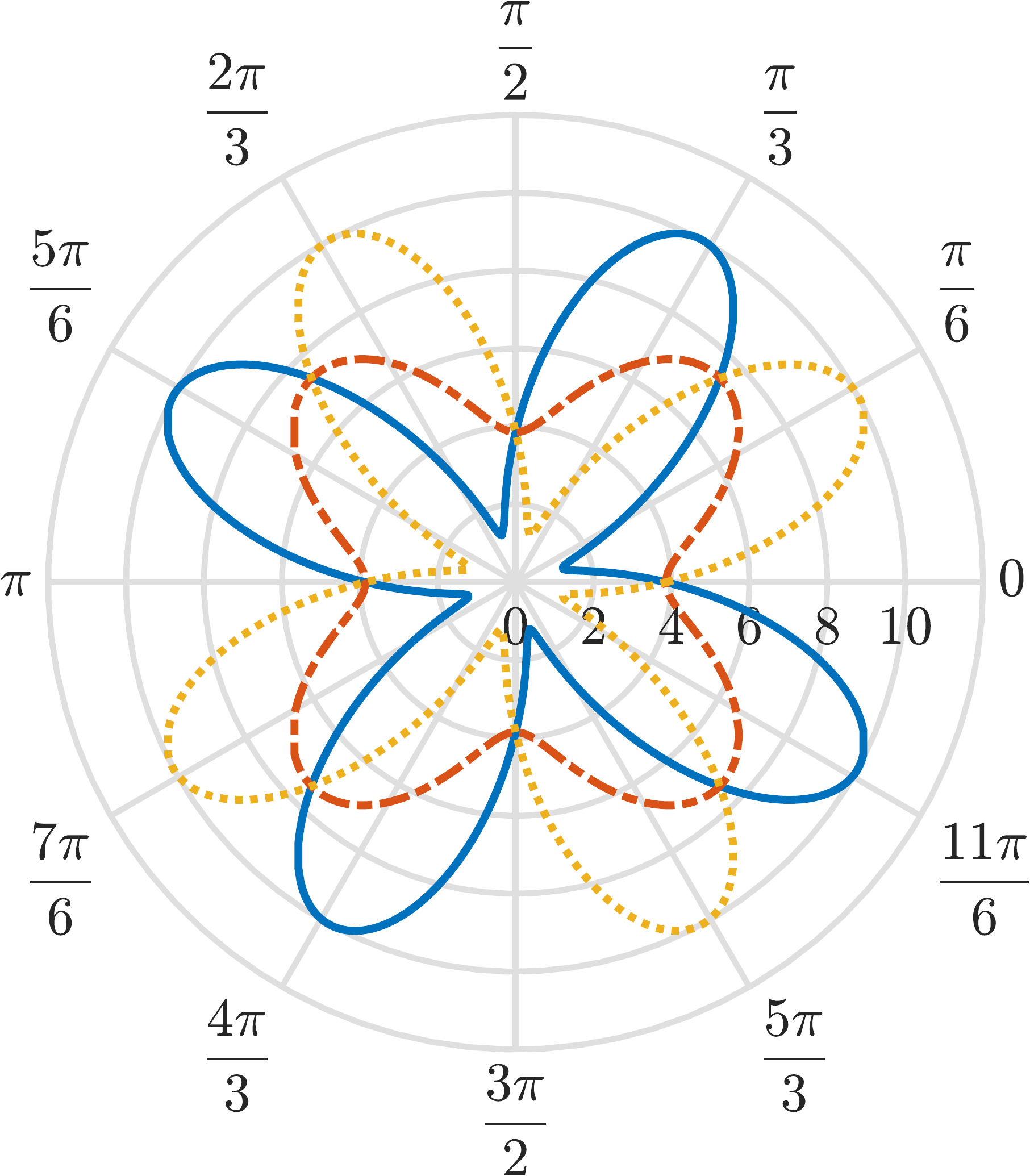}
\\
$C_{1122} = \frac{7}{27}$ & $C_{1122} = \frac{1}{3}$ & $C_{1122} = \frac{11}{27}$ \\[0.05in]
$C_{2222} = 1$ & $C_{2222} = 1$ &
$C_{2222} = 1$ \\
\includegraphics[scale=0.25,clip]{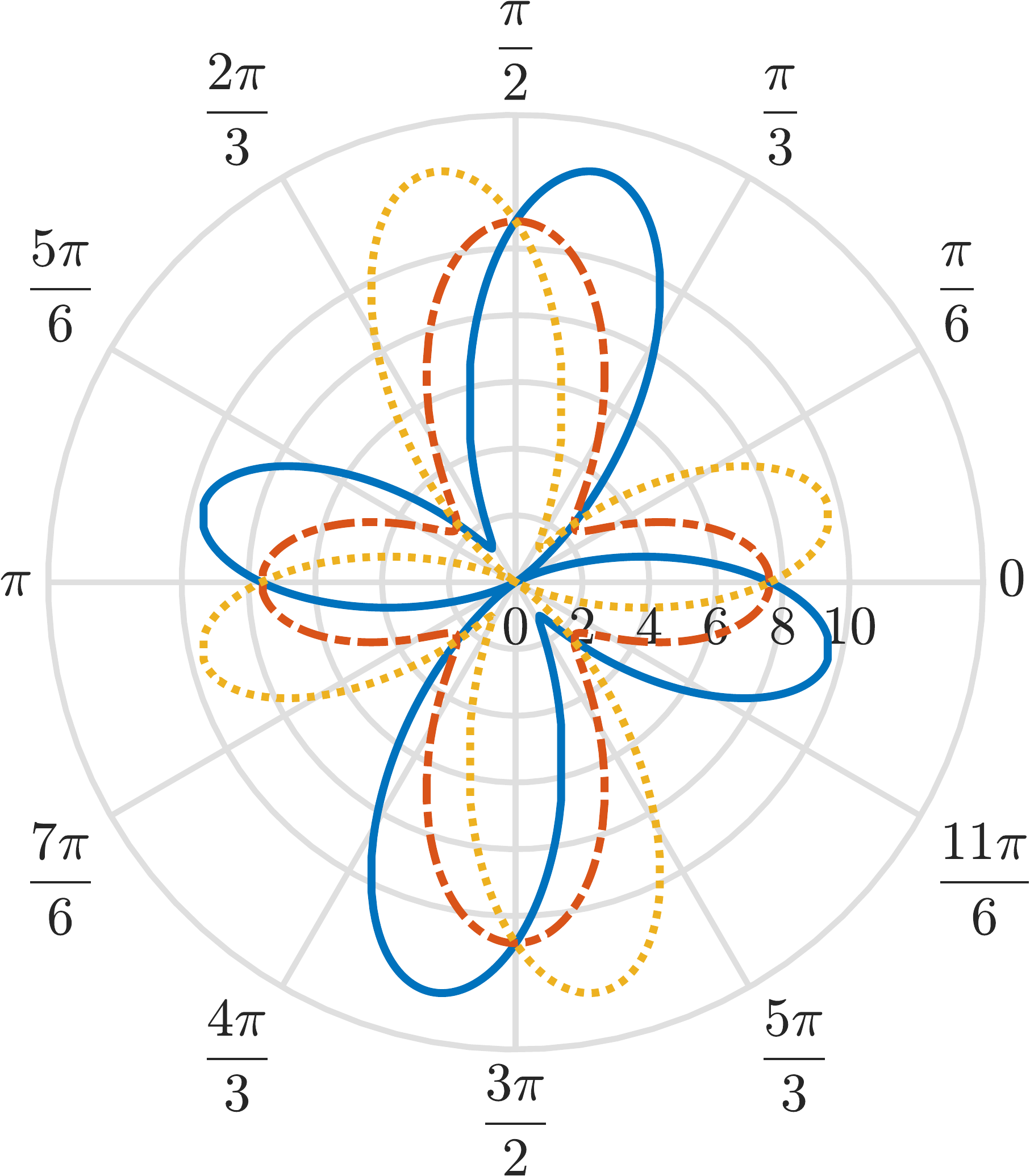}
&
\includegraphics[scale=0.25,clip]{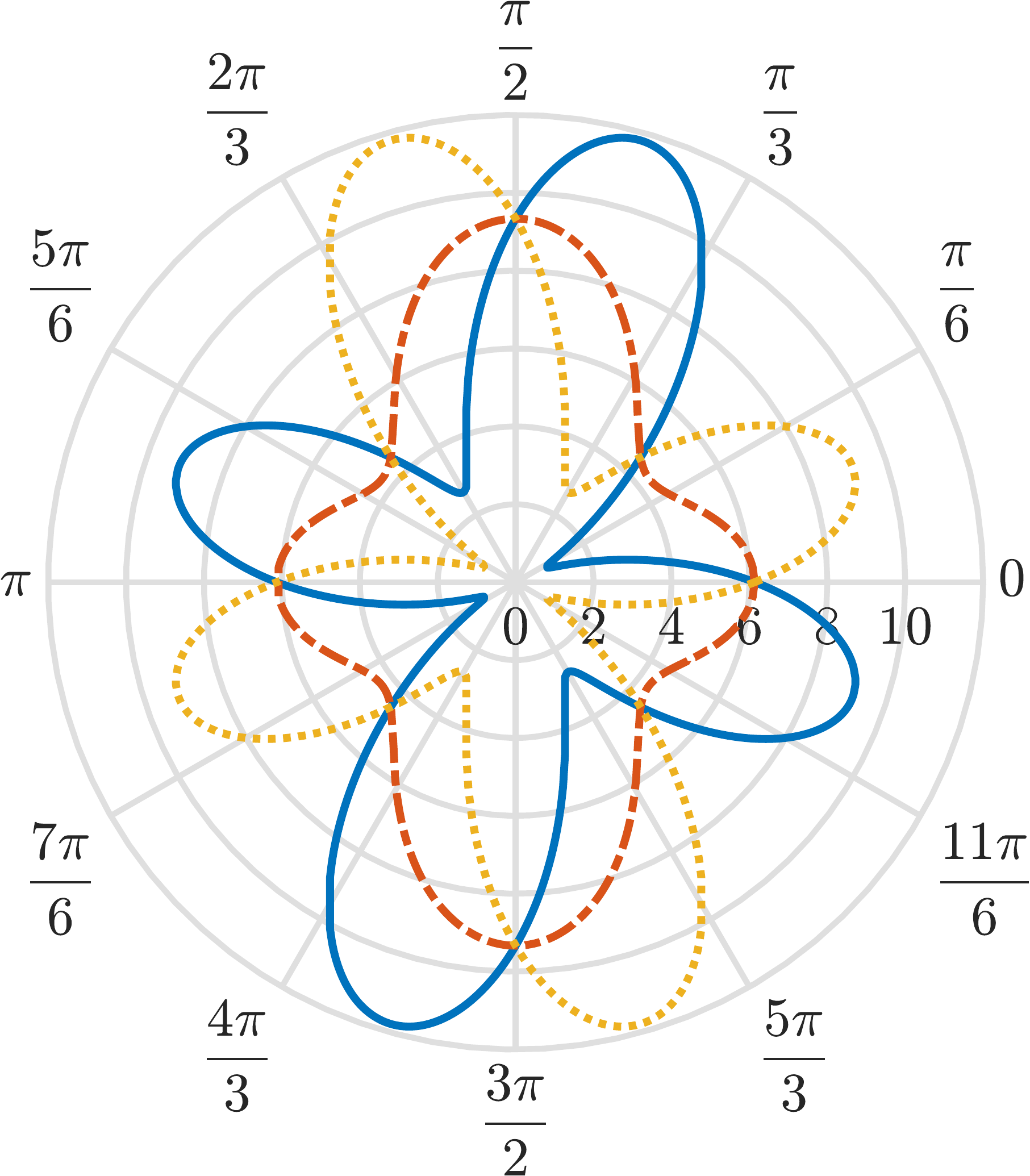}
&
\includegraphics[scale=0.25,clip]{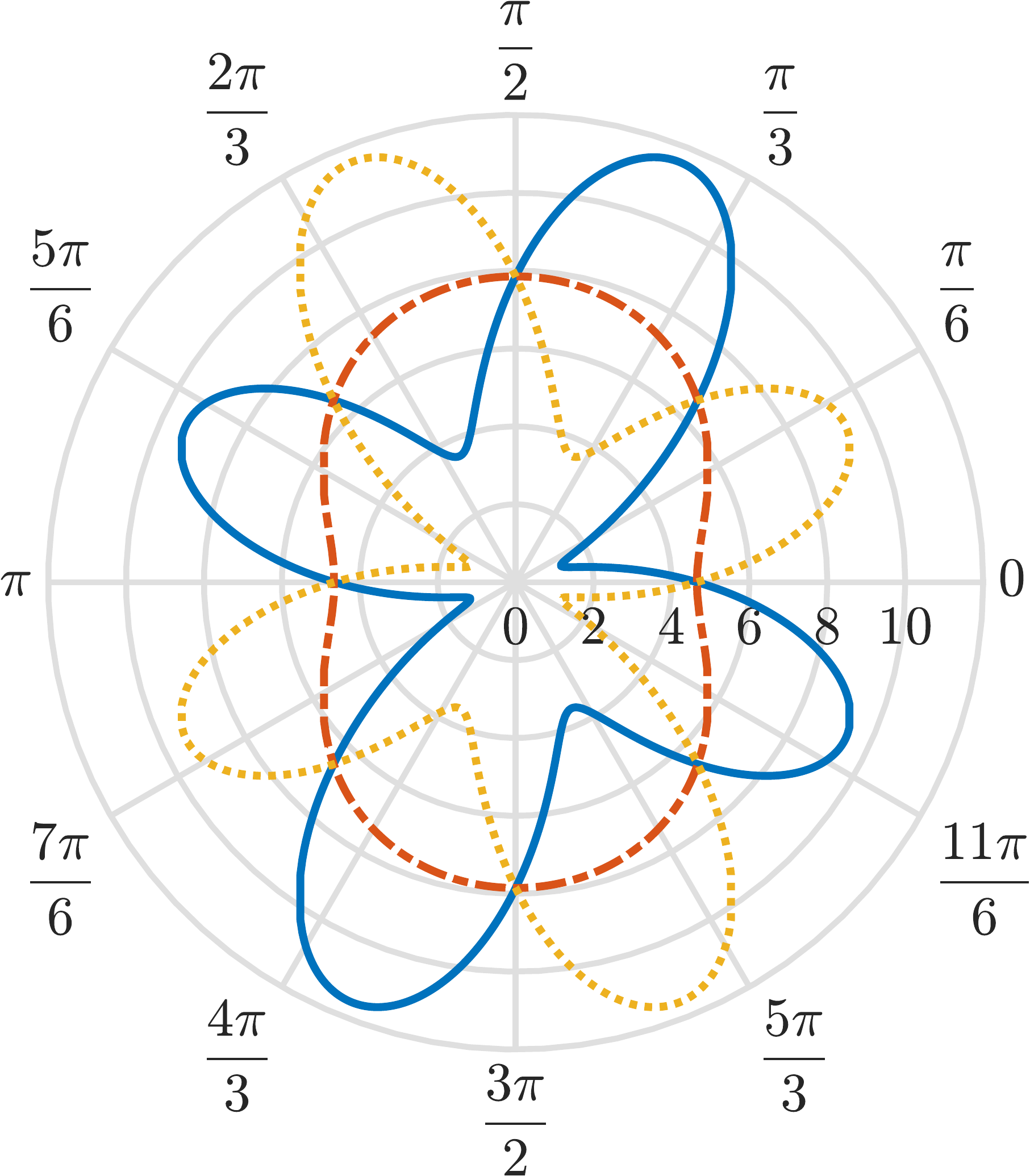}
\\
$C_{1122} = \frac{7}{27}$ & $C_{1122} = \frac{1}{3}$ & $C_{1122} = \frac{11}{27}$ \\[0.05in]
$C_{2222} = \frac{7}{5}$ & $C_{2222} =\frac{7}{5} $ &
$C_{2222} = \frac{7}{5}$
\end{tabular}
\caption{Oblique $\gamma(\theta)$ for $C_{1111} = 1$; $C_{1122} = \frac{7}{27}$ (left), $\frac{1}{3}$ (middle), $\frac{11}{27}$ (right); $C_{2222} = \frac{3}{5}$ (top), $1$ (middle), $\frac{7}{5}$ (bottom); and $C_{1112} = - C_{2212} = -\frac{1}{8}$(solid blue), $0$ (dash-dotted red), $\frac{1}{8}$ (dotted yellow).}
\label{fig:obl1kernel}
\end{center}
\end{figure}

\subsection{Planar approximations in three-dimensional bond-based peridynamics}\label{sec:planemodelsperi}

In this section, we consider plane strain and plane stress formulations of linear elastic bond-based peridynamic models. Plane strain and plane stress peridynamic models appear in, e.g., \cite{de2016,Gerstle2005,Ghajari2014,Le2014,Sarego2016}. The plane strain and plane stress peridynamic models in these papers were all derived from two-dimensional peridynamic models by simply matching peridynamic constants to constants appearing in classical plane strain and plane stress models. With the exception of \cite{de2016,Ghajari2014}, where a cubic model and a transversely isotropic model, respectively, were considered, these works only dealt with isotropic models.

As opposed to previous works, here we begin with a three-dimensional peridynamic model and directly impose peridynamic analogues of plane strain and plane stress conditions utilized in classical linear elasticity. In addition, our treatment considers all symmetry classes of classical linear elasticity ({\em cf}. Section \ref{sec:threedimclassicalelasticity}). The resulting models are two-dimensional, anisotropic, and reduce to the classical planar elasticity models when the displacements are smooth and higher-order terms are negligible. 

\subsubsection{Peridynamic plane strain}\label{sec:PeridynamicPlaneStrain}
To derive the peridynamic plane strain model, we begin with the three-dimensional bond-based linear peridynamic model \eqref{eqn:linearperieqn} and impose peridynamic analogues of the classical plane strain assumptions. We suppose a homogeneous material response and only consider material points within the bulk of the body. In this case, the three-dimensional bond-based linear peridynamic equation of motion is given in component form by ({\em cf}. \eqref{eqn:linearperieqncomponentformhomogeneous}):
\begin{subequations}\label{eqn:perieqnmot}
\begin{align}
\begin{split}
\rho(\bfx) \ddot{u}_1(\bfx,t) = \int_{B_{\delta}^{3D}(\mathbf{0}) } \lambda(\bfxi) &\left[ \xi_1^2 (u_1(\bfx + \bfxi,t) - u_1(\bfx,t)) + \xi_1 \xi_2 (u_2(\bfx + \bfxi,t) - u_2(\bfx,t) ) \right. \\
&\left.+ \xi_1 \xi_3(u_3(\bfx + \bfxi,t) - u_3(\bfx,t))\right] d\bfxi + b_1(\bfx,t), 
\end{split} \\
\begin{split}
\rho(\bfx) \ddot{u}_2(\bfx,t) = \int_{B_{\delta}^{3D}(\mathbf{0}) } \lambda(\bfxi) &\left[ \xi_1 \xi_2 (u_1(\bfx + \bfxi,t) - u_1(\bfx,t)) + \xi_2^2 (u_2(\bfx + \bfxi,t) - u_2(\bfx,t)) \right. \\
&\left.+ \xi_2 \xi_3(u_3(\bfx + \bfxi,t) - u_3(\bfx,t)) \right] d\bfxi +  b_2(\bfx,t),
\end{split} \\
\begin{split}
\rho(\bfx) \ddot{u}_3(\bfx,t)  =  \int_{B_{\delta}^{3D}(\mathbf{0}) } \lambda(\bfxi) &\left[ \xi_1 \xi_3 (u_1(\bfx + \bfxi,t) - u_1(\bfx,t)) + \xi_2 \xi_3 (u_2(\bfx + \bfxi,t) - u_2(\bfx,t)) \right. \\
&\left.+ \xi_3^2 (u_3(\bfx + \bfxi,t) - u_3(\bfx,t))\right] d\bfxi + b_3(\bfx,t),
\end{split}
\end{align}
\end{subequations}
where $B^{3D}_{\delta}(\bf0)$ is the ball in $\mathbb{R}^3$ of radius $\delta$ about the origin. To avoid confusion, the ball in $\mathbb{R}^2$ (i.e. a disk) of radius $\delta$ about the origin is denoted by $B^{2D}_{\delta}(\bf0)$. We assume almost identical assumptions to those in classical plane strain ({\em cf.}~Section \ref{assmp:classplanestrain}). 

The peridynamic plane strain assumptions are given as follows:
\begin{enumerate}[(P$\varepsilon$1)]\label{assmp:periplanestrain}
\item The geometric form and mass density of the body, and the external loads exerted on it, do not change along some axis, which we take as the $z$-axis. In particular, \label{assmp:periplanestrain1}
\begin{equation*}
\rho = \rho(x,y) \text{ and } \bfb = \bfb(x,y,t).
\end{equation*}
\item The deformation of any arbitrary cross-section perpendicular to the $z$-axis is identical, i.e., the displacements are function of $x$, $y$, and $t$ only:
\begin{equation*}
\bfu = \bfu(x,y,t).
\end{equation*} \label{assmp:periplanestrain2}
\item The material has at least monoclinic symmetry with a plane of reflection corresponding to the plane $z=0$, i.e., the micromodulus function is symmetric in its third component ({\em cf.} \cite{STG2019}): \label{assmp:periplanestrain3full}
\begin{equation}\label{assmp:periplanestrain3}
\lambda(\xi_1, \xi_2, \xi_3) = \lambda( \xi_1, \xi_2, - \xi_3).
\end{equation}
\end{enumerate}

A system satisfying Assumptions (P$\varepsilon$\ref{assmp:periplanestrain1}) and (P$\varepsilon$\ref{assmp:periplanestrain2}) is said to be in a state of peridynamic generalized plane strain. See Figure \ref{fig:planestrain} for an illustration of a body in a state of plane strain. 

\underline{Peridynamic generalized plane strain}: Imposing Assumptions (P$\varepsilon$\ref{assmp:periplanestrain1}) and (P$\varepsilon$\ref{assmp:periplanestrain2}) on \eqref{eqn:perieqnmot} and then integrating over the third component, $\xi_3$, we obtain the peridynamic analogue of \eqref{eqn:classeqnmot}:
\begin{subequations}\label{eqn:genplanestrainmodel}
\begin{align}
\begin{split}
\rho(x,y) \ddot{u}_1(x,y,t)  =& \int_{B^{2D}_{\delta}(\bf0)} \lambda_0(\xi_1,\xi_2) \left[ \xi_1^2 (u_1(x+\xi_1,y+\xi_2,t) - u_1(x,y,t)) + \xi_1 \xi_2 (u_2(x+\xi_1,y+\xi_2,t) - u_2(x,y,t)) \right] d \xi_1 d \xi_2 \\
&+ \int_{B^{2D}_{\delta}(\bf0)} \lambda_1(\xi_1,\xi_2,t) \xi_1 (u_3(x+\xi_1,y+\xi_2,t) - u_3(x,y,t)) d \xi_1 d \xi_2 + b_1(x,y,t), 
\end{split} \\
\begin{split}
\rho(x,y) \ddot{u}_2(x,y,t) =& \int_{B^{2D}_{\delta}(\bf0)}  \lambda_0(\xi_1,\xi_2) \left[ \xi_1 \xi_2 (u_1(x+\xi_1,y+\xi_2,t) - u_1(x,y,t)) + \xi_2^2 (u_2(x+\xi_1,y+\xi_2,t) - u_2(x,y,t)) \right] d \xi_1 d \xi_2 \\
&+ \int_{B^{2D}_{\delta}(\bf0)} \lambda_1(\xi_1,\xi_2) \xi_2 (u_3(x+\xi_1,y+\xi_2,t) - u_3(x,y,t)) d \xi_1 d \xi_2 + b_2(x,y,t), 
\end{split} \\
\begin{split}
\rho(x,y) \ddot{u}_3(x,y,t)  =&  \int_{B^{2D}_{\delta}(\bf0)}  \lambda_1(\xi_1,\xi_2) \left[ \xi_1  (u_1(x+\xi_1,y+\xi_2,t) - u_1(x,y,t)) + \xi_2 (u_2(x+\xi_1,y+\xi_2,t) - u_2(x,y,t)) \right] d \xi_1 d \xi_2 \\
&+ \int_{B^{2D}_{\delta}(\bf0)} \lambda_2(\xi_1,\xi_2) (u_3(x+\xi_1,y+\xi_2,t) - u_3(x,y,t)) d \xi_1 d \xi_2 + b_3(x,y,t),
\end{split}
\end{align}
\end{subequations}
where the micromodulus functions $\left\{\lambda_n(\xi_1,\xi_2) \right\}$ are defined by
\begin{equation}\label{eqn:xi3integrationterms}
\begin{split}
\lambda_n(\xi_1,\xi_2) :=& \int_{-\sqrt{\delta^2-\xi_1^2-\xi_2^2}}^{\sqrt{\delta^2-\xi_1^2-\xi_2^2}}  \lambda(\bfxi) \xi_3^n d \xi_3, \quad n \in \left\{0,1,2 \right\}.  \\
\end{split}
\end{equation}

Equation \eqref{eqn:genplanestrainmodel} is the \textit{peridynamic generalized plane strain} equation of motion and reduces the degrees of freedom in the system significantly ($\bfu$ is a function of only $x$, $y$, and $t$, and thus the simulation may be run on a two-dimensional spatial domain). However, the in-plane displacements, $u_1$ and $u_2$, are coupled with the out-of-plane displacement, $u_3$. Nevertheless, provided the material has sufficient symmetry, it is possible to further simplify the model. In fact, having a single plane of reflection symmetry (Assumption (P$\varepsilon$\ref{assmp:periplanestrain3full})) is sufficient to decouple the in-plane and out-of-plane displacements provided the plane of reflection symmetry coincides with the plane $z= 0$, as demonstrated below. A system satisfying (P$\varepsilon$\ref{assmp:periplanestrain1})~--~(P$\varepsilon$\ref{assmp:periplanestrain3full}) is said to be in a state of peridynamic plane strain.

\underline{Peridynamic plane strain}:
Emulating the classical theory by imposing monoclinic symmetry from Assumption (P$\varepsilon$\ref{assmp:periplanestrain3full}) on \eqref{eqn:genplanestrainmodel}, the in-plane displacements, $u_1$ and $u_2$, and the out-of-plane displacement, $u_3$, decouple. A system satisfying Assumptions (P$\varepsilon$\ref{assmp:periplanestrain1})~--~(P$\varepsilon$\ref{assmp:periplanestrain3full}) is said to be in a state of peridynamic plane strain.
In this case, the micromodulus function $\lambda(\bfxi)$ is even in its third component and therefore $\lambda_1(\xi_1,\xi_2) \equiv 0$ by the antisymmetry of the integrand in \eqref{eqn:xi3integrationterms}. The resulting in-plane equations of motion (analogues of \eqref{eqn:classicalmonoeqnmotion1} and \eqref{eqn:classicalmonoeqnmotion2}) are given by

\begin{subequations}\label{eqn:planestrainmodelinplane}
\begin{align}
\begin{split}
\rho(x,y) \ddot{u}_1(x,y,t)  =& \int_{B^{2D}_{\delta}(\bf0)} \lambda_0(\xi_1,\xi_2) \left[ \xi_1^2 (u_1(x+\xi_1,y+\xi_2,t) - u_1(x,y,t)) + \xi_1 \xi_2 (u_2(x+\xi_1,y+\xi_2,t) - u_2(x,y,t)) \right] d \xi_1 d \xi_2 \\
&+ b_1(x,y,t), \label{eqn:planestrainmodelu1}
\end{split} \\
\begin{split}
\rho(x,y) \ddot{u}_2(x,y,t) =& \int_{B^{2D}_{\delta}(\bf0)}  \lambda_0(\xi_1,\xi_2) \left[ \xi_1 \xi_2 (u_1(x+\xi_1,y+\xi_2,t) - u_1(x,y,t)) + \xi_2^2 (u_2(x+\xi_1,y+\xi_2,t) - u_2(x,y,t)) \right] d \xi_1 d \xi_2 \\
&+ b_2(x,y,t), \label{eqn:planestrainmodelu2}
\end{split} 
\end{align}
\end{subequations}
and the resulting out-of-plane equation of motion (analogue of \eqref{eqn:classicalmonoeqnmotion3}) is given by
\begin{align}\label{eqn:planestrainmodeloutofplane}
\begin{split}
\rho(x,y) \ddot{u}_3(x,y,t) =&  \int_{B^{2D}_{\delta}(\bf0)} \lambda_2(\xi_1,\xi_2) (u_3(x+\xi_1,y+\xi_2,t) - u_3(x,y,t)) d \xi_1 d \xi_2 + b_3(x,y,t).
\end{split}
\end{align}

\begin{remark}\label{rmk:planestrainvector}
One may express the in-plane equations of motion \eqref{eqn:planestrainmodelinplane} in vector form by letting $\bfx=(x,y),\bfxi = (\xi_1,\xi_2),\bfu = (u_1,u_2), \mathcal{H} = B_\delta^{2D}(\bf0)$, and $\bfb=(b_1,b_2)$. In this case, the in-plane equations of motion are formulated as
\begin{equation*}
\rho(\bfx) \ddot{\bfu}(\bfx,t) = \int_{\mathcal{H}} \lambda_0(\bfxi) \bfxi \otimes \bfxi (\bfu(\bfx+\bfxi,t) - \bfu(\bfx,t)) d \bfxi + \bfb(\bfx,t).
\end{equation*}
\end{remark}

\underline{Peridynamic plane strain micromodulus functions}: In the peridynamic generalized plane strain model presented in this work, the requirements placed on the micromodulus function $\lambda(\bfxi)$ have been fairly minimal up to this point.
In order to investigate the differences between the pure two-dimensional peridynamic equation of motion \eqref{eqn:puretwocomponentmodel} and the in-plane equations of motion \eqref{eqn:planestrainmodelinplane} for the peridynamic plane strain model, we consider the micromodulus function \eqref{def:lambdagenform} related to the elasticity tensor through \eqref{eqn:Cperiexpression}. In the pure two-dimensional peridynamic equation of motion,  indices of $\mathbbm{\Lambda}$ are in $\left\{1,2 \right\}$, \eqref{def:lambdagenform} is given by \eqref{eqn:lambdaobl}, and $\mathbbm{\Lambda}$ is related to $\mathbb{C}$ through \eqref{eqn:SijkltoCijkl}. Alternatively, in the in-plane equations of motion for the peridynamic plane strain model, the indices of $\mathbbm{\Lambda}$ are in $\left\{1,2,3 \right\}$, \eqref{def:lambdagenform} is given by \eqref{eqn:lambdatriclinic}, and $\mathbbm{\Lambda}$ is related to $\mathbb{C}$ through ({\em cf.} \cite{STG2019}):

\begin{subequations}\label{eqn:SijklToCijklRelations3D}
\begin{align}
\Lambda_{1111} &= 30 C_{1111} - 45 C_{1122} - 45 C_{1133} + \frac{15}{4} C_{2222} + \frac{15}{2} C_{2233} + \frac{15}{4} C_{3333},\\
\Lambda_{1122} &= - \frac{15}{2}C_{1111} + \frac{255}{4} C_{1122} - \frac{25}{4} C_{1133} - \frac{15}{2} C_{2222} - \frac{25}{4} C_{2233} + \frac{5}{4} C_{3333},\\
\Lambda_{1133} &= - \frac{15}{2}C_{1111} - \frac{25}{4} C_{1122} + \frac{255}{4} C_{1133} + \frac{5}{4} C_{2222} - \frac{25}{4} C_{2233} - \frac{15}{2} C_{3333},\\
\Lambda_{1123} &= 70 C_{1123} - \frac{35}{4} C_{2223} - \frac{35}{4} C_{3323},\\
\Lambda_{1113}  &=  \frac{105}{2} C_{1113} - \frac{105}{4} C_{2213} - \frac{105}{4} C_{3313},\\
\Lambda_{1112} &= \frac{105}{2} C_{1112} - \frac{105}{4} C_{2212} - \frac{105}{4} C_{3312},\\
\Lambda_{2222} &= \frac{15}{4} C_{1111} - 45 C_{1122} + \frac{15}{2} C_{1133} + 30 C_{2222} - 45 C_{2233} + \frac{15}{4} C_{3333},\\
\Lambda_{2233} &=  \frac{5}{4} C_{1111} - \frac{25}{4} C_{1122} - \frac{25}{4} C_{1133} - \frac{15}{2}  C_{2222} + \frac{255}{4} C_{2233} - \frac{15}{2} C_{3333},\\
\Lambda_{2223} &=  -\frac{105}{4}  C_{1123} + \frac{105}{2} C_{2223} - \frac{105}{4} C_{3323},\\
\Lambda_{2213} &= - \frac{35}{4} C_{1113} + 70 C_{2213} - \frac{35}{4} C_{3313},\\
\Lambda_{2212} &= -\frac{105}{4} C_{1112} + \frac{105}{2} C_{2212} - \frac{105}{4} C_{3312},\\
\Lambda_{3333} &=  \frac{15}{4} C_{1111} + \frac{15}{2} C_{1122} - 45 C_{1133} + \frac{15}{4} C_{2222} - 45 C_{2233} + 30 C_{3333},\\
\Lambda_{3323} &= - \frac{105}{4}  C_{1123} - \frac{105}{4} C_{2223} + \frac{105}{2} C_{3323},\\
\Lambda_{3313} &= -\frac{105}{4} C_{1113} - \frac{105}{4} C_{2213} + \frac{105}{2} C_{3313},\\
\Lambda_{3312} &= -\frac{35}{4} C_{1112} - \frac{35}{4} C_{2212} + 70 C_{3312}. 
\end{align}
\end{subequations} 

In order to obtain expressions for $\lambda_n(\xi_1,\xi_2)$ in \eqref{eqn:xi3integrationterms}, it is convenient to introduce a shorthand notation:

\begin{subequations}\label{eqn:PeriPlaneStrainAList}
\begin{align}
&A_0(\xi_1,\xi_2) := \Lambda_{1111} \xi_1^4 + 4\Lambda_{1112}\xi_1^3 \xi_2 +6 \Lambda_{1122} \xi_1^2 \xi_2^2 + 4 \Lambda_{2212} \xi_1 \xi_2^3 + \Lambda_{2222} \xi_2^4, \\
&A_1(\xi_1,\xi_2) := 4\left( \Lambda_{1113}\xi_1^3 + \Lambda_{2223}\xi_2^3 + 3 \Lambda_{1123} \xi_1^2 \xi_2 + 3 \Lambda_{2213} \xi_1\xi_2^2 \right), \\
&A_2(\xi_1,\xi_2) := 6 \left( \Lambda_{1133}\xi_1^2 + \Lambda_{2233}\xi_2^2 + 2 \Lambda_{3312} \xi_1 \xi_2  \right), \\
&A_3(\xi_1,\xi_2) := 4 \left( \Lambda_{3313}\xi_1 + \Lambda_{3323}\xi_2 \right), \\
&A_4(\xi_1,\xi_2) := \Lambda_{3333}.
\end{align}
\end{subequations}

Then, the micromodulus function \eqref{eqn:lambdatriclinic} can be expressed as
\begin{equation}\label{eqn:lambformwithA}
\begin{split}
\lambda(\bfxi) ={}& \frac{1}{m} \frac{\omega(\|\bfxi\|)}{\|\bfxi \|^2} \sum_{i=0}^4 \frac{A_{i}(\xi_1,\xi_2) \xi_3^i}{\|\bfxi \|^4} \\
={}& \frac{1}{m} \frac{\omega(\|\bfxi\|)}{\|\bfxi \|^2} \frac{A_0(\xi_1,\xi_2) + A_1(\xi_1,\xi_2)\xi_3 + A_2(\xi_1,\xi_2)\xi_3^2 + A_3(\xi_1,\xi_2)\xi_3^3 + A_4(\xi_1,\xi_2)\xi_3^4 }{\|\bfxi \|^4}.
\end{split} 
\end{equation}

Depending on the choice of influence function $\omega$, it may be possible to provide closed-form expressions for the integrals in \eqref{eqn:xi3integrationterms}. Two commonly utilized influence functions in peridynamics are $\omega(\| \bfxi \|) = \frac{1}{\| \bfxi \|}$ and $\omega(\|\bfxi \|) = 1$. For convenience, we additionally introduce the shorthand notation $r = \sqrt{\xi_1^2 + \xi_2^2}$, which represents the magnitude of the projection of $\bfxi$ on the $xy$-plane. This is in contrast to $\| \bfxi \| = \sqrt{ \xi_1^2 + \xi_2^2 + \xi_3^2}$, which is the magnitude of $\bfxi$ in $\mathbb{R}^3$. Given $\omega(\| \bfxi \|) = 1$ or $\omega(\| \bfxi \|) = \frac{1}{\|\bfxi\|}$, the micromodulus functions \eqref{eqn:xi3integrationterms} are given by 

\begin{subequations}\label{eqn:PlaneStrainLambda_i}
\begin{align}
\begin{split}
\lambda_0(\xi_1,\xi_2) ={}& \frac{\omega(r)}{m r}  \left[ \frac{A_0(\xi_1,\xi_2)}{ r^4 } M_0\left( \frac{r}{\delta} \right)   +  \frac{A_2(\xi_1,\xi_2)}{r^2 } M_1 \left( \frac{r}{\delta} \right) + A_4(\xi_1,\xi_2)  M_2\left( \frac{r}{\delta} \right)  \right], \label{eqn:PlaneStrainLambda_iForw=1overxia}
\end{split} \\ 
\begin{split}
\lambda_1(\xi_1,\xi_2) ={}& \frac{\omega(r)}{ m}  \left[  \frac{A_1(\xi_1,\xi_2)}{r^3  } M_1\left( \frac{r}{\delta} \right)  + \frac{A_3(\xi_1,\xi_2)}{r}  M_2\left( \frac{r}{\delta} \right)  \right] ,
\end{split} \\
\begin{split}
\lambda_2(\xi_2,\xi_2) ={}& \frac{r \omega(r)}{m } \left[  \frac{A_0(\xi_1,\xi_2)}{r^4  }  M_1\left( \frac{r}{\delta} \right)  + \frac{A_2(\xi_1,\xi_2)}{r^2}  M_2\left( \frac{r}{\delta} \right) + A_4(\xi_1,\xi_2) M_3\left( \frac{r}{\delta} \right) \right], \label{eqn:PlaneStrainLambda_iForw=1overxic}
\end{split}
\end{align}
\end{subequations}

where for $\omega(\| \bfxi \|) = 1$ we have:
\begin{equation*}
\begin{split}
    &M_0(x) := \frac{3}{4}  \arctan\left(\sqrt{ x^{-2} - 1}\right) + \frac{1}{4} \left(3  x + 2 x^3 \right) \sqrt{1 - x^{2}}, \\
    &M_1(x) := \frac{1}{4}\arctan\left(\sqrt{ x^{-2} - 1}\right) + \frac{1}{4}\left(x - 2 x^{3}\right) \sqrt{1 - x^{2}}, \\
    &M_2(x) := \frac{3}{4}  \arctan\left(\sqrt{ x^{-2}  - 1}\right) - \frac{1}{4}\left(5 x - 2 x^{3}\right) \sqrt{1 - x^{2}}, \\
    &M_3(x) := -\frac{15}{4} \arctan\left(\sqrt{ x^{-2} - 1}\right) + \frac{1}{4} \left(8 x^{-1}  + 9 x - 2 x^3 \right) \sqrt{1 - x^{2}},\\
\end{split}
\end{equation*}
and for $\omega(\| \bfxi \|) = \frac{1}{\|\bfxi\|}$ we have\footnote{The equations for $M_3(x)$ below is only valid for $x = \frac{r}{\delta} > 0$. As the set $\left\{\bfxi \in \mathbb{R}^3: \xi_1^2 + \xi_2^2 = 0 \right\}$ is a set of zero measure in $B_{\delta}(\bf0)$, \eqref{eqn:PlaneStrainLambda_iForw=1overxic} remains valid with this formulation for $M_3$.}:

\begin{equation*}
\begin{split}
    &M_0(x) := \frac{2}{15} \sqrt{1-x^2} \left( 8 + 4x^2 + 3x^4 \right), \\
    &M_1(x) := \frac{2}{15} \sqrt{1-x^2} \left( 2 + x^2 - 3x^4 \right), \\
    &M_2(x) := \frac{2}{5} \sqrt{1-x^2} \left(1-2x^2 + x^4\right), \\
    &M_3(x) := -\frac{2}{15} \sqrt{1-x^2} \left(23-11x^2+3x^4 \right) + 2 \, \text{arsinh} \left(\sqrt{x^{-2}-1 }\right). \\
\end{split}
\end{equation*}

\begin{remark}\label{rem:angulardpendenceplanestrain}
For $i \in \left\{0,\ldots,4 \right\}$, note that $\frac{A_i(\xi_1,\xi_2)}{r^{4-i}}$ is radially independent ({\em cf.}~\eqref{eqn:PeriPlaneStrainAList}) and, consequently, only contributes to the angular portion of \eqref{eqn:PlaneStrainLambda_i}. Moreover, as the other terms in \eqref{eqn:PlaneStrainLambda_i} are radial functions, the $\frac{A_i(\xi_1,\xi_2)}{r^{4-i}}$ terms make up the entirety of the angular dependence of the micromodulus functions.
\end{remark}

Next, we take a closer look at the micromodulus functions $\left\{ \lambda_n(\xi_1,\xi_2) \right\}$ ({\em cf.}~\eqref{eqn:xi3integrationterms}) of the peridynamic plane strain model \eqref{eqn:planestrainmodelinplane} and~\eqref{eqn:planestrainmodeloutofplane} for various symmetry classes. Following Remark \ref{rem:angulardpendenceplanestrain}, the choice of symmetry class only has an effect on the $\left\{ A_i(\xi_1,\xi_2) \right\}$, while the general form of the micromodulus functions~\eqref{eqn:PlaneStrainLambda_i} remains unchanged. Therefore, we only present the $\left\{ A_i(\xi_1,\xi_2) \right\}$ for each symmetry class. 
As explained earlier, for many of the symmetry classes, there are multiple planes of reflection symmetry to choose from in order to satisfy Assumption~(P$\varepsilon$\ref{assmp:periplanestrain3full}). However, here we only consider a specific example from each symmetry class. More specifically, we choose the orientations corresponding to the elasticity tensors presented in Section \ref{sec:threedimclassicalelasticity}. 

\underline{Monoclinic}: We substitute \eqref{eqn:monoelastrelations}, with Cauchy's relations imposed, into \eqref{eqn:SijklToCijklRelations3D} to find ({\em cf.}~\eqref{eqn:PeriPlaneStrainAList}):

\begin{subequations}\label{eqn:AtermsMono}
\begin{align}
&A_0(\xi_1,\xi_2) = \Lambda_{1111} \xi_1^4 + 4\Lambda_{1112}\xi_1^3 \xi_2 +6 \Lambda_{1122} \xi_1^2 \xi_2^2 + 4 \Lambda_{2212} \xi_1 \xi_2^3 + \Lambda_{2222} \xi_2^4, \\
&A_1(\xi_1,\xi_2) = 0, \\
&A_2(\xi_1,\xi_2) = 6 \left( \Lambda_{1133}\xi_1^2 + \Lambda_{2233}\xi_2^2 + 2 \Lambda_{3312} \xi_1 \xi_2  \right), \\
&A_3(\xi_1,\xi_2) = 0, \\
&A_4(\xi_1,\xi_2) = \Lambda_{3333},
\end{align}
\end{subequations}
where
\begin{equation*}
    \begin{split}
        \Lambda_{1111} ={}& 30 \, C_{1111} - 45 \, C_{1122} - 45 \, C_{1133} + \frac{15}{4} \, C_{2222} + \frac{15}{2} \, C_{2233} + \frac{15}{4} \, C_{3333}, \\
        \Lambda_{1122} ={}& -\frac{15}{2} \, C_{1111} + \frac{255}{4} \, C_{1122} - \frac{25}{4} \, C_{1133} - \frac{15}{2} \, C_{2222} - \frac{25}{4} \, C_{2233} + \frac{5}{4} \, C_{3333}, \\
        \Lambda_{1133} ={}& -\frac{15}{2} \, C_{1111} - \frac{25}{4} \, C_{1122} + \frac{255}{4} \, C_{1133} + \frac{5}{4} \, C_{2222} - \frac{25}{4} \, C_{2233} - \frac{15}{2} \, C_{3333}, \\
        \Lambda_{1112} ={}& \frac{105}{2} \, C_{1112} - \frac{105}{4} \, C_{2212} - \frac{105}{4} \, C_{3312}, \\
        \Lambda_{2222} ={}& \frac{15}{4} \, C_{1111} - 45 \, C_{1122} + \frac{15}{2} \, C_{1133} + 30 \, C_{2222} - 45 \, C_{2233} + \frac{15}{4} \, C_{3333}, \\
        \Lambda_{2233} ={}& \frac{5}{4} \, C_{1111} - \frac{25}{4} \, C_{1122} - \frac{25}{4} \, C_{1133} - \frac{15}{2} \, C_{2222} + \frac{255}{4} \, C_{2233} - \frac{15}{2} \, C_{3333}, \\
        \Lambda_{2212} ={}& -\frac{105}{4} \, C_{1112} + \frac{105}{2} \, C_{2212} - \frac{105}{4} \, C_{3312}, \\
        \Lambda_{3333} ={}& \frac{15}{4} \, C_{1111} + \frac{15}{2} \, C_{1122} - 45 \, C_{1133} + \frac{15}{4} \, C_{2222} - 45 \, C_{2233} + 30 \, C_{3333}, \\
        \Lambda_{3312} ={}& -\frac{35}{4} \, C_{1112} - \frac{35}{4} \, C_{2212} + 70 \, C_{3312}.
    \end{split}
\end{equation*}
Since $A_1(\xi_1,\xi_2) = A_3(\xi_1,\xi_2) = 0$, we immediately have $\lambda_1(\xi_1,\xi_2) = 0$ as expected for peridynamic plane strain ({\em cf.} \eqref{eqn:planestrainmodelinplane} and \eqref{eqn:planestrainmodeloutofplane}). 

\underline{Orthotropic}: We substitute \eqref{eqn:orthoelastrelations}, with Cauchy's relations imposed, into \eqref{eqn:SijklToCijklRelations3D} to find ({\em cf.}~\eqref{eqn:PeriPlaneStrainAList})

\begin{subequations}\label{eqn:AtermsOrtho}
\begin{align}
&A_0(\xi_1,\xi_2) = \Lambda_{1111} \xi_1^4 +6 \Lambda_{1122} \xi_1^2 \xi_2^2 + \Lambda_{2222} \xi_2^4, \\
&A_1(\xi_1,\xi_2) = 0, \\
&A_2(\xi_1,\xi_2) = 6 \left( \Lambda_{1133}\xi_1^2 + \Lambda_{2233}\xi_2^2  \right), \\
&A_3(\xi_1,\xi_2) = 0, \\
&A_4(\xi_1,\xi_2) = \Lambda_{3333},
\end{align}
\end{subequations}
where
\begin{equation*}
    \begin{split}
        \Lambda_{1111} ={}& 30  C_{1111} - 45  C_{1122} - 45  C_{1133} + \frac{15}{4}  C_{2222} + \frac{15}{2}  C_{2233} + \frac{15}{4}  C_{3333}, \\
        \Lambda_{1122} ={}& -\frac{15}{2}  C_{1111} + \frac{255}{4}  C_{1122} - \frac{25}{4}  C_{1133} - \frac{15}{2}  C_{2222} - \frac{25}{4}  C_{2233} + \frac{5}{4}  C_{3333}, \\
        \Lambda_{1133} ={}& -\frac{15}{2}  C_{1111} - \frac{25}{4}  C_{1122} + \frac{255}{4}  C_{1133} + \frac{5}{4}  C_{2222} - \frac{25}{4}  C_{2233} - \frac{15}{2}  C_{3333}, \\
        \Lambda_{2222} ={}& \frac{15}{4}  C_{1111} - 45  C_{1122} + \frac{15}{2}  C_{1133} + 30  C_{2222} - 45  C_{2233} + \frac{15}{4}  C_{3333}, \\
        \Lambda_{2233} ={}& \frac{5}{4}  C_{1111} - \frac{25}{4}  C_{1122} - \frac{25}{4}  C_{1133} - \frac{15}{2}  C_{2222} + \frac{255}{4}  C_{2233} - \frac{15}{2}  C_{3333}, \\
        \Lambda_{3333} ={}& \frac{15}{4}  C_{1111} + \frac{15}{2}  C_{1122} - 45  C_{1133} + \frac{15}{4}  C_{2222} - 45  C_{2233} + 30 C_{3333}. \\
    \end{split}
\end{equation*}

\underline{Trigonal}: We substitute \eqref{eqn:trigonalelastrelations}, with Cauchy's relations imposed, into \eqref{eqn:SijklToCijklRelations3D} to find ({\em cf.}~\eqref{eqn:PeriPlaneStrainAList})

\begin{subequations}\label{eqn:AtermsTrigonal}
\begin{align}
&A_0(\xi_1,\xi_2) = \Lambda_{1111} \xi_1^4 + 6 \Lambda_{1122} \xi_1^2 \xi_2^2 + 4 \Lambda_{2212} \xi_1 \xi_2^3 + \Lambda_{2222} \xi_2^4, \\
&A_1(\xi_1,\xi_2) = 0, \\
&A_2(\xi_1,\xi_2) = 2 \left( 3 \Lambda_{1122}\xi_1^2 + \Lambda_{2222}\xi_2^2 - 6 \Lambda_{2212} \xi_1 \xi_2  \right), \\
&A_3(\xi_1,\xi_2) = 0, \\
&A_4(\xi_1,\xi_2) = \Lambda_{2222},
\end{align}
\end{subequations}
where
\begin{equation*}
    \begin{split}
        \Lambda_{1111} ={}& 30  C_{1111} - 90  C_{1122} + 10  C_{2222}, \\
        \Lambda_{1122} ={}& -\frac{15}{2}  C_{1111} + \frac{115}{2}  C_{1122} - \frac{25}{3}  C_{2222}, \\
        \Lambda_{2222} ={}& \frac{15}{4}  C_{1111} - \frac{75}{2}  C_{1122} + \frac{75}{4}  C_{2222}, \\
        \Lambda_{2212} ={}& \frac{315}{4}  C_{2212}. \\
    \end{split}
\end{equation*}

\underline{Tetragonal}: We substitute \eqref{eqn:tetraelastrelations}, with Cauchy's relations imposed, into \eqref{eqn:SijklToCijklRelations3D} to find ({\em cf.}~\eqref{eqn:PeriPlaneStrainAList})

\begin{subequations}\label{eqn:AtermsTetra}
\begin{align}
&A_0(\xi_1,\xi_2) = \Lambda_{1111} (\xi_1^4 + \xi_2^4) + 6 \Lambda_{1122} \xi_1^2 \xi_2^2, \\
&A_1(\xi_1,\xi_2) = 0, \\
&A_2(\xi_1,\xi_2) = 6 \Lambda_{1133}r^2, \\
&A_3(\xi_1,\xi_2) = 0, \\
&A_4(\xi_1,\xi_2) = \Lambda_{3333},
\end{align}
\end{subequations}
where
\begin{equation*}
    \begin{split}
        \Lambda_{1111} ={}& \frac{135}{4}  C_{1111} - 45  C_{1122} - \frac{75}{2}  C_{1133} + \frac{15}{4}  C_{3333}, \\
        \Lambda_{1122} ={}& -15  C_{1111} + \frac{255}{4}  C_{1122} - \frac{25}{2}  C_{1133} + \frac{5}{4}  C_{3333}, \\
        \Lambda_{1133} ={}& -\frac{25}{4}  C_{1111} - \frac{25}{4}  C_{1122} + \frac{115}{2}  C_{1133} - \frac{15}{2}  C_{3333}, \\
        \Lambda_{3333} ={}& \frac{15}{2}  C_{1111} + \frac{15}{2}  C_{1122} - 90  C_{1133} + 30 C_{3333}.
    \end{split}
\end{equation*}

\underline{Transversely Isotropic}: We substitute \eqref{eqn:tisoelastrelations}, with Cauchy's relations imposed, into \eqref{eqn:SijklToCijklRelations3D} to find ({\em cf.}~\eqref{eqn:PeriPlaneStrainAList})

\begin{subequations}\label{eqn:AtermsTiso}
\begin{align}
&A_0(\xi_1,\xi_2) = \Lambda_{1111} r^4, \\
&A_1(\xi_1,\xi_2) = 0, \\
&A_2(\xi_1,\xi_2) = 6 \Lambda_{1133}r^2, \\
&A_3(\xi_1,\xi_2) = 0, \\
&A_4(\xi_1,\xi_2) = \Lambda_{3333},
\end{align}
\end{subequations}
where
\begin{equation*}
    \begin{split}
        \Lambda_{1111} ={}& \frac{75}{4}  C_{1111} - \frac{75}{2}  C_{1133} + \frac{15}{4}  C_{3333}, \\
        \Lambda_{1133} ={}& -\frac{25}{3}  C_{1111} + \frac{115}{2}  C_{1133} - \frac{15}{2}  C_{3333}, \\
        \Lambda_{3333} ={}& 10  C_{1111} - 90  C_{1133} + 30  C_{3333}.
    \end{split}
\end{equation*}

\underline{Cubic}: We substitute \eqref{eqn:cubicelastrelations}, with Cauchy's relations imposed, into \eqref{eqn:SijklToCijklRelations3D} to find ({\em cf.}~\eqref{eqn:PeriPlaneStrainAList})

\begin{subequations}\label{eqn:AtermsCubic}
\begin{align}
&A_0(\xi_1,\xi_2) = \Lambda_{1111} (\xi_1^4 + \xi_2^4) + 6 \Lambda_{1122} \xi_1^2 \xi_2^2, \\
&A_1(\xi_1,\xi_2) = 0, \\
&A_2(\xi_1,\xi_2) = 6 \Lambda_{1122}r^2, \\
&A_3(\xi_1,\xi_2) = 0, \\
&A_4(\xi_1,\xi_2) = \Lambda_{1111},
\end{align}
\end{subequations}
where
\begin{equation*}
    \begin{split}
        \Lambda_{1111} ={}& \frac{75}{2} C_{1111} - \frac{165}{2} C_{1122}, \\
        \Lambda_{1122} ={}& -\frac{55}{4} C_{1111} + \frac{205}{4} C_{1122}.
    \end{split}
\end{equation*}

\underline{Isotropic}: We substitute \eqref{eqn:iso3Delastrelations}, with Cauchy's relations imposed, into \eqref{eqn:SijklToCijklRelations3D} to find ({\em cf.}~\eqref{eqn:PeriPlaneStrainAList})

\begin{subequations}\label{eqn:AtermsIso}
\begin{align}
&A_0(\xi_1,\xi_2) = \Lambda_{1111} r^4, \\
&A_1(\xi_1,\xi_2) = 0, \\
&A_2(\xi_1,\xi_2) = 2 \Lambda_{1111} r^2, \\
&A_3(\xi_1,\xi_2) = 0, \\
&A_4(\xi_1,\xi_2) = \Lambda_{1111},
\end{align}
\end{subequations}
where
\begin{equation*}
    \Lambda_{1111} = 10 C_{1111}.
\end{equation*}

\begin{remark}
In classical linear elasticity, the plane strain model for each symmetry class reduces identically to a corresponding two-dimensional model ({\em cf}.~Table~\ref{tab:modelequivalenceplanestrainstressto2D}). A similar situation occurs with the peridynamic plane strain model. While the plane strain micromodulus function $\lambda_0$, given by~\eqref{eqn:PlaneStrainLambda_iForw=1overxia}, is not equivalent to the two-dimensional micromodulus function \eqref{eqn:lambdaobl}, it does possess one of the four symmetries of two-dimensional classical linear elasticity ({\em cf}. Theorem \ref{thm:symmetryclasses}) for each symmetry class ({\em cf}.~Table~\ref{tab:micromodulusequivalenceplanestrainstressto2D}). This can be observed by considering the symmetries of $\left\{ A_i(\xi_1,\xi_2) \right\}$ for each symmetry class. The reason why the plane strain micromodulus functions are not identical to their two-dimensional counterparts is that they incorporate out-of-plane information. It is also interesting to note that the same correspondence between each three-dimensional symmetry class and the corresponding two-dimensional symmetry class in Table~\ref{tab:modelequivalenceplanestrainstressto2D} occurs for the plane strain micromodulus function, which is summarized in Table \ref{tab:micromodulusequivalenceplanestrainstressto2D}. However, while different plane strain micromodulus functions may possess the same two-dimensional symmetry, e.g., tetragonal and cubic micromodulus functions both have square symmetry, the resulting plane strain micromodulus functions are unique for each three-dimensional symmetry class. Due to this fact, the resulting in-plane plane strain equations of motion are unique for each three-dimensional symmetry class. This is in contrast to classical linear elasticity, where the in-plane plane strain equations of motion may be identical for two three-dimensional symmetry classes, specifically tetragonal and cubic models as well as transversely isotropic and isotropic models.
\end{remark}

\begin{table}
\begin{center}
\begin{tabular}{|c|c|}
\hline
\textbf{Pure Two-Dimensional Peridynamic Model} & \textbf{Peridynamic Plane Strain Model} \\
\hline
Oblique & Monoclinic and Trigonal \\
\hline
Rectangular & Orthotropic \\
\hline
Square & Tetragonal and Cubic \\
\hline
Isotropic & Transversely Isotropic and Isotropic \\
\hline
\end{tabular}
\caption{Symmetry equivalence between the pure two-dimensional peridynamic models and the peridynamic plane strain models (provided the micromodulus functions $\lambda_0$ and $\lambda_2$ is informed by three-dimensional elasticity tensors having the form of those in Section~\ref{sec:threedimclassicalelasticity}).}\label{tab:micromodulusequivalenceplanestrainstressto2D}
\end{center}
\end{table}

It is of interest to compare our resulting peridynamic plane strain micromodulus function \eqref{eqn:PlaneStrainLambda_iForw=1overxia} with plane strain micromodulus functions commonly utilized in the peridynamic literature. This can be directly done in the case of isotropy. The isotropic micromodulus functions are actually quite simple. When the influence function $\omega(\| \bfxi \|) = 1$, $m = \frac{4}{5} \pi \delta^5 $ ({\em cf.} \eqref{def:weightvolumem}) and \eqref{eqn:PlaneStrainLambda_i} simplifies to
\begin{subequations}\label{eqn:PlaneStrainLambda_iForw=1iso}
\begin{align}
\lambda_0 ={}& \frac{25 C_{1111}}{\pi \delta^5 r} \arctan\left(\sqrt{\frac{\delta^2}{r^2} - 1}\right), 
\label{eqn:PlaneStrainLambda_iForw=1isolamb0} 
\\
\lambda_1 ={}& 0,  \\
\lambda_2 ={}& \frac{25 C_{1111}}{\pi \delta^4 } \left[  \sqrt{1- \frac{r^2}{\delta^2}} - \frac{r}{\delta} \arctan\left(\sqrt{\frac{\delta^2}{r^2} - 1}\right) \right]. 
\end{align}
\end{subequations}

Alternatively, when the influence function $\omega(\| \bfxi \|) = \frac{1}{\| \bfxi \|}$, $m = \pi \delta^4$ ({\em cf.} \eqref{def:weightvolumem}) and \eqref{eqn:PlaneStrainLambda_i} simplifies to

\begin{subequations}\label{eqn:PlaneStrainLambda_iForw=1overxiiso}
\begin{align}
\lambda_0 ={}&  \frac{20 C_{1111} }{ \pi \delta^4 r^2} \sqrt{1-\frac{r^2}{\delta^2}} 
\label{eqn:PlaneStrainLambda_iForw=1overxiisolamb0}, \\
\lambda_1 ={}& 0, \\
\lambda_2 ={}& \frac{20 C_{1111}}{\pi \delta^4} \left[ \text{arsinh} \left( \sqrt{\frac{\delta^2}{r^2} - 1}\right) - \sqrt{1-\frac{r^2}{\delta^2}} \, \right]. 
\end{align}
\end{subequations}

In terms of the engineering constants, \eqref{eqn:PlaneStrainLambda_iForw=1isolamb0}  and \eqref{eqn:PlaneStrainLambda_iForw=1overxiisolamb0} are given, respectively, by:
\begin{equation}\label{eqn:isoplanestrainkernelomega13D}
    \lambda_0 = \frac{30 E }{\pi \delta^5 r} \arctan\left(\sqrt{\frac{\delta^2}{r^2} - 1}\right)
\end{equation}
and
\begin{equation}\label{eqn:isoplanestrainkernelomega1overxi3D}
    \lambda_0 = \frac{24 E }{\pi \delta^4 r^2} \sqrt{1-\frac{r^2}{\delta^2}},
\end{equation}
where $E$ is the Young's modulus. Here, we used the fact that for an isotropic material with Cauchy's relations satisfied, Poisson's ratio $\nu = \frac{1}{4}$ and thus~({\em cf.}~\eqref{eqn:monoclinicInTermsofTechnicalConstants})
\[
C_{1111}~=~\frac{(1-\nu) E}{(1+\nu)(1-2\nu)} = \frac{6E}{5}.
\] 

A common approach for modeling plane strain in the peridynamic literature simply employs a two-dimensional peridynamic model and matches the model constants to those in the in-plane equations of the corresponding classical plane strain model (see, e.g., \cite{Gerstle2005,Ghajari2014,Le2014,Sarego2016}). To emulate this approach, we consider our two-dimensional micromodulus function \eqref{eqn:lambdaobl}, where $m$ is the two-dimensional weighted volume and $\Lambda_{ijkl}$ are given by \eqref{eqn:SijkltoCijkl}, except that the $C_{ijkl}$ are the three-dimensional elasticity constants appearing in the classical in-plane equations of motion \eqref{eqn:classicalmonoeqnmotion}. In the isotropic case, we combine \eqref{eqn:lambdaiso} with \eqref{eqn:isosymmSijklrelations} and recall for isotropic symmetry that $C_{1111} = \frac{6E}{5}$ when Cauchy's relations are imposed. When the influence function $\omega(r) = 1$, $m = \frac{1}{2} \pi \delta^4$ and the micromodulus function is given by
\begin{equation}\label{eqn:isoplanestrainkernelomega12D}
    \lambda(r) = \frac{64 E}{5\pi \delta^4} \frac{1}{r^2}.
\end{equation}
Alternatively, when the influence function $\omega(r) = \frac{1}{r}$, $m = \frac{2}{3} \pi \delta^3$ and the micromodulus function is given by
\begin{equation}\label{eqn:isoplanestrainkernelomega1overxi2D}
    \lambda(r) = \frac{48 E}{5\pi \delta^3} \frac{1}{r^3}.
\end{equation}
The micromodulus function in \eqref{eqn:isoplanestrainkernelomega1overxi2D} coincides with the micromodulus function obtained by linearizing the peridynamic plane strain model derived in~\cite{Gerstle2005}. We therefore refer to the two-dimensional peridynamic micromodulus function~\eqref{eqn:lambdaobl} with constants matched to those in the in-plane equations of motion~\eqref{eqn:classicalmonoeqnmotion} of the corresponding classical plane strain model as the \textit{traditional plane strain micromodulus function}. We observe that comparing~\eqref{eqn:isoplanestrainkernelomega13D} with~\eqref{eqn:isoplanestrainkernelomega12D} and~\eqref{eqn:isoplanestrainkernelomega1overxi3D} with~\eqref{eqn:isoplanestrainkernelomega1overxi2D}, our plane strain micromodulus functions possess a weaker singularity than the traditional plane strain micromodulus functions. Moreover, our plane strain micromodulus functions continuously transition to zero when $r$ approaches $\delta$. Next, we further explore the differences between our plane strain micromodulus functions and the traditional plane strain micromodulus functions for various symmetry classes.

\textbf{Micromodulus function visualization for peridynamic plane strain}

In order to more easily compare our plane strain micromodulus functions \eqref{eqn:PlaneStrainLambda_iForw=1overxia} and the traditional plane strain micromodulus functions based on \eqref{eqn:lambdaobl}, we create visualizations of their behavior for various symmetry classes. We observe that in the peridynamic equation of motion \eqref{eqn:linearperieqncomponentform}, the micromodulus function is multiplied by~$\xi_i \xi_j$. Consequently, the factor of $\frac{1}{\| \bfxi \|^2}$  appearing in the micromodulus function \eqref{eqn:lambdaobl} represents a removable (or ``artificial") singularity of the model. To better visualize the nonremovable singularities, when present, in the micromodulus function, we plot $r^2 \lambda(\xi_1,\xi_2)$ for the traditional plane strain micromodulus function (recall $\| \bfxi \|^2 = r^2$ in two dimensions). For a proper comparison to the plane strain micromodulus function \eqref{eqn:PlaneStrainLambda_iForw=1overxia}, we accordingly plot $r^2\lambda_0(\xi_1,\xi_2)$.   

Our comparison of the micromodulus functions covers materials in every symmetry class except triclinic, because the plain strain Assumption (P$\varepsilon$\ref{assmp:periplanestrain3full}) requires a plane of reflection symmetry. Since we are dealing with bond-based peridynamic models, we must consider materials satisfying (at least approximately) Cauchy's relations. The elasticity tensors for all anisotropic materials were obtained from \cite{de2015charting}. For isotropic symmetry, the elasticity tensor was obtained from \cite{Stookey1959}. In Table~\ref{tab:MaterialProp}, we present the chosen materials and the corresponding elasticity tensors for each symmetry class. All of the chosen anisotropic materials approximately satisfy Cauchy's relations. In order to produce elasticity tensors satisfying Cauchy's relations exactly, given an elasticity tensor $\bfC$, we produce a new elasticity tensor $\tilde{\bfC}$ in the following manner. For monoclinic, orthotropic, tetragonal, and cubic symmetries, we take
\begin{equation*}
    \tilde{C}_{ijkl} := \frac{C_{ijkl} + C_{ikjl}}{2}.
\end{equation*}
For trigonal symmetry, Cauchy's relations additionally impose $C_{2233} = \frac{1}{3} C_{2222}$, and thus we take
\begin{equation*}
\tilde{C}_{ijkl} = \left\{    \begin{array}{ll}
         \frac{1}{3} \left( C_{2233} + C_{2323} + \frac{1}{3} C_{2222} \right), & \left\{i,j,k,l \right\} = \left\{2,2,3,3 \right\} \\[5pt]
         \left( C_{2233} + C_{2323} + \frac{1}{3} C_{2222} \right), & i = j = k = l = 2 \\[5pt]
        \displaystyle\frac{C_{ijkl} + C_{ikjl}}{2}, & \text{else}
    \end{array} \right. .
\end{equation*}
For transversely isotropic symmetry, Cauchy's relations additionally impose $C_{1122} = \frac{1}{3} C_{1111}$, and thus we take
\begin{equation*}
\tilde{C}_{ijkl} = \left\{    \begin{array}{ll}
        \frac{1}{3} \left( C_{1122} + C_{1212} + \frac{1}{3} C_{1111} \right), & \left\{i,j,k,l \right\} = \left\{1,1,2,2 \right\}  \\[5pt]
        \left( C_{1122} + C_{1212} + \frac{1}{3} C_{1111} \right), & i = j = k = l = 1 \\[5pt]
        \displaystyle\frac{C_{ijkl} + C_{ikjl}}{2}, & \text{else}
    \end{array} \right. .
\end{equation*}
For isotropic symmetry, we selected the material Pyroceram 9608 which already satisfies Cauchy's relations; in this case, we simply take $\tilde{\bfC} = \bfC$. The resulting elasticity tensors $\tilde{\bfC}$ for each symmetry class are summarized in the fourth column of Table \ref{tab:MaterialProp}.

In Figure \ref{fig:planestrainmicromodulioneoverxi}, we present plots for $r^2 \lambda_0(\xi_1,\xi_2)$ with $\omega(\|\bfxi\|) = \frac{1}{\| \bfxi \|}$ (in green) and $r^2 \lambda(\xi_1,\xi_2)$ with $\omega(r) = \frac{1}{r}$ (in red). Similarly, in Figure \ref{fig:planestrainmicromodulione}, we present plots for $r^2 \lambda_0(\xi_1,\xi_2)$  with $\omega(\|\bfxi\|) = 1$ (in blue) and $r^2 \lambda(\xi_1,\xi_2)$ with $\omega(r) = 1$ (in orange). For each symmetry class, these functions are found by substituting the corresponding elasticity tensor $\tilde{\bfC}$ from Table \ref{tab:MaterialProp} into \eqref{eqn:PlaneStrainLambda_iForw=1overxia} (with~\eqref{eqn:PeriPlaneStrainAList} and~\eqref{eqn:SijklToCijklRelations3D}) and \eqref{eqn:lambdaobl} (with~\eqref{eqn:SijkltoCijkl}), respectively. For all of the plots, we took $\delta = 1$. One of the most obvious differences between the plots of $r^2 \lambda_0(\xi_1,\xi_2)$ and $r^2 \lambda(\xi_1,\xi_2)$ is their behavior near the origin. As remarked earlier for the isotropic case, our plane strain micromodulus functions have weaker singularities compared to the traditional plane strain micromodulus functions. This generalizes to all the symmetry classes, which is clearly observed in Figure \ref{fig:planestrainmicromodulioneoverxi}. Another significant difference between the two micromodulus functions is that our plane strain micromodulus functions effectively incorporate out-of-plane information. This is clearly seen in the isotropic and transversely isotropic cases in Figure \ref{fig:planestrainmicromodulione}.

\begin{table}
\begin{tabular}{|c|c|c|c|}
\hline
\rule{0pt}{4ex} \begin{tabular}{c} \textbf{Symmetry} \\ \textbf{Class} \end{tabular} & \textbf{Material} & \textbf{Elasticity Tensor $\bfC$} (in GPa) & \begin{tabular}{c} \textbf{Elasticity Tensor $\tilde{\bfC}$} (in GPa) \\ \textbf{(Cauchy's relations imposed)} \end{tabular}\\
\hline
Monoclinic & $\text{CoTeO}_4$ & {\small \arraycolsep=2pt $\left[
\begin{array}{cccccc}
135 & 19 & 54 & 0 & 0 & 42 \\
\cdot & 13 & 15 & 0 & 0 & 6 \\
\cdot & \cdot & 269 & 0 & 0 & 18 \\
\cdot & \cdot & \cdot & 14 & 25 & 0 \\
\cdot & \cdot & \cdot & \cdot & 66 & 0 \\
\cdot & \cdot & \cdot & \cdot & \cdot & 18
\end{array}
\right]$} & {\small \arraycolsep=2pt $\left[
\begin{array}{cccccc}
135 & 18.5 & 60 & 0 & 0 & 42 \\
\cdot & 13 & 14.5 & 0 & 0 & 6 \\
\cdot & \cdot & 269 & 0 & 0 & 21.5 \\
\cdot & \cdot & \cdot & 14.5 & 21.5 & 0 \\
\cdot & \cdot & \cdot & \cdot & 60 & 0 \\
\cdot & \cdot & \cdot & \cdot & \cdot & 18.5
\end{array}
\right]$ } \\
\hline
Orthotropic & $\text{Te}_2 \text{W}$ &
{\small \arraycolsep=2pt $\left[
\begin{array}{cccccc}
143 & 1 & 37 & 0 & 0 & 0 \\
\cdot & 3 & 3 & 0 & 0 & 0 \\
\cdot & \cdot & 102 & 0 & 0 & 0 \\
\cdot & \cdot & \cdot & 2 & 0 & 0 \\
\cdot & \cdot & \cdot & \cdot & 46 & 0 \\
\cdot & \cdot & \cdot & \cdot & \cdot & 1
\end{array}
\right]$} & {\small \arraycolsep=2pt $\left[
\begin{array}{cccccc}
143 & 1 & 41.5 & 0 & 0 & 0 \\
\cdot & 3 & 2.5 & 0 & 0 & 0 \\
\cdot & \cdot & 102 & 0 & 0 & 0 \\
\cdot & \cdot & \cdot & 2.5 & 0 & 0 \\
\cdot & \cdot & \cdot & \cdot & 41.5 & 0 \\
\cdot & \cdot & \cdot & \cdot & \cdot & 1
\end{array}
\right]$ } \\
\hline
Trigonal & $\text{Ta}_2\text{C}$ & {\small \arraycolsep=2pt $\left[
\begin{array}{cccccc}
493 & 141 & 141 & 0 & 0 & 0 \\
\cdot & 464 & 159 & 0 & 0 & 45 \\
\cdot & \cdot & 464 & 0 & 0 & -45 \\
\cdot & \cdot & \cdot & 153 & -45 & 0 \\
\cdot & \cdot & \cdot & \cdot & 125 & 0 \\
\cdot & \cdot & \cdot & \cdot & \cdot & 125
\end{array}
\right]$ } & {\small \arraycolsep=2pt $\left[
\begin{array}{cccccc}
493 & 133 & 133 & 0 & 0 & 0 \\
\cdot & 467 & 156 & 0 & 0 & 45 \\
\cdot & \cdot & 467 & 0 & 0 & -45 \\
\cdot & \cdot & \cdot & 156 & -45 & 0 \\
\cdot & \cdot & \cdot & \cdot & 133 & 0 \\
\cdot & \cdot & \cdot & \cdot & \cdot & 133
\end{array}
\right]$ }\\
\hline
Tetragonal & Si & {\small \arraycolsep=2pt $\left[
\begin{array}{cccccc}
212 & 70 & 58 & 0 & 0 & 0 \\
\cdot & 212 & 58 & 0 & 0 & 0 \\
\cdot & \cdot & 179 & 0 & 0 & 0 \\
\cdot & \cdot & \cdot & 58 & 0 & 0 \\
\cdot & \cdot & \cdot & \cdot & 58 & 0 \\
\cdot & \cdot & \cdot & \cdot & \cdot & 85
\end{array}
\right]$ } & {\small \arraycolsep=2pt $\left[
\begin{array}{cccccc}
212 & 77.5 & 58 & 0 & 0 & 0 \\
\cdot & 212 & 58 & 0 & 0 & 0 \\
\cdot & \cdot & 179 & 0 & 0 & 0 \\
\cdot & \cdot & \cdot & 58 & 0 & 0 \\
\cdot & \cdot & \cdot & \cdot & 58 & 0 \\
\cdot & \cdot & \cdot & \cdot & \cdot & 77.5
\end{array}
\right]$ } \\
\hline
\begin{tabular}{c}Transversely \\ Isotropic \end{tabular} & $\text{MoN}$ & {\small \arraycolsep=2pt $\left[
\begin{array}{cccccc}
499 & 177 & 235 & 0 & 0 & 0 \\
\cdot & 499 & 235 & 0 & 0 & 0 \\
\cdot & \cdot & 714 & 0 & 0 & 0 \\
\cdot & \cdot & \cdot & 241 & 0 & 0 \\
\cdot & \cdot & \cdot & \cdot & 241 & 0 \\
\cdot & \cdot & \cdot & \cdot & \cdot & 161
\end{array}
\right]$} & {\small \arraycolsep=2pt $\left[
\begin{array}{cccccc}
504 & 168 & 238 & 0 & 0 & 0 \\
\cdot & 504 & 238 & 0 & 0 & 0 \\
\cdot & \cdot & 714 & 0 & 0 & 0 \\
\cdot & \cdot & \cdot & 238 & 0 & 0 \\
\cdot & \cdot & \cdot & \cdot & 238 & 0 \\
\cdot & \cdot & \cdot & \cdot & \cdot & 168
\end{array}
\right]$ }\\
\hline
Cubic & \begin{tabular}{c} $\text{MgAl}_2\text{O}_4$ \\ (Spinel) \end{tabular} 
& {\small \arraycolsep=2pt $\left[
\begin{array}{cccccc}
252 & 145 & 145 & 0 & 0 & 0 \\
\cdot & 252 & 145 & 0 & 0 & 0 \\
\cdot & \cdot & 252 & 0 & 0 & 0 \\
\cdot & \cdot & \cdot & 142 & 0 & 0 \\
\cdot & \cdot & \cdot & \cdot & 142 & 0 \\
\cdot & \cdot & \cdot & \cdot & \cdot & 142
\end{array}
\right]$ } & {\small \arraycolsep=2pt $\left[ \begin{array}{cccccc}
252 & 143.5 & 143.5 & 0 & 0 & 0 \\
\cdot & 252 & 143.5 & 0 & 0 & 0 \\
\cdot & \cdot & 252 & 0 & 0 & 0 \\
\cdot & \cdot & \cdot & 143.5 & 0 & 0 \\
\cdot & \cdot & \cdot & \cdot & 143.5 & 0 \\
\cdot & \cdot & \cdot & \cdot & \cdot & 143.5
\end{array}
\right]$ }\\
\hline
Isotropic & \begin{tabular}{c}Pyroceram \\9608 \end{tabular} & {\small \arraycolsep=2pt $\left[
\begin{array}{cccccc}
103.2 & 34.4 & 34.4 & 0 & 0 & 0 \\
\cdot & 103.2 & 34.4 & 0 & 0 & 0 \\
\cdot & \cdot & 103.2 & 0 & 0 & 0 \\
\cdot & \cdot & \cdot & 34.4 & 0 & 0 \\
\cdot & \cdot & \cdot & \cdot & 34.4 & 0 \\
\cdot & \cdot & \cdot & \cdot & \cdot & 34.4
\end{array}
\right]$ } & {\small \arraycolsep=2pt $\left[
\begin{array}{cccccc}
103.2 & 34.4 & 34.4 & 0 & 0 & 0 \\
\cdot & 103.2 & 34.4 & 0 & 0 & 0 \\
\cdot & \cdot & 103.2 & 0 & 0 & 0 \\
\cdot & \cdot & \cdot & 34.4 & 0 & 0 \\
\cdot & \cdot & \cdot & \cdot & 34.4 & 0 \\
\cdot & \cdot & \cdot & \cdot & \cdot & 34.4
\end{array}
\right]$ } \\
\hline
\end{tabular}
\caption{Materials and their corresponding symmetry classes and elasticity tensors, $\bfC$ and $\tilde{\bfC}$.} 
\label{tab:MaterialProp}
\end{table}

\begin{figure}
\begin{tabular}{cc}
\begin{tabular}{cc}
\includegraphics[scale=0.45,trim={3cm 0 3cm 0.5cm},clip]{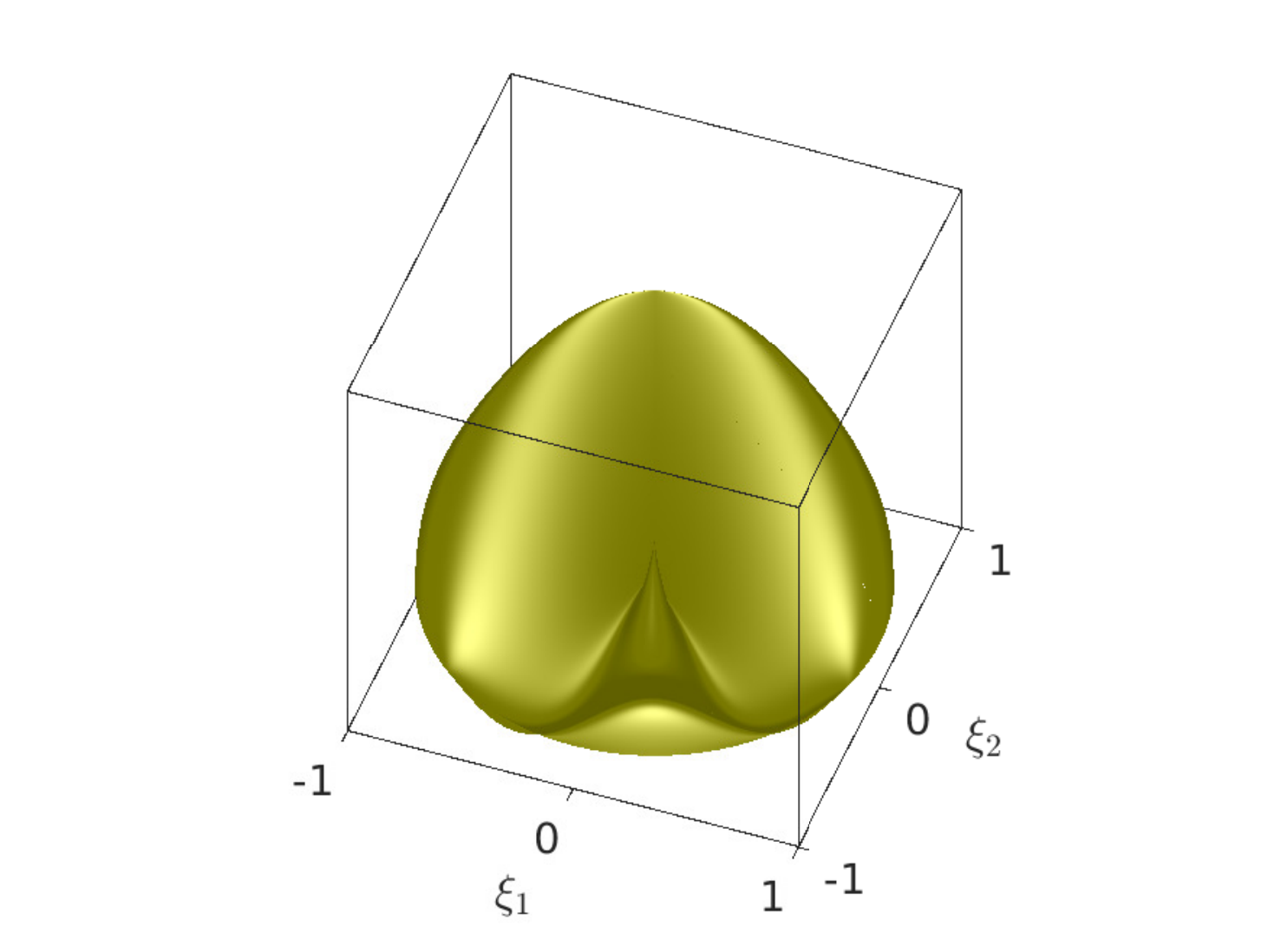} &
\includegraphics[scale=0.45,trim={3cm 0 3cm 0.5cm},clip]{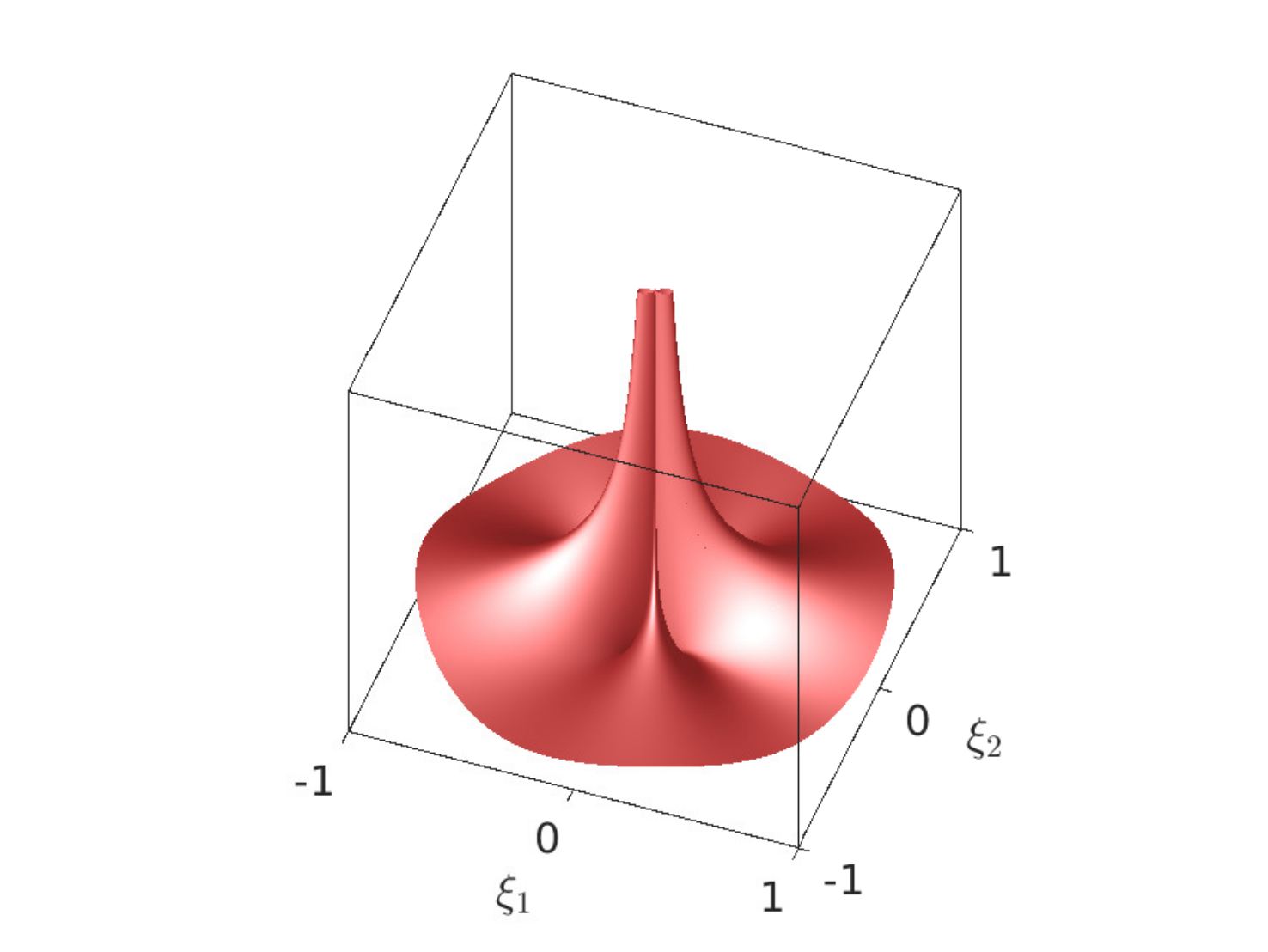}
\end{tabular} & \begin{tabular}{cc}
\includegraphics[scale=0.45,trim={3cm 0 3cm 0.5cm},clip]{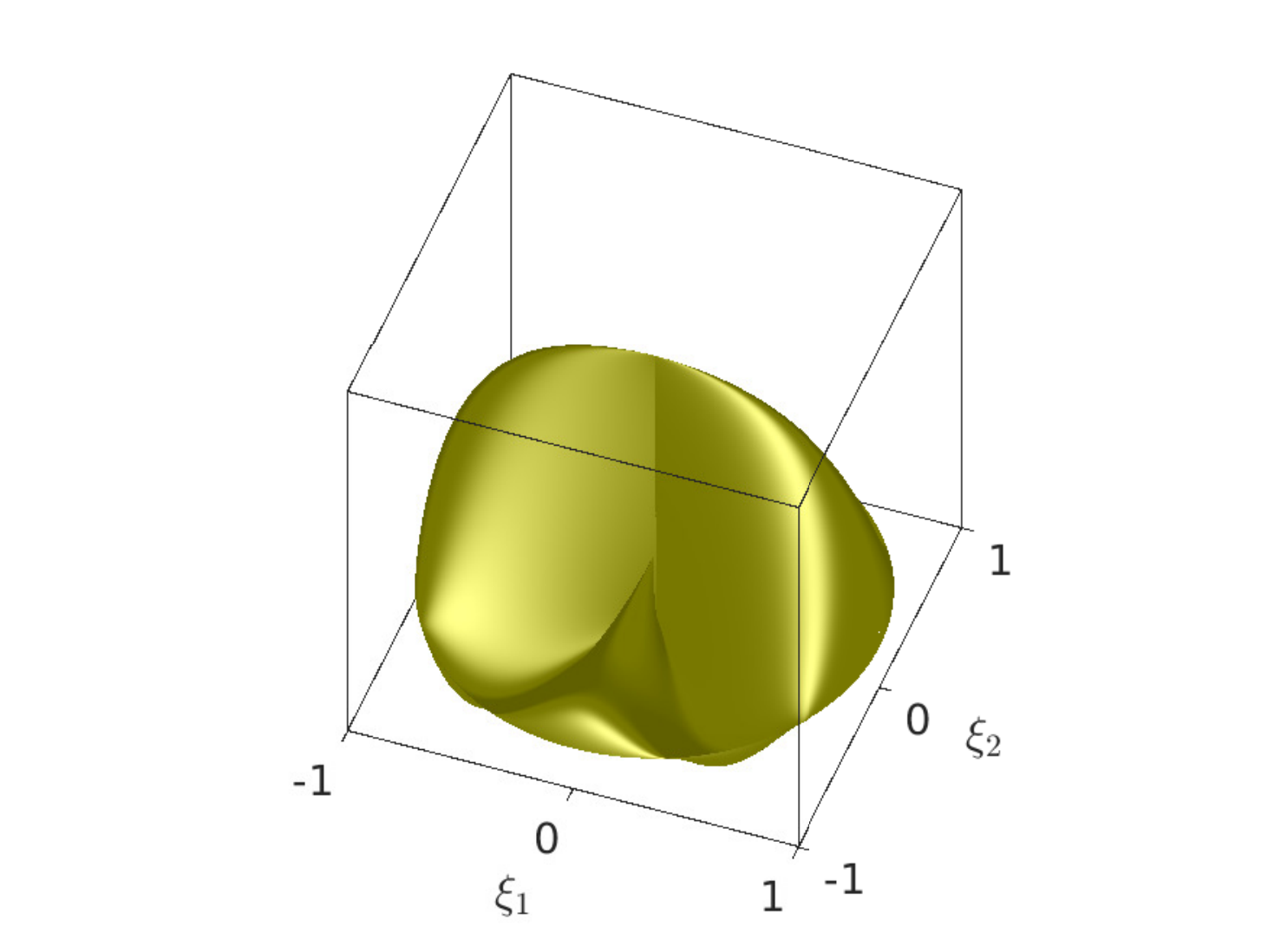} &
\includegraphics[scale=0.45,trim={3cm 0 3cm 0.5cm},clip]{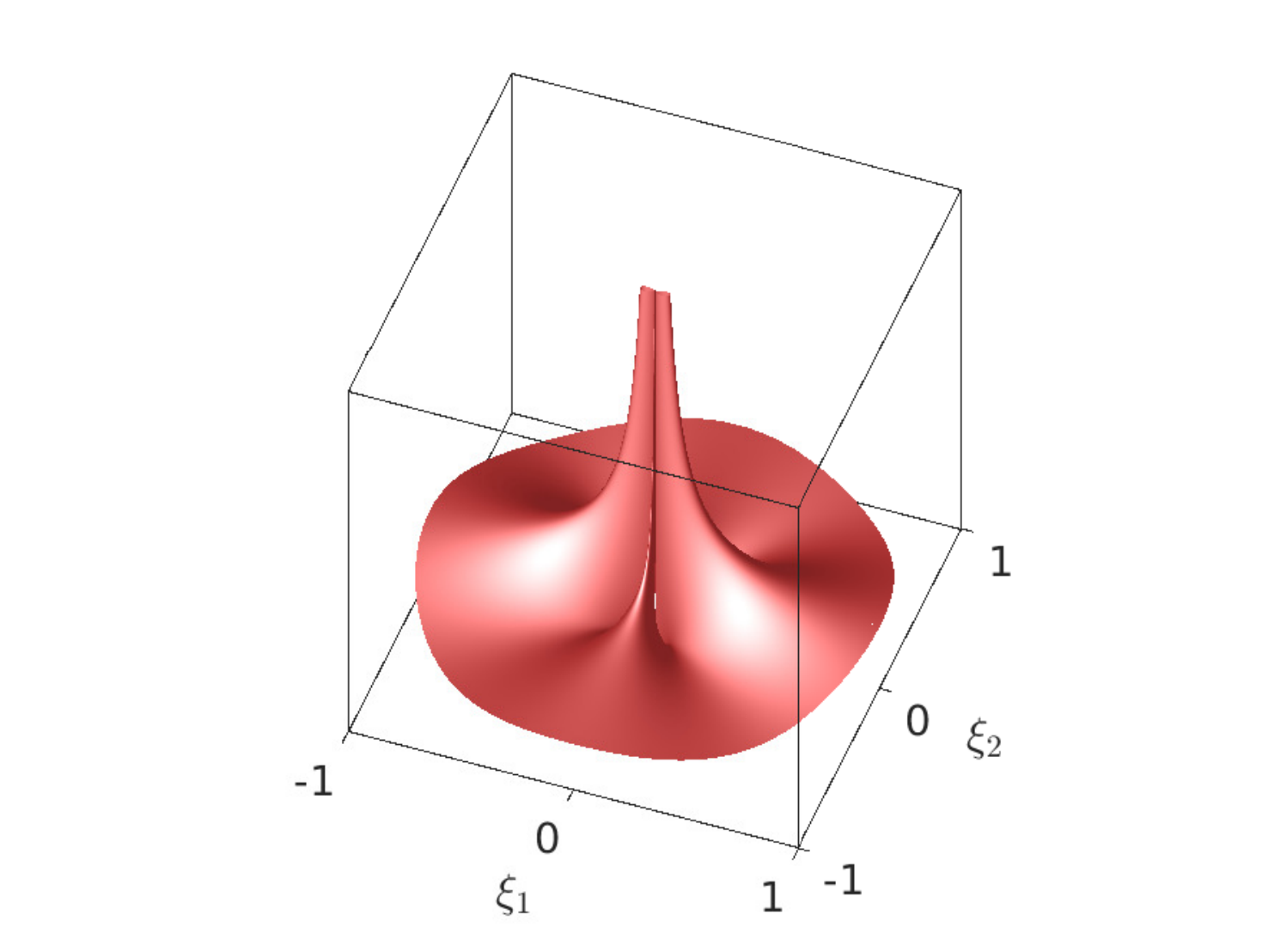}
\end{tabular} \\
Monoclinic ($\text{CoTeO}_4$) & Orthotropic ($\text{Te}_2\text{W}$) \\
\begin{tabular}{cc}
\includegraphics[scale=0.45,trim={3cm 0 3cm 0.5cm},clip]{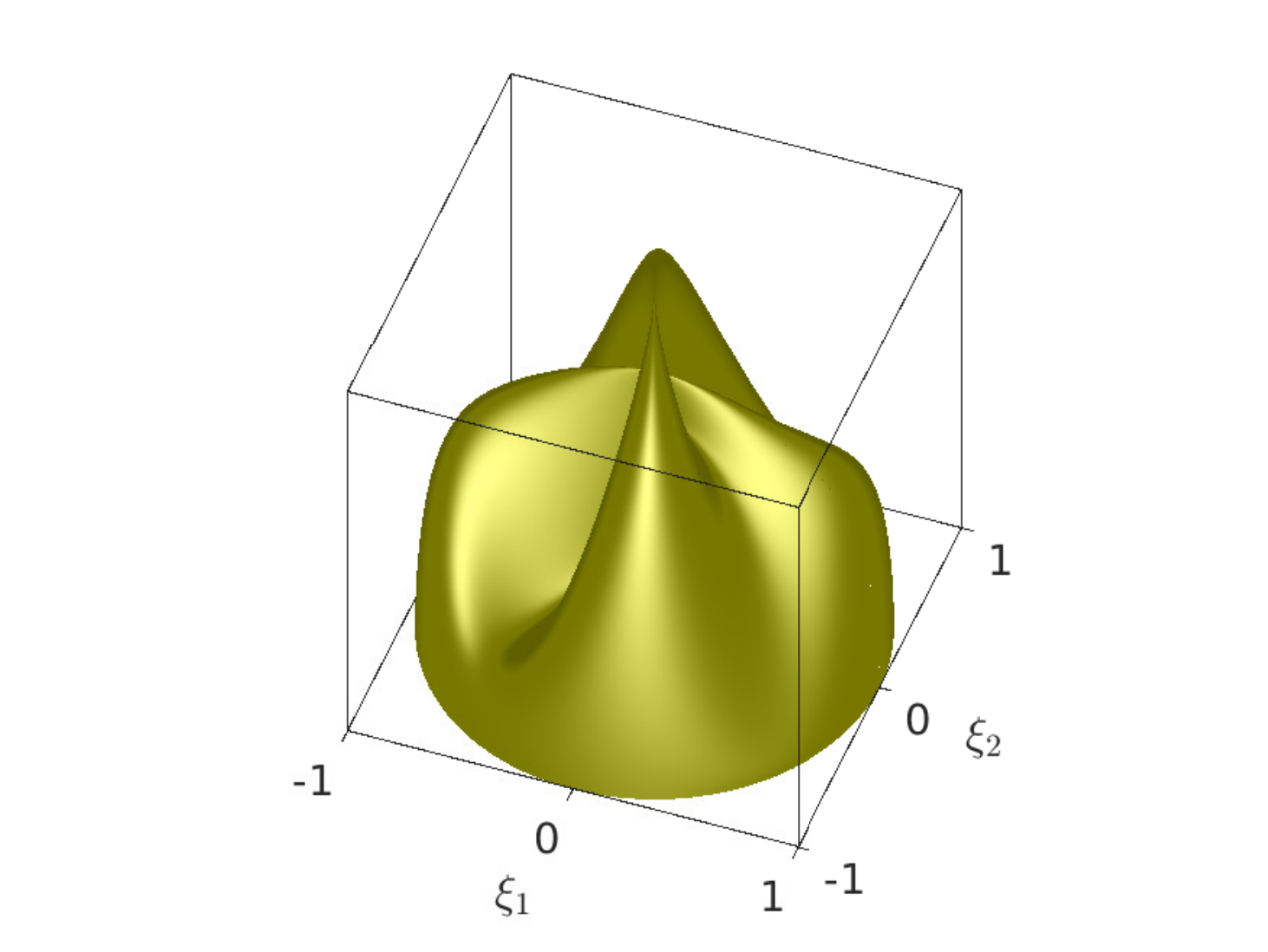} &
\includegraphics[scale=0.45,trim={3cm 0 3cm 0.5cm},clip]{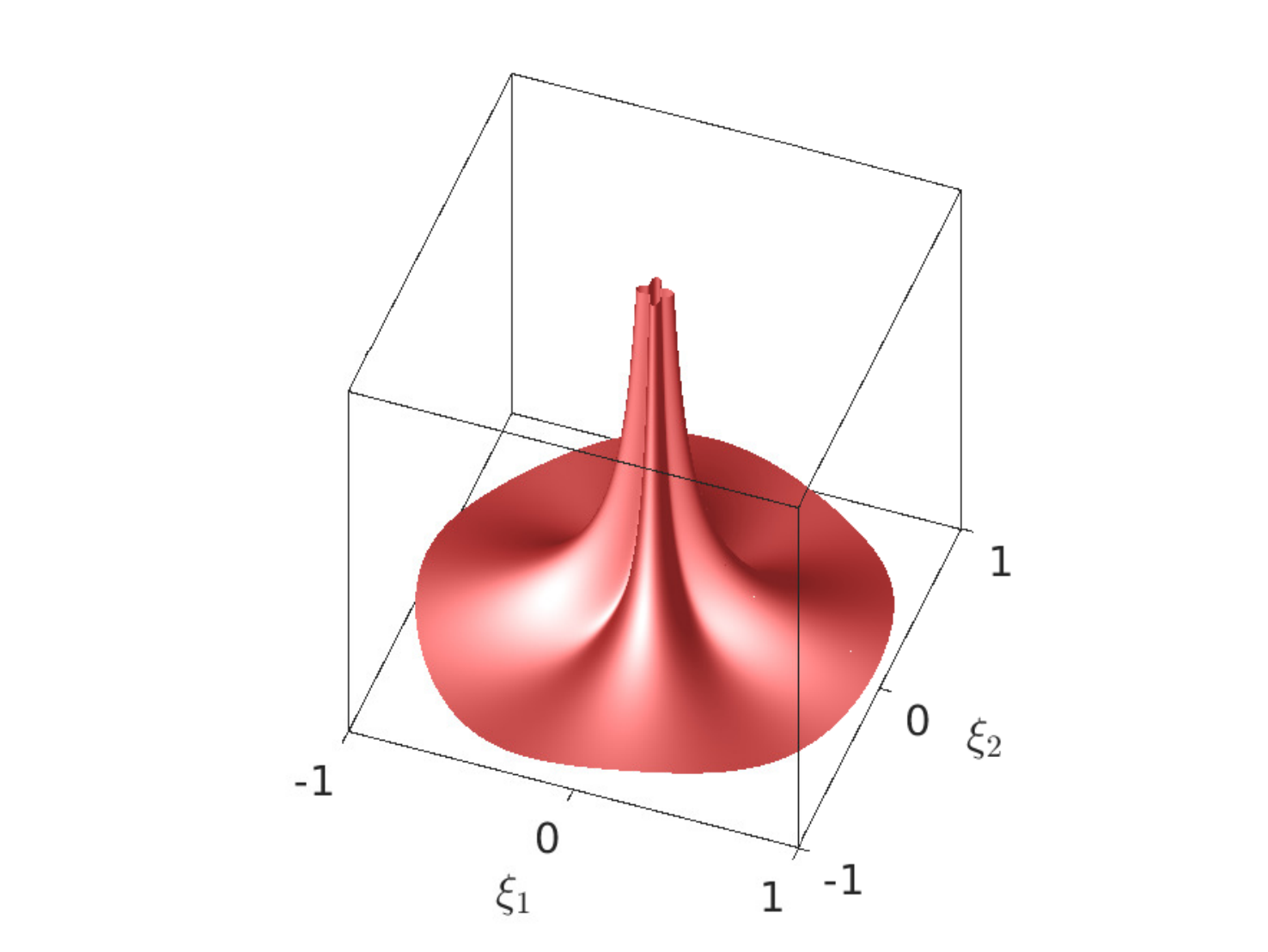}
\end{tabular} & \begin{tabular}{cc}
\includegraphics[scale=0.45,trim={3cm 0 3cm 0.5cm},clip]{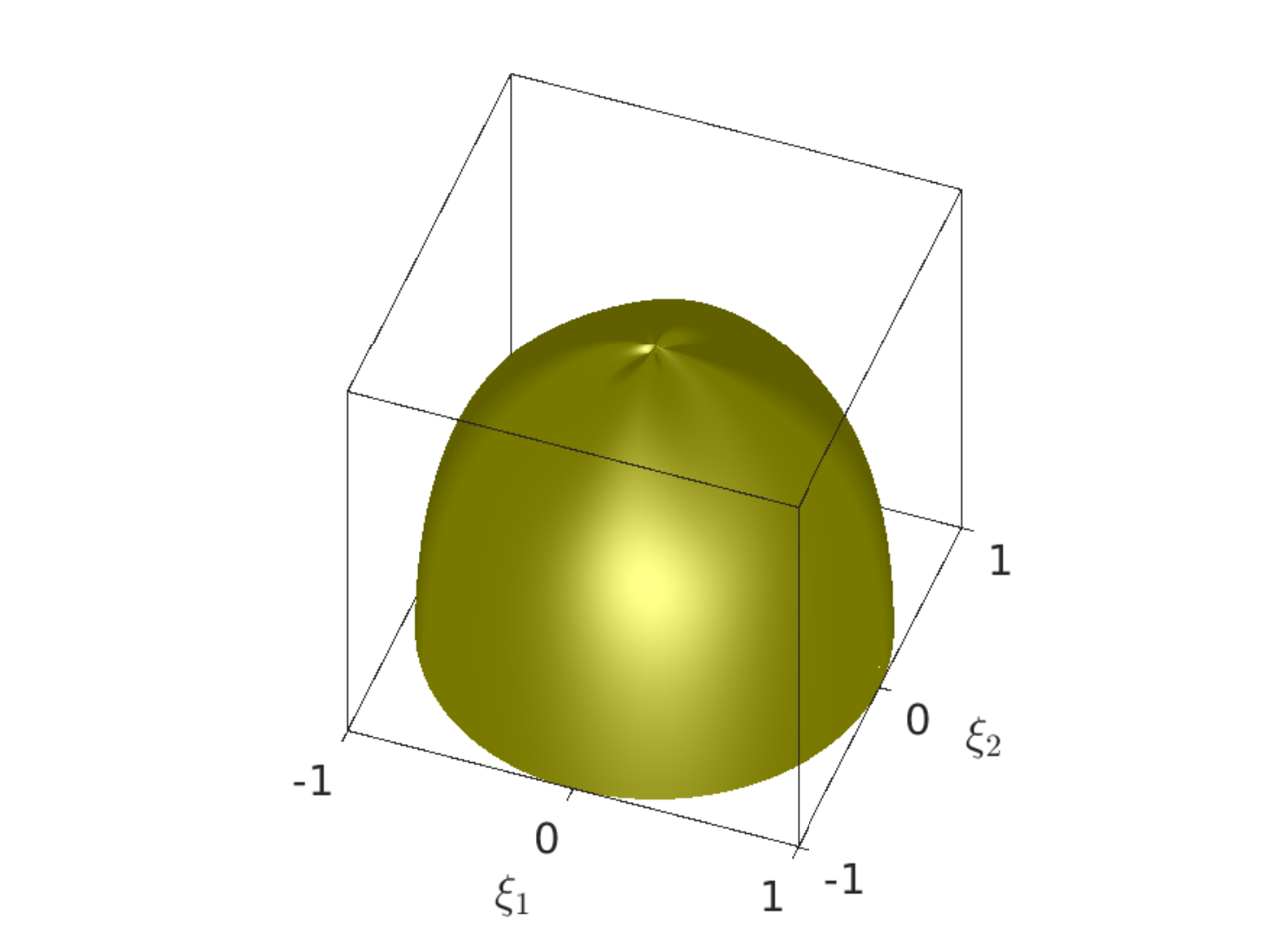} &
\includegraphics[scale=0.45,trim={3cm 0 3cm 0.5cm},clip]{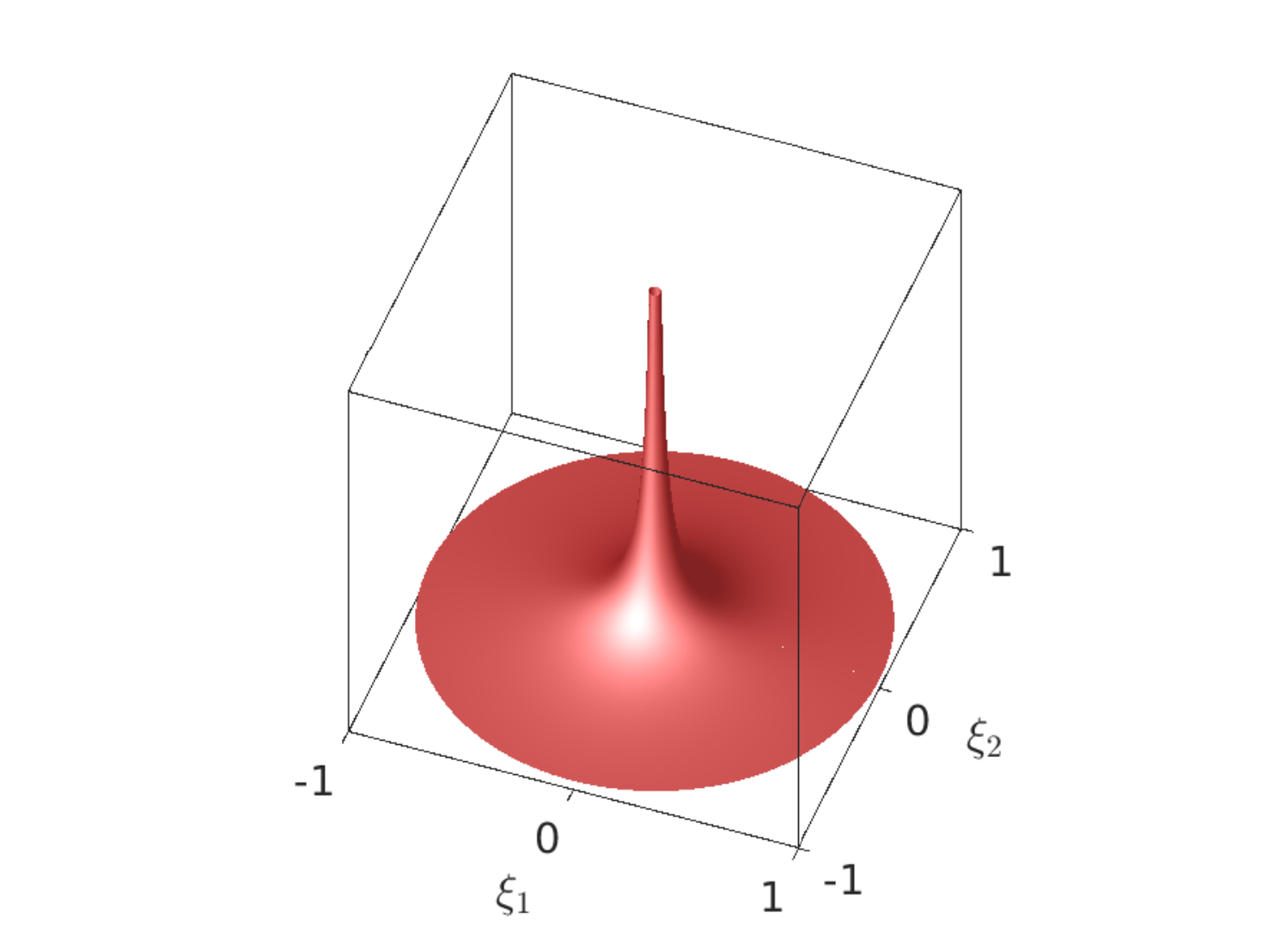}
\end{tabular} \\
Trigonal ($\text{Ta}_2\text{C}$) & Tetragonal (Si) \\
\begin{tabular}{cc}
\includegraphics[scale=0.45,trim={3cm 0 3cm 0.5cm},clip]{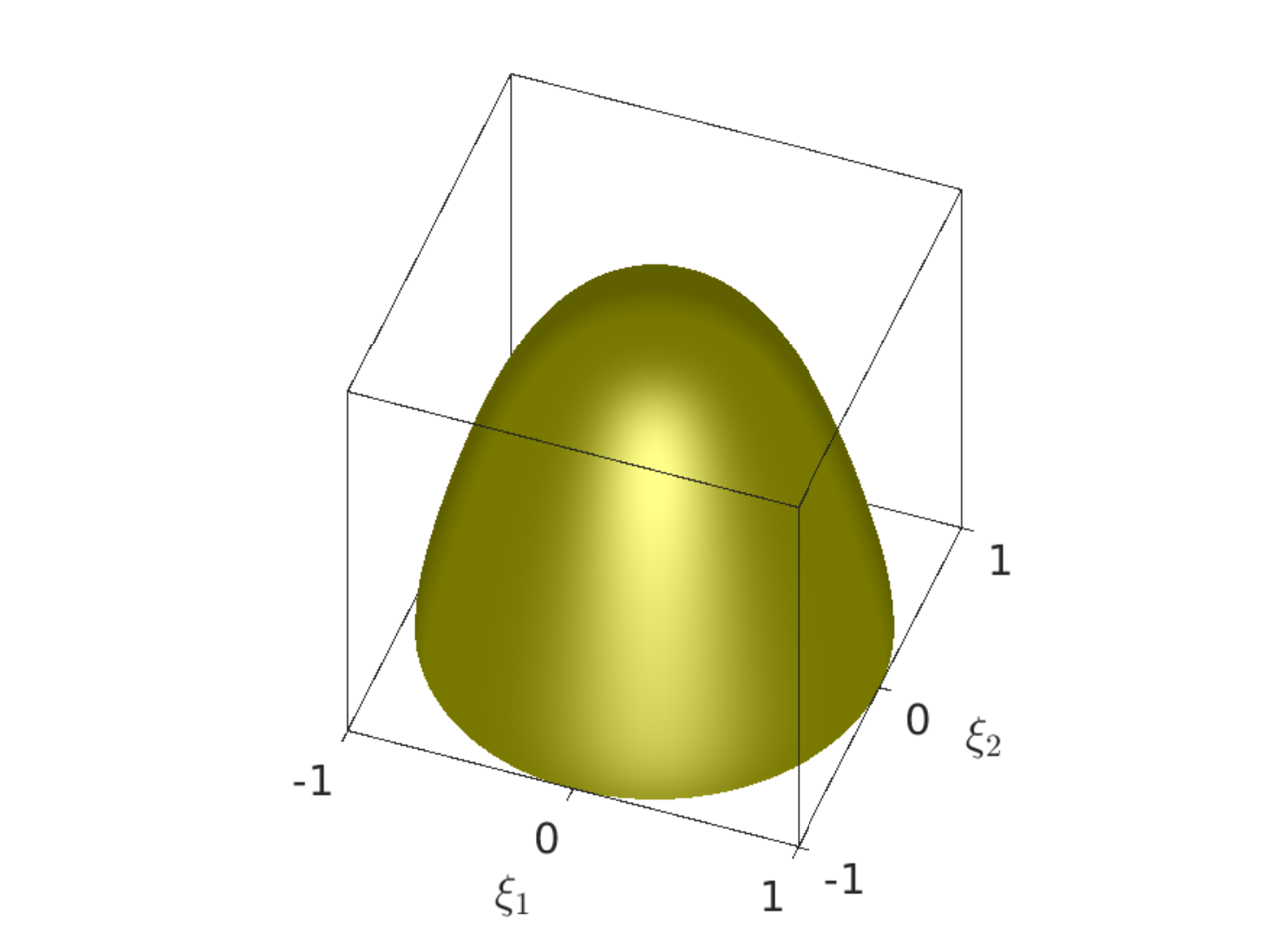} &
\includegraphics[scale=0.45,trim={3cm 0 3cm 0.5cm},clip]{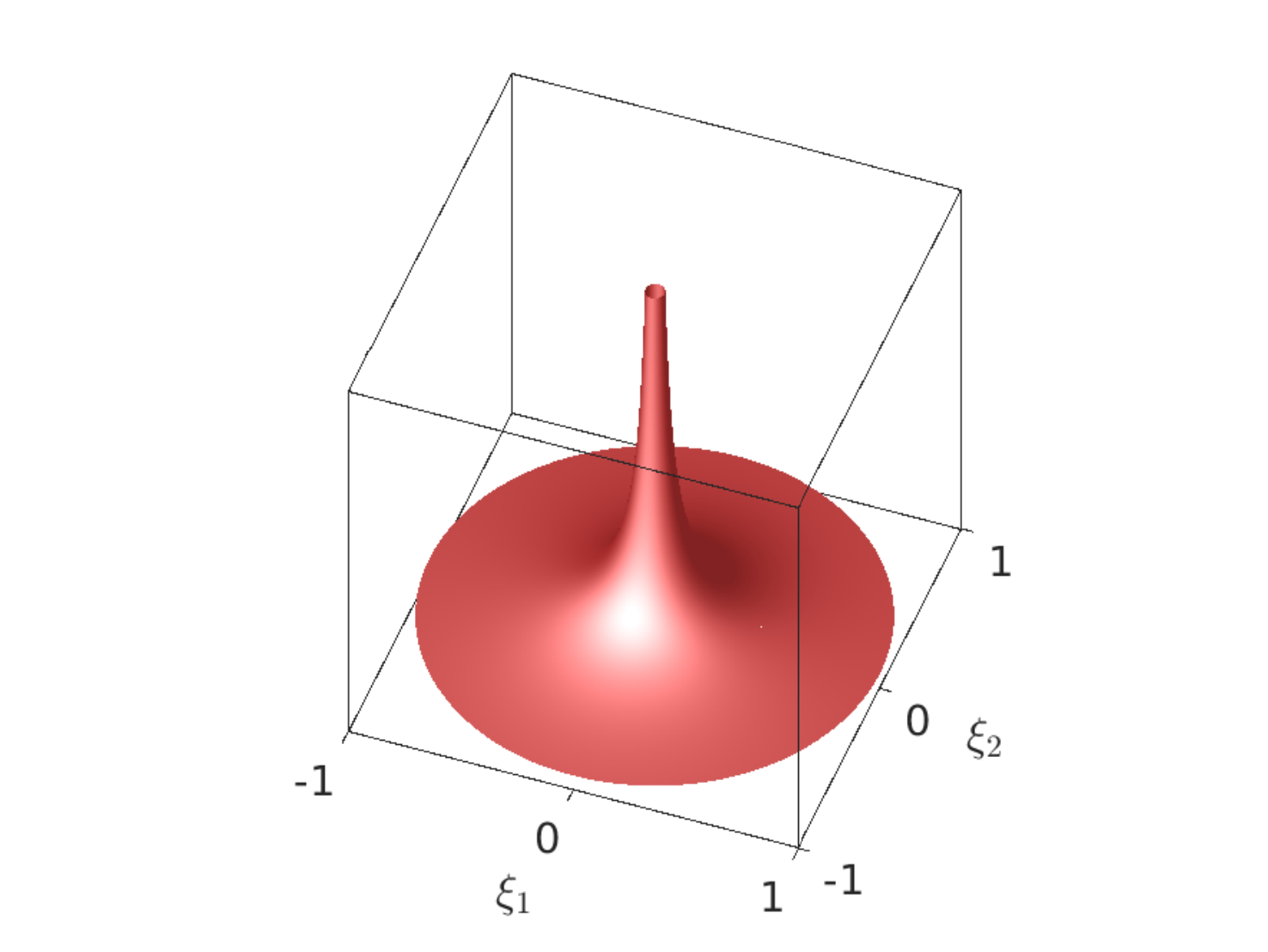}
\end{tabular} & \begin{tabular}{cc}
\includegraphics[scale=0.45,trim={3cm 0 3cm 0.5cm},clip]{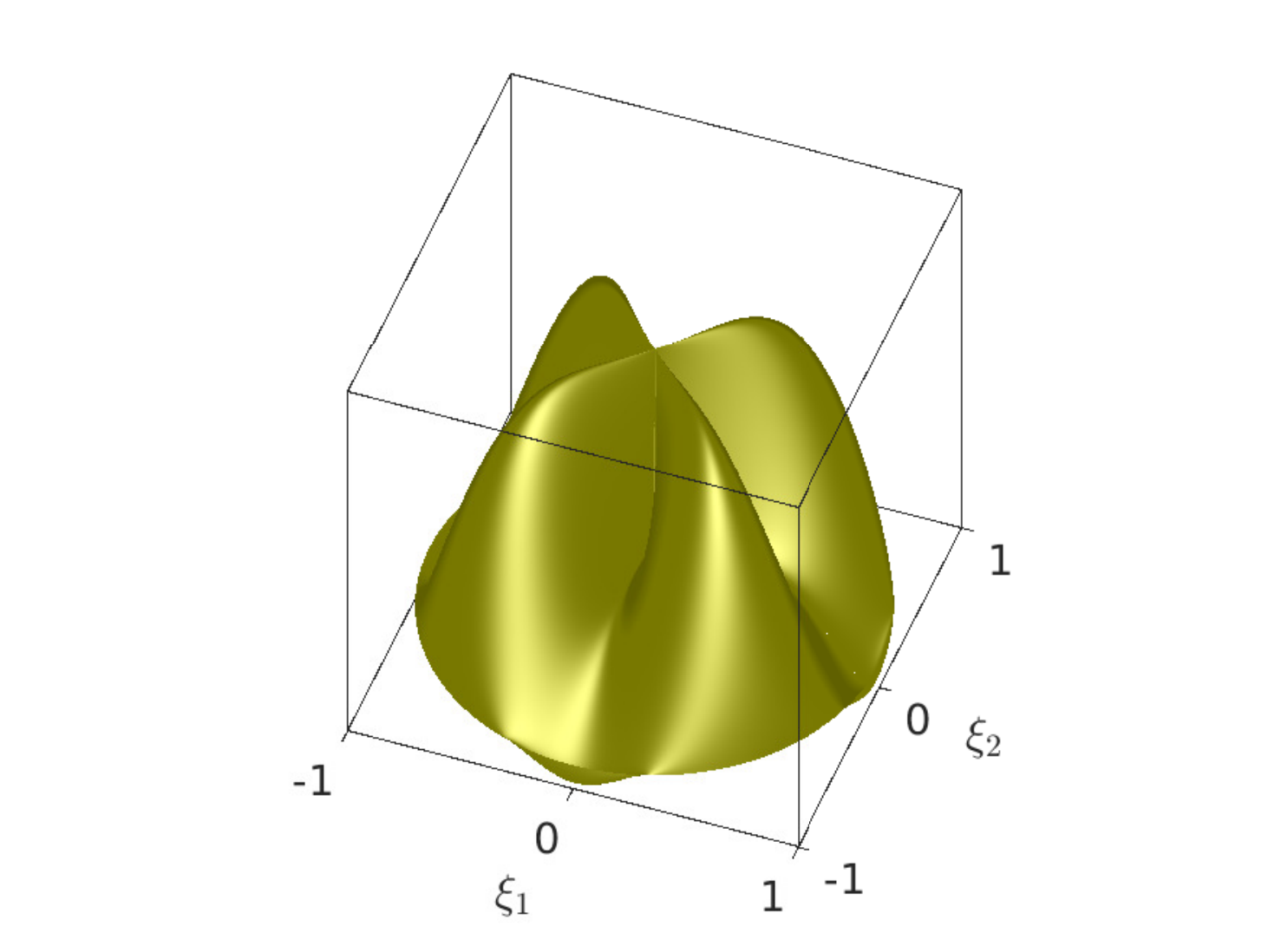} &
\includegraphics[scale=0.45,trim={3cm 0 3cm 0.5cm},clip]{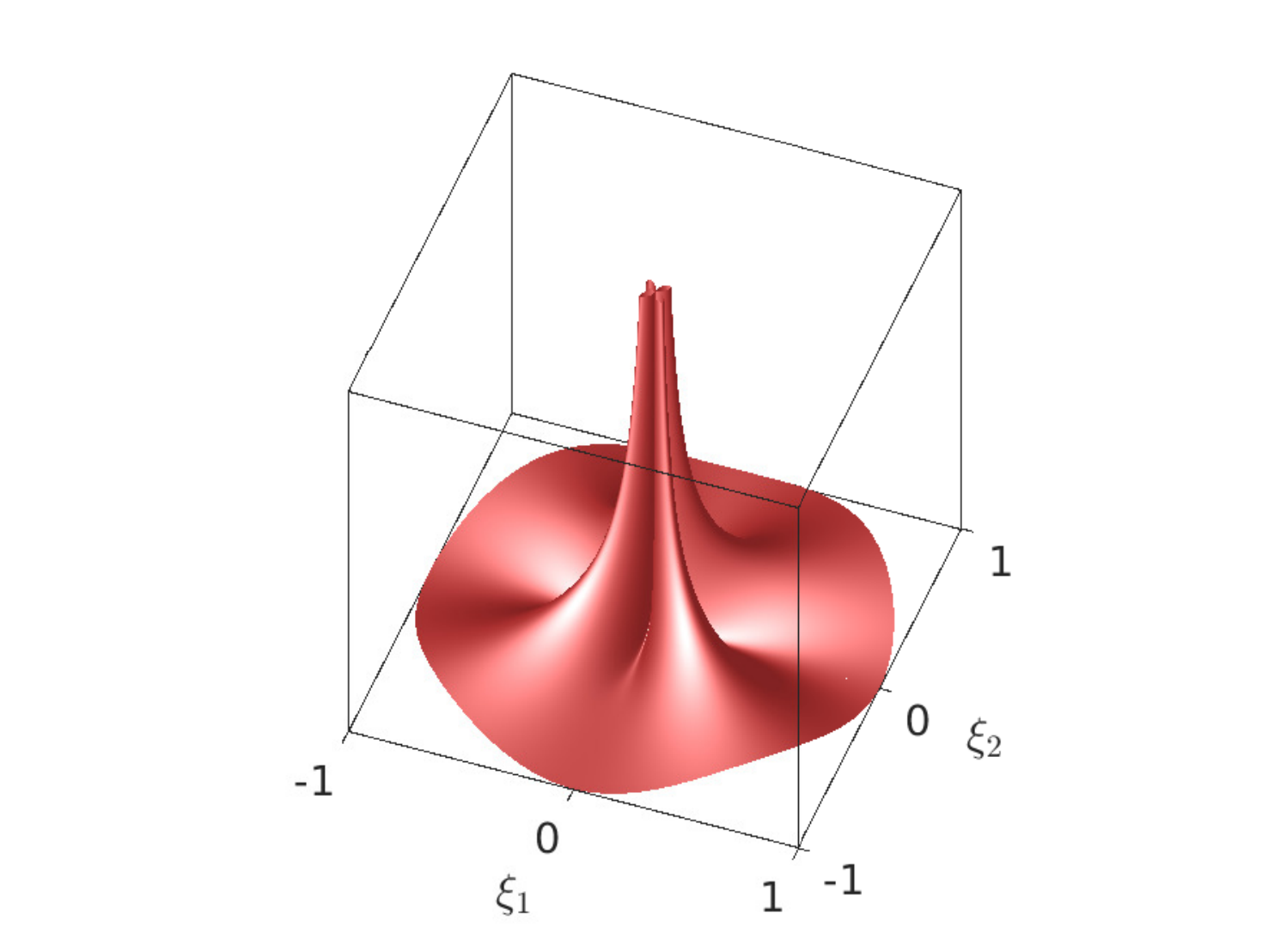}
\end{tabular} \\
Transversely Isotropic (MoN) & Cubic ($\text{MgAl}_2\text{O}_4$) \\
\multicolumn{2}{c}{
\begin{tabular}{cc}
\includegraphics[scale=0.45,trim={3cm 0 3cm 0.5cm},clip]{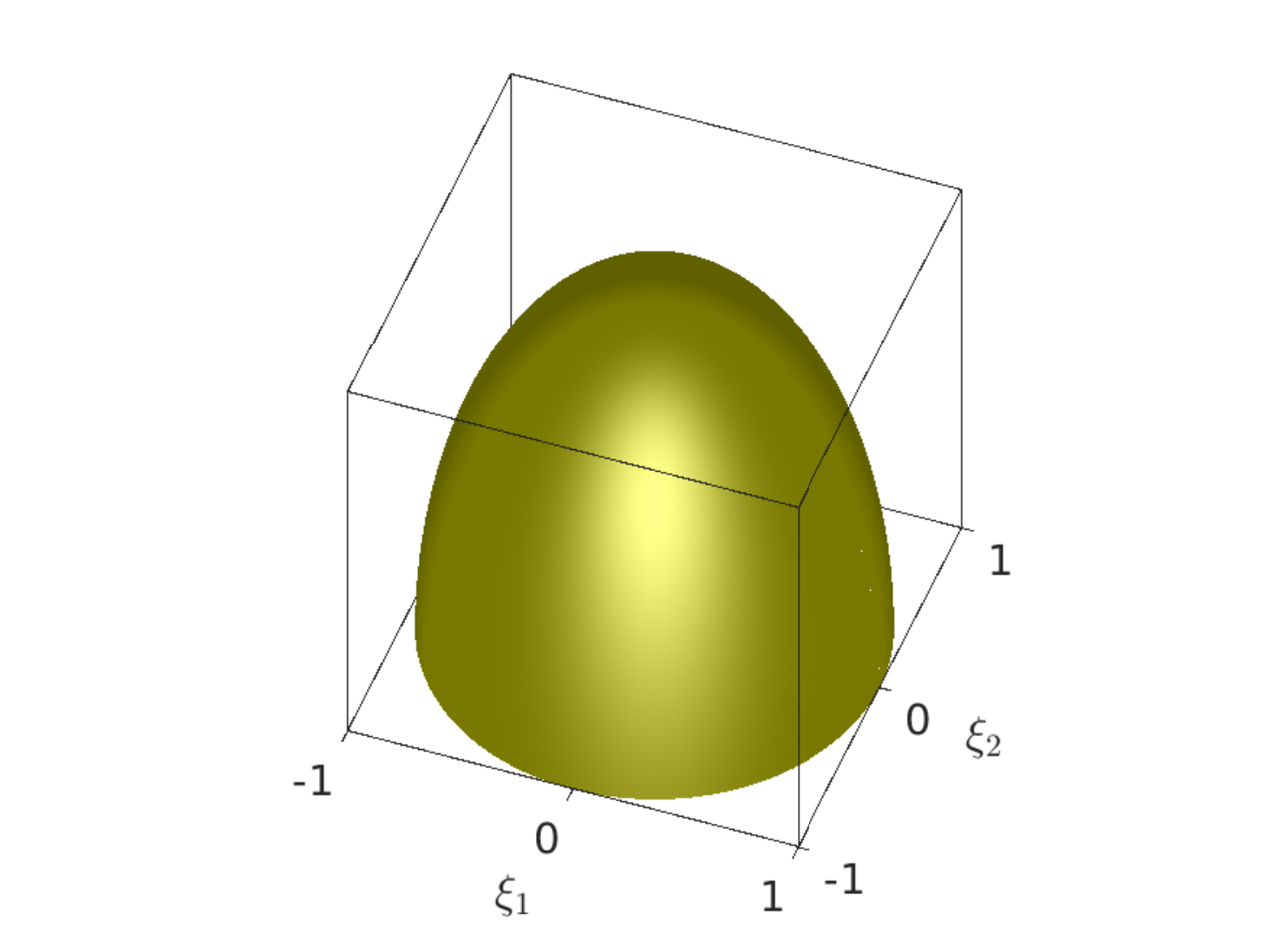} &
\includegraphics[scale=0.45,trim={3cm 0 3cm 0.5cm},clip]{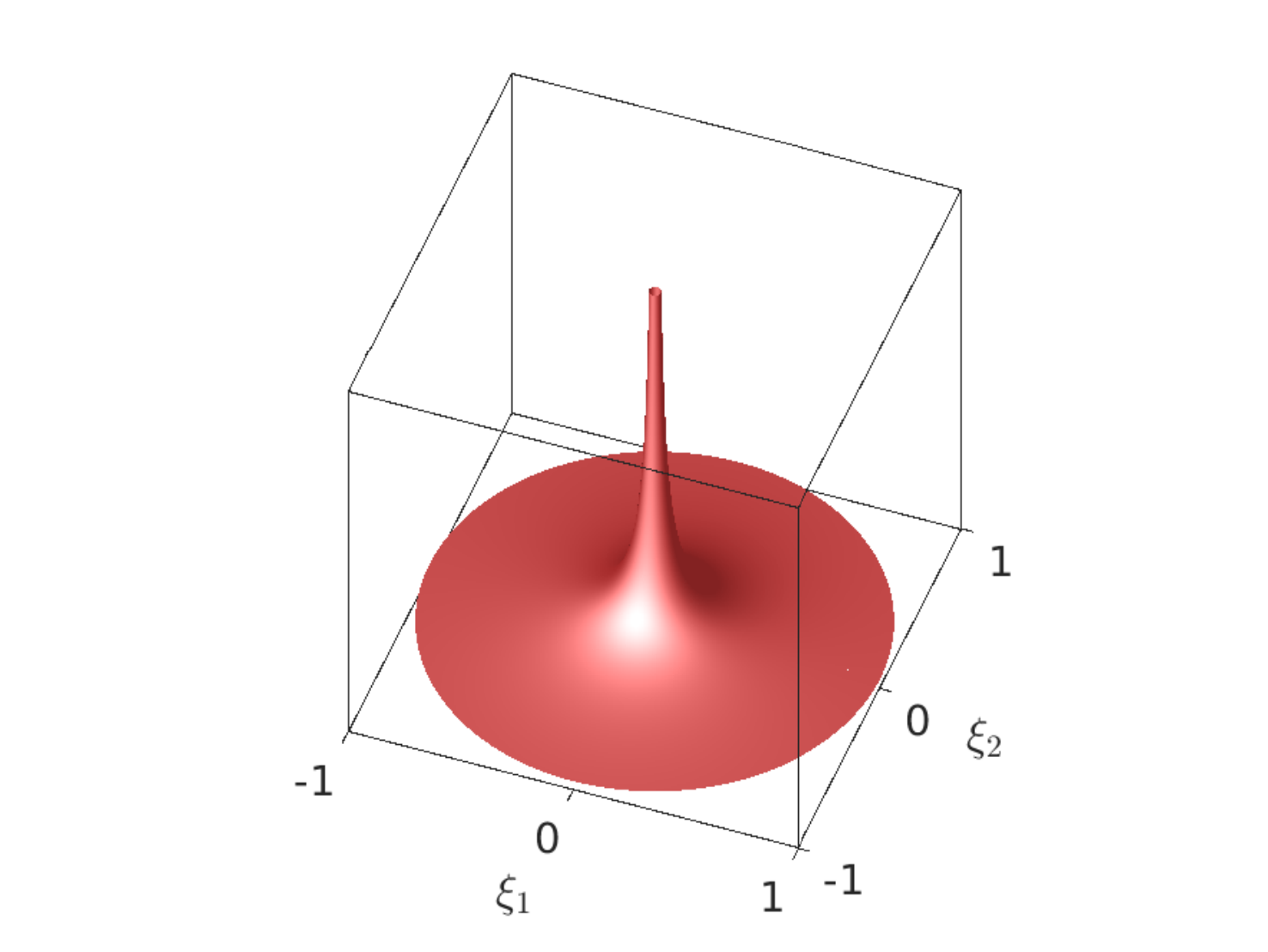}
\end{tabular} } \\
\multicolumn{2}{c}{Isotropic (Pyroceram 9608)}
\end{tabular}
\caption{Plots of $r^2 \lambda_0(\xi_1,\xi_2)$ ({\em cf.}~\eqref{eqn:PlaneStrainLambda_iForw=1overxia}) with $\omega( \|\bfxi \|) = \frac{1}{\| \bfxi \|}$ (in green) and $r^2 \lambda(\xi_1,\xi_2)$ ({\em cf.}~\eqref{eqn:lambdaobl}) with $\omega(r) = \frac{1}{r}$ (in red).}
\label{fig:planestrainmicromodulioneoverxi}
\end{figure}

\begin{figure}
\begin{tabular}{cc}
\begin{tabular}{cc}
\includegraphics[scale=0.45,trim={3cm 0 3cm 0.5cm},clip]{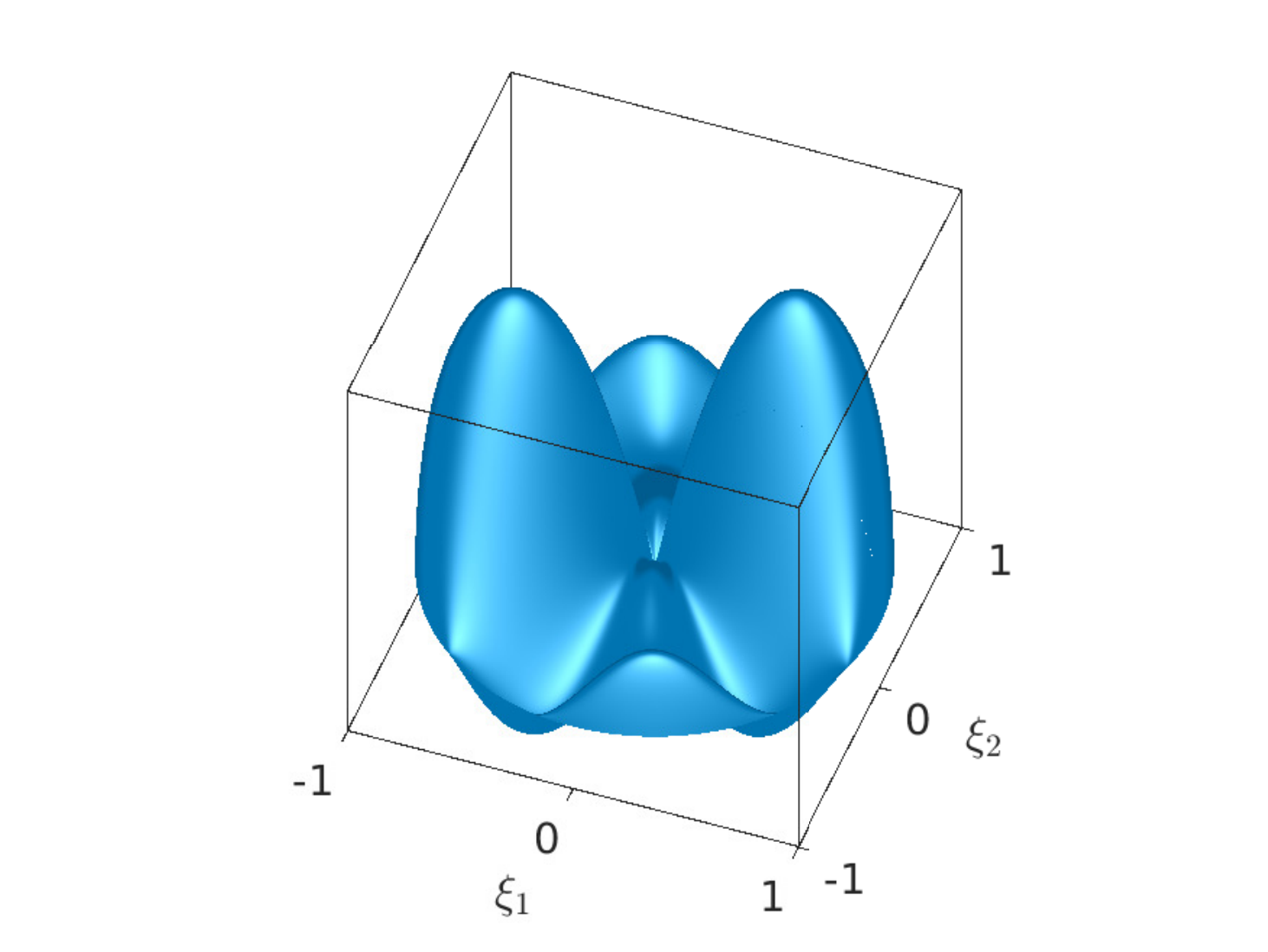} &
\includegraphics[scale=0.45,trim={3cm 0 3cm 0.5cm},clip]{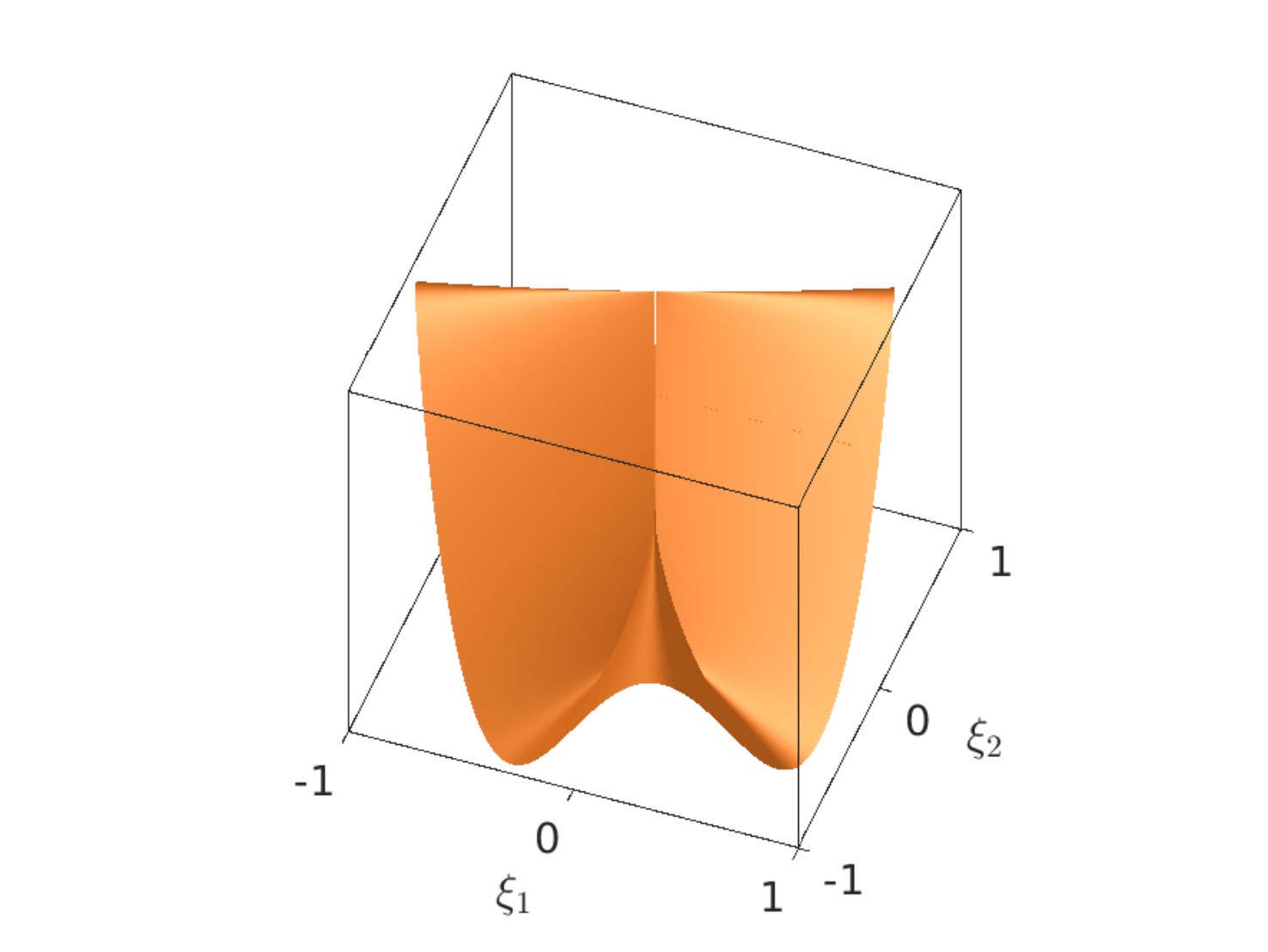}
\end{tabular} & \begin{tabular}{cc}
\includegraphics[scale=0.45,trim={3cm 0 3cm 0.5cm},clip]{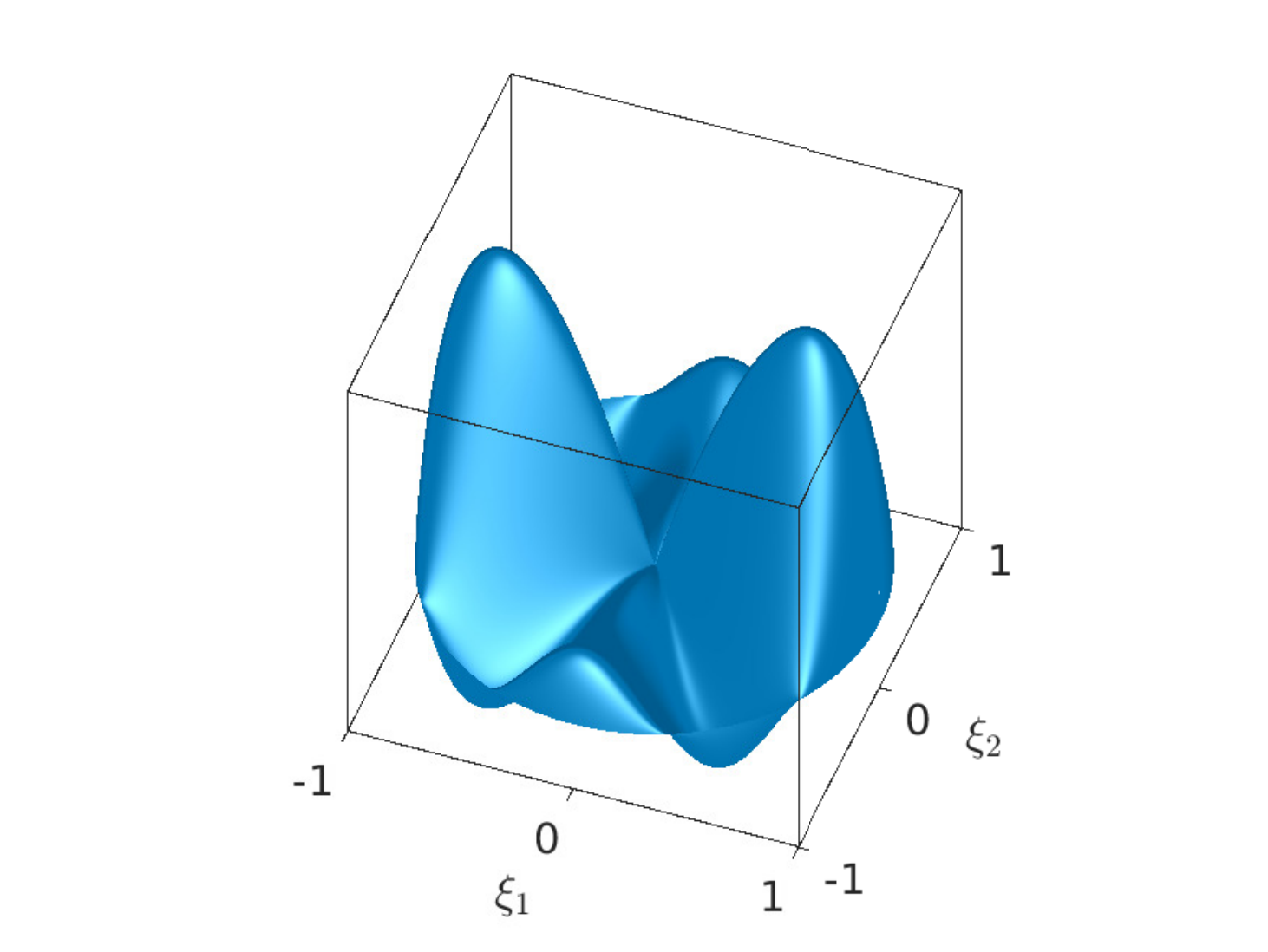} &
\includegraphics[scale=0.45,trim={3cm 0 3cm 0.5cm},clip]{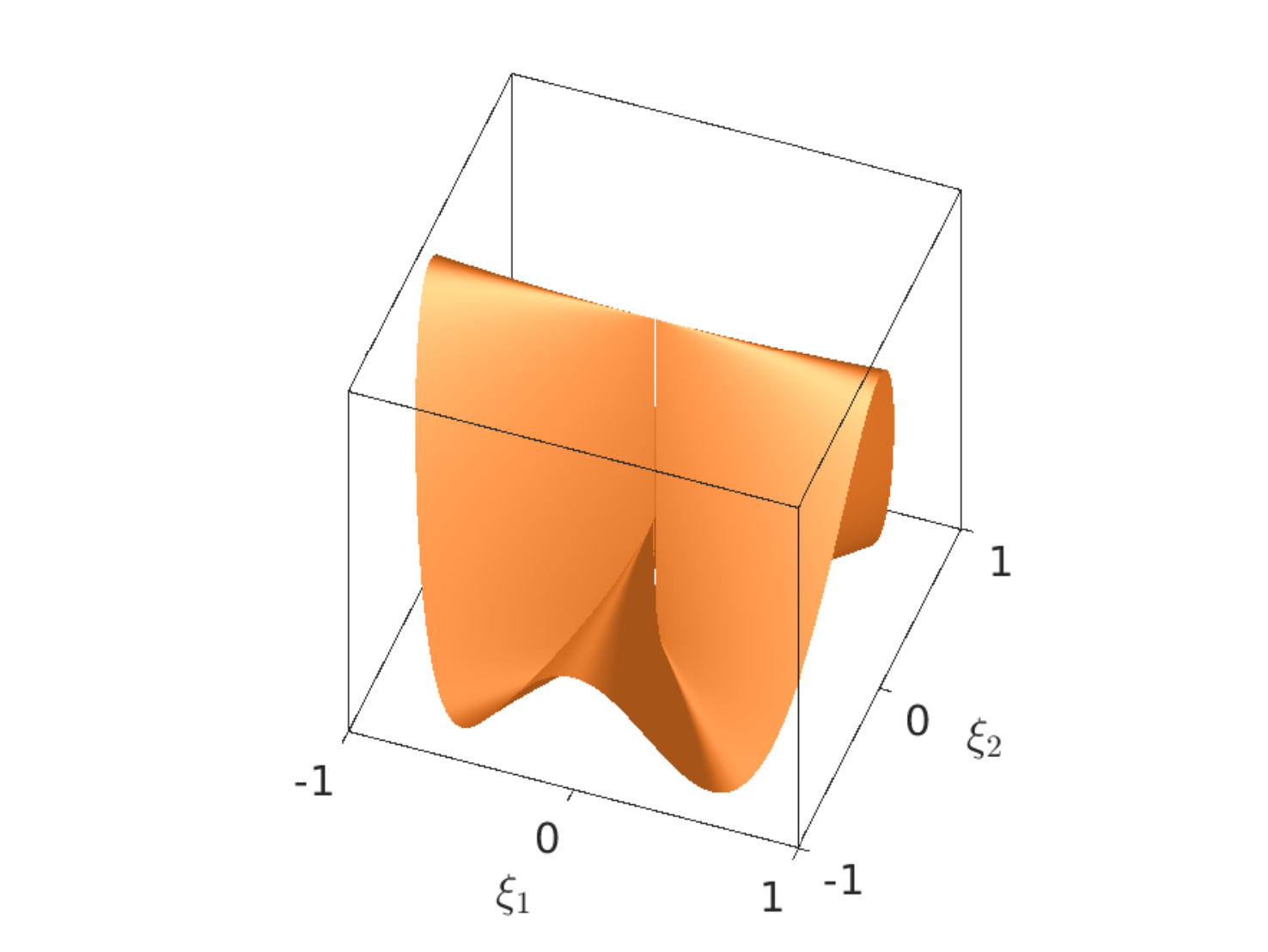}
\end{tabular} \\
Monoclinic ($\text{CoTeO}_4$) & Orthotropic ($\text{Te}_2\text{W}$) \\
\begin{tabular}{cc}
\includegraphics[scale=0.45,trim={3cm 0 3cm 0.5cm},clip]{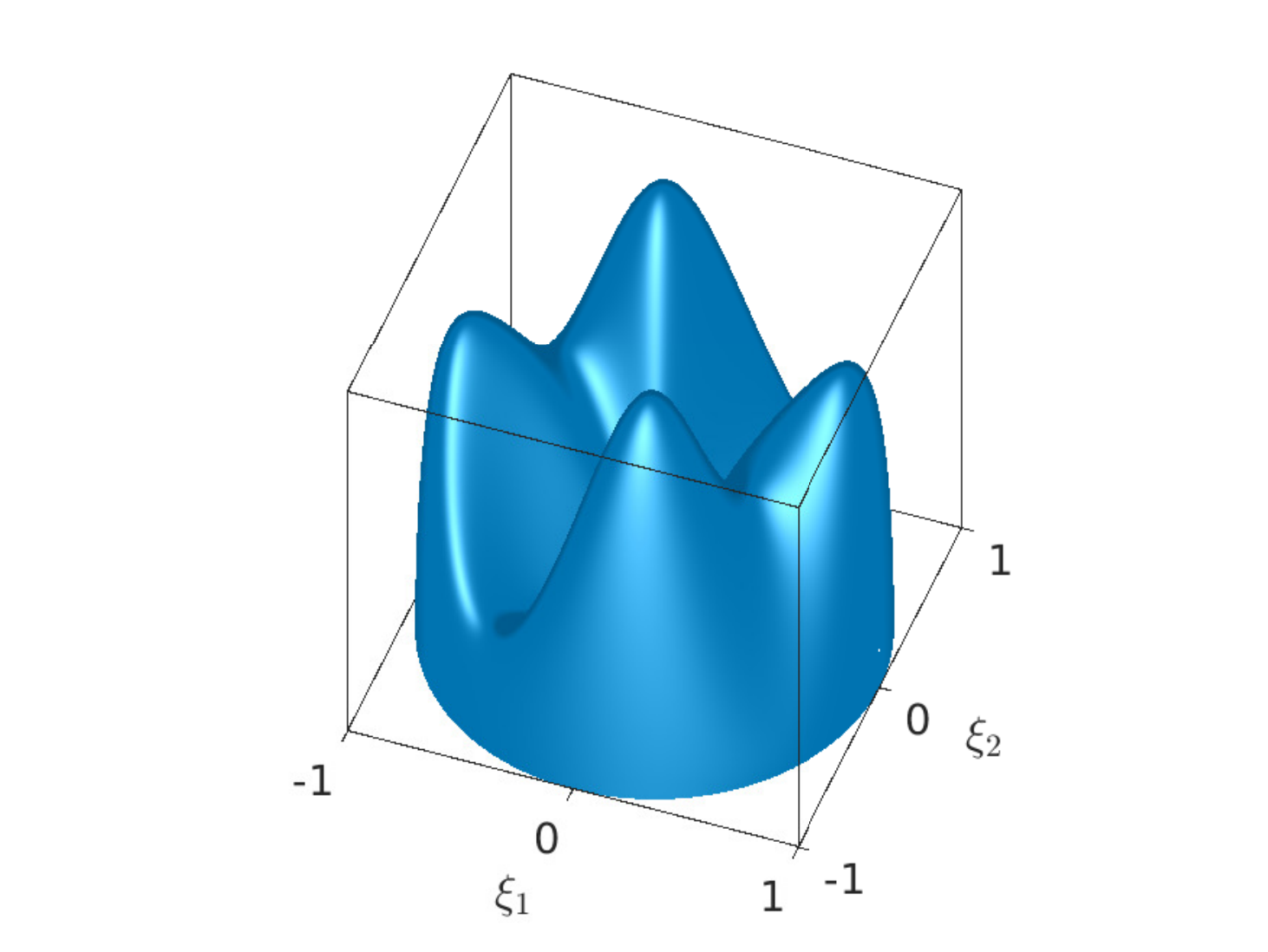} &
\includegraphics[scale=0.45,trim={3cm 0 3cm 0.5cm},clip]{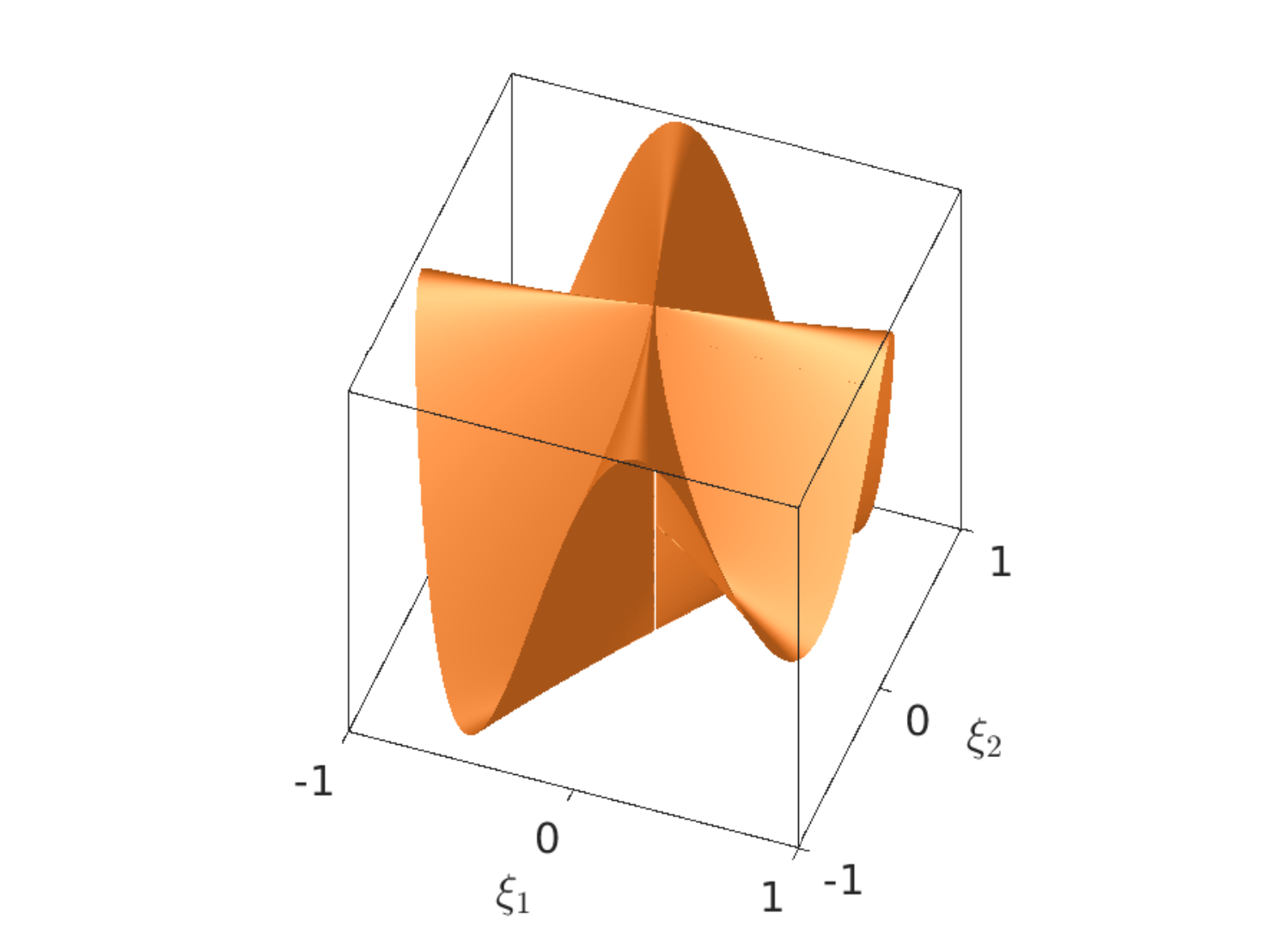}
\end{tabular} & \begin{tabular}{cc}
\includegraphics[scale=0.45,trim={3cm 0 3cm 0.5cm},clip]{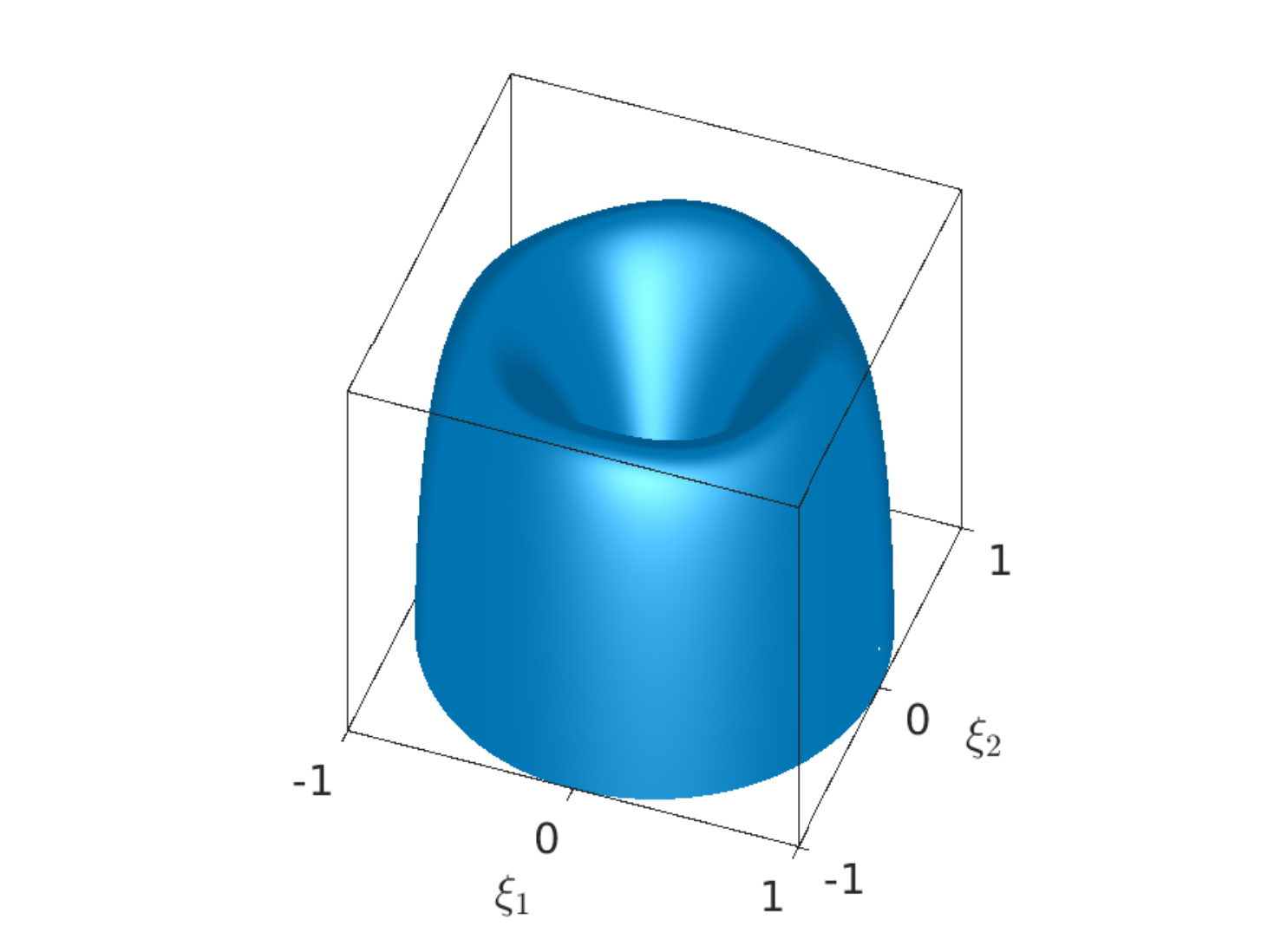} &
\includegraphics[scale=0.45,trim={3cm 0 3cm 0.5cm},clip]{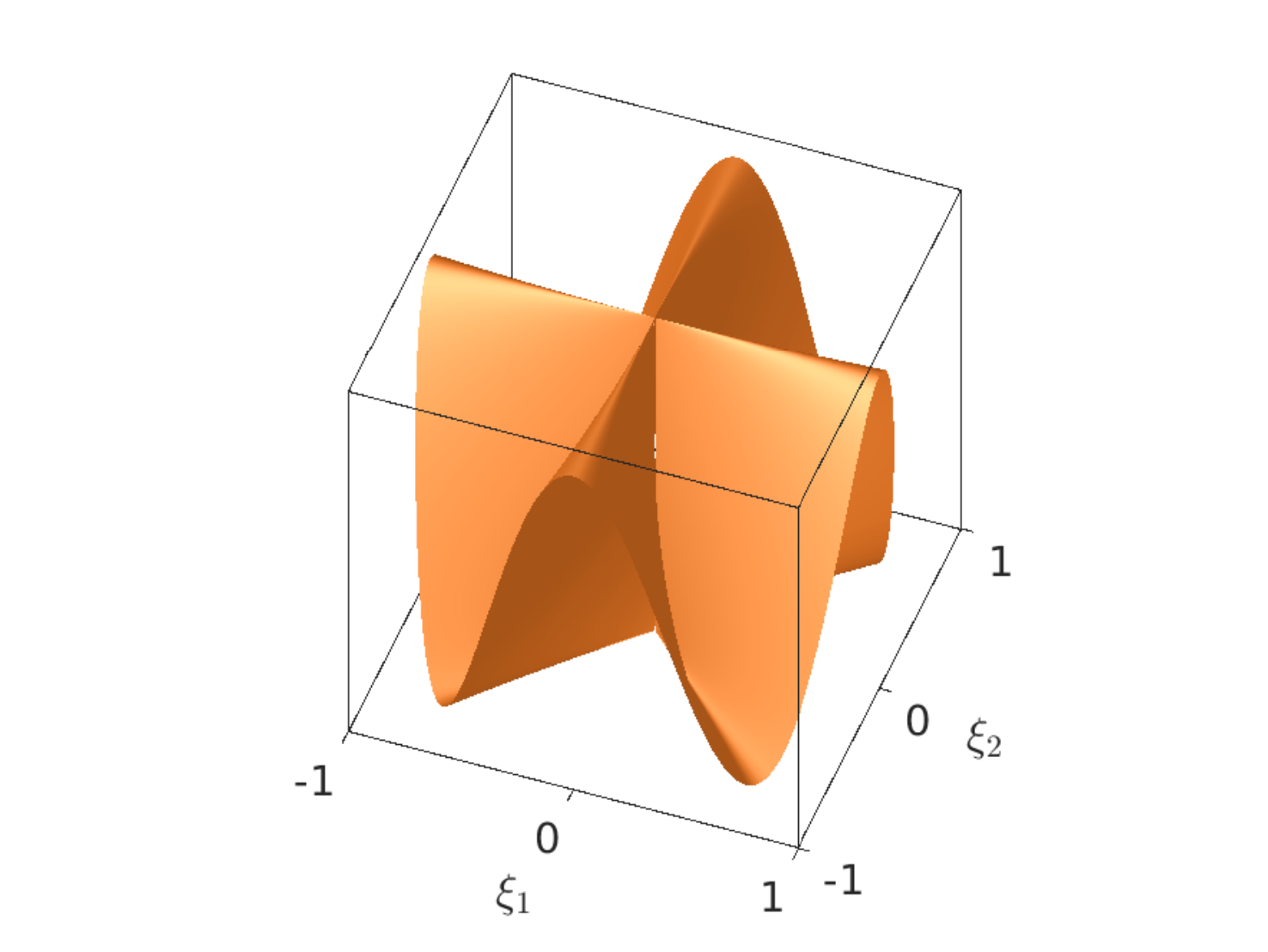}
\end{tabular} \\
Trigonal ($\text{Ta}_2\text{C}$) & Tetragonal (Si) \\
\begin{tabular}{cc}
\includegraphics[scale=0.45,trim={3cm 0 3cm 0.5cm},clip]{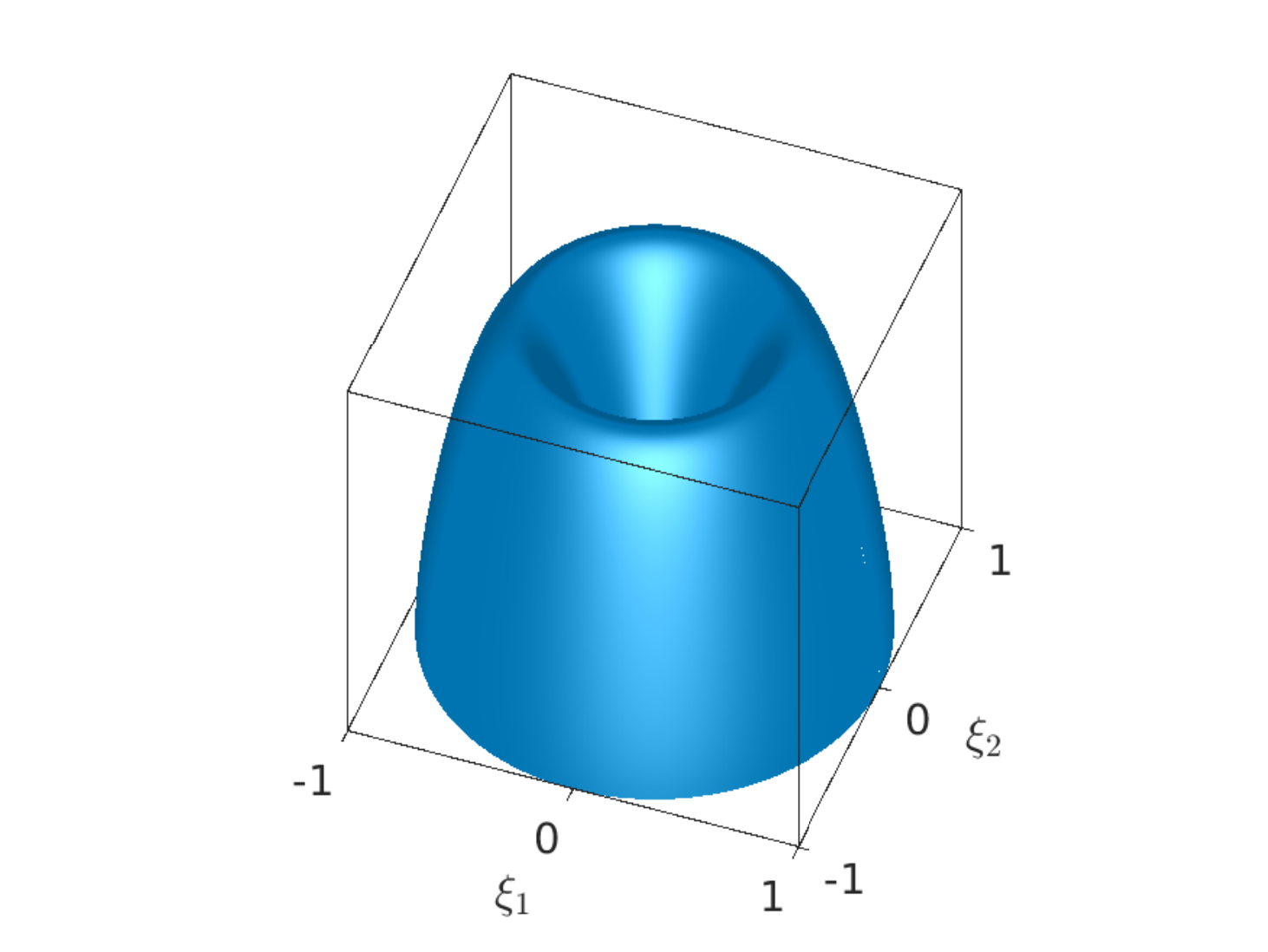} &
\includegraphics[scale=0.45,trim={3cm 0 3cm 0.5cm},clip]{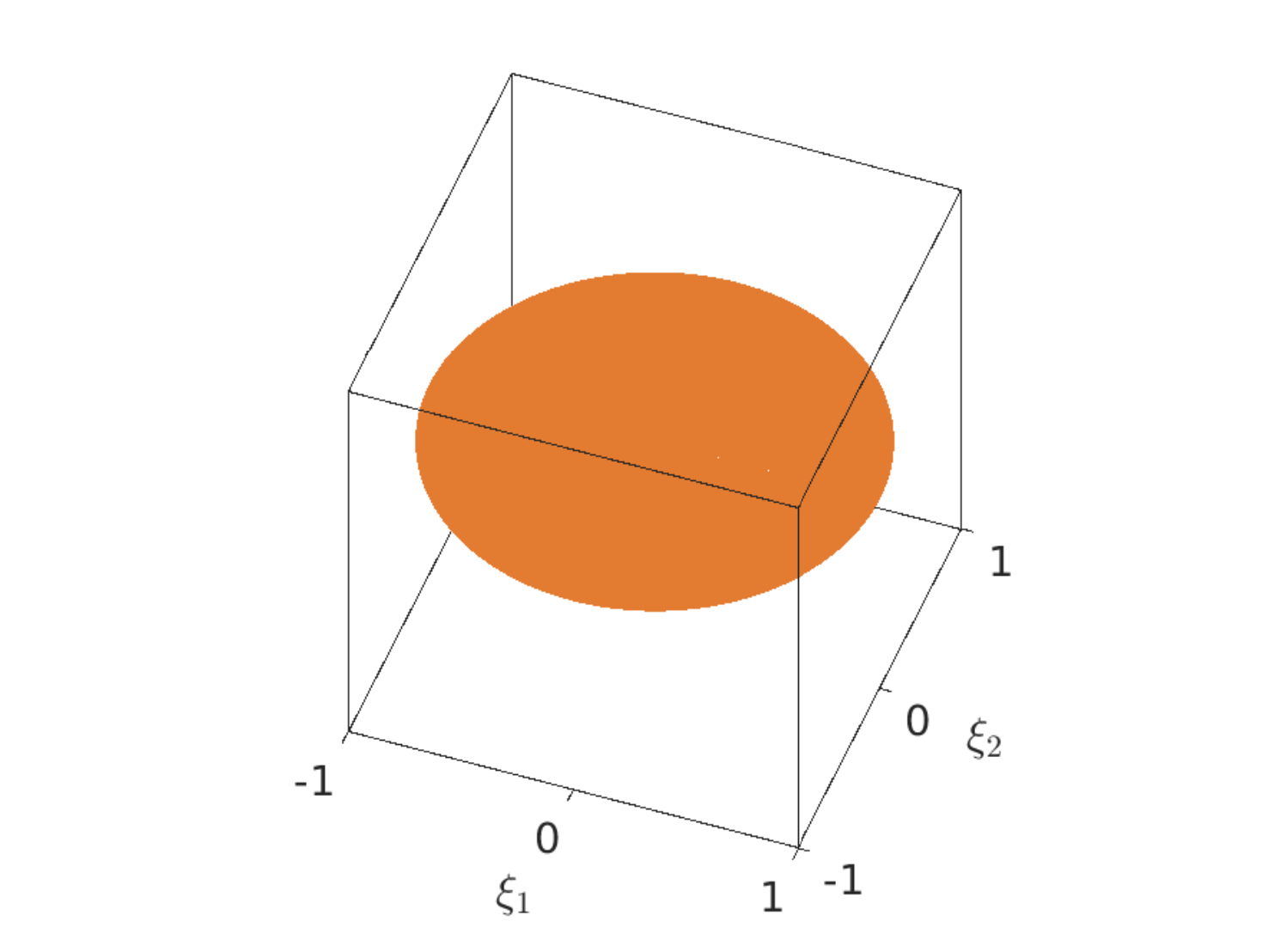}
\end{tabular} & \begin{tabular}{cc}
\includegraphics[scale=0.45,trim={3cm 0 3cm 0.5cm},clip]{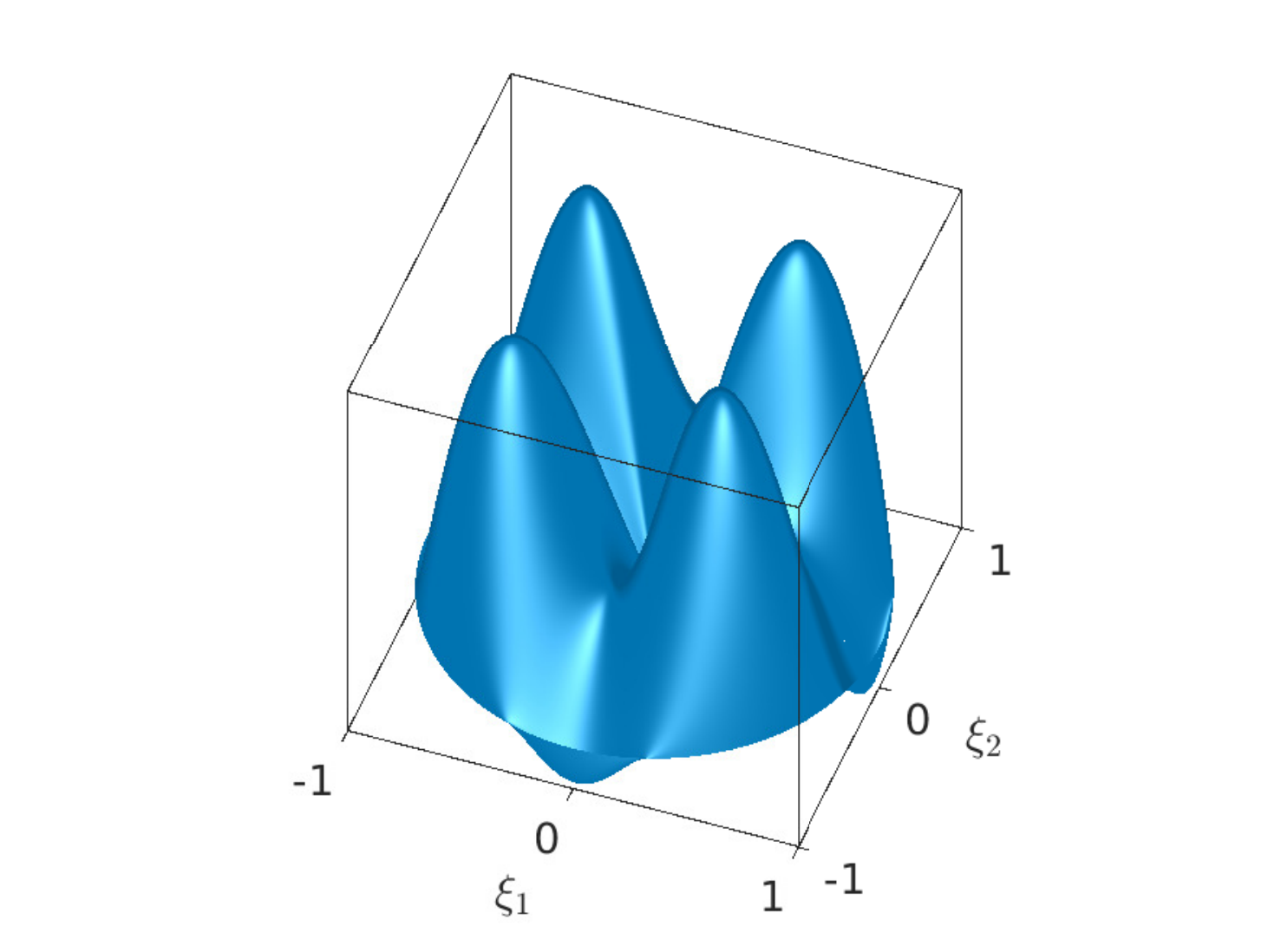} &
\includegraphics[scale=0.45,trim={3cm 0 3cm 0.5cm},clip]{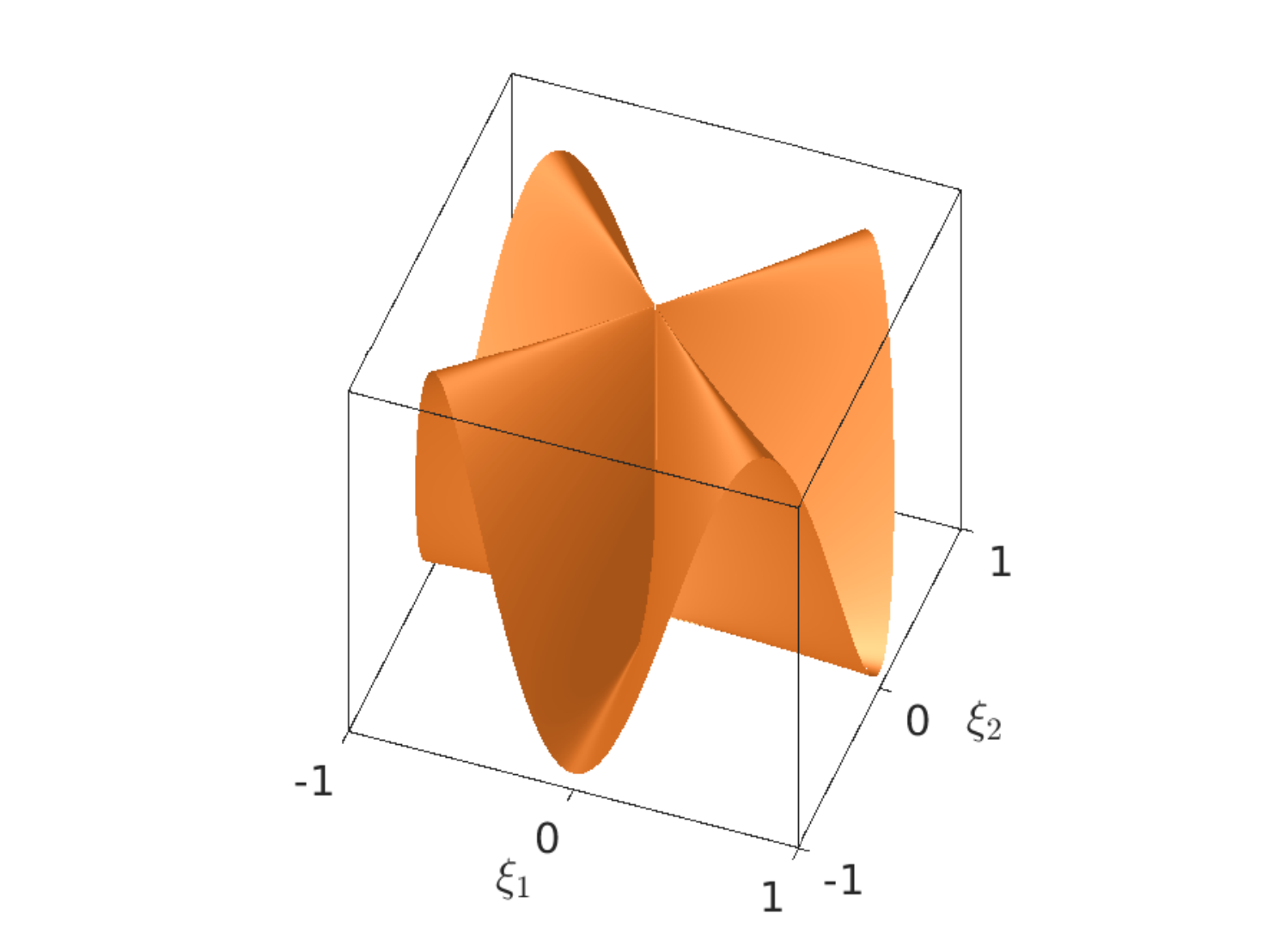}
\end{tabular} \\
Transversely Isotropic (MoN) & Cubic ($\text{MgAl}_2\text{O}_4$) \\
\multicolumn{2}{c}{
\begin{tabular}{cc}
\includegraphics[scale=0.45,trim={3cm 0 3cm 0.5cm},clip]{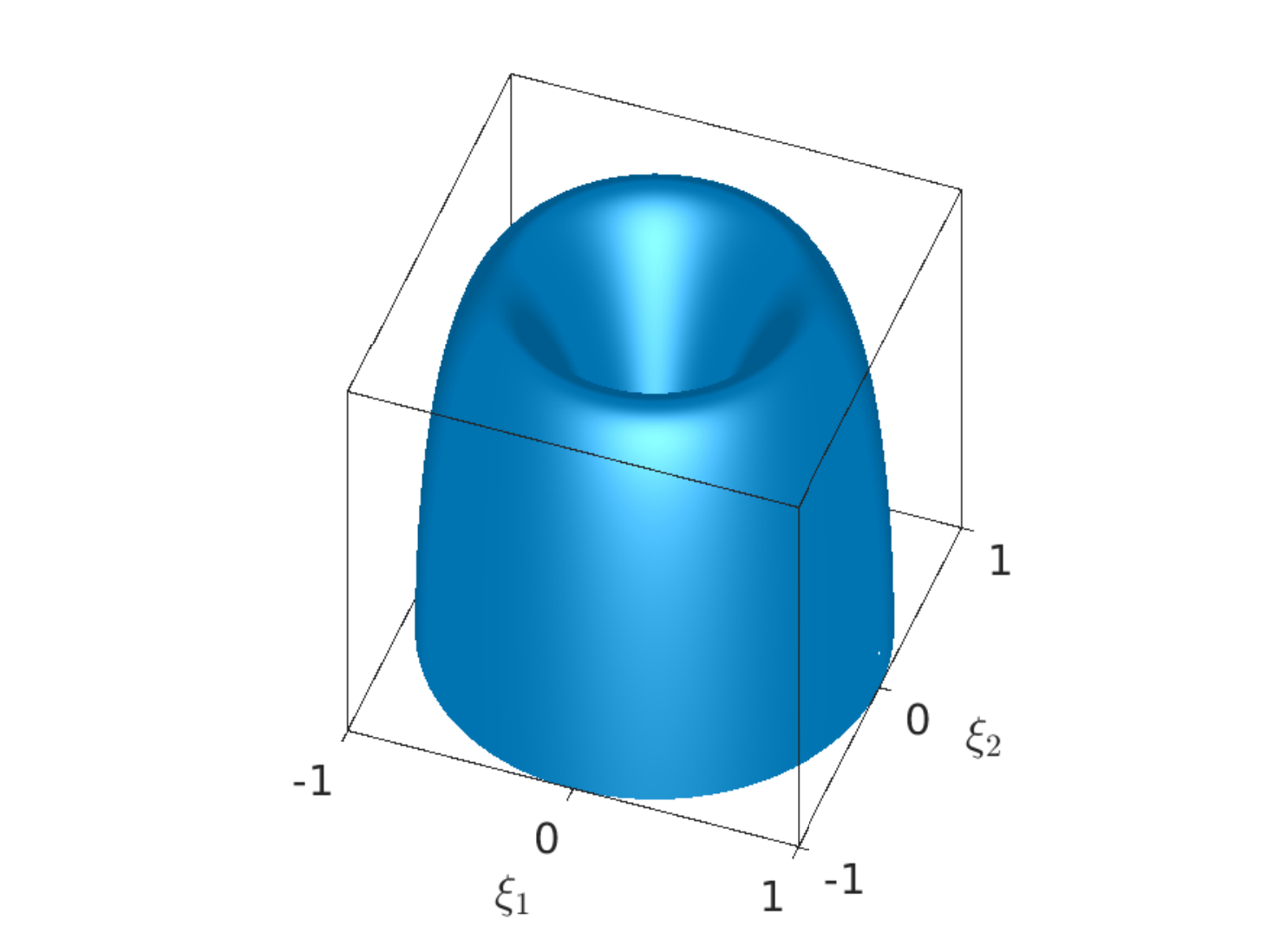} &
\includegraphics[scale=0.45,trim={3cm 0 3cm 0.5cm},clip]{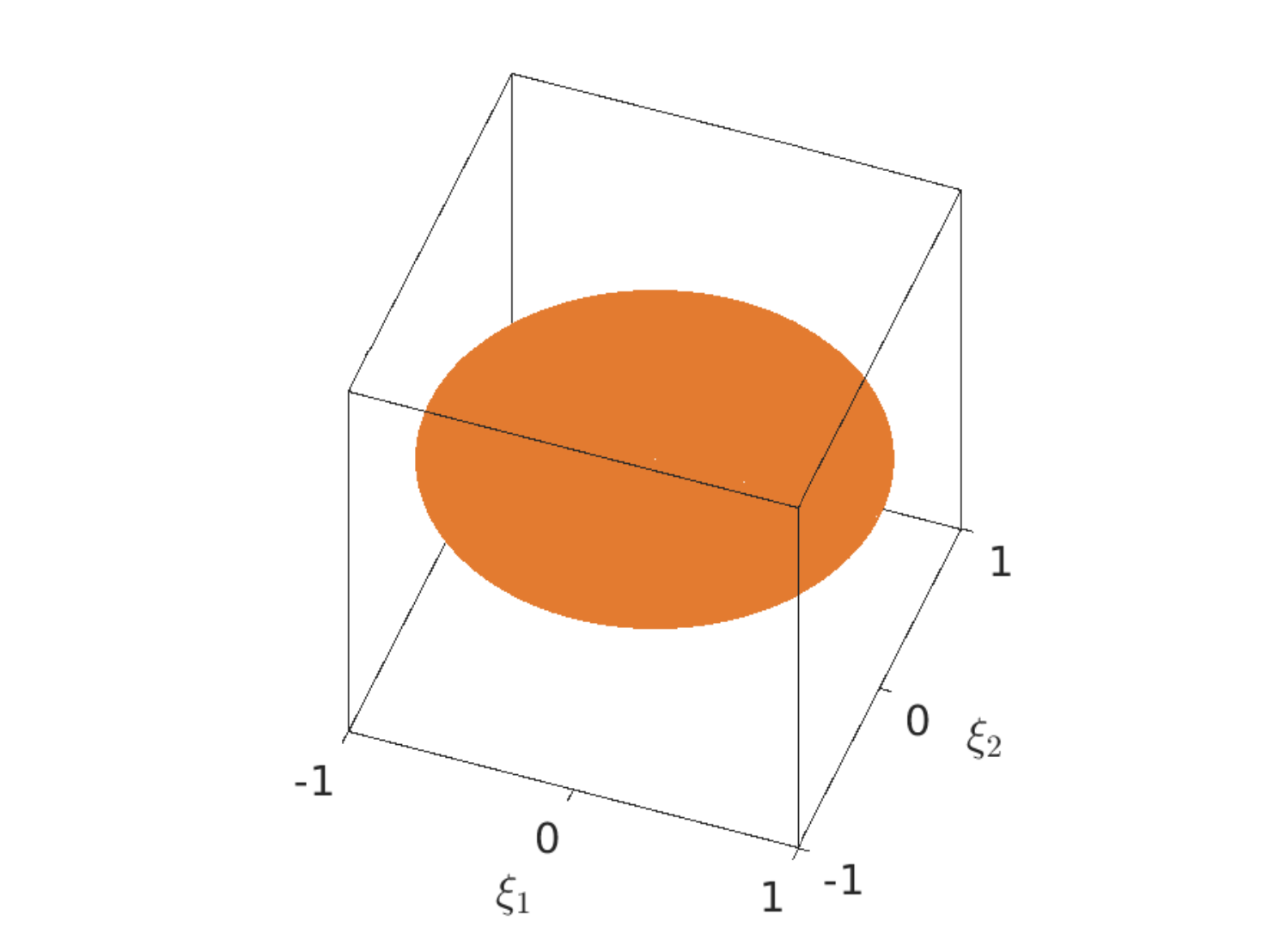} 
\end{tabular} }  \\
\multicolumn{2}{c}{Isotropic (Pyroceram 9608)} \\
\end{tabular}
\caption{Plots of $r^2 \lambda_0(\xi_1,\xi_2)$ ({\em cf.}~\eqref{eqn:PlaneStrainLambda_iForw=1overxia}) with $\omega(\|\bfxi \|) = 1$ (in blue) and $r^2 \lambda(\xi_1,\xi_2)$ ({\em cf.}~\eqref{eqn:lambdaobl}) with $\omega(r) = 1$ (in orange).}
\label{fig:planestrainmicromodulione}
\end{figure}

\textbf{Connections between the peridynamic and classical plane strain equations}: 

We now show that the peridynamic plane strain model reduces to the classical plane strain model under suitable assumptions.

\begin{proposition}\label{prop:periplanestraintaylor}
Suppose the micromodulus function $\lambda(\bfxi)$ is related to the elasticity tensor $\mathbb{C}$ through \eqref{eqn:Cperiexpression}. Given a smooth deformation, under a second-order Taylor expansion of the displacement field, the peridynamic generalized plane strain model~\eqref{eqn:genplanestrainmodel} reduces to the classical generalized plane strain model~\eqref{eqn:classeqnmot} with Cauchy's relations imposed. Moreover, under a similar Taylor expansion, the decoupled peridynamic plane strain in-plane equations of motion~\eqref{eqn:planestrainmodelinplane} and out-of-plane equation of motion~\eqref{eqn:planestrainmodeloutofplane} reduce to the classical plane strain in-plane equations of motion~\eqref{eqn:classicalmonoeqnmotion} and out-of-plane equation of motion~\eqref{eqn:classicalmonoeqnmotion3}, respectively, with Cauchy's relations imposed.
\end{proposition}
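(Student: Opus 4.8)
The plan is to exploit the fact, already established in the passage culminating in \eqref{eqn:lambdacondfullgen}, that the homogeneous three-dimensional bond-based peridynamic equation of motion \eqref{eqn:linearperieqncomponentformhomogeneous} reduces, under a second-order Taylor expansion of a smooth displacement field and the identification \eqref{eqn:Cperiexpression}, to the three-dimensional classical equation of motion \eqref{eqn:eqnofmotionclassical} with $\mathbb{C}$ a completely symmetric tensor (i.e.\ with Cauchy's relations \eqref{eqn:cauchyrelationgeneral} imposed). The key observation is that the peridynamic generalized plane strain model \eqref{eqn:genplanestrainmodel} is nothing but \eqref{eqn:perieqnmot} (which is \eqref{eqn:linearperieqncomponentformhomogeneous} in $\mathbb{R}^3$) with Assumptions (P$\varepsilon$\ref{assmp:periplanestrain1})--(P$\varepsilon$\ref{assmp:periplanestrain2}) imposed and the $\xi_3$-integration carried out (the functions $\lambda_n$ of \eqref{eqn:xi3integrationterms} being precisely the partial moments in $\xi_3$), whereas the classical generalized plane strain model \eqref{eqn:classeqnmot} is \eqref{eqn:eqnofmotionclassical} with the identical Assumptions (C$\varepsilon$\ref{assmp:classplanestrain1})--(C$\varepsilon$\ref{assmp:classplanestrain2}) imposed. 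Hence it suffices to show that the second-order Taylor expansion in $(\xi_1,\xi_2)$ commutes with the $\xi_3$-integration.

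First I would substitute the Taylor expansion \eqref{eqn:TaylorExpansionuprimeminusu} of $\bfu(\bfx+\bfxi,t)$ directly into \eqref{eqn:genplanestrainmodel}; because $\bfu$ is a function of $x$, $y$, and $t$ only (Assumption (P$\varepsilon$\ref{assmp:periplanestrain2})), only $\partial/\partial x_1$ and $\partial/\partial x_2$ derivatives appear and every bond monomial is in $\xi_1,\xi_2$. The resulting coefficients are two-dimensional integrals of $\lambda_n(\xi_1,\xi_2)$ against monomials $\xi_1^a\xi_2^b$; by Fubini and the definition \eqref{eqn:xi3integrationterms}, each such integral equals the three-dimensional integral $\int_{B_\delta^{3D}(\mathbf{0})}\lambda(\bfxi)\,\xi_1^a\xi_2^b\xi_3^n\,d\bfxi$. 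The odd-order ones (those matching \eqref{eqn:lambdacondfullgena}) vanish by the antisymmetry $\lambda(\bfxi)=\lambda(-\bfxi)$ (cf.\ \eqref{eqn:lambdasymm}), while the fourth-order ones equal $2C_{ijkl}$ for the appropriate index string by \eqref{eqn:Cperiexpression}. Collecting terms and using the minor and major symmetries \eqref{eqn:minorsymmetries}--\eqref{eqn:majorsymmetry} together with Cauchy's relations \eqref{eqn:cauchyrelationgeneral} to reconcile coefficients such as $C_{1122}$ with $C_{1212}$ (and $C_{1312}$ with $C_{1123}$, and so on), one recognizes exactly the right-hand side of \eqref{eqn:classeqnmot}. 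Equivalently, this computation is the termwise verification that $\xi_3$-integrating the Taylor-expanded \eqref{eqn:perieqnmot} reproduces the Taylor-expanded \eqref{eqn:genplanestrainmodel}; since $\xi_3$-integrating the three-dimensional reduction reproduces \eqref{eqn:classeqnmot}, the generalized plane strain statement follows.

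For the decoupled plane strain equations I would then impose Assumption (P$\varepsilon$\ref{assmp:periplanestrain3full}), i.e.\ \eqref{assmp:periplanestrain3}, so that $\lambda(\bfxi)$ is even in $\xi_3$; the integrand of $\lambda_1$ in \eqref{eqn:xi3integrationterms} is then odd in $\xi_3$, whence $\lambda_1\equiv 0$ and \eqref{eqn:genplanestrainmodel} decouples into the in-plane system \eqref{eqn:planestrainmodelinplane} and the out-of-plane equation \eqref{eqn:planestrainmodeloutofplane}. On the classical side, \eqref{eqn:Cperiexpression} forces $\mathbb{C}$ to be monoclinic: every $C_{ijkl}$ with an odd number of indices equal to $3$ is an odd-$\xi_3$ moment of an even function and hence vanishes, so \eqref{eqn:monoelastrelations} holds and \eqref{eqn:classeqnmot} decouples into \eqref{eqn:classicalmonoeqnmotion} and \eqref{eqn:classicalmonoeqnmotion3}. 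Re-running the reduction of the previous paragraph on the decoupled systems — the monoclinic structure being preserved throughout, cf.\ Proposition \ref{prop:symminheritC} — yields the claimed reduction of \eqref{eqn:planestrainmodelinplane} and \eqref{eqn:planestrainmodeloutofplane} to \eqref{eqn:classicalmonoeqnmotion} and \eqref{eqn:classicalmonoeqnmotion3}.

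The main obstacle is purely one of bookkeeping: one must carry out the substitution and index matching carefully enough to see that the vanishing of the odd-order moments kills exactly the first-order (advection-type) terms, that the surviving fourth-order moments assemble into $C_{ijkl}\,\partial^2 u_k/\partial x_j\partial x_l$ with $j,l\in\{1,2\}$, and — crucially — that the residual discrepancy between the peridynamic coefficient pattern and the classical one in \eqref{eqn:classeqnmot} is precisely absorbed by Cauchy's relations, consistent with the hypothesis that they are imposed. No new estimates are needed; the smoothness hypothesis enters only to justify the termwise Taylor expansion and the neglect of the $O(\|\bfxi\|^3)$ contributions, exactly as in the three-dimensional derivation.
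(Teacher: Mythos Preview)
Your proposal is correct and follows essentially the same route as the paper: rewrite the two-dimensional integrals of $\lambda_n$ against bond monomials as three-dimensional integrals via \eqref{eqn:xi3integrationterms}, Taylor expand in $(\xi_1,\xi_2)$, kill the first-order terms by the evenness $\lambda(-\bfxi)=\lambda(\bfxi)$, identify the surviving fourth-order moments with $C_{ijkl}$ through \eqref{eqn:Cperiexpression}, and then observe that the monoclinic assumption (P$\varepsilon$\ref{assmp:periplanestrain3full}) forces both $\lambda_1\equiv 0$ and the vanishing of every $C_{ijkl}$ with an odd number of $3$'s, yielding the decoupled classical system. The paper carries out exactly this computation, writing out the resulting equations \eqref{eqn:planestraineqnmotioncauchy} explicitly to exhibit the match with \eqref{eqn:classeqnmot} under Cauchy's relations.
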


\begin{proof}
Recalling~\eqref{eqn:xi3integrationterms}, one may write 
\begin{equation*}
\int_{B^{2D}_{\delta}(\bf0)} \lambda_n(\xi_1,\xi_2) \xi_1^l \xi_2^m d \xi_1 d \xi_2 = \int_{B^{3D}_{\delta}(\bf0)} \lambda(\bfxi) \xi_1^l \xi_2^m \xi_3^n d \bfxi,
\end{equation*}
and thus the peridynamic generalized plane strain equation of motion \eqref{eqn:genplanestrainmodel} may be expressed as 
\begin{equation}\label{eqn:planestrainpretaylorexpasion}
    \rho(x,y) \ddot{u}_i(x,y,t) = \int_{B_\delta^{3D}(\bf0)} \lambda(\bfxi) \xi_i \xi_j (u_j(x+\xi_1,y+\xi_2,t) - u_j(x,y,t)) d \bfxi + b_i(x,y,t).
\end{equation}
Performing a Taylor expansion of $u_j(x + \xi_1, y + \xi_2,t)$ in \eqref{eqn:planestrainpretaylorexpasion} about $(\bfx,t) = (x,y,t)$, we obtain
\begin{equation}\label{eqn:planestrainposttaylorexpasion}
\begin{split}
    \rho(x,y) \ddot{u}_i(x,y,t) = \int_{B_\delta^{3D}(\bf0)} \lambda(\bfxi) \xi_i \xi_j  &\left[   \xi_1 \frac{\partial u_j}{\partial x}(x,y,t) +  \xi_2  \frac{\partial u_j}{\partial y}(x,y,t)  \right. \\
&\left. + \frac{\xi_1^2}{2} \frac{\partial^2 u_j}{\partial x^2}(x,y,t) + \xi_1 \xi_2 \frac{\partial^2 u_j}{\partial x \partial y}(x,y,t) + \frac{\xi_2^2}{2}  \frac{\partial^2 u_j}{\partial y^2}(x,y,t)  + \cdots \right] d \bfxi + b_i(x,y,t).
\end{split}
\end{equation}
Assuming higher-order terms (beyond second-order) are negligible and utilizing antisymmetry under the transformation $\bfxi \rightarrow - \bfxi$ (recall $\lambda(-\bfxi) = \lambda(\bfxi)$ by~\eqref{eqn:lambdasymm}) to nullify the first-order terms in \eqref{eqn:planestrainposttaylorexpasion}, we obtain
\begin{equation}\label{eqn:planestrainposttaylorexpasion2}
\begin{split}
    \rho(x,y) \ddot{u}_i(x,y,t) = \frac{1}{2}\int_{B_\delta^{3D}(\bf0)} \lambda(\bfxi) \xi_i \xi_j  &\left[\xi_1^2 \frac{\partial^2 u_j}{\partial x^2}(x,y,t) + 2\xi_1 \xi_2 \frac{\partial^2 u_j}{\partial x \partial y}(x,y,t) + \xi_2^2  \frac{\partial^2 u_j}{\partial y^2}(x,y,t)  \right] d \bfxi + b_i(x,y,t).
\end{split}
\end{equation}
Employing the relationship between $\lambda(\bfxi)$ and $\mathbb{C}$ from \eqref{eqn:Cperiexpression}, we rewrite \eqref{eqn:planestrainposttaylorexpasion2} as 
\begin{equation}\label{eqn:planestrainposttaylorexpasion3}
\begin{split}
    \rho(x,y) \ddot{u}_i(x,y,t) = C_{11ij} \frac{\partial^2 u_j}{\partial x^2}(x,y,t) + 2 C_{ij12} \frac{\partial^2 u_j}{\partial x \partial y}(x,y,t) + C_{22ij}  \frac{\partial^2 u_j}{\partial y^2}(x,y,t) + b_i(x,y,t).
\end{split}
\end{equation}
Writing \eqref{eqn:planestrainposttaylorexpasion3} out, we obtain
\begin{subequations}\label{eqn:planestraineqnmotioncauchy}
\begin{align}
\begin{split}
\rho \ddot{u}_1 ={}& C_{1111} \frac{\partial^2 u_1}{\partial x^2} + 2 C_{1112} \frac{\partial^2 u_1}{\partial x \partial y} + C_{1122} \frac{\partial^2 u_1}{\partial y^2} + C_{1112}\frac{\partial^2 u_2}{\partial x^2} + 2C_{1122}\frac{\partial^2 u_2}{\partial x \partial y}   \\
&+ C_{2212} \frac{\partial^2 u_2}{\partial y^2}  + C_{1113} \frac{\partial^2 u_3}{\partial x^2}  + 2C_{1123} \frac{\partial^2 u_3}{\partial x \partial y}   + C_{2213} \frac{\partial^2 u_3}{\partial y^2} + b_1, \label{eqn:planestraineqnmotioncauchya}
\end{split}\\
\begin{split}
\rho \ddot{u}_2 ={}& C_{1112} \frac{\partial^2 u_1}{\partial x^2}+ 2 C_{1122} \frac{\partial^2 u_1}{\partial x \partial y} + C_{2212} \frac{\partial^2 u_1}{\partial y^2}  + C_{1122} \frac{\partial^2 u_2}{\partial x^2} + 2 C_{2212} \frac{\partial^2 u_2}{\partial x \partial y}   \\
&+ C_{2222} \frac{\partial^2 u_2}{\partial y^2} + C_{1123} \frac{\partial^2 u_3}{\partial x^2} + 2 C_{2213} \frac{\partial^2 u_3}{\partial x \partial y}  + C_{2223} \frac{\partial^2 u_3}{\partial y^2} + b_2, \label{eqn:planestraineqnmotioncauchyb}
\end{split}\\
\begin{split}
\rho \ddot{u}_3 ={}& C_{1113} \frac{\partial^2 u_1}{\partial x^2} + 2C_{1123} \frac{\partial^2 u_1}{\partial x \partial y} +  C_{2213} \frac{\partial^2 u_1}{\partial y^2} + C_{1123} \frac{\partial^2 u_2}{\partial x^2} + 2 C_{2213} \frac{\partial^2 u_2}{\partial x \partial y} \\
&+ C_{2223} \frac{\partial^2 u_2}{\partial y^2} + C_{1133} \frac{\partial^2 u_3}{\partial x^2}  + 2 C_{3312} \frac{\partial^2 u_3}{\partial x \partial y} + C_{2233} \frac{\partial^2 u_3}{\partial y^2} + b_3, \label{eqn:planestraineqnmotioncauchyc}
\end{split}
\end{align}
\end{subequations}
where we have omitted the arguments $x,y,$ and $t$ for brevity and a more direct comparison to the classical model. Noticing that \eqref{eqn:planestraineqnmotioncauchy} is exactly the classical generalized plane strain equation of motion \eqref{eqn:classeqnmot} with Cauchy's relations ({\em cf.}~\eqref{eqn:3DCauchyRelations}) imposed completes the proof of the first portion of Proposition \ref{prop:periplanestraintaylor}. 

To prove the peridynamic plane strain model converges to the classical plane strain model, first recall the in-plane and peridynamic out-of-plane plane strain equations of motion~\eqref{eqn:planestrainmodelinplane} and~\eqref{eqn:planestrainmodeloutofplane}, respectively, are obtained by imposing (P$\varepsilon$\ref{assmp:periplanestrain3full}) on the peridynamic generalized plane strain model~\eqref{eqn:genplanestrainmodel}. We thus follow the same derivation as for the generalized plane strain case until~\eqref{eqn:planestrainposttaylorexpasion2}. As before, we then employ the relationship between $\lambda(\bfxi)$ and $\mathbb{C}$ from~\eqref{eqn:Cperiexpression}. However, the additional assumption~(P$\varepsilon$\ref{assmp:periplanestrain3full}) results in the monoclinic symmetry restrictions~\eqref{eqn:monoelastrelations} being imposed on the elasticity tensor. Consequently, the peridynamic in-plane equations of motion reduce to ({\em cf}. \eqref{eqn:planestraineqnmotioncauchya} and \eqref{eqn:planestraineqnmotioncauchyb})
\begin{subequations}\label{eqn:planestraineqnmotioncauchynogena}
\begin{align}
\begin{split}
\rho \ddot{u}_1 ={}& C_{1111} \frac{\partial^2 u_1}{\partial x^2} + 2 C_{1112} \frac{\partial^2 u_1}{\partial x \partial y} + C_{1122} \frac{\partial^2 u_1}{\partial y^2} + C_{1112}\frac{\partial^2 u_2}{\partial x^2} + 2C_{1122}\frac{\partial^2 u_2}{\partial x \partial y}   \\
&+ C_{2212} \frac{\partial^2 u_2}{\partial y^2} + b_1, \label{eqn:planestraineqnmotioncauchyanogen}
\end{split}\\
\begin{split}
\rho \ddot{u}_2 ={}& C_{1112} \frac{\partial^2 u_1}{\partial x^2}+ 2 C_{1122} \frac{\partial^2 u_1}{\partial x \partial y} + C_{2212} \frac{\partial^2 u_1}{\partial y^2}  + C_{1122} \frac{\partial^2 u_2}{\partial x^2} + 2 C_{2212} \frac{\partial^2 u_2}{\partial x \partial y}   \\
&+ C_{2222} \frac{\partial^2 u_2}{\partial y^2} + b_2, \label{eqn:planestraineqnmotioncauchybnogen}
\end{split}
\end{align}
\end{subequations} 
while the peridynamic out-of-plane equation of motion reduces to ({\em cf.} \eqref{eqn:planestraineqnmotioncauchyc})

\begin{align}\label{eqn:planestraineqnmotioncauchynogen}
\rho \ddot{u}_3 = C_{1133} \frac{\partial^2 u_3}{\partial x^2}  + 2 C_{3312} \frac{\partial^2 u_3}{\partial x \partial y} + C_{2233} \frac{\partial^2 u_3}{\partial y^2} + b_3.
\end{align}
Comparing \eqref{eqn:planestraineqnmotioncauchyanogen} and \eqref{eqn:planestraineqnmotioncauchybnogen} to the classical in-plane equations of motion \eqref{eqn:classicalmonoeqnmotion1} and \eqref{eqn:classicalmonoeqnmotion2}, with Cauchy's relations imposed, and comparing \eqref{eqn:planestraineqnmotioncauchynogen} to the classical out-of-plane equation of motion \eqref{eqn:classicalmonoeqnmotion3}, with Cauchy's relations imposed, completes the proof. \qed
\end{proof}
\begin{remark}\label{remark:planestrainCauchy}
Notice in the in-plane equations of motion for classical generalized plane strain, \eqref{eqn:classeqnmota} and \eqref{eqn:classeqnmotb}, only three Cauchy's relations are relevant:
\begin{equation*}
C_{1212} = C_{1122}, \quad C_{1312} = C_{1123}, \text{ and } \; C_{2312} = C_{2213},
\end{equation*}
while in the out-of-plane equation of motion for classical generalized plane strain, \eqref{eqn:classeqnmotc}, five Cauchy's relations are relevant:
\begin{equation*}
C_{1313} = C_{1133}, \quad C_{2323} = C_{2233}, \quad C_{1312} = C_{1123}, \quad C_{2312} = C_{2213}, \text{ and } \; C_{2313} = C_{3312}.
\end{equation*}
In contrast, in the in-plane equations of motion for classical plane strain, \eqref{eqn:classicalmonoeqnmotion1} and \eqref{eqn:classicalmonoeqnmotion2}, only one Cauchy's relation is relevant:
\begin{equation*}
C_{1212} = C_{1122},
\end{equation*}
while in the out-of-plane equation of motion for classical plane strain, \eqref{eqn:classicalmonoeqnmotion3}, three Cauchy's relations are relevant:
\begin{equation*}
C_{1313} = C_{1133}, \quad C_{2323} = C_{2233}, \text{ and } \; C_{2313} = C_{3312}.
\end{equation*}
The above difference between the classical generalized plane strain equations of motion and classical plane strain equations of motion is entirely due to the elimination of two Cauchy's relations by the monoclinic symmetry assumption. It is interesting to note that the in-plane equations of motion for classical plane strain, \eqref{eqn:classicalmonoeqnmotion1} and \eqref{eqn:classicalmonoeqnmotion2}, have exactly the same relevant Cauchy's relation as the classical two-dimensional equation of motion, \eqref{eqn:eqnmotionobl}.
\end{remark}

\subsubsection{Peridynamic plane stress}\label{sec:PeridynamicPlaneStress}
 
We now focus on the peridynamic analogue of classical plane stress ({\em cf.}~Section~\ref{sec:ClassicalPlaneStress}). Similarly to the peridynamic plane strain formulation in Section~\ref{sec:PeridynamicPlaneStrain}, we begin with the anisotropic three-dimensional bond-based linear peridynamic model \eqref{eqn:linearperieqn}  and impose peridynamic analogues of the classical generalized plane stress assumptions. 

One key issue with developing a peridynamic formulation for generalized plane stress is that it is unrealistic to restrict the analysis to material points in the bulk of a body for thin plates. In particular, we must consider position-dependent neighborhoods of material points, see, e.g., Figure \ref{fig:PlaneStressVariousNeighborhoods}. Thus, we must deal with a position-aware micromodulus function $\lambda(\bfx',\bfx)$, and the three-dimensional bond-based linear peridynamic model is given by \eqref{eqn:linearperieqn}. We suppose a homogeneous material response and only consider material points far from the boundaries of the plate in the first and second dimensions. This allows us to assume only position awareness in the third dimension, i.e.,
\begin{equation}\label{eqn:zposaware}
\lambda(\bfx',\bfx) =~\lambda(x'-x,y'-y,z',z).
\end{equation} 
Furthermore, to avoid unintentionally imposing boundary conditions on \eqref{eqn:linearperieqn}, we additionally suppose $\mathcal{H}_\bfx = B_{\delta}^{\text{3D}}(\bfx)$ ({\em cf.} Footnote \ref{footnote:imposingboundary}).

\begin{center}
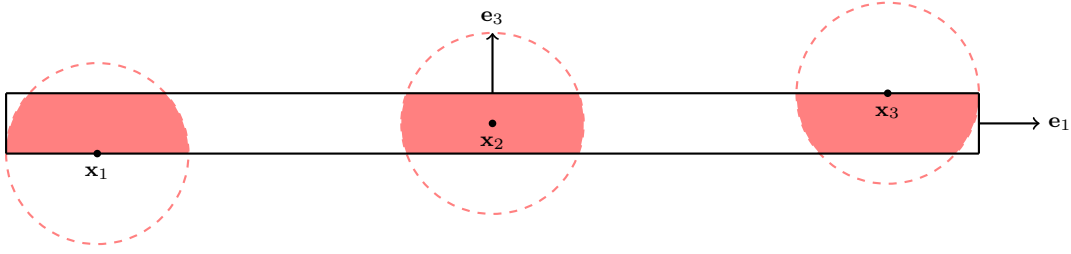
\begin{figure}
\begin{tikzpicture}[scale=0.8]

\draw[dashed,red!50,thick] (-8,-1/2) arc (180:180+360:1.5);
\draw[dashed,red!50,thick] (-1.5,0) arc (180:180+360:1.5);
\draw[dashed,red!50,thick] (5,1/2) arc (180:180+360:1.5);          
	
\begin{scope}    
\path[clip] (-8,1/2)--(8,1/2)--(8,-1/2)--(-8,-1/2)--cycle;  
\path[fill=red!50] (0,0) circle(1.5);
\path[fill=red!50] (-6.5,-1/2) circle(1.5);      
\path[fill=red!50] (6.5,1/2) circle(1.5);    
\end{scope}

\draw[-,thick] (-8,1/2)--(8,1/2);
\draw[-,thick] (-8,-1/2)--(8,-1/2);
\draw[-,thick] (-8,1/2)--(-8,-1/2);
\draw[-,thick] (8,1/2)--(8,-1/2);    
    
\node (x1) at (-6.5,-1/2)[circle,fill=black,thick,inner sep=1pt,minimum size=0.1cm,label=below:$\bfx_1$]{}; 
\node (x2) at (0,0)[circle,fill=black,thick,inner sep=1pt,minimum size=0.1cm,label=below:$\bfx_2$]{}; 
\node (x3) at (6.5,1/2)[circle,fill=black,thick,inner sep=1pt,minimum size=0.1cm,label=below:$\bfx_3$]{}; 

\draw[->,thick] (0,1/2)--(0,1.5) node[above]{$\bfe_3$};
\draw[->,thick] (8,0)--(9,0) node[right]{$\bfe_1$};      
            
\end{tikzpicture}
\caption{Various neighborhoods (in red) of material points $\bfx_1,\bfx_2,$ and $\bfx_3$ in an $xz$-cross-section of a thin plate.} \label{fig:PlaneStressVariousNeighborhoods}
\end{figure}
\end{center}

The peridynamic generalized plane stress assumptions are given as follows:
\begin{enumerate}[(P$\sigma$1)]
\item The body is a thin plate of thickness $2h \leqslant \delta$ occupying the region $-h \leqslant z \leqslant h$. \label{assump:PSs1Peri}
\item The density is constant in the third dimension: $\rho = \rho(x,y)$. \label{assump:PSs5Peri}
\item The body is subjected to a loading symmetric and parallel to the plane $z = 0$: \label{assump:PSs2Peri}
\begin{equation}
b_3(\bfx,t) = 0 \quad \text{and} \quad \bfb(x,y,z,t) = \bfb(x,y,-z,t).
\end{equation}
\item The first and second components of the initial and boundary conditions are symmetric while their third component is antisymmetric relative to the plane $z=0$. \label{assump:PSsSymmPeri} 
\item The micromodulus function $\lambda(\bfx',\bfx)$ is null when $\bfx'$ and $\bfx$ are not material points of the plate. \label{assump:PSs3Peri}
\item The average peridynamic traction $\overline{\tau_{3}(\bfx,t,\bfe_3)}$ ({\em cf.}~\eqref{eqn:peritraction}) is zero throughout the body. \label{assump:PSs4Peri}
\item The material has at least monoclinic symmetry with a plane of reflection corresponding to the plane $z=0$, i.e., the micromodulus function has the following symmetries\footnote{
Invariance of $\lambda$ with respect to $\mathbf{Ref}(\mathbf{e}_3)$ ensures
\begin{equation*}
\lambda(\xi_1,\xi_2,z',z) = \lambda(\xi_1,\xi_2,-z',-z).
\end{equation*}
Requiring invariance of $\lambda$ with respect to $-\bfI$ and then recalling 
\eqref{eqn:lambdasymm}, we find
\begin{equation*}
\lambda(\xi_1,\xi_2,-z',-z) = \lambda(-\xi_1,-\xi_2,z',z) = \lambda(\xi_1,\xi_2,z,z').
\end{equation*} 
} ({\em cf}. Definition \ref{def:generalsymm} and \eqref{eqn:generatorsmonoclinic}) : \label{assump:PSs7Peri}
\begin{equation}\label{assump:MonoclinicPlaneStress}
\lambda(\xi_1,\xi_2,z',z) = \lambda(\xi_1,\xi_2,-z',-z)  = \lambda(\xi_1,\xi_2,z,z').
\end{equation}

\item The displacement $\bfu(\bfx,t)$ is smooth in $z$ and its third component, $u_3(\bfx,t)$, is smooth in $\bfx$. \label{assump:PSs6Peri}
\end{enumerate} 

There are a variety of similarities between the peridynamic generalized plane stress assumptions and the classical generalized plane stress assumptions. Notice Assumption (P$\sigma$\ref{assump:PSs1Peri}) is Assumption (C$\sigma$\ref{assump:PSs1}) with an additional restriction on the thickness $2h$ that depends on the peridynamic horizon $\delta$. This assumption is reasonable for thin plates, since the computational expense associated with the discretization required to accommodate a horizon $\delta < 2h$ is likely impractical. This restriction could likely be removed, but it simplifies the analysis considerably.  Assumptions (P$\sigma$\ref{assump:PSs5Peri}), (P$\sigma$\ref{assump:PSs2Peri}), and (P$\sigma$\ref{assump:PSsSymmPeri}) are identical to Assumptions (C$\sigma$\ref{assump:PSs5}), (C$\sigma$\ref{assump:PSs2}), and (C$\sigma$\ref{assump:PSsSymm}), respectively. Assumption (P$\sigma$\ref{assump:PSs3Peri}) is a nonlocal analogue of Assumption (C$\sigma$\ref{assump:PSs3}) in that interactions through the top and bottom surfaces of the plate are nullified\footnote{For the problems considered here, i.e., thin plates with free surfaces, it is natural to assume all nonlocal interactions through the top and bottom surfaces of the plate are zero. In a more general setting, one could postulate a nonlocal analogue of (C$\sigma$\ref{assump:PSs3}), i.e., $\sigma_{13} = \sigma_{23} = \sigma_{33} = 0$ on the top and bottom surfaces of the plate, by employing a peridynamic stress tensor $\boldsymbol{\sigma}^{\text{peri}}$. Such an approach would involve nullifying the net forces on each material point on the top and bottom surfaces of the plate rather than nullifying each pairwise interaction.}. Assumption (P$\sigma$\ref{assump:PSs4Peri}) is a peridynamic analogue of Assumption (C$\sigma$\ref{assump:PSs4}). To see this, following \cite{SILLING2000}, under a homogeneous deformation, one may introduce a peridynamic stress tensor $\boldsymbol{\sigma}^{\text{peri}}$ so that the areal force density or peridynamic traction $\boldsymbol{\tau}(\bfx,t,\bfn) = \boldsymbol{\sigma}^{\text{peri}}(\bfx,t) \bfn$, and thus $\overline{\tau_{3}(\bfx,t,\bfe_3)} = \overline{\sigma_{33}^{\text{peri}}(\bfx,t)}$\footnote{The dependence of $\boldsymbol{\sigma}$ on $\bfx$ is introduced due to possible surface effects for points not in the bulk of the body.}. Assumptions~(P$\sigma$\ref{assump:PSs7Peri}) and (C$\sigma$\ref{assump:PSs6}) both impose monoclinic symmetry on their respective models. Finally, Assumption~(P$\sigma$\ref{assump:PSs6Peri}) imposes regularity conditions on the displacement field. These regularity  conditions are, however, significantly weaker than those in the classical theory, where $\bfu(\bfx,t)$ is assumed to be smooth in $\bfx$. 

A system satisfying all of (P$\sigma$\ref{assump:PSs1Peri})--(P$\sigma$\ref{assump:PSs6Peri}) is said to be in a state of peridynamic generalized plane stress. See Figure \ref{fig:planestress} for an illustration of a body in a state of plane stress. Similarly to the classical theory, we start by summarizing in Lemma~\ref{lem:symmplanestress} (peridynamic analogue of Lemma~\ref{lem:planestressymm}) the symmetries imposed on the displacement field $\bfu$ in a system in a state of peridynamic generalized plane stress.

\begin{lemma}\label{lem:symmplanestress}
Under Assumptions (P$\sigma$\ref{assump:PSs1Peri})--(P$\sigma$\ref{assump:PSs3Peri}) and (P$\sigma$\ref{assump:PSs7Peri}), the displacement field $\bfu$ of \eqref{eqn:linearperieqn} satisfies
\begin{align}\label{eqn:planestressmonoclinicdisplacementPeri}
u_i(x,y,-z,t) = \left\{ \begin{array}{ll}
u_i(x,y,z,t),   & i = 1,2  \\
-u_i(x,y,z,t), & i =3
\end{array}, \right.
\end{align}
i.e., the in-plane displacements, $u_1$ and $u_2$, are symmetric while the out-of-plane displacement, $u_3$, is antisymmetric relative to the plane $z = 0$.  
\end{lemma}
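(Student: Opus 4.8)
\textbf{Proof proposal for Lemma~\ref{lem:symmplanestress}.}

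The plan is to mirror the structure of the classical proof (Lemma~\ref{lem:planestressymm}, Appendix~\ref{Appendix:classplanestresssymm}) but carry it out directly on the nonlocal equation of motion~\eqref{eqn:linearperieqn}. The key idea is a reflection-uniqueness argument: I construct a candidate reflected displacement field and show it solves the same initial/boundary value problem as $\bfu$, so that uniqueness forces the two to coincide. Concretely, define $\bfv(\bfx,t)$ by $v_i(x,y,z,t) := u_i(x,y,-z,t)$ for $i=1,2$ and $v_3(x,y,z,t) := -u_3(x,y,-z,t)$; equivalently $\bfv(\bfx,t) = \mathbf{Ref}(\bfe_3)\,\bfu\big(\mathbf{Ref}(\bfe_3)\bfx,t\big)$. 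The goal is to show $\bfv$ satisfies~\eqref{eqn:linearperieqn} with the same $\rho$, $\bfb$, and the same initial and boundary data as $\bfu$; then uniqueness of solutions to the linear peridynamic Cauchy problem gives $\bfv \equiv \bfu$, which is exactly~\eqref{eqn:planestressmonoclinicdisplacementPeri}.

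First I would verify that $\bfv$ satisfies the equation of motion. Substituting $\bfv$ into the right-hand side of~\eqref{eqn:linearperieqn} and changing the integration variable $\bfx' \mapsto \mathbf{Ref}(\bfe_3)\bfx'$ (which preserves $B_\delta^{\text{3D}}(\mathbf{0})$ and has unit Jacobian), the micromodulus tensor $\lambda(\bfx',\bfx)\,\bfxi\otimes\bfxi$ transforms in a controlled way: under Assumption~(P$\sigma$\ref{assump:PSs7Peri}), namely~\eqref{assump:MonoclinicPlaneStress}, one has $\lambda\big(\mathbf{Ref}(\bfe_3)\bfx',\mathbf{Ref}(\bfe_3)\bfx\big) = \lambda(\bfx',\bfx)$, while $(\mathbf{Ref}(\bfe_3)\bfxi)\otimes(\mathbf{Ref}(\bfe_3)\bfxi) = \mathbf{Ref}(\bfe_3)(\bfxi\otimes\bfxi)\mathbf{Ref}(\bfe_3)$. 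Combining these, the integral operator conjugates through $\mathbf{Ref}(\bfe_3)$, so the force term evaluated at $\bfv$ equals $\mathbf{Ref}(\bfe_3)$ applied to the force term for $\bfu$ evaluated at $\mathbf{Ref}(\bfe_3)\bfx$. On the left-hand side, $\rho$ is independent of $z$ by~(P$\sigma$\ref{assump:PSs5Peri}), so $\rho(\bfx)\ddot{\bfv}(\bfx,t) = \mathbf{Ref}(\bfe_3)\,\rho(\mathbf{Ref}(\bfe_3)\bfx)\ddot{\bfu}(\mathbf{Ref}(\bfe_3)\bfx,t)$. Finally, by the loading symmetry~(P$\sigma$\ref{assump:PSs2Peri}), $\mathbf{Ref}(\bfe_3)\,\bfb(\mathbf{Ref}(\bfe_3)\bfx,t) = \bfb(\bfx,t)$ (the first two components are even in $z$ and the third vanishes). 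Putting the three pieces together shows $\bfv$ satisfies~\eqref{eqn:linearperieqn} with the same $\rho$ and $\bfb$.

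Then I would check the data: by Assumption~(P$\sigma$\ref{assump:PSs3Peri}), the micromodulus function vanishes whenever either argument leaves the plate, so the neighborhood structure is itself invariant under $\mathbf{Ref}(\bfe_3)$ and no spurious boundary terms are introduced in the third dimension; and by Assumption~(P$\sigma$\ref{assump:PSsSymmPeri}), the prescribed initial conditions (and lateral boundary conditions) have their first two components even and third component odd in $z$, which is exactly the statement that they are invariant under $\bfw \mapsto \mathbf{Ref}(\bfe_3)\bfw(\mathbf{Ref}(\bfe_3)\,\cdot\,)$. Hence $\bfv$ and $\bfu$ solve the same linearized peridynamic Cauchy problem, and by uniqueness (the right-hand side of~\eqref{eqn:linearperieqn} is a bounded linear operator of $\bfu$, so standard ODE-in-Banach-space uniqueness applies) we conclude $\bfv = \bfu$.

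I expect the main obstacle to be bookkeeping rather than conceptual: making precise that the change of variables together with the three symmetry properties in~\eqref{assump:MonoclinicPlaneStress} genuinely produces an \emph{identity} of integral operators (in particular tracking how the tensor structure $\xi_i\xi_k$ interacts with the sign flips on the third components of $\bfxi$ and of $\bfu(\bfx'+\bfxi)-\bfu(\bfx)$), and confirming that Assumption~(P$\sigma$\ref{assump:PSs3Peri}) correctly guarantees that the effective domain of interaction is reflection-symmetric so that restricting to $\mathcal{H}_\bfx = B_\delta^{\text{3D}}(\bfx)\cap(\text{plate})$ does not spoil the argument. These are routine but must be done carefully; I would relegate the detailed verification to Appendix~\ref{Appendix:classplanestresssymm}, alongside the analogous classical computation, since the structure is identical.
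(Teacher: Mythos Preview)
Your proposal is correct and follows essentially the same reflection-uniqueness strategy as the paper's proof: define the reflected field $\bfv(\bfx,t) = \mathbf{Ref}(\bfe_3)\,\bfu(\mathbf{Ref}(\bfe_3)\bfx,t)$, verify it solves the same linearized peridynamic problem using the symmetry assumptions on $\lambda$, $\rho$, $\bfb$, and the data, and conclude by uniqueness. The only difference is presentational---the paper carries out the verification in explicit component form (writing out the nine $\xi_i\xi_j$ terms and the substitutions $z\to -z$, $z'\to -z'$ line by line), whereas you package the same computation via the conjugation identity $(\mathbf{Ref}(\bfe_3)\bfxi)\otimes(\mathbf{Ref}(\bfe_3)\bfxi) = \mathbf{Ref}(\bfe_3)(\bfxi\otimes\bfxi)\mathbf{Ref}(\bfe_3)$.
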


\begin{proof}
See Appendix \ref{appendix:symmplanestress}. \qed 
\end{proof}

In order to derive the peridynamic generalized plane stress model, we perform the following steps, which are analogous to those presented in the classical theory ({\em cf}. Remark \ref{rmk:planestressteps}):
\begin{enumerate}[Step 1:]
\item Take the average of the equation of motion \eqref{eqn:linearperieqn} over the thickness of the plate.
\item Utilize Assumption (P$\sigma$\ref{assump:PSs3Peri}) to eliminate interactions through the top and bottom surfaces of the plate.
\item Employ Assumption (P$\sigma$\ref{assump:PSs4Peri}) to replace expressions in $u_3$ by expressions in $u_1$ and $u_2$.
\item Replace expressions in $u_1$ and $u_2$ with expressions in $\overline{u}_1$ and $\overline{u}_2$.
\item Integrate in $z'$ and $z$ to remove the third dimension dependence from the equation of motion. 
\end{enumerate}
In the derivation, Lemma \ref{lem:symmplanestress} is utilized to eliminate various terms.

We now follow steps 1--5 in order to derive the peridynamic generalized plane stress model. We utilize the shorthand notation introduced in \eqref{eqn:averagedefinition}. The following steps involve various lemmas whose proofs have been moved to Appendix~\ref{Sec:AppPeriPlaneStressLemmas} in order to provide clarity to the derivation of the peridynamic generalized plane stress model.


\textbf{Step 1}: We start by taking the average of the peridynamic equation of motion~\eqref{eqn:linearperieqn} over the thickness of the plate (recall $\rho$ is constant in $z$ by (P$\sigma$\ref{assump:PSs5Peri})):
\begin{equation}\label{eqn:perimotionpretraction}
\begin{split}
\rho(x,y) \ddot{\overline{u}}_i(x,y,t) =& \frac{1}{2h} \int_{-h}^h \int_{B_{\delta}^{3D}(\bfx)} \lambda(\bfx',\bfx) \xi_i \xi_j (u_j(\bfx',t) - u_j(\bfx,t)) d \bfx' d z + \overline{b}_i(x,y,t).
\end{split}
\end{equation}


\textbf{Step 2}: We impose Assumptions  (P$\sigma$\ref{assump:PSs1Peri}) and (P$\sigma$\ref{assump:PSs3Peri}) on \eqref{eqn:perimotionpretraction}. Since $\lambda(\bfx,\bfx') = 0$ when $\bfx'$ is not a material point of the plate and $B_\delta^{3D}(\bfx)$ intersects the top and bottom surfaces of the plate for any material point $\bfx$ of the plate, we may restrict the region of integration for $z'$ to $[-h,h]$:
\begin{equation}\label{eqn:PeriPlaneStressStep1Eqnpre}
\begin{split}
\rho(x,y) \ddot{\overline{u}}_i(x,y,t) =& \frac{1}{2h} \int_{-h}^h \int_{-h}^h \int_{B^{2D}_r(x,y)} \lambda(\bfx',\bfx) \xi_i \xi_j (u_j(\bfx',t) - u_j(\bfx,t)) d \bfx' d z + \overline{b}_i (x,y,t),
\end{split}
\end{equation}
where $r = \sqrt{\delta^2 - (z'-z)^2}$. Furthermore, due to \eqref{def:Hx} we may suppose (potentially by multiplying by a characteristic function)
\begin{equation}\label{eqn:lambda=0outsideball}
\lambda(\bfx',\bfx) = 0, \quad \| \bfx' - \bfx \| \geqslant \delta,
\end{equation}
and extend the region of integration so that 
\begin{equation}\label{eqn:PeriPlaneStressStep1Eqn}
\begin{split}
\rho(x,y) \ddot{\overline{u}}_i(x,y,t) =& \frac{1}{2h} \int_{-h}^h \int_{-h}^h \int_{B^{2D}_\delta(x,y)} \lambda(\bfx',\bfx) \xi_i \xi_j (u_j(\bfx',t) - u_j(\bfx,t)) d \bfx' d z + \overline{b}_i (x,y,t).
\end{split}
\end{equation}


\textbf{Step 3}: The goal of this step is to replace the term
\begin{equation}\label{eqn:planestressstep3eqn1}
\begin{split}
\frac{1}{2h} \int_{-h}^h \int_{-h}^h \int_{B^{2D}_\delta(x,y)} \lambda(\bfx',\bfx) \xi_i \xi_3 (u_3(\bfx',t) - u_3(\bfx,t)) d \bfx' d z
\end{split}
\end{equation}
in the right-hand side of \eqref{eqn:PeriPlaneStressStep1Eqn} by expressions in $u_1$ and $u_2$. This is facilitated by Assumption~(P$\sigma$\ref{assump:PSs4Peri}). In order to employ (P$\sigma$\ref{assump:PSs4Peri}), following \cite{SILLING2000}, 
we first define the peridynamic traction $\boldsymbol{\tau}$ at a material point~$\bfx$ in the direction of $\bfe_3$ as  

\begin{align}\label{eqn:peritraction}
\boldsymbol{\tau}(\bfx,t,\bfe_3) &:= \int_0^\delta \int_{\mathcal{B}_\bfx^+(\bfx-s \bfe_3)} \bff(\bfu(\bfx',t)-\bfu(\bfx-s \bfe_3,t),\bfx',(\bfx-s \bfe_3)) d \bfx' ds, 
\end{align}
where
\begin{align}
&\mathcal{B}_\bfx^+(\bfx-s\bfe_3) := \left\{\bfx' \in B_{\delta}(\bfx-s \bfe_3) : z'>z \right\}
\end{align}
and $\bff$ is the pairwise force function. In this work, we are only concerned with linear bond-based peridynamic models, and therefore $\bff$ is given by \eqref{eqn:generalBBpairwiseforceformfinal}. In Figure~\ref{fig:PeridynamicStress}, we present an illustration of the region $\mathcal{B}_\bfx^+(\bfx-s \bfe_3)$.

\begin{figure}
\begin{center}
\begin{tikzpicture}[scale=1.8]
\begin{scope}
\path[clip] (2,2.5) -- (-2,2.5) -- (-1.902113,0.618034) arc (180:360:3.80422606518/2 and 0.4) --cycle;  
  \shade[ball color = gray!80, opacity = 0.5] (0,0) circle (2cm);
  \draw (0,0) circle (2cm);
\end{scope}
\draw (-1.902113,0.618034) arc (180:360:3.80422606518/2 and 0.4);
\draw[dashed] (1.902113,0.618034) arc (0:180:3.80422606518/2 and 0.4);

\node (bfx) at (0,0.65)[circle,fill=black,thick,inner sep=1pt,minimum size=0.1cm,label=right:$\bfx$]{};
\node (bfxsbfe) at (0,-0.5)[circle,fill=black,thick,inner sep=1pt,minimum size=0.1cm,label=below:$\bfx-s \bfe_3$]{};
\node (delta) at (0.5,-0.25)[label=$\delta$]{};
\node (R+) at (0,2)[label=above:$\mathcal{B}_\bfx^+(\bfx-s \bfe_3)$]{};
\draw[dashed] (0,-0.5) -- (0,0.65);
\draw[dashed] (0,-0.5) -- (1.902113,0.618034);
\end{tikzpicture}
\end{center}

\caption{Illustration of region $\mathcal{B}_\bfx^+(\bfx-s\bfe_3)$.} \label{fig:PeridynamicStress}
\end{figure}

\begin{remark}
As explained above, under a homogeneous deformation, one may introduce a peridynamic stress tensor $\boldsymbol{\sigma}^{\text{peri}}$ such that $\bftau = \boldsymbol{\sigma}^{\text{peri}} \bfn$. Then $\sigma_{i3}^{\text{peri}} = \tau_i(\bfx,t,\bfe_3)$. By Assumption (P$\sigma$\ref{assump:PSs3Peri}), we immediately find $\lambda(\bfx',\bfx-s\bfe_3) = 0$ for $z=\pm h$ and consequently $\sigma_{13}^{\text{peri}} = \sigma_{23}^{\text{peri}} = \sigma_{33}^{\text{peri}} = 0$ for $z=\pm h$, just as in the classical assumption (C$\sigma$\ref{assump:PSs3}). 
\end{remark}

In order to replace the term in \eqref{eqn:planestressstep3eqn1}, we introduce two lemmas. The first lemma, Lemma \ref{lemma:zeroavgtraction}, expresses the average peridynamic traction $\overline{\tau_3(\bfx,t,\bfe_3)}$ in a form more readily comparable to~\eqref{eqn:planestressstep3eqn1}, a necessary step to prove the second lemma, Lemma \ref{lem:PlaneStrainu3SubEstimates}. Lemma \ref{lem:PlaneStrainu3SubEstimates} provides an approximation which allows us to replace \eqref{eqn:planestressstep3eqn1} with an expression in $u_1$ and $u_2$, thus decoupling the in-plane displacements from the out-of-plane displacement in \eqref{eqn:PeriPlaneStressStep1Eqn}. 

\begin{lemma}\label{lemma:zeroavgtraction}
Under Assumptions (P$\sigma$\ref{assump:PSs1Peri})--(P$\sigma$\ref{assump:PSs3Peri}) and (P$\sigma$\ref{assump:PSs7Peri}), the average of the peridynamic traction $\tau_3(\bfx,t,\bfe_3)$, with the pairwise force function $\bff$ given by~\eqref{eqn:generalBBpairwiseforceformfinal}, over the thickness of the plate satisfies
\begin{equation}\label{eqn:zeroavgtraction}
\begin{split}
\overline{\tau_3(\bfx,t,\bfe_3)} = \frac{1}{4h} \int_{-h}^h \int_{-h}^h \int_{B^{2D}_{\delta}(x,y)} \lambda(\bfx',\bfx) \xi_3^2 \xi_j (u_j(\bfx',t)-u_j(\bfx,t)) d \bfx' d z.
\end{split}
\end{equation}
\end{lemma}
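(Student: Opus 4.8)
The plan is to start from the definition of the peridynamic traction $\boldsymbol{\tau}(\bfx,t,\bfe_3)$ in \eqref{eqn:peritraction}, substitute the linear bond-based pairwise force function \eqref{eqn:generalBBpairwiseforceformfinal}, and then average over the thickness of the plate. Writing out the third component, we have
\[
\tau_3(\bfx,t,\bfe_3) = \int_0^\delta \int_{\mathcal{B}_\bfx^+(\bfx - s\bfe_3)} \lambda(\bfx',\bfx - s\bfe_3)\,\xi_3^{(s)} \xi_j^{(s)} \bigl(u_j(\bfx',t) - u_j(\bfx - s\bfe_3,t)\bigr)\, d\bfx'\, ds,
\]
where $\xi^{(s)} := \bfx' - (\bfx - s\bfe_3)$; in particular $\xi_3^{(s)} = z' - z + s$. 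First I would average this expression over $z \in [-h,h]$ (legitimate since $\rho$ is constant in $z$), producing a triple integral in $s$, $z$, and $\bfx'$. Then I would make the change of variables $\tilde{z} := z - s$ (for fixed $z'$), which shifts the base point of the ball back to lying in the midplane region and converts the $s$-integral plus the $\mathcal{B}_\bfx^+$ restriction into an integral over a full two-dimensional disk $B^{2D}_\delta(x,y)$ together with a one-dimensional integral that, after switching the order of integration, collapses a factor of the form $\int \xi_3 \, ds$ into $\tfrac{1}{2}\xi_3^2$. This is the standard Silling-type computation relating peridynamic traction to a bulk pairwise-interaction integral, adapted to the plate geometry.

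The key steps, in order: (1) plug \eqref{eqn:generalBBpairwiseforceformfinal} into \eqref{eqn:peritraction} and extract the third component; (2) average over $z\in[-h,h]$ and use Fubini to exchange the $s$-, $z$-, and $\bfx'$-integrations; (3) invoke Assumptions (P$\sigma$\ref{assump:PSs1Peri})--(P$\sigma$\ref{assump:PSs3Peri}) — the thickness bound $2h\leqslant\delta$ and the nullification of $\lambda$ for points outside the plate — to replace the spherical-cap region of integration by the cylinder $[-h,h]^2 \times B^{2D}_\delta(x,y)$, exactly as in the derivation of \eqref{eqn:PeriPlaneStressStep1Eqn}; (4) perform the geometric change of variables in the vertical coordinates so that the combined $s$- and $z$-integration of the integrand (which carries one explicit factor $\xi_3$ coming from the traction direction and one from $\xi_i\xi_j$ with $i$ ranging over all three components) produces the factor $\tfrac{1}{2}\xi_3^2$, yielding the $\tfrac14$ prefactor in \eqref{eqn:zeroavgtraction} after combining with the $\tfrac{1}{2h}$ from the average; (5) use the monoclinic symmetry \eqref{assump:MonoclinicPlaneStress} and the displacement symmetry from Lemma~\ref{lem:symmplanestress} to confirm that the resulting expression is well defined and that no cross terms involving the antisymmetric part of $u_3$ spoil the identity, and to absorb the equilibrium term $\bff(\mathbf 0,\cdot,\cdot)=\mathbf 0$.

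The main obstacle I expect is step (4): carefully tracking the region of integration under the change of variables $\tilde z = z-s$. The set $\{(s,z,z') : 0\le s\le \delta,\ -h\le z\le h,\ z'>z-s,\ \xi^{(s)}\in B_\delta\}$ must be re-described, after integrating out the auxiliary variable, as the product region appearing in \eqref{eqn:zeroavgtraction}; one has to verify that the upper and lower limits of the inner vertical integral match up so that $\int_{z-s \text{ range}} (z'-z+s)\,ds$ really does integrate to $\tfrac12(z'-z_{\text{new}})^2 = \tfrac12 \xi_3^2$ with the correct bounds $z,z'\in[-h,h]$, using $2h\le\delta$ to guarantee that the ball always contains the relevant vertical slab. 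This is bookkeeping rather than deep mathematics, but it is the place where sign errors, factor-of-two errors, or an incorrect region are most likely to occur; the symmetry assumptions (P$\sigma$\ref{assump:PSs7Peri}) and Lemma~\ref{lem:symmplanestress} are what make the bookkeeping close cleanly. The remaining steps are routine, and since this lemma's proof is deferred to Appendix~\ref{Sec:AppPeriPlaneStressLemmas}, the full details belong there.
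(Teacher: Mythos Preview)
Your plan follows essentially the same route as the paper's proof: substitute $\hat z = z-s$, restrict the vertical integrations to $[-h,h]$ via (P$\sigma$\ref{assump:PSs1Peri})--(P$\sigma$\ref{assump:PSs3Peri}), and reorder so that the redundant vertical variable integrates out to produce an extra factor of $\xi_3 = z'-\hat z$, which combines with the $\xi_3$ already present in $f_3$ to give $\xi_3^2$.

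However, your step (4) misidentifies where the factor $\tfrac{1}{2}$ comes from. Integrating ``$\int \xi_3 \, ds$ into $\tfrac{1}{2}\xi_3^2$'' is not what happens: after the change of variable and reordering, the dummy $z$-integral $\int_{\hat z}^{z'} dz$ yields $(z'-\hat z) = \xi_3$, not $\tfrac12\xi_3^2$. At that stage the $z'$-integration still runs only over the half-range $[\hat z, h]$, reflecting the one-sided condition $z'>z$ built into $\mathcal{B}_\bfx^+$. The $\tfrac12$ and the extension of $z'$ to the full range $[-h,h]$ come precisely from the symmetry you relegate to step (5): the integrand $\lambda(\bfx',\hat{\bfx})\,\xi_3^2\,\xi_j\bigl(u_j(\bfx',t)-u_j(\hat{\bfx},t)\bigr)$ is invariant under $(z',\hat z)\mapsto(-z',-\hat z)$ by \eqref{assump:MonoclinicPlaneStress} together with Lemma~\ref{lem:symmplanestress}, so the integral over the triangle $\{z'\ge \hat z\}$ equals half the integral over the square $[-h,h]^2$. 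This symmetrization is not a well-definedness check but the very mechanism that converts $\tfrac{1}{2h}\int_{-h}^h\int_{\hat z}^h$ into $\tfrac{1}{4h}\int_{-h}^h\int_{-h}^h$. If you attempt the bookkeeping as you describe it, you will arrive at the half-range integral with prefactor $\tfrac{1}{2h}$ and no obvious source for the missing $\tfrac12$; promoting the symmetry argument from step (5) into step (4) closes the gap.
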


\begin{proof}
See Appendix \ref{lemma:zeroavgtractionapp}. \qed
\end{proof}

\begin{lemma}\label{lem:PlaneStrainu3SubEstimates}
For $i=1$ or $2$, under Assumptions (P$\sigma$\ref{assump:PSs1Peri})--(P$\sigma$\ref{assump:PSs6Peri}), the following approximation holds for second-order Taylor expansions of the displacement $u_3$ about $(x,y,0,t)$:

\begin{equation}\label{eqn:PlaneStrainu3Subeqn2}
\int_{-h}^h \int_{-h}^h \int_{B^{2D}_\delta(x,y)} \lambda(\bfx',\bfx) \xi_i \xi_3 (u_3(\bfx',t) - u_3(\bfx,t)) d \bfx' d z \approx -\frac{1}{2} \int_{-h}^h \int_{-h}^h \int_{B^{2D}_\delta(x,y)} \lambda(\bfx',\bfx) \xi_i \xi_3^2 A(x',y',t) d \bfx' d z
\end{equation}
where
\begin{equation}\label{eqn:PlaneStressAtermPostSub}
A(x',y',t) := \frac{ \int_{-h}^h \int_{-h}^h \int_{B^{2D}_{\delta}(x',y')} \lambda(\bfx'',\bfx') \zeta^2_3 \left[ \zeta_1 (u_1(\bfx'',t)- u_1(\bfx',t)) + \zeta_2 (u_2(\bfx'',t)- u_2(\bfx',t)) \right] d \bfx'' d z'}{\int_{-h}^h \int_{-h}^h \int_{B^{2D}_{\delta}(x',y')} \lambda(\bfx'',\bfx') \zeta^4_3 d \bfx'' dz'}.
\end{equation}
Here, we defined $\bfzeta := \bfx'' - \bfx'$ to avoid confusion with the terms in~\eqref{eqn:PlaneStrainu3Subeqn2}. 

\end{lemma}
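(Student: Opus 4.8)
The plan is to combine the symmetry structure of the displacement field (Lemma~\ref{lem:symmplanestress}) with the zero-average-traction assumption (P$\sigma$\ref{assump:PSs4Peri}), expressed through Lemma~\ref{lemma:zeroavgtraction}, and then Taylor-expand $u_3$ in the $z$-variable to leading order. First I would note that by Lemma~\ref{lem:symmplanestress}, $u_3(x,y,\cdot,t)$ is odd about $z=0$, so $u_3(x,y,0,t)=0$ and $\frac{\partial^2 u_3}{\partial z^2}(x,y,0,t)=0$; hence a second-order Taylor expansion about $(x,y,0,t)$ gives, for a point $\bfx'=(x',y',z')$,
\[
u_3(\bfx',t) = z'\,\frac{\partial u_3}{\partial z}(x',y',0,t) + \text{(higher order)},
\]
and similarly for $\bfx=(x,y,z)$ with $z'$ replaced by $z$. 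Therefore $u_3(\bfx',t)-u_3(\bfx,t)$ contains the term $z'\partial_z u_3(x',y',0,t) - z\,\partial_z u_3(x,y,0,t)$ plus the in-plane variation of $\partial_z u_3$. I would insert this into the left-hand side of \eqref{eqn:PlaneStrainu3Subeqn2} and track which terms survive the $z$- and $z'$-integrations using the symmetry \eqref{assump:MonoclinicPlaneStress} of $\lambda$: since $\xi_i$ ($i=1,2$) is independent of $z,z'$ and $\lambda$ is even under $(z',z)\mapsto(-z',-z)$, the integrand paired with $\xi_3 = z'-z$ picks out the odd-in-$(z',z)$ part of the remaining factors.

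The key step is to recognize that $\partial_z u_3$ at the mid-plane is itself determined (approximately) by the in-plane displacements via the zero-traction condition. Concretely, Assumption~(P$\sigma$\ref{assump:PSs4Peri}) together with Lemma~\ref{lemma:zeroavgtraction} says
\[
0 = \int_{-h}^h \int_{-h}^h \int_{B^{2D}_{\delta}(x,y)} \lambda(\bfx',\bfx)\,\xi_3^2\, \xi_j (u_j(\bfx',t)-u_j(\bfx,t))\, d\bfx'\, dz.
\]
Splitting the $j$-sum into the in-plane indices $j=1,2$ and the out-of-plane index $j=3$, and again Taylor-expanding $u_3$ in $z$ to first order (so that $u_3(\bfx',t)-u_3(\bfx,t)\approx z'\partial_z u_3(x',y',0,t) - z\,\partial_z u_3(x,y,0,t)$, with the in-plane $x,y$-dependence of $\partial_z u_3$ absorbed into the error), I would solve this identity for a localized value of $\partial_z u_3$ at $(x,y,0,t)$. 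Because the $\xi_3^2\xi_3 = \xi_3^3 = (z'-z)^3$ weight against an even-in-$(z',z)$ kernel isolates the relevant combination, and the in-plane contribution is exactly the numerator of \eqref{eqn:PlaneStressAtermPostSub}, one obtains $\partial_z u_3(x,y,0,t) \approx -A(x,y,t)/2$ where the normalizing denominator is $\int\!\!\int\!\!\int \lambda\, \zeta_3^4\, d\bfx''dz'$ — this is the point at which the function $A$ in \eqref{eqn:PlaneStressAtermPostSub} is identified. Substituting this expression for $\partial_z u_3$ (evaluated at the appropriate point $(x',y',0,t)$ of the outer integral, since $\xi_i\xi_3 u_3(\bfx',t)$ is the dominant surviving term after the $(z',z)$-parity analysis) back into the left-hand side of \eqref{eqn:PlaneStrainu3Subeqn2} yields the right-hand side, modulo higher-order Taylor remainders.

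The main obstacle, which I expect to be the most delicate part of the write-up, is the careful bookkeeping of which Taylor terms and which $z,z'$-parity combinations actually survive, and the justification that the neglected terms are genuinely higher order — in particular, that the in-plane variation of $\partial_z u_3$ (the difference between evaluating it at $(x,y,0,t)$ versus $(x',y',0,t)$) and the $O(z^2)$, $O(z'^2)$ Taylor remainders contribute only to the approximation error rather than to the displayed leading term. This requires using Assumption~(P$\sigma$\ref{assump:PSs6Peri}) (smoothness of $u_3$ in $\bfx$ and of $\bfu$ in $z$) together with the thinness constraint $2h\leqslant\delta$ from (P$\sigma$\ref{assump:PSs1Peri}) to bound the error terms, and invoking the symmetry \eqref{assump:MonoclinicPlaneStress} repeatedly to discard odd-parity integrals. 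Since the statement is itself phrased as an approximation ("$\approx$") for second-order Taylor expansions, I would present these error bounds at the level of order-of-magnitude estimates rather than with explicit constants, consistent with the style of the surrounding derivation; the details of the error analysis are deferred to Appendix~\ref{Sec:AppPeriPlaneStressLemmas} as indicated in the text.
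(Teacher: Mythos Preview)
Your approach is essentially the same as the paper's: Taylor-expand $u_3$ in the thickness variable about the mid-plane, exploit the parity structure coming from Lemma~\ref{lem:symmplanestress} and the monoclinic symmetry \eqref{assump:MonoclinicPlaneStress} to kill terms, use the zero-average-traction condition via Lemma~\ref{lemma:zeroavgtraction} to solve for $\partial_z u_3(x,y,0,t)$ in terms of the in-plane displacements, and substitute back.

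Two small points of bookkeeping you should correct. First, the identification is $\partial_z u_3(x,y,0,t) \approx -A(x,y,t)$, not $-A(x,y,t)/2$; the factor $\tfrac{1}{2}$ in \eqref{eqn:PlaneStrainu3Subeqn2} does not come from the traction identity but from a later step you only gesture at. Second, that step is a symmetrization trick you should make explicit: after the parity reductions the surviving term on the left-hand side is $\int \lambda\,\xi_i\,\xi_3\, z'\,\partial_z u_3(x',y',0,t)$, and to turn the weight $\xi_3 z'$ into $\tfrac{1}{2}\xi_3^2$ one interchanges the dummy variables $z\leftrightarrow z'$ (using $\lambda(\xi_1,\xi_2,z',z)=\lambda(\xi_1,\xi_2,z,z')$), observes the integral equals its negative with $z'$ replaced by $z$, sums the two representations, and divides by two. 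The paper also uses the additional antisymmetry in $(\xi_1,\xi_2)$ (from \eqref{eqn:lambdasymm} combined with \eqref{assump:MonoclinicPlaneStress}) to kill the $z\,\partial_z u_3(x,y,0,t)$ term before this symmetrization; you mention the $(z',z)$-parity but should flag the $(\xi_1,\xi_2)$-parity separately, since it is what removes the term evaluated at the base point rather than at $(x',y')$.
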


\begin{proof}
See Appendix \ref{lem:PlaneStrainu3SubEstimatesapp}. \qed
\end{proof}
Utilizing Lemma \ref{lem:PlaneStrainu3SubEstimates}, we may substitute \eqref{eqn:PlaneStrainu3Subeqn2} into \eqref{eqn:PeriPlaneStressStep1Eqn} to obtain our decoupled model (for $i=1,2$):

\begin{equation}\label{eqn:PlaneStressSubu3PreSub3}
\begin{split}
\rho(x,y) \ddot{\overline{u}}_i(x,y,t) \approx& \frac{1}{2h} \int_{-h}^h \int_{-h}^h \int_{B^{2D}_\delta(x,y)} \lambda(\bfx',\bfx) \xi_i \left[ \xi_1 (u_1(\bfx',t) - u_1(\bfx,t)) + \xi_2 (u_2(\bfx',t) - u_2(\bfx,t)) \right. \\
&\hspace*{1.9in} \left.- \frac{1}{2} \xi_3^2 A(x',y',t)\right] d \bfx' d z + \overline{b}_i(x,y,t).  
\end{split}
\end{equation}


\textbf{Step 4}:
In this step, we remove the dependence of $u_1$ and $u_2$ in~\eqref{eqn:PlaneStressSubu3PreSub3} on the third dimension in order to obtain expressions in $\overline{u}_1$ and $\overline{u}_2$. In classical generalized plane stress, $u_1$ and $u_2$ are immediately replaced by expressions in $\overline{u}_1$ and $\overline{u}_2$ by taking the average of the equation of motion ({\em cf}. Step 1 from Remark~\ref{rmk:planestressteps}). However, in peridynamics we cannot simply integrate over the third dimension on the right-hand side of \eqref{eqn:PlaneStressSubu3PreSub3} to directly obtain expressions in $\overline{u}_1$ and $\overline{u}_2$. This is due to the presence of the micromodulus function $\lambda$ and the third component of the bond, $\xi_3$, in the integrand in~\eqref{eqn:PlaneStressSubu3PreSub3}, which would result in weighted averages of $u_1$ and $u_2$ instead. To overcome this obstacle, we introduce Lemma \ref{lemma:getubar} below.

\begin{lemma}\label{lemma:getubar}
Under Assumptions (P$\sigma$\ref{assump:PSs1Peri})--(P$\sigma$\ref{assump:PSs3Peri}) and (P$\sigma$\ref{assump:PSs7Peri})--(P$\sigma$\ref{assump:PSs6Peri}), we have for $i=1$ or $2$:

\begin{equation}\label{eqn:getu}
u_i(\bfx,t) = \overline{u}_i(x,y,t) +O(h^2).
\end{equation}

\end{lemma}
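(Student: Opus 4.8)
The plan is to exploit Assumption~(P$\sigma$\ref{assump:PSs6Peri}), which grants that $\bfu(\bfx,t)$ is smooth in the third variable $z$, together with the symmetry of the displacement field obtained in Lemma~\ref{lem:symmplanestress}. First I would fix $i \in \{1,2\}$ and $(x,y,t)$, and perform a Taylor expansion of $u_i(x,y,z,t)$ in $z$ about $z=0$:
\begin{equation*}
u_i(x,y,z,t) = u_i(x,y,0,t) + \frac{\partial u_i}{\partial z}(x,y,0,t)\, z + \frac12 \frac{\partial^2 u_i}{\partial z^2}(x,y,0,t)\, z^2 + O(z^3).
\end{equation*}
By Lemma~\ref{lem:symmplanestress}, $u_i(x,y,\cdot,t)$ is an even function of $z$ for $i=1,2$, so the odd-order $z$-derivatives at $z=0$ vanish; in particular the linear term drops out and $u_i(x,y,z,t) = u_i(x,y,0,t) + O(z^2)$, where the implicit constant is controlled by $\sup_{|z|\le h}|\partial_z^2 u_i|$.

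Next I would compute the thickness average. Since $2h \le \delta$ (Assumption~(P$\sigma$\ref{assump:PSs1Peri})), integrating the expansion over $z \in [-h,h]$ and dividing by $2h$ gives
\begin{equation*}
\overline{u}_i(x,y,t) = \frac{1}{2h}\int_{-h}^h u_i(x,y,z,t)\, d z = u_i(x,y,0,t) + O(h^2),
\end{equation*}
because $\frac{1}{2h}\int_{-h}^h z\, d z = 0$ kills any would-be linear contribution and $\frac{1}{2h}\int_{-h}^h z^2\, d z = \tfrac{h^2}{3} = O(h^2)$. Subtracting the two displays, $u_i(x,y,z,t) - \overline{u}_i(x,y,t) = O(h^2)$ uniformly for $|z|\le h$, which is exactly~\eqref{eqn:getu} once one writes $\bfx = (x,y,z)$ with $|z|\le h$. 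Alternatively, and perhaps more cleanly for the write-up, I would note $u_i(\bfx,t) - \overline{u}_i(x,y,t) = \bigl(u_i(x,y,z,t) - u_i(x,y,0,t)\bigr) + \bigl(u_i(x,y,0,t) - \overline{u}_i(x,y,t)\bigr)$ and bound each parenthesized term by $O(h^2)$ using the two computations above.

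The only subtlety — and the step I would be most careful about — is the justification that the odd $z$-derivatives vanish: this requires that the evenness of $u_i$ in $z$ from Lemma~\ref{lem:symmplanestress} combines legitimately with the smoothness hypothesis~(P$\sigma$\ref{assump:PSs6Peri}), so that the Taylor expansion is valid and its coefficients inherit the parity of the function. An even $C^2$ function of $z$ automatically has vanishing first derivative at $z=0$, so no extra regularity beyond what~(P$\sigma$\ref{assump:PSs6Peri}) provides is needed, but I would state this explicitly. Everything else is the elementary observation that averaging a function against the uniform measure on $[-h,h]$ annihilates odd powers of $z$ and turns even powers $z^{2k}$ into $O(h^{2k})$, so no genuine obstacle remains. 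I would close by remarking that the analogous classical fact underlies~\eqref{eqn:AvgDerivativePlainStress} in the classical generalized plane stress derivation, so this lemma is the peridynamic counterpart of a standard step.
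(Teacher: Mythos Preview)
Your proposal is correct and follows essentially the same approach as the paper: Taylor expand $u_i$ in $z$ about $z=0$, invoke the evenness from Lemma~\ref{lem:symmplanestress} to kill odd-order terms, and combine the two estimates $u_i(\bfx,t)=u_i(x,y,0,t)+O(h^2)$ and $\overline{u}_i(x,y,t)=u_i(x,y,0,t)+O(h^2)$. The paper writes out both expansions as infinite series rather than stopping at second order, but the argument is identical.
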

\begin{proof}
See Appendix \ref{lemma:getubarapp}. \qed
\end{proof}

Utilizing Lemma \ref{lemma:getubar} and recalling terms of order $O(h^2)$ are small by Assumption~(P$\sigma$\ref{assump:PSs1Peri}), we may suppose $u_i(\bfx,t) \approx \overline{u}_i(x,y,t)$. Substituting this approximation into  \eqref{eqn:PlaneStressAtermPostSub} and  \eqref{eqn:PlaneStressSubu3PreSub3}, results in 
\begin{equation}\label{eqn:PlaneStressSubu3PreSub3avg}
\begin{split}
\rho(x,y) \ddot{\overline{u}}_i(x,y,t) \approx& \frac{1}{2h} \int_{-h}^h \int_{-h}^h \int_{B^{2D}_\delta(x,y)} \lambda(\bfx',\bfx) \xi_i \left[ \xi_1 (\overline{u}_1(\bfx',t) - \overline{u}_1(\bfx,t)) + \xi_2 (\overline{u}_2(\bfx',t) - \overline{u}_2(\bfx,t)) \right. \\
&\hspace*{1.9in} \left.- \frac{1}{2} \xi_3^2 A(x',y',t)\right] d \bfx' d z + \overline{b}_i(x,y,t),
\end{split}
\end{equation}
where
\begin{equation}\label{eqn:PlaneStressAtermPostSubavg}
A(x',y',t) \approx \frac{ \int_{-h}^h \int_{-h}^h \int_{B^{2D}_{\delta}(x',y')} \lambda(\bfx'',\bfx') \zeta^2_3 \left[ \zeta_1 (\overline{u}_1(x'',y'',t)- \overline{u}_1(x',y',t)) + \zeta_2 (\overline{u}_2(x'',y'',t) - \overline{u}_2(x',y',t)) \right] d \bfx'' d z'}{\int_{-h}^h \int_{-h}^h \int_{B^{2D}_{\delta}(x',y')} \lambda(\bfx'',\bfx') \zeta^4_3 d \bfx'' d z'}.
\end{equation}


\textbf{Step 5}:
The final step is to integrate over $z'$ and $z$ in order to remove the third dimension dependence from \eqref{eqn:PlaneStressSubu3PreSub3avg} and \eqref{eqn:PlaneStressAtermPostSubavg}. For convenience, we introduce the shorthand notation (recall \eqref{eqn:zposaware})
\begin{equation}\label{eqn:defforlambiplanestress}
\lambda_i(\xi_1,\xi_2) := \int_{-h}^h \int_{-h}^h \lambda(\bfx',\bfx) \xi_3^i dz' dz. 
\end{equation}
Since the limits of integration in  \eqref{eqn:PlaneStressSubu3PreSub3avg} and \eqref{eqn:PlaneStressAtermPostSubavg} are independent of each other, we may change the order of integration without altering the limits so that we may apply \eqref{eqn:defforlambiplanestress}. In addition, we perform the change of variables $(x',y') \rightarrow (x+\xi_1,y+\xi_2)$ in \eqref{eqn:PlaneStressSubu3PreSub3avg} and $(x'',y'') \rightarrow (x'+\zeta_1,y'+\zeta_2)$ in \eqref{eqn:PlaneStressAtermPostSubavg}. Up to the approximations in Steps 1--5,  the peridynamic generalized plane stress equation of motion is given by (analogue of \eqref{eqn:planestressequationsofmotionclassicala} and \eqref{eqn:planestressequationsofmotionclassicalb}):

\begin{subequations}\label{eqn:PlaneStressFinal}
\begin{align}
\rho(x,y) \ddot{\overline{u}}_1(x,y,t) ={}& \frac{1}{2h}  \int_{B^{2D}_\delta(\bf0)} \lambda_0(\xi_1,\xi_2) \xi_1 \left[ \xi_1 (\overline{u}_1(x+\xi_1,y+\xi_2,t) - \overline{u}_1(x,y,t)) \right. \nonumber \\
&\left. \hspace*{1.4in} + \xi_2 (\overline{u}_2(x+\xi_1,y+\xi_2,t) - \overline{u}_2(x,y,t)) \right] \nonumber \\
&\hspace*{0.65in}- \frac{1}{2} \lambda_2(\xi_1,\xi_2) A(x+\xi_1,y+\xi_2,t) \xi_1  d \xi_1 d \xi_2 + \overline{b}_1(x,y,t),   \\
\rho(x,y) \ddot{\overline{u}}_2(x,y,t) ={}& \frac{1}{2h}  \int_{B^{2D}_\delta(\bf0)} \lambda_0(\xi_1,\xi_2) \xi_2 \left[ \xi_2 (\overline{u}_1(x+\xi_1,y+\xi_2,t) - \overline{u}_1(x,y,t))  \right.  \nonumber \\
&\hspace*{1.4in} \left. + \xi_2 (\overline{u}_2(x+\xi_1,y+\xi_2,t) - \overline{u}_2(x,y,t)) \right] \nonumber  \\
&\hspace*{0.65in} - \frac{1}{2} \lambda_2(\xi_1,\xi_2) A(x+\xi_1,y+\xi_2,t) \xi_2  d \xi_1 d \xi_2 + \overline{b}_2(x,y,t), 
\end{align}
\end{subequations}
where
\begin{equation}\label{eqn:PlaneStressAtermFinal}
A(x',y',t) = \frac{ \int_{B^{2D}_{\delta}(\bf0)} \lambda_2(\zeta_1,\zeta_2) \left( \zeta_1 (\overline{u}_1(x'+\zeta_1,y'+\zeta_2,t)- \overline{u}_1(x',y',t)) + \zeta_2 (\overline{u}_2(x'+\zeta_1,y'+\zeta_2,t) - \overline{u}_2(x',y',t)) \right) d \zeta_1 d \zeta_2}{\int_{B^{2D}_{\delta}(\bf0)} \lambda_4(\zeta_1,\zeta_2) d \zeta_1 d \zeta_2 }.
\end{equation}

\begin{remark}
Due to the presence of $A(x',y',t)$ in the equation of motion \eqref{eqn:PlaneStressFinal}, the resulting peridynamic generalized plane stress model is not a bond-based peridynamic model but rather a state-based peridynamic model~\cite{Silling2007}. Analogously to Remark~\ref{rmk:planestrainvector}, we can express the corresponding equation of motion for~\eqref{eqn:PlaneStressFinal} in vector form by letting $\bfx =(x,y)$, $\bfxi =(\xi_1,\xi_2)$, $\bfzeta=(\zeta_1,\zeta_2)$, $\overline{\bfu} =(\overline{u}_1,\overline{u}_2)$, $\mathcal{H} = B_\delta^{2D}(\bf0)$, and $\overline{\bfb}=(\overline{b}_1,\overline{b}_2)$. In this case, using the fact that $\lambda_i(x',y',x,y) = \lambda_i(x,y,x',y')$ by \eqref{eqn:lambdasymm} and Assumption (P$\sigma$\ref{assump:PSs7Peri}), the peridynamic generalized plane stress equation of motion can be formulated as
\begin{equation*}
\rho(\bfx) \ddot{\overline{\bfu}}(\bfx,t) = \int_{\mathcal{H}} \left\{ \underline{\bfT}[\bfx ,t]\langle \bfxi \rangle - \underline{\bfT}[\bfx + \bfxi,t]\langle -\bfxi \rangle \right\} d \bfxi + \overline{\bfb}(\bfx,t),
\end{equation*}
where
\begin{equation*}
\underline{\bfT}\left[ \bfx, t \right] \langle \bfxi \rangle = \frac{1}{4h} \left[  \lambda_0(\bfxi) \bfxi \otimes \bfxi (\overline{\bfu}(\bfx+\bfxi,t) - \overline{\bfu}(\bfx,t)) -\lambda_2(\bfxi)A(\bfx,t) \bfxi \right] 
\end{equation*}
and
\begin{equation*}
A(\bfx,t) = \frac{1}{ \int_{\mathcal{H}} \lambda_4(\bfzeta) d \bfzeta} \int_{\mathcal{H}} \lambda_2(\bfzeta) \bfzeta \cdot \left( \overline{\bfu}(\bfx + \bfzeta,t) - \overline{\bfu}(\bfx,t) \right) d \bfzeta.
\end{equation*}

\end{remark}


\underline{Peridynamic plane stress micromodulus functions}:

In the peridynamic generalized plane stress model presented in this work, the requirements placed on the micromodulus function $\lambda(\bfx',\bfx)$ have been fairly minimal up to this point. In order to investigate the behavior of the peridynamic plane stress micromodulus functions $\lambda_0(\xi_1,\xi_2)$ and $\lambda_2(\xi_1,\xi_2)$ appearing in \eqref{eqn:PlaneStressFinal} and inform them with the classical elasticity tensor $\mathbb{C}$, we necessarily must provide a prototype micromodulus $\lambda(\bfx',\bfx)$. Unfortunately, the surface effects endemic in peridynamic formulations of generalized plane stress require $\lambda(\bfx',\bfx)$ to be position aware in order to satisfy \eqref{eqn:lambdacondfullgen}. Rather than introducing a new micromodulus function to accommodate surface effects, we posit a plausible alternative strategy: relax the requirements  \eqref{eqn:lambdacondfullgen} on $\lambda(\bfx',\bfx)$ to hold only in the average over the thickness of the plate\footnote{This concept parallels the theory for classical generalized plane stress which is developed for quantities averaged over the thickness of the plate ({\em cf}. Section~\ref{sec:ClassicalPlaneStress}).}. Specifically, 
\begin{subequations}\label{eqn:CijklRelationAverage}
\begin{align}
0 ={}& \int_{-h}^h \int_{\mathcal{H}_\bfx} \lambda(\bfx',\bfx) \xi_i \xi_j \xi_k d \bfx' d z \label{eqn:CijklRelationAveragea} \\
C_{ijkl} ={}& \frac{1}{4h} \int_{-h}^h \int_{\mathcal{H}_\bfx} \lambda(\bfx',\bfx) \xi_i \xi_j \xi_k \xi_l d \bfx' d z. \label{eqn:CijklRelationAverageb} 
\end{align}
\end{subequations}
As can be seen in Proposition \ref{prop:relationplanestresstoclassical}, if we impose the weaker conditions \eqref{eqn:CijklRelationAverage} (in relation to \eqref{eqn:lambdacondfullgen}) on $\lambda(\bfx',\bfx)$, the peridynamic generalized plane stress model \eqref{eqn:PlaneStressFinal}  reduces to the classical generalized plane stress model \eqref{eqn:planestressequationsofmotionclassical} under a second-order Taylor expansion. Moreover, if we suppose the micromodulus is a function only of the bond, such as in the case of \eqref{def:lambdagenform}, then \eqref{eqn:CijklRelationAveragea} is trivially satisfied. Consequently, when averaging over the plate, the surface effects negate each other for such a micromodulus.

With the relaxed formulation \eqref{eqn:CijklRelationAverage}, we may consider a far larger class of functions. In particular, one can inform the constants of the peridynamic tensor $\mathbbm{\Lambda}$ so that \eqref{def:lambdagenform} satisfies \eqref{eqn:CijklRelationAverage}. This is precisely the approach we take in this section in order to investigate the behavior of the peridynamic plane stress micromodulus functions $\lambda_0(\xi_1,\xi_2)$ and $\lambda_2(\xi_1,\xi_2)$ appearing in \eqref{eqn:PlaneStressFinal}. 
Since no material point is assumed to be in the bulk of the material in our peridynamic plane stress formulation, we multiply \eqref{def:lambdagenform} by 
\begin{equation}
\chi_{B_{\delta}(\bf0)} := \left\{ \begin{array}{ll}
1, & \| \bfxi \| < 0 \\
0, & \text{else}
\end{array} \right. 
\end{equation}
in order to enforce \eqref{eqn:lambda=0outsideball}. Then, given the micromodulus function described by~\eqref{def:lambdagenform} multiplied with $\chi_{B_\delta(\mathbf{\bfx})}$, the micromodulus functions \eqref{eqn:defforlambiplanestress} are formulated as:
\begin{equation}\label{eqn:lambdaiplanestressminmaxdef}
\begin{split}
\lambda_i(\xi_1,\xi_2) ={}& \int_{-h}^h \int_{-h}^{h} \lambda(\bfx'-\bfx) \xi_3^i \chi_{B_{\delta}(\bfx)} d z' d z \\
={}& \int_{-h}^h \int_{-h-z}^{h-z} \lambda(\bfxi) \xi_3^i \chi_{B_{\delta}(\bf0)} d \xi_3 d z \\
={}& \int_{-h}^h \int_{\max \left\{ -h-z,-\sqrt{\delta^2-r^2} \right\}}^{ \min \left\{ h-z,\sqrt{\delta^2-r^2} \right\} } \lambda(\bfxi) \xi_3^i d \xi_3 dz,
\end{split}
\end{equation}
where $r = \sqrt{\xi_1^2+\xi_2^2}$. In order to remove the piecewise nature of the limits of integration in \eqref{eqn:lambdaiplanestressminmaxdef}, we consider two regions: $r^2 < \delta^2 - 4h^2$ and $\delta^2 - 4h^2 \leqslant r^2 < \delta^2$. 

\textbf{Region $\mathbf{r^2 < \boldsymbol{\delta}^2-4h^2}$}: In this region, it follows that $2h < \sqrt{\delta^2-r^2}$. Since $|z| \leqslant h$, in this region we have $h-z \leqslant 2h < \sqrt{\delta^2-r^2}$ and $-\sqrt{\delta^2-r^2} < -2h \leqslant -h-z$. Consequently,  \eqref{eqn:lambdaiplanestressminmaxdef} simplfies to
\begin{equation}\label{eqn:planestresslambifirstregion}
\lambda_i(\xi_1,\xi_2) = \int_{-h}^h \int_{ -h-z}^{ h-z} \lambda(\bfxi) \xi_3^i d \xi_3 dz.
\end{equation}

\textbf{Region $\mathbf{ \boldsymbol{\delta}^2 -4h^2 \leqslant r^2 \leqslant \boldsymbol{\delta}^2}$}: In this region, it follows that $\sqrt{\delta^2 - r^2} \leqslant 2h$. We first split the integral in \eqref{eqn:lambdaiplanestressminmaxdef} to find
\begin{equation}\label{eqn:kernelfirstsplit}
\begin{split}
\lambda_i(\xi_1,\xi_2) ={}&  \int_{-h}^{h-\sqrt{\delta^2-r^2}} \int_{\max \left\{ -h-z,-\sqrt{\delta^2-r^2} \right\}}^{ \min \left\{ h-z,\sqrt{\delta^2-r^2} \right\} } \lambda(\bfxi) \xi_3^i d \xi_3 dz + \int_{h-\sqrt{\delta^2-r^2}}^h \int_{\max \left\{ -h-z,-\sqrt{\delta^2-r^2} \right\}}^{ \min \left\{ h-z,\sqrt{\delta^2-r^2} \right\} } \lambda(\bfxi) \xi_3^i d \xi_3 dz \\
={}& \int_{-h}^{h-\sqrt{\delta^2-r^2}} \int_{\max \left\{ -h-z,-\sqrt{\delta^2-r^2} \right\}}^{ \sqrt{\delta^2-r^2}  } \lambda(\bfxi) \xi_3^i d \xi_3 dz + \int_{h-\sqrt{\delta^2-r^2}}^h \int_{\max \left\{ -h-z,-\sqrt{\delta^2-r^2} \right\}}^{ h-z } \lambda(\bfxi) \xi_3^i d \xi_3 dz . \\
 \end{split} 
\end{equation}
For the second equality in \eqref{eqn:kernelfirstsplit}, we utilized the fact that $h-z \geqslant \sqrt{\delta^2 - r^2}$ for $z \in [-h, h - \sqrt{\delta^2 -r^2}]$ and $h-z \leqslant \sqrt{\delta^2 -r^2}$ for $z \in [h-\sqrt{\delta^2-r^2},h]$. We then split the integrals in \eqref{eqn:kernelfirstsplit} to find
\begin{equation}\label{eqn:kernelsecondsplit}
\begin{split}
\lambda_i(\xi_1,\xi_2) ={}& \int_{-h}^{-h+\sqrt{\delta^2-r^2}} \int_{\max \left\{ -h-z,-\sqrt{\delta^2-r^2} \right\}}^{ \sqrt{\delta^2-r^2}  } \lambda(\bfxi) \xi_3^i d \xi_3 dz + \int_{-h+\sqrt{\delta^2-r^2}}^{h-\sqrt{\delta^2-r^2}} \int_{\max \left\{ -h-z,-\sqrt{\delta^2-r^2} \right\}}^{ \sqrt{\delta^2-r^2}  } \lambda(\bfxi) \xi_3^i d \xi_3 dz \\
&+ \int_{h-\sqrt{\delta^2-r^2}}^{-h+\sqrt{\delta^2-r^2}} \int_{\max \left\{ -h-z,-\sqrt{\delta^2-r^2} \right\}}^{ h-z } \lambda(\bfxi) \xi_3^i d \xi_3 dz + \int_{-h+\sqrt{\delta^2-r^2}}^h \int_{\max \left\{ -h-z,-\sqrt{\delta^2-r^2} \right\}}^{ h-z } \lambda(\bfxi) \xi_3^i d \xi_3 dz \\
={}& \int_{-h}^{-h+\sqrt{\delta^2-r^2}} \int_{-h-z}^{ \sqrt{\delta^2-r^2}  } \lambda(\bfxi) \xi_3^i d \xi_3 dz + \int_{-h+\sqrt{\delta^2-r^2}}^{h-\sqrt{\delta^2-r^2}} \int_{\max \left\{ -h-z,-\sqrt{\delta^2-r^2} \right\}}^{ \sqrt{\delta^2-r^2}  } \lambda(\bfxi) \xi_3^i d \xi_3 dz \\
&+ \int_{h-\sqrt{\delta^2-r^2}}^{-h+\sqrt{\delta^2-r^2}} \int_{\max \left\{ -h-z,-\sqrt{\delta^2-r^2} \right\}}^{ h-z } \lambda(\bfxi) \xi_3^i d \xi_3 dz + \int_{-h+\sqrt{\delta^2-r^2}}^h \int_{ -\sqrt{\delta^2-r^2} }^{ h-z } \lambda(\bfxi) \xi_3^i d \xi_3 dz \\
={}& \int_{-h}^{-h+\sqrt{\delta^2-r^2}} \int_{-h-z}^{ \sqrt{\delta^2-r^2}  } \lambda(\bfxi) \xi_3^i d \xi_3 dz + \int_{-h+\sqrt{\delta^2-r^2}}^{h-\sqrt{\delta^2-r^2}} \int^{ \sqrt{\delta^2 - r^2} }_{ h-z } \lambda(\bfxi) \xi_3^i d \xi_3 dz \\
&+ \int_{-h+\sqrt{\delta^2-r^2}}^h \int_{ -\sqrt{\delta^2-r^2} }^{ h-z } \lambda(\bfxi) \xi_3^i d \xi_3 dz .  \\
\end{split} 
\end{equation}
In the second equality of \eqref{eqn:kernelsecondsplit}, we utilized the fact that $-h-z > - \sqrt{\delta^2-r^2}$ for $z \in [-h,-h+\sqrt{\delta^2-r^2}]$ and $-h-z < -\sqrt{\delta^2-r^2}$ for $z \in [-h+\sqrt{\delta^2-r^2},h]$. In the third equality of \eqref{eqn:kernelsecondsplit}, we changed the order of the limits of integration in both integrals in the third term and then combined it with the second term. With the more amenable limits of integration in \eqref{eqn:planestresslambifirstregion} and \eqref{eqn:kernelsecondsplit}, closed-form expressions for $\lambda_i(\xi_1,\xi_2)$ can be deduced. To accomplish this, we utilize the shorthand notation introduced in \eqref{eqn:lambformwithA} and \eqref{eqn:PeriPlaneStrainAList} and drop terms not satisfying monoclinic symmetry to get
\begin{equation}\label{eqn:lambformwithAplanestress}
\begin{split}
\lambda(\bfxi) ={}&  \frac{\omega(\|\bfxi\|)}{\|\bfxi \|^2} \frac{A_0(\xi_1,\xi_2) + A_2(\xi_1,\xi_2)\xi_3^2 + A_4(\xi_1,\xi_2)\xi_3^4 }{\|\bfxi \|^4},
\end{split} 
\end{equation}
where
\begin{subequations}\label{eqn:PeriPlaneStressAList}
\begin{align}
&A_0(\xi_1,\xi_2) = \Lambda_{1111} \xi_1^4 + 4\Lambda_{1112}\xi_1^3 \xi_2 +6 \Lambda_{1122} \xi_1^2 \xi_2^2 + 4 \Lambda_{2212} \xi_1 \xi_2^3 + \Lambda_{2222} \xi_2^4, \\
&A_2(\xi_1,\xi_2) = 6 \left( \Lambda_{1133}\xi_1^2 + \Lambda_{2233}\xi_2^2 + 2 \Lambda_{3312} \xi_1 \xi_2  \right), \\
&A_4(\xi_1,\xi_2) = \Lambda_{3333},
\end{align}
\end{subequations}
and the $\Lambda_{ijkl}$ are to be determined later.

Depending on the choice of influence function $\omega$, it may be possible to produce closed-form expressions for the integrals in \eqref{eqn:lambdaiplanestressminmaxdef}. Two commonly utilized influence functions in peridynamics are $\omega(\| \bfxi\|) = \frac{1}{\| \bfxi \|}$ and $\omega(\| \bfxi \|) = 1$. With either choice of influence function, the micromodulus functions \eqref{eqn:lambdaiplanestressminmaxdef} are given by

\begin{subequations}\label{eqn:PlaneStressLambda_i}
\begin{align}
\begin{split}
\lambda_0(\xi_1,\xi_2) ={}& \omega(r)  \left[ \frac{A_0(\xi_1,\xi_2)}{ r^4 } M_0\left( \frac{h}{r} , \frac{r}{\delta} \right)   +  \frac{A_2(\xi_1,\xi_2)}{r^2 } M_1 \left( \frac{h}{r}, \frac{r}{\delta} \right) + A_4(\xi_1,\xi_2)  M_2\left( \frac{h}{r}, \frac{r}{\delta} \right)  \right], \label{eqn:PlanestressLambda_0}
\end{split} \\
\begin{split}
\lambda_2(\xi_2,\xi_2) ={}& r^2 \omega(r) \left[  \frac{A_0(\xi_1,\xi_2)}{r^4  }  M_1\left( \frac{h}{r}, \frac{r}{\delta} \right)  + \frac{A_2(\xi_1,\xi_2)}{r^2}  M_2\left( \frac{h}{r}, \frac{r}{\delta} \right) + A_4(\xi_1,\xi_2) M_3\left( \frac{h}{r}, \frac{r}{\delta} \right) \right], \label{eqn:PlaneStressLambda_2}
\end{split}
\end{align}
\end{subequations}
where for $\omega(\| \bfxi \|) = 1$ we have\footnote{The regions $\frac{h^2}{\delta^2-4h^2} < x^2$ and $\frac{h^2}{\delta^2} \leqslant x^2 \leqslant \frac{h^2}{\delta^2-4h^2}$ reduce to the regions $r^2 < \delta^2 - 4h^2$ and $\delta^2 - 4h^2 \leqslant r^2 \leqslant \delta^2$, respectively, when $x = \frac{h}{r}$. These are precisely the regions we considered above in order to remove the piecewise function limits in \eqref{eqn:lambdaiplanestressminmaxdef}.}
\begin{equation*}
\begin{split}
    &M_0(x,y) := \left\{ \begin{array}{ll} \frac{3x}{2}  \arctan(2x) + \frac{x^2}{4x^2+1}, &  \frac{h^2}{\delta^2-4h^2} < x^2 \\[0.1in]
    \frac{3x}{2} \arctan\left( \sqrt{y^{-2}-1} \right) + \frac{1}{2} \left( y^4 - 1 \right) + \frac{x}{2} \left( 3 y + 2 y^3 \right) \sqrt{1-y^2}, & \frac{h^2}{\delta^2} \leqslant x^2 \leqslant \frac{h^2}{\delta^2-4h^2}
    \end{array} \right. \\
    &M_1(x,y) :=  \left\{
    \begin{array}{ll}
    \frac{x}{2}  \arctan(2x) - \frac{x^2}{4x^2+1}, & \frac{h^2}{\delta^2-4h^2} < x^2 \\[0.1in]
    \frac{x}{2} \arctan\left( \sqrt{y^{-2}-1} \right) - \frac{1}{2} \left(y^2-1 \right)^2+\frac{x}{2} \left( y - 2y^3 \right) \sqrt{1-y^2}, & \frac{h^2}{\delta^2} \leqslant x^2 \leqslant \frac{h^2}{\delta^2-4h^2}
\end{array}
\right. \\
    &M_2(x,y) := \left\{ \begin{array}{ll} \frac{3x}{2}  \arctan(2x) + \frac{x^2}{4x^2+1}+\ln \left( \frac{1}{4x^2+1} \right), & \frac{h^2}{\delta^2-4h^2} < x^2 \\[0.1in]
\frac{3x}{2} \arctan \left( \sqrt{y^{-2}-1} \right) + \frac{1}{2}(y^2-1)(y^2-3) - \frac{x}{2} \sqrt{1-y^2} (5y-2y^3) + 2\ln(y), & \frac{h^2}{\delta^2} \leqslant x^2 \leqslant \frac{h^2}{\delta^2-4h^2}
    \end{array}
    \right. \\
    &M_3(x,y) := \left\{ \begin{array}{ll} -\frac{15x}{2} \arctan(2x) + \frac{16x^4+3x^2}{4x^2+1}-3 \ln \left( \frac{1}{4x^2+1} \right), & \frac{h^2}{\delta^2-4h^2} < x^2 \\[0.1in]
    \begin{array}{l} 
    -\frac{15x}{2} \arctan\left(\sqrt{y^{-2}-1} \right)-\frac{1}{2y^2}\left(y^2-1 \right) \left(y^4-5y^2-2 \right) \\
    + \frac{x}{2} \sqrt{1-y^2} \left( 8y^{-1} + 9y -2y^3 \right) -6 \ln(y),
    \end{array} & \frac{h^2}{\delta^2} \leqslant x^2 \leqslant \frac{h^2}{\delta^2-4h^2}
    \end{array}
    \right.
\end{split}
\end{equation*}
and for $\omega(\| \bfxi \|) = \frac{1}{\| \bfxi \|}$ we have
\begin{equation*}
\begin{split}
    &M_0(x,y) := \left\{ \begin{array}{ll} -\frac{2}{5} + \frac{2}{15} \left( 128x^4+48x^2+3\right) \left( 4x^2+1 \right)^{-\frac{3}{2}}, & \frac{h^2}{\delta^2-4h^2} < x^2 \\[0.1in]
    \frac{4x}{15} \left( 3 y^4+4 y^2 + 8 \right) \sqrt{1-y^2} + \frac{6}{15} \left(y^5 -1 \right), & \frac{h^2}{\delta^2} \leqslant x^2 \leqslant \frac{h^2}{\delta^2-4h^2}
    \end{array}
    \right. \\
    &M_1(x,y) := \left\{ \begin{array}{ll} -\frac{4}{15} + \frac{4}{15} \left(16x^4+6x^2+1 \right) \left( 4x^2 + 1 \right)^{-\frac{3}{2}}, & \frac{h^2}{\delta^2-4h^2} < x^2 \\[0.1in]
    \frac{4x}{15}\left(2+y^2-3y^4 \right) \sqrt{1-y^2} - \frac{1}{15} \left( 6y^5-10y^3+4 \right), & \frac{h^2}{\delta^2} \leqslant x^2 \leqslant \frac{h^2}{\delta^2-4h^2}
    \end{array}
    \right. \\
    &M_2(x,y) := \left\{ \begin{array}{ll} -\frac{16}{15} + \frac{16}{15} \left( 6x^4+6x^2+1 \right) \left( 4x^2+1 \right)^{-\frac{3}{2}}, & \frac{h^2}{\delta^2-4h^2} < x^2 \\[0.1in]
    \frac{4x}{5} \left(1-y^2 \right)^{\frac{5}{2}} + \frac{2}{15} \left( 3 y^2+9y+8\right) \left(y-1 \right)^3,& \frac{h^2}{\delta^2} \leqslant x^2 \leqslant \frac{h^2}{\delta^2-4h^2}
    \end{array}
    \right. \\
    &M_3(x,y) := \left\{ \begin{array}{ll} \frac{32}{5} - \frac{8}{15} \left( 152x^4+87x^2+12 \right) \left( 4x^2+1 \right)^{-\frac{3}{2}} +4x \, \text{arsinh}(2x), & \frac{h^2}{\delta^2-4h^2} < x^2  \\[0.1in]
    -\frac{4x}{15} \left(3y^4-11y^2+23 \right) \sqrt{1-y^2} - \frac{6}{15} \left( 5y^{-1} + 4 + y \right) \left( y-1 \right)^4 + 4x \, \text{arsinh}\left( \sqrt{y^{-2}-1} \right),& \frac{h^2}{\delta^2} \leqslant x^2 \leqslant \frac{h^2}{\delta^2-4h^2}
    \end{array}
    \right. .
\end{split}
\end{equation*}

For plane strain, we were able to determine a relationship between $\mathbbm{\Lambda}$ and $\mathbb{C}$ independent of the influence function ({\em cf}. \eqref{eqn:SijklToCijklRelations3D}). While this may be possible for peridynamic plane stress, the expressions are rather impractical and thus it is far more convenient to provide expressions for our specific influence functions. We inform the peridynamic tensor $\mathbbm{\Lambda}$, and consequently \eqref{eqn:PlaneStressLambda_i} (through \eqref{eqn:PeriPlaneStressAList}), with the elasticity tensor $\mathbb{C}$. This is accomplished by employing \eqref{eqn:CijklRelationAverage} to find
\begin{equation}\label{eqn:CtoLambRelationPlaneStress}
\left[
\begin{array}{c}
C_{1111} \\
C_{1122} \\
C_{1133} \\
C_{2222} \\
C_{2233} \\
C_{3333} \\
C_{1112} \\
C_{2212} \\
C_{3312} 
\end{array}
\right]
=
\frac{\pi}{64}
\left[
\begin{array}{ccccccccc}
35 \alpha_1  & 5 \alpha_1   & 40 \alpha_2  & 3 \alpha_1& 8 \alpha_2   & 48 \alpha_3 & 0 & 0  & 0 \\
5 \alpha_1    & 3 \alpha_1   & 8 \alpha_2    & 5 \alpha_1& 8 \alpha_2   & 16 \alpha_3  & 0 & 0 & 0 \\ 
40 \alpha_2 & 8 \alpha_2   & 48 \alpha_3 & 8 \alpha_2& 16 \alpha_3  & 64 \alpha_4 & 0 & 0 & 0 \\
3 \alpha_1    & 5\alpha_1    & 8 \alpha_2    & 35 \alpha_1& 40 \alpha_2 & 48 \alpha_3 & 0 & 0 & 0 \\
8 \alpha_2   & 8 \alpha_2   & 16\alpha_3   & 40 \alpha_2 & 48 \alpha_3 & 64 \alpha_4 & 0 & 0 & 0\\
48 \alpha_3 & 16 \alpha_3 & 64 \alpha_4 & 48 \alpha_3 & 64 \alpha_4 & 128 \alpha_5 & 0 & 0 & 0\\
0                  & 0                   & 0                  & 0                   & 0                  & 0 & 5 \alpha_1 & 3 \alpha_1 & 8 \alpha_2 \\
0                  & 0                   & 0                  & 0                   & 0                  & 0 & 3 \alpha_1 & 5 \alpha_1  & 8 \alpha_2 \\
0                  & 0                   & 0                  & 0                   & 0                  & 0 & 8 \alpha_2 & 8 \alpha_2 & 16 \alpha_3
\end{array}
\right]
\left[
\begin{array}{c}
\Lambda_{1111} \\
6\Lambda_{1122} \\
6\Lambda_{1133} \\
\Lambda_{2222} \\
6\Lambda_{2233} \\
\Lambda_{3333} \\
4\Lambda_{1112} \\
4\Lambda_{2212} \\
12\Lambda_{3312}
\end{array}
\right],
\end{equation}
where for $\omega( \| \bfxi \|) = 1$, (with $p = \frac{h}{\delta}$)
\begin{equation*}
\begin{split}
\alpha_1:={}&  \delta^5 \left(  - \frac{64}{45} p^9 +\frac{32}{7} p^7 +\frac{176}{75} p^5 - \frac{4}{3} p^3 + \frac{1}{4} p - \frac{32}{5} p^5 \ln\left( 2p \right) \right), \\
\alpha_2:={}&  \delta^5 \left(  \frac{64}{45} p^9 - \frac{24}{7} p^7 -\frac{28}{75} p^5 + \frac{1}{3} p^3 + \frac{16}{5} p^5 \ln \left( 2p \right) \right), \\
\alpha_3:={}& \delta^5 \left( - \frac{64}{45} p^9 + \frac{16}{7} p^7 - \frac{92}{225} p^5 -\frac{16}{15} p^5 \ln \left( 2 p \right)  \right), \\
\alpha_4:={}& \delta^5 \left( \frac{64}{45} p^9 - \frac{8}{7} p^7 + \frac{4}{15} p^5 \right), \\
\alpha_5:={}&  \delta^5 \left( -\frac{64}{45} p^9 + \frac{4}{15} p^5 \right),
\end{split}
\end{equation*}
and for $\omega( \| \bfxi \|) = \frac{1}{\| \bfxi \|}$, (with $p = \frac{h}{\delta}$)
\begin{equation*}
\begin{split}
\alpha_1:={}&  \delta^4 \left( -\frac{256}{225} p^9 + \frac{64}{21} p^7 -\frac{32}{5} p^5 + \frac{512}{75} p^4 -\frac{8}{3} p^3 + \frac{1}{3} p \right), \\
\alpha_2:={}&   \delta^4 \left( \frac{256}{225} p^9 - \frac{16}{7}p^7 + \frac{16}{5} p^5 - \frac{64}{25} p^4 + \frac{2}{3} p^3 \right), \\
\alpha_3:={}&   \delta^4 \left( -\frac{256}{225} p^9 + \frac{32}{21} p^7 - \frac{16}{15} p^5 + \frac{32}{75} p^4 \right), \\
\alpha_4:={}&  \delta^4 \left( \frac{256}{225} p^9 - \frac{16}{21} p^7 + \frac{8}{75} p^4 \right), \\
\alpha_5:={}&   \delta^4 \left( -\frac{256}{225} p^9 + \frac{4}{25} p^4 \right).
\end{split}
\end{equation*}

\begin{remark}\label{rem:angulardpendenceplanestress}
Similarly to plane strain ({\em cf.} Remark \ref{rem:angulardpendenceplanestrain}), for $i \in \left\{0,\ldots,4 \right\}$, note that $\frac{A_i(\xi_1,\xi_2)}{r^{4-i}}$ is radially independent ({\em cf.}~\eqref{eqn:PeriPlaneStressAList}) and, consequently, only contributes to the angular portion of \eqref{eqn:PlaneStressLambda_i}. Moreover, as the other terms in \eqref{eqn:PlaneStressLambda_i} are radial functions, the $\frac{A_i(\xi_1,\xi_2)}{r^{4-i}}$ terms make up the entirety of the angular dependence of the micromodulus functions.
\end{remark}

Next, we take a closer look at the micromodulus functions $\left\{ \lambda_i(\xi_1,\xi_2) \right\}$ ({\em cf.}~\eqref{eqn:lambdaiplanestressminmaxdef}) of the peridynamic plane stress model \eqref{eqn:PlaneStressFinal} for various symmetry classes. Following Remark \ref{rem:angulardpendenceplanestress}, the choice of symmetry class only has an effect on the $\left\{ A_i(\xi_1,\xi_2) \right\}$, while the general form of the micromodulus functions~\eqref{eqn:PlaneStressLambda_i} remains unchanged. Therefore, we only present the $\left\{ A_i(\xi_1,\xi_2) \right\}$ for each symmetry class. 
As explained earlier, for many of the symmetry classes, there are multiple planes of reflection symmetry to choose from in order to satisfy Assumption~(P$\sigma$\ref{assump:PSs7Peri}). However, here we only consider a specific example from each symmetry class. More specifically, we choose the orientations corresponding to the elasticity tensors presented in Section \ref{sec:threedimclassicalelasticity}. We do not consider triclinic symmetry as this case is excluded by Assumption~(P$\sigma$\ref{assump:PSs7Peri}). 

\underline{Monoclinic}: We substitute \eqref{eqn:monoelastrelations}, with Cauchy's relations imposed, into \eqref{eqn:CtoLambRelationPlaneStress} to find ({\em cf.}~\eqref{eqn:PeriPlaneStressAList}):
\begin{subequations}\label{eqn:AtermsMonoStress}
\begin{align}
&A_0(\xi_1,\xi_2) := \Lambda_{1111} \xi_1^4 + 4\Lambda_{1112}\xi_1^3 \xi_2 +6 \Lambda_{1122} \xi_1^2 \xi_2^2 + 4 \Lambda_{2212} \xi_1 \xi_2^3 + \Lambda_{2222} \xi_2^4, \\
&A_2(\xi_1,\xi_2) := 6 \left( \Lambda_{1133}\xi_1^2 + \Lambda_{2233}\xi_2^2 + 2 \Lambda_{3312} \xi_1 \xi_2  \right), \\
&A_4(\xi_1,\xi_2) := \Lambda_{3333},
\end{align}
\end{subequations}
where
\begin{equation}\label{eqn:CtoLambRelationPlaneStressMono}
\left[
\begin{array}{c}
\Lambda_{1111} \\
6\Lambda_{1122} \\
6\Lambda_{1133} \\
\Lambda_{2222} \\
6\Lambda_{2233} \\
\Lambda_{3333} \\
4\Lambda_{1112} \\
4\Lambda_{2212} \\
12\Lambda_{3312}
\end{array}
\right]
=
\frac{64}{\pi}
\left[
\begin{array}{ccccccccc}
35 \alpha_1  & 5 \alpha_1   & 40 \alpha_2  & 3 \alpha_1& 8 \alpha_2   & 48 \alpha_3 & 0 & 0  & 0 \\
5 \alpha_1    & 3 \alpha_1   & 8 \alpha_2    & 5 \alpha_1& 8 \alpha_2   & 16 \alpha_3  & 0 & 0 & 0 \\ 
40 \alpha_2 & 8 \alpha_2   & 48 \alpha_3 & 8 \alpha_2& 16 \alpha_3  & 64 \alpha_4 & 0 & 0 & 0 \\
3 \alpha_1    & 5\alpha_1    & 8 \alpha_2    & 35 \alpha_1& 40 \alpha_2 & 48 \alpha_3 & 0 & 0 & 0 \\
8 \alpha_2   & 8 \alpha_2   & 16\alpha_3   & 40 \alpha_2 & 48 \alpha_3 & 64 \alpha_4 & 0 & 0 & 0\\
48 \alpha_3 & 16 \alpha_3 & 64 \alpha_4 & 48 \alpha_3 & 64 \alpha_4 & 128 \alpha_5 & 0 & 0 & 0\\
0                  & 0                   & 0                  & 0                   & 0                  & 0 & 5 \alpha_1 & 3 \alpha_1 & 8 \alpha_2 \\
0                  & 0                   & 0                  & 0                   & 0                  & 0 & 3 \alpha_1 & 5 \alpha_1  & 8 \alpha_2 \\
0                  & 0                   & 0                  & 0                   & 0                  & 0 & 8 \alpha_2 & 8 \alpha_2 & 16 \alpha_3
\end{array}
\right]^{-1}
\left[
\begin{array}{c}
C_{1111} \\
C_{1122} \\
C_{1133} \\
C_{2222} \\
C_{2233} \\
C_{3333} \\
C_{1112} \\
C_{2212} \\
C_{3312} 
\end{array}
\right].
\end{equation}

\underline{Orthotropic}: We substitute \eqref{eqn:orthoelastrelations}, with Cauchy's relations imposed, into \eqref{eqn:CtoLambRelationPlaneStress}  to find ({\em cf.}~\eqref{eqn:PeriPlaneStressAList})

\begin{subequations}\label{eqn:AtermsOrthoStress}
\begin{align}
&A_0(\xi_1,\xi_2) = \Lambda_{1111} \xi_1^4 +6 \Lambda_{1122} \xi_1^2 \xi_2^2 + \Lambda_{2222} \xi_2^4, \\
&A_2(\xi_1,\xi_2) = 6 \left( \Lambda_{1133}\xi_1^2 + \Lambda_{2233}\xi_2^2  \right), \\
&A_4(\xi_1,\xi_2) = \Lambda_{3333},
\end{align}
\end{subequations}
where
\begin{equation}\label{eqn:CtoLambRelationPlaneStressOrtho}
\left[
\begin{array}{c}
\Lambda_{1111} \\
6\Lambda_{1122} \\
6\Lambda_{1133} \\
\Lambda_{2222} \\
6\Lambda_{2233} \\
\Lambda_{3333} \\
\end{array}
\right]
=
\frac{64}{\pi}
\left[
\begin{array}{ccccccccc}
35 \alpha_1  & 5 \alpha_1   & 40 \alpha_2  & 3 \alpha_1& 8 \alpha_2   & 48 \alpha_3 \\
5 \alpha_1    & 3 \alpha_1   & 8 \alpha_2    & 5 \alpha_1& 8 \alpha_2   & 16 \alpha_3 \\ 
40 \alpha_2 & 8 \alpha_2   & 48 \alpha_3 & 8 \alpha_2& 16 \alpha_3  & 64 \alpha_4 \\
3 \alpha_1    & 5\alpha_1    & 8 \alpha_2    & 35 \alpha_1& 40 \alpha_2 & 48 \alpha_3 \\
8 \alpha_2   & 8 \alpha_2   & 16\alpha_3   & 40 \alpha_2 & 48 \alpha_3 & 64 \alpha_4 \\
48 \alpha_3 & 16 \alpha_3 & 64 \alpha_4 & 48 \alpha_3 & 64 \alpha_4 & 128 \alpha_5 \\
\end{array}
\right]^{-1}
\left[
\begin{array}{c}
C_{1111} \\
C_{1122} \\
C_{1133} \\
C_{2222} \\
C_{2233} \\
C_{3333} \\
\end{array}
\right].
\end{equation}

\underline{Trigonal}: We substitute \eqref{eqn:trigonalelastrelations}, with Cauchy's relations imposed, into \eqref{eqn:CtoLambRelationPlaneStress}  to find ({\em cf.}~\eqref{eqn:PeriPlaneStressAList})
\begin{subequations}\label{eqn:AtermsTrigonalStress}
\begin{align}
&A_0(\xi_1,\xi_2) := \Lambda_{1111} \xi_1^4 + 4\Lambda_{1112}\xi_1^3 \xi_2 +6 \Lambda_{1122} \xi_1^2 \xi_2^2 + 4 \Lambda_{2212} \xi_1 \xi_2^3 + \Lambda_{2222} \xi_2^4, \\
&A_2(\xi_1,\xi_2) := 6 \left( \Lambda_{1133}\xi_1^2 + \Lambda_{2233}\xi_2^2 + 2 \Lambda_{3312} \xi_1 \xi_2  \right), \\
&A_4(\xi_1,\xi_2) := \Lambda_{3333},
\end{align}
\end{subequations}
where \footnote{While there appear to be nine independent constants in \eqref{eqn:AtermsTrigonalStress}, the constants are not actually independent. Since the trigonal elasticity tensor $\mathbb{C}$ with Cauchy's relations imposed has only four independent constants, there are actually only four independent constants in \eqref{eqn:AtermsTrigonalStress}.}
\begin{equation}\label{eqn:CtoLambRelationPlaneStressTrigonal}
\left[
\begin{array}{c}
\Lambda_{1111} \\
6\Lambda_{1122} \\
6\Lambda_{1133} \\
\Lambda_{2222} \\
6\Lambda_{2233} \\
\Lambda_{3333} \\
4\Lambda_{1112} \\
4\Lambda_{2212} \\
12\Lambda_{3312}
\end{array}
\right]
=
\frac{64}{\pi}
\left[
\begin{array}{ccccccccc}
35 \alpha_1  & 5 \alpha_1   & 40 \alpha_2  & 3 \alpha_1& 8 \alpha_2   & 48 \alpha_3 & 0 & 0  & 0 \\
5 \alpha_1    & 3 \alpha_1   & 8 \alpha_2    & 5 \alpha_1& 8 \alpha_2   & 16 \alpha_3  & 0 & 0 & 0 \\ 
40 \alpha_2 & 8 \alpha_2   & 48 \alpha_3 & 8 \alpha_2& 16 \alpha_3  & 64 \alpha_4 & 0 & 0 & 0 \\
3 \alpha_1    & 5\alpha_1    & 8 \alpha_2    & 35 \alpha_1& 40 \alpha_2 & 48 \alpha_3 & 0 & 0 & 0 \\
8 \alpha_2   & 8 \alpha_2   & 16\alpha_3   & 40 \alpha_2 & 48 \alpha_3 & 64 \alpha_4 & 0 & 0 & 0\\
48 \alpha_3 & 16 \alpha_3 & 64 \alpha_4 & 48 \alpha_3 & 64 \alpha_4 & 128 \alpha_5 & 0 & 0 & 0\\
0                  & 0                   & 0                  & 0                   & 0                  & 0 & 5 \alpha_1 & 3 \alpha_1 & 8 \alpha_2 \\
0                  & 0                   & 0                  & 0                   & 0                  & 0 & 3 \alpha_1 & 5 \alpha_1  & 8 \alpha_2 \\
0                  & 0                   & 0                  & 0                   & 0                  & 0 & 8 \alpha_2 & 8 \alpha_2 & 16 \alpha_3
\end{array}
\right]^{-1}
\left[
\begin{array}{c}
C_{1111} \\
C_{1122} \\
C_{1122} \\
C_{2222} \\
\frac{1}{3} C_{2222} \\
C_{2222} \\
0 \\
C_{2212} \\
-C_{2212} 
\end{array}
\right].
\end{equation}

\underline{Tetragonal}: We substitute \eqref{eqn:tetraelastrelations}, with Cauchy's relations imposed, into \eqref{eqn:CtoLambRelationPlaneStress}  to find ({\em cf.}~\eqref{eqn:PeriPlaneStressAList})

\begin{subequations}\label{eqn:AtermsTetraStress}
\begin{align}
&A_0(\xi_1,\xi_2) = \Lambda_{1111} (\xi_1^4 + \xi_2^4) + 6 \Lambda_{1122} \xi_1^2 \xi_2^2, \\
&A_2(\xi_1,\xi_2) = 6 \Lambda_{1133}r^2, \\
&A_4(\xi_1,\xi_2) = \Lambda_{3333},
\end{align}
\end{subequations}
where
\begin{equation}\label{eqn:CtoLambRelationPlaneStressTetra}
\left[
\begin{array}{c}
\Lambda_{1111} \\
6\Lambda_{1122} \\
6\Lambda_{1133} \\
\Lambda_{3333} \\
\end{array}
\right]
=
\frac{64}{\pi}
\left[
\begin{array}{cccccccc}
38 \alpha_1  & 5 \alpha_1   & 48 \alpha_2  &   48 \alpha_3 \\
10 \alpha_1    & 3 \alpha_1   & 16 \alpha_2    &   16 \alpha_3 \\ 
48 \alpha_2 & 8 \alpha_2   & 64 \alpha_3 &   64 \alpha_4 \\
96 \alpha_3 & 16 \alpha_3 & 128 \alpha_4 &   128 \alpha_5 \\
\end{array}
\right]^{-1}
\left[
\begin{array}{c}
C_{1111} \\
C_{1122} \\
C_{1133} \\
C_{3333} \\
\end{array}
\right].
\end{equation}

\underline{Transversely Isotropic}: We substitute \eqref{eqn:tisoelastrelations}, with Cauchy's relations imposed, into \eqref{eqn:CtoLambRelationPlaneStress}  to find ({\em cf.}~\eqref{eqn:PeriPlaneStressAList})

\begin{subequations}\label{eqn:AtermsTisoStress}
\begin{align}
&A_0(\xi_1,\xi_2) = \Lambda_{1111} r^4, \\
&A_2(\xi_1,\xi_2) = 6 \Lambda_{1133}r^2, \\
&A_4(\xi_1,\xi_2) = \Lambda_{3333},
\end{align}
\end{subequations}
where
\begin{equation}\label{eqn:CtoLambRelationPlaneStressTiso}
\left[
\begin{array}{c}
\Lambda_{1111} \\
6\Lambda_{1133} \\
\Lambda_{3333} \\
\end{array}
\right]
=
\frac{64}{\pi}
\left[
\begin{array}{cccccccc}
48 \alpha_1   & 48 \alpha_2  &   48 \alpha_3 \\
64 \alpha_2  & 64 \alpha_3 &   64 \alpha_4 \\
128 \alpha_3 & 128 \alpha_4 &   128 \alpha_5 \\
\end{array}
\right]^{-1}
\left[
\begin{array}{c}
C_{1111} \\
C_{1133} \\
C_{3333} \\
\end{array}
\right].
\end{equation}

\underline{Cubic}: We substitute \eqref{eqn:cubicelastrelations}, with Cauchy's relations imposed, into \eqref{eqn:CtoLambRelationPlaneStress}  to find ({\em cf.}~\eqref{eqn:PeriPlaneStressAList})

\begin{subequations}\label{eqn:AtermsCubicStress}
\begin{align}
&A_0(\xi_1,\xi_2) = \Lambda_{1111} (\xi_1^4 + \xi_2^4) + 6 \Lambda_{1122} \xi_1^2 \xi_2^2, \\
&A_2(\xi_1,\xi_2) = 6 \Lambda_{1133}r^2, \\
&A_4(\xi_1,\xi_2) = \Lambda_{3333},
\end{align}
\end{subequations}
where\footnote{While there appear to be four independent constants in \eqref{eqn:AtermsCubicStress}, the constants are not actually independent. Since the cubic elasticity tensor $\mathbb{C}$ with Cauchy's relations imposed has only four independent constants, there are actually only two independent constants in \eqref{eqn:AtermsCubicStress}.}
\begin{equation}\label{eqn:CtoLambRelationPlaneStressCubic}
\left[
\begin{array}{c}
\Lambda_{1111} \\
6\Lambda_{1122} \\
6\Lambda_{1133} \\
\Lambda_{3333} \\
\end{array}
\right]
=
\frac{64}{\pi}
\left[
\begin{array}{cccccccc}
38 \alpha_1  & 5 \alpha_1   & 48 \alpha_2  &   48 \alpha_3 \\
10 \alpha_1    & 3 \alpha_1   & 16 \alpha_2    &   16 \alpha_3 \\ 
48 \alpha_2 & 8 \alpha_2   & 64 \alpha_3 &   64 \alpha_4 \\
96 \alpha_3 & 16 \alpha_3 & 128 \alpha_4 &   128 \alpha_5 \\
\end{array}
\right]^{-1}
\left[
\begin{array}{c}
C_{1111} \\
C_{1122} \\
C_{1122} \\
C_{1111} \\
\end{array}
\right].
\end{equation}

\underline{Isotropic}: We substitute \eqref{eqn:iso3Delastrelations}, with Cauchy's relations imposed, into \eqref{eqn:CtoLambRelationPlaneStress}  to find ({\em cf.}~\eqref{eqn:PeriPlaneStressAList})

\begin{subequations}\label{eqn:AtermsIsoStress}
\begin{align}
&A_0(\xi_1,\xi_2) = \Lambda_{1111} r^4, \\
&A_2(\xi_1,\xi_2) = 6 \Lambda_{1133}r^2, \\
&A_4(\xi_1,\xi_2) = \Lambda_{3333},
\end{align}
\end{subequations}
where\footnote{While there appear to be three independent constants in \eqref{eqn:AtermsIsoStress}, the constants are not actually independent. Since the isotropic elasticity tensor $\mathbb{C}$ with Cauchy's relations imposed has only one independent constant, there is actually only one independent constant in \eqref{eqn:AtermsIsoStress}.}
\begin{equation}\label{eqn:CtoLambRelationPlaneStressIso}
\left[
\begin{array}{c}
\Lambda_{1111} \\
6\Lambda_{1133} \\
\Lambda_{3333} \\
\end{array}
\right]
=
\frac{64}{\pi}
\left[
\begin{array}{cccccccc}
48 \alpha_1   & 48 \alpha_2  &   48 \alpha_3 \\
64 \alpha_2  & 64 \alpha_3 &   64 \alpha_4 \\
128 \alpha_3 & 128 \alpha_4 &   128 \alpha_5 \\
\end{array}
\right]^{-1}
\left[
\begin{array}{c}
C_{1111} \\
\frac{1}{3} C_{1111} \\
C_{1111} \\
\end{array}
\right].
\end{equation}

\begin{remark}
In classical linear elasticity, the plane stress model for each symmetry class reduces identically to a corresponding two-dimensional model ({\em cf}.~Table~\ref{tab:modelequivalenceplanestrainstressto2D}). A similar situation occurs with the peridynamic plane stress model. While the plane stress micromodulus functions $\lambda_0$ and $\lambda_2$, given by~\eqref{eqn:PlaneStressLambda_i}, are not equivalent to the two-dimensional micromodulus function \eqref{eqn:lambdaobl}, they do possess one of the four symmetries of two-dimensional classical linear elasticity ({\em cf}. Theorem \ref{thm:symmetryclasses}) for each symmetry class ({\em cf}.~Table~\ref{tab:micromodulusequivalenceplanestressto2D}). This can be observed by considering the symmetries of $\left\{ A_i(\xi_1,\xi_2) \right\}$ for each symmetry class. Unlike their two-dimensional counterparts, the plane stress micromodulus functions incorporate out-of-plane information. It is also interesting to note that the same correspondence between each three-dimensional symmetry class and the corresponding two-dimensional symmetry class in Table~\ref{tab:modelequivalenceplanestrainstressto2D} occurs for the plane stress micromodulus functions, which is summarized in Table \ref{tab:micromodulusequivalenceplanestressto2D}. However, while different plane stress micromodulus functions may possess the same two-dimensional symmetry, e.g., tetragonal and cubic micromodulus functions both have square symmetry, the resulting plane stress micromodulus functions are unique for each three-dimensional symmetry class. Due to this fact, the resulting plane stress equations of motion are unique for each three-dimensional symmetry class. 
\end{remark}

\begin{table}
\begin{center}
\begin{tabular}{|c|c|}
\hline
\textbf{Pure Two-Dimensional Peridynamic Model} & \textbf{Peridynamic Plane Stress Model} \\
\hline
Oblique & Monoclinic and Trigonal \\
\hline
Rectangular & Orthotropic \\
\hline
Square & Tetragonal and Cubic \\
\hline
Isotropic & Transversely Isotropic and Isotropic \\
\hline
\end{tabular}
\caption{Symmetry equivalence between the pure two-dimensional peridynamic models and the peridynamic plane stress models (provided the micromodulus functions $\lambda_0$ and $\lambda_2$ are informed by three-dimensional elasticity tensors having the form of those in Section~\ref{sec:threedimclassicalelasticity}).}\label{tab:micromodulusequivalenceplanestressto2D}
\end{center}
\end{table}

A common approach for modeling plane stress in the peridynamic literature simply employs a two-dimensional peridynamic model and matches the model constants to those in the corresponding classical plane stress model (see, e.g., \cite{Gerstle2005,Le2014,Sarego2016}). In bond-based peridynamics, this produces a two-dimensional bond-based model with a single micromodulus function. Unlike in the case for peridynamic plane strain ({\em cf}. Section \ref{sec:PeridynamicPlaneStrain}), the peridynamic plane stress model \eqref{eqn:PlaneStressFinal} developed in this work is a state-based model with two micromodulus functions $\lambda_0$ and $\lambda_2$, and thus a direct comparison considering only micromodulus functions is insufficient. Instead, we explore the behavior of the plane stress micromodulus functions \eqref{eqn:lambdaiplanestressminmaxdef} for various symmetry classes.

\textbf{Micromodulus function visualization for peridynamic generalized plane stress}

In this section, we provide visualizations of the behavior of the plane stress micromodulus functions \eqref{eqn:lambdaiplanestressminmaxdef} for various symmetry classes. We observe that in the peridynamic plane stress equation of motion \eqref{eqn:PlaneStressFinal}, the micromodulus function $\lambda_0(\xi_1,\xi_2)$ is multiplied by $\xi_i \xi_j$ and the micromodulus function $\lambda_2(\xi_1,\xi_2)$ is multiplied by $\xi_i$. Thus, a singularity of the form $\frac{1}{r^2}$ for $\lambda_0(\xi_1,\xi_2)$ and $\frac{1}{r}$ for $\lambda_2(\xi_1,\xi_2)$, represent removable singularities of the model. To better visualize the anisotropic behavior of the micromodulus functions, we remove such singularities by plotting $r^2 \lambda_0(\xi_1,\xi_2)$ and $r \lambda_2(\xi_1,\xi_2)$. 

Our visualization of the micromodulus functions covers materials in every symmetry class except triclinic, since the plane stress Assumption (P$\varepsilon$\ref{assump:PSs7Peri}) excludes this class. Since we are dealing with bond-based peridynamic models, we must consider materials satisfying (at least approximately) Cauchy's relations. For the sake of consistency, we consider the same materials as were utilized to illustrate anisotropy in the plane strain micromodulus functions. In particular, we inform the peridynamic plane stress micromodulus functions \eqref{eqn:PlaneStressLambda_i} with the material properties summarized in Table \ref{tab:MaterialProp}.

In Figure \ref{fig:planestresslamb0} we plot $r^2 \lambda_0(\xi_1,\xi_2)$ with $\omega(\| \bfxi \|) = \frac{1}{\| \bfxi \|}$ (in green) and $\omega(\| \bfxi \|) = 1$ (in blue).  Similarly, in Figure \ref{fig:planestresslamb2} we plot $r \lambda_2(\xi_2,\xi_2)$ with $\omega(\| \bfxi \|) = \frac{1}{\| \bfxi \|}$ (in green) and $\omega(\| \bfxi \|) = 1$ (in blue). For each symmetry class, these functions are found by substituting the corresponding elasticity tensor $\tilde{\bfC}$ from Table \ref{tab:MaterialProp} into~\eqref{eqn:PlaneStressLambda_i} (with \eqref{eqn:PeriPlaneStressAList} informed by inverting the system \eqref{eqn:CtoLambRelationPlaneStress}). For all of the plots, we took $3h = \delta = 1$. 

\begin{figure}
\begin{tabular}{cc}
\begin{tabular}{cc}
\includegraphics[scale=0.45,trim={3cm 0 3cm 0.5cm},clip]{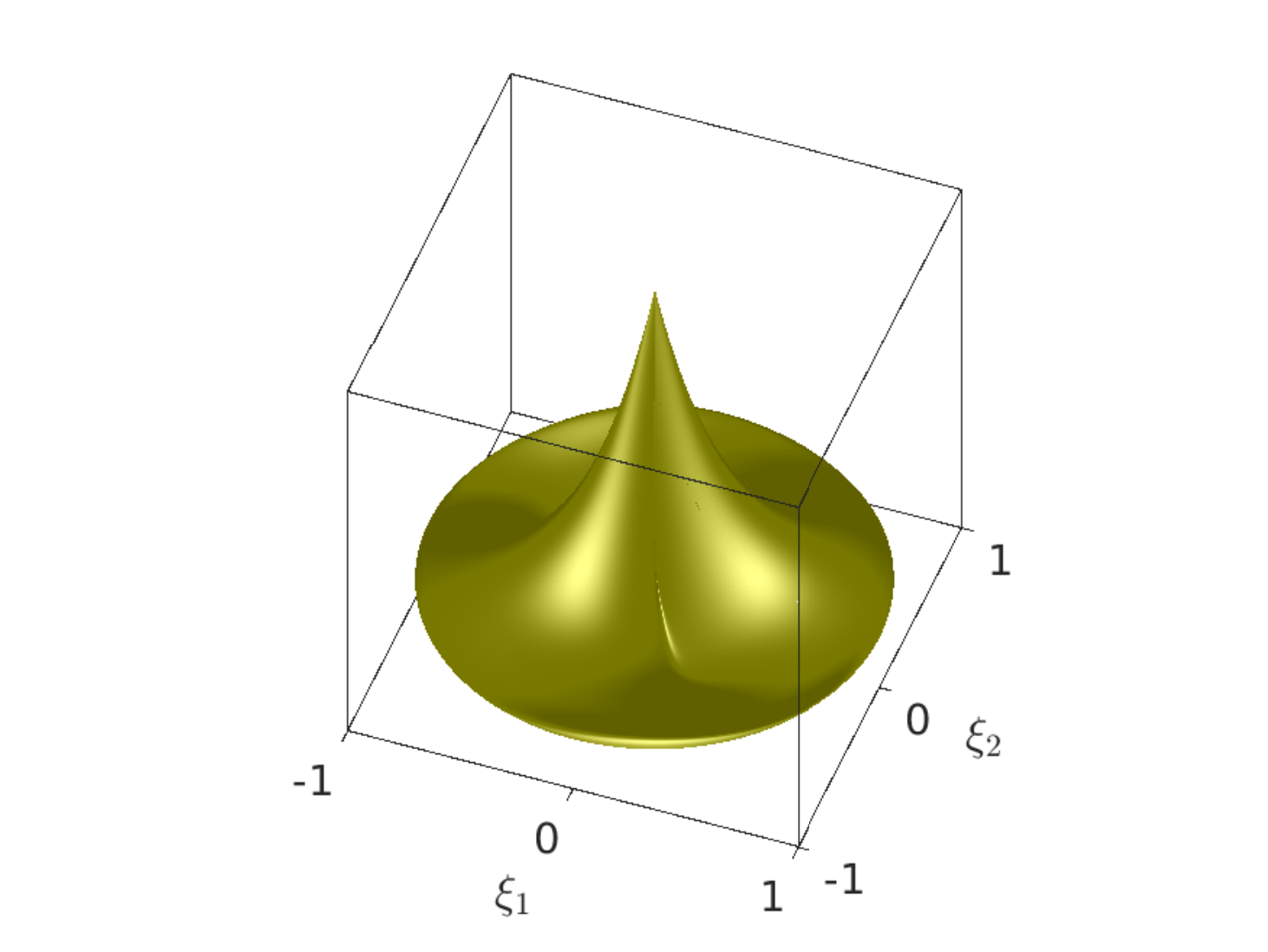} &
\includegraphics[scale=0.45,trim={3cm 0 3cm 0.5cm},clip]{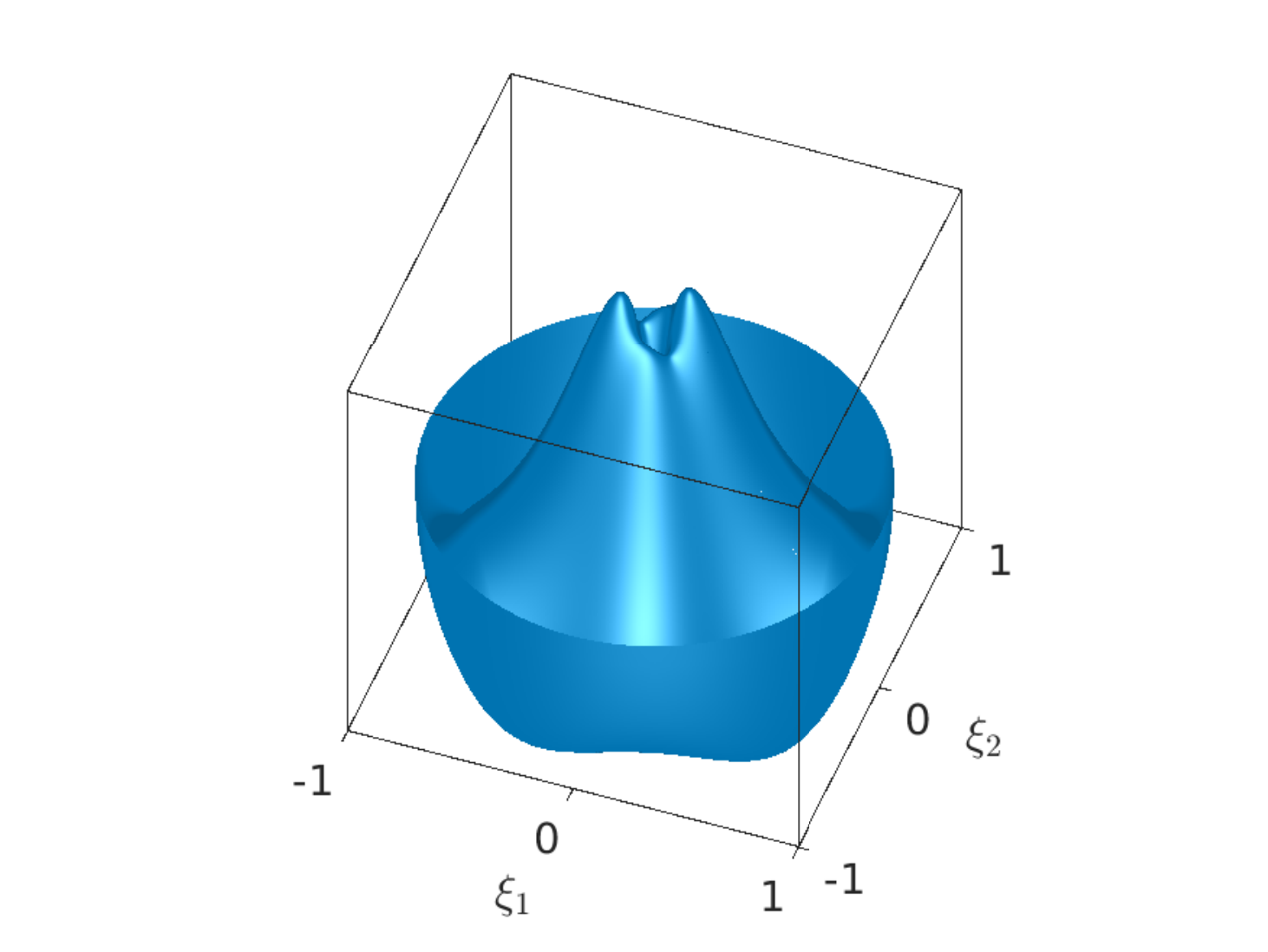}
\end{tabular} & \begin{tabular}{cc}
\includegraphics[scale=0.45,trim={3cm 0 3cm 0.5cm},clip]{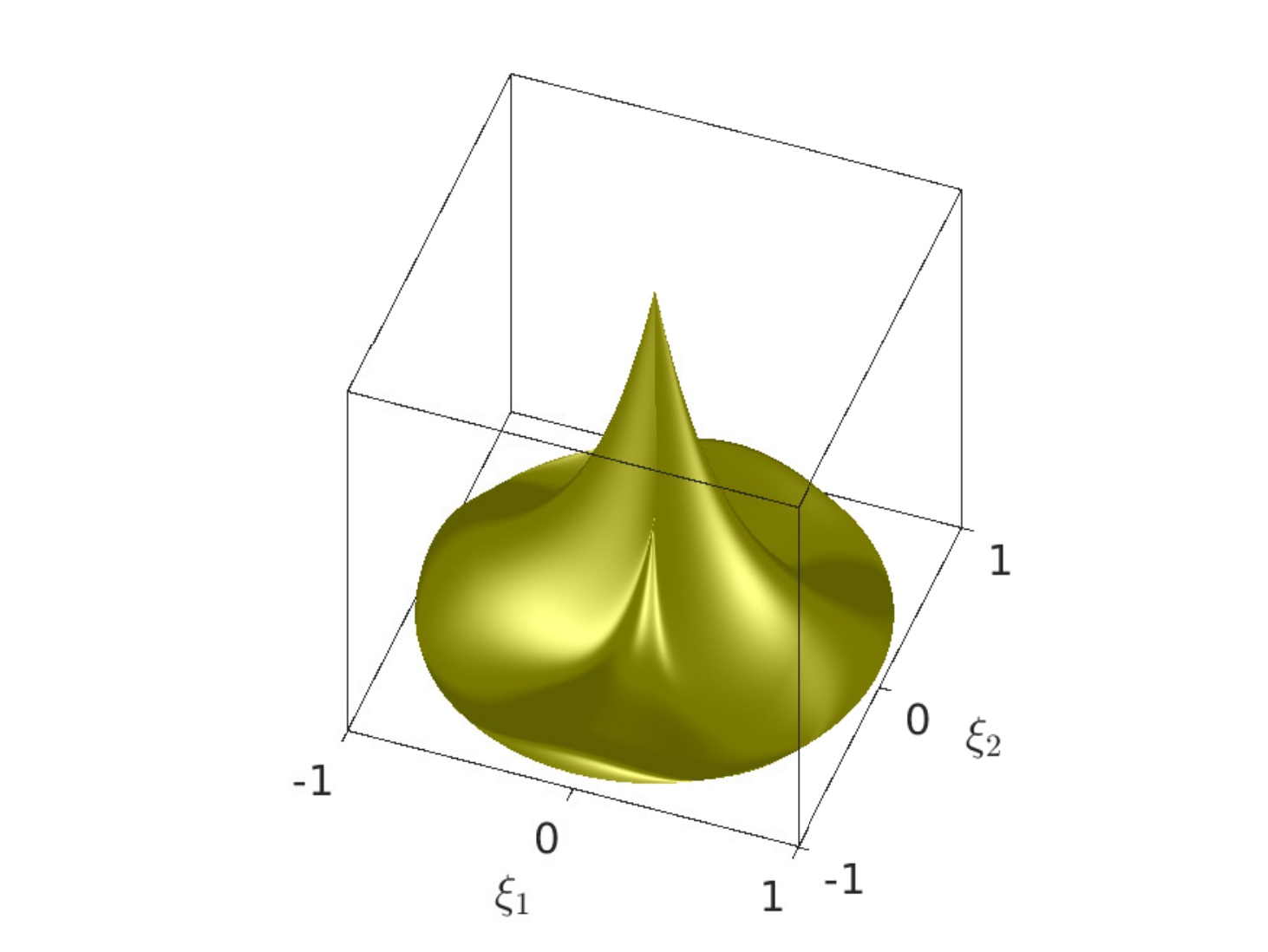} &
\includegraphics[scale=0.45,trim={3cm 0 3cm 0.5cm},clip]{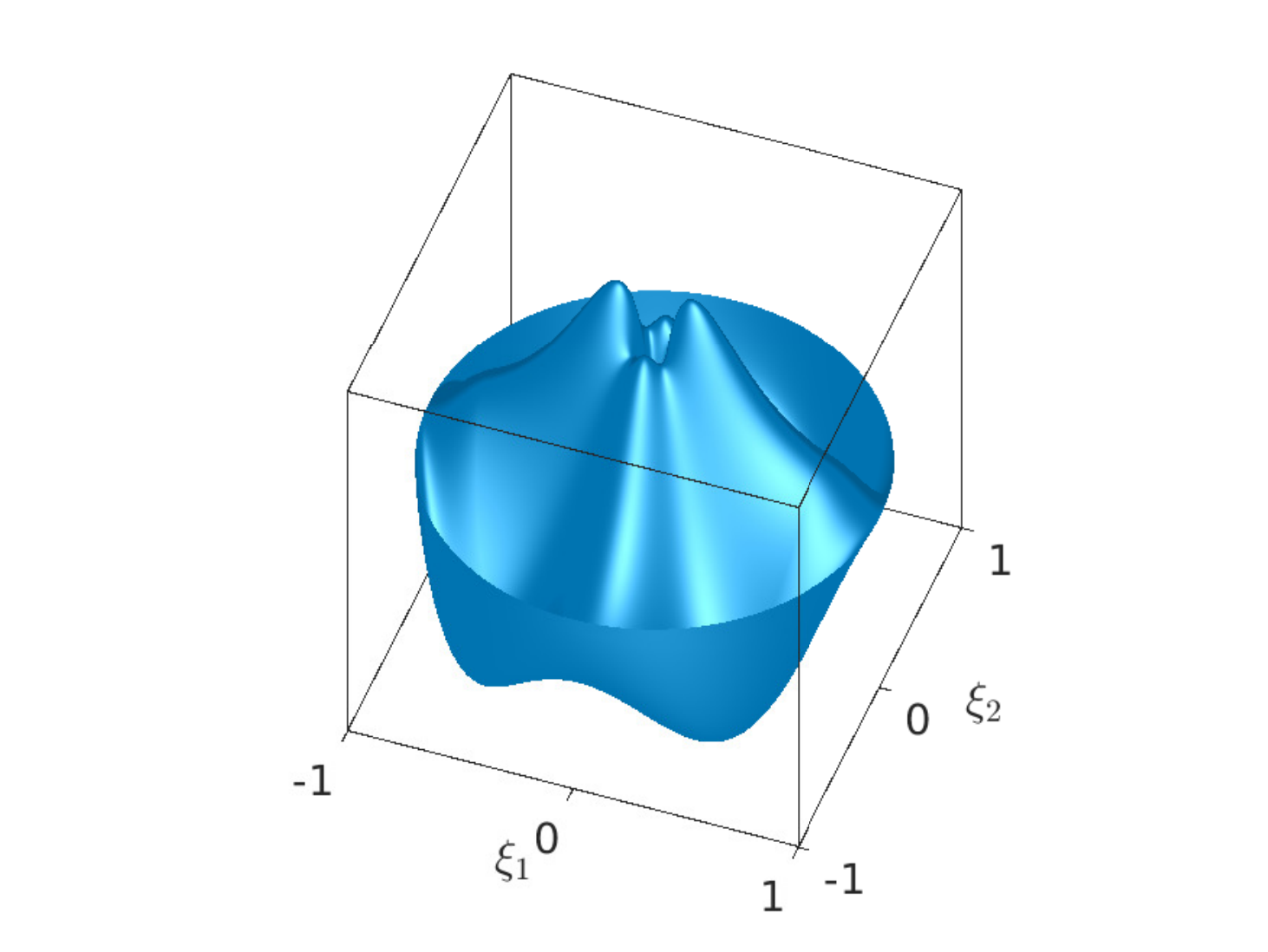}
\end{tabular} \\
Monoclinic ($\text{CoTeO}_4$) & Orthotropic ($\text{Te}_2\text{W}$) \\
\begin{tabular}{cc}
\includegraphics[scale=0.45,trim={3cm 0 3cm 0.5cm},clip]{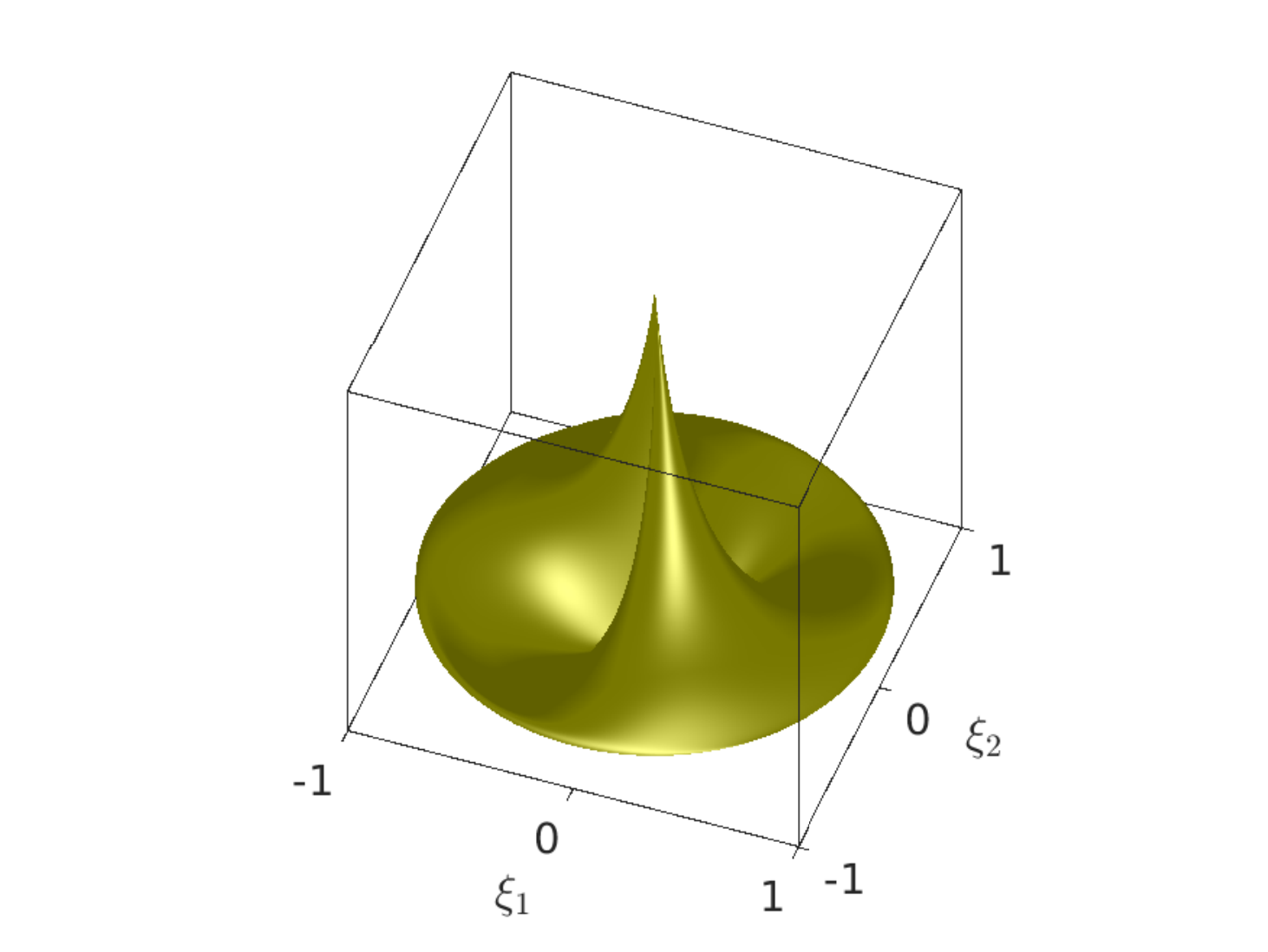} &
\includegraphics[scale=0.45,trim={3cm 0 3cm 0.5cm},clip]{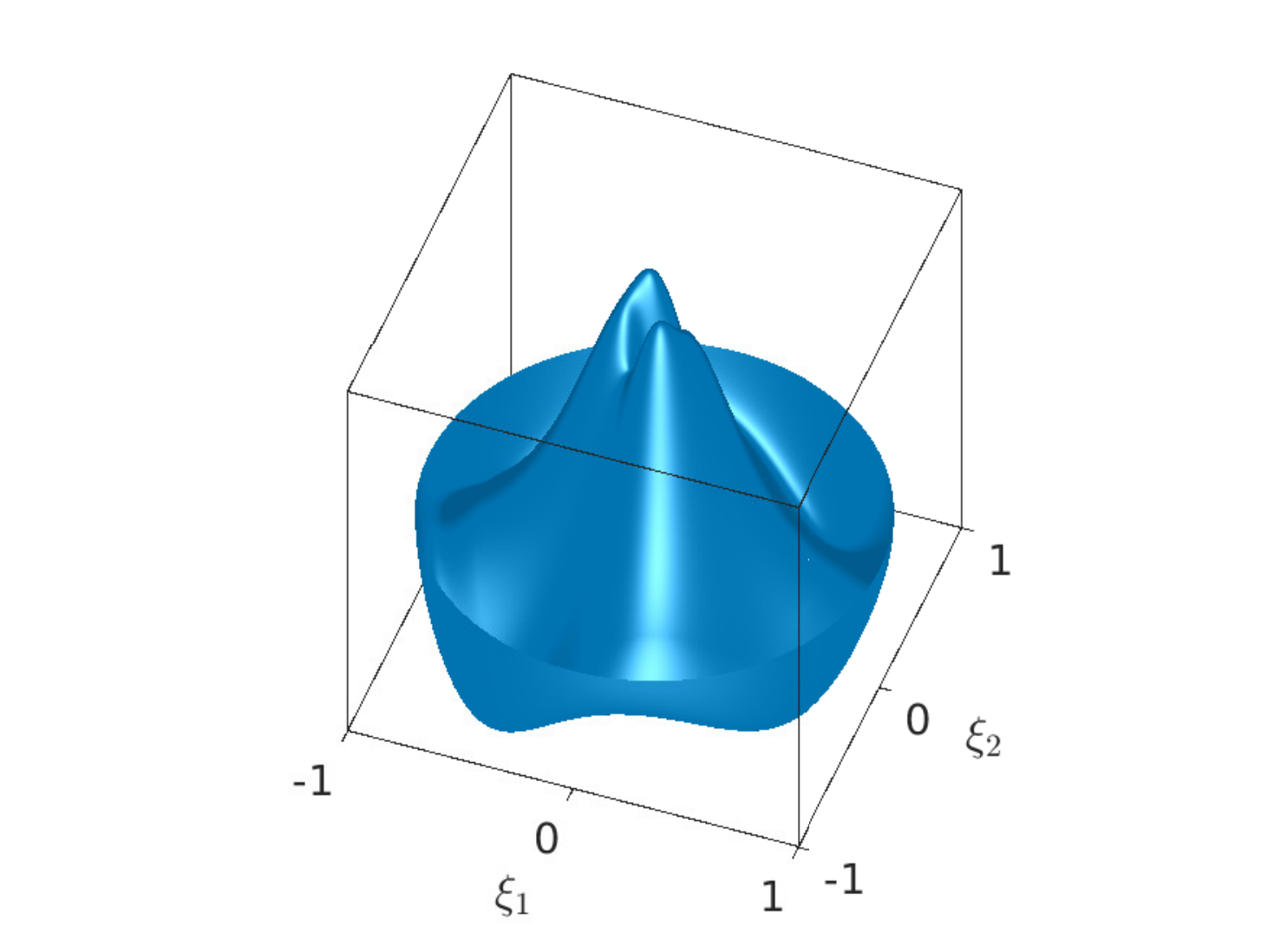}
\end{tabular} & \begin{tabular}{cc}
\includegraphics[scale=0.45,trim={3cm 0 3cm 0.5cm},clip]{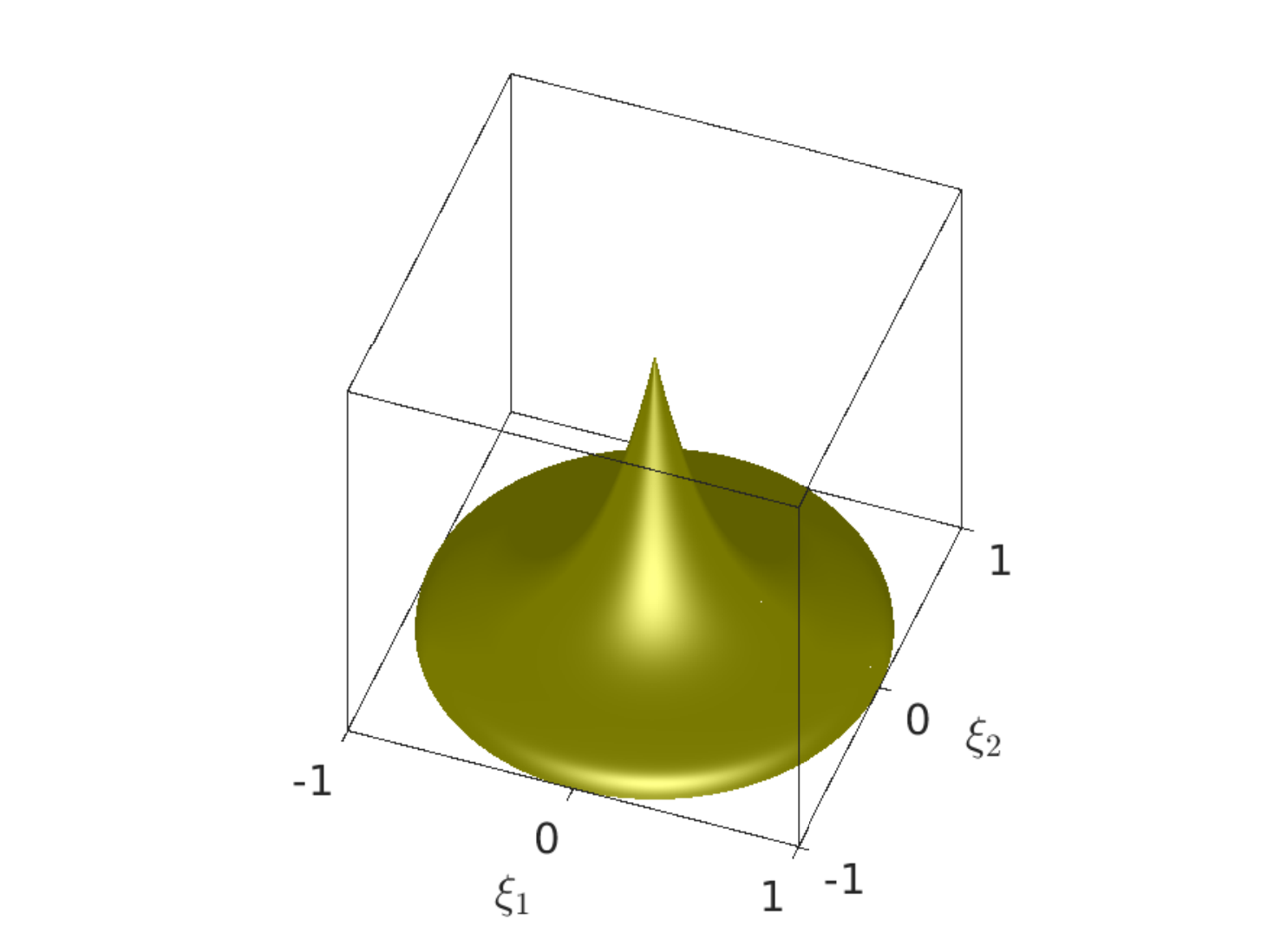} &
\includegraphics[scale=0.45,trim={3cm 0 3cm 0.5cm},clip]{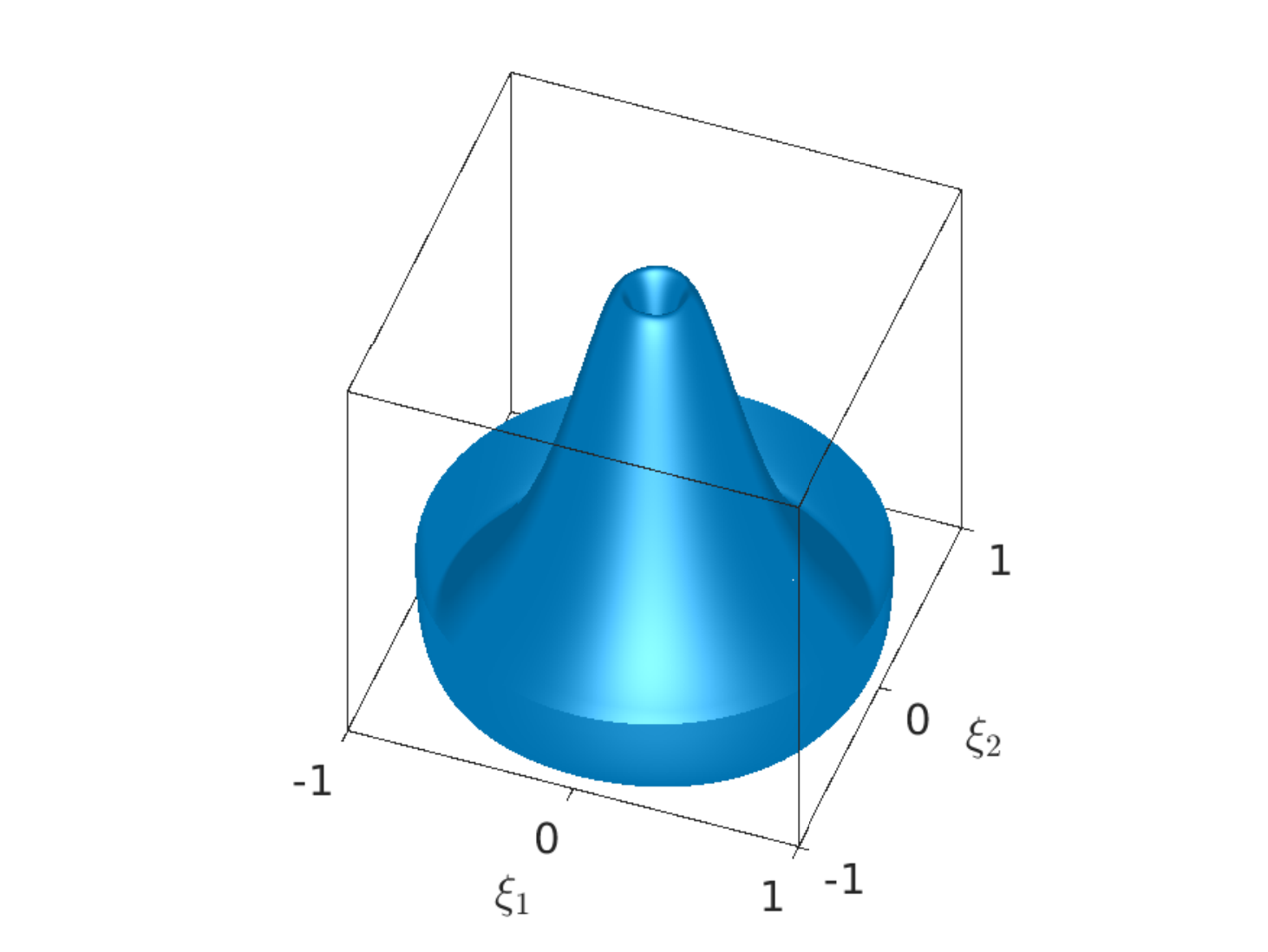}
\end{tabular} \\
Trigonal ($\text{Ta}_2\text{C}$) & Tetragonal (Si) \\
\begin{tabular}{cc}
\includegraphics[scale=0.45,trim={3cm 0 3cm 0.5cm},clip]{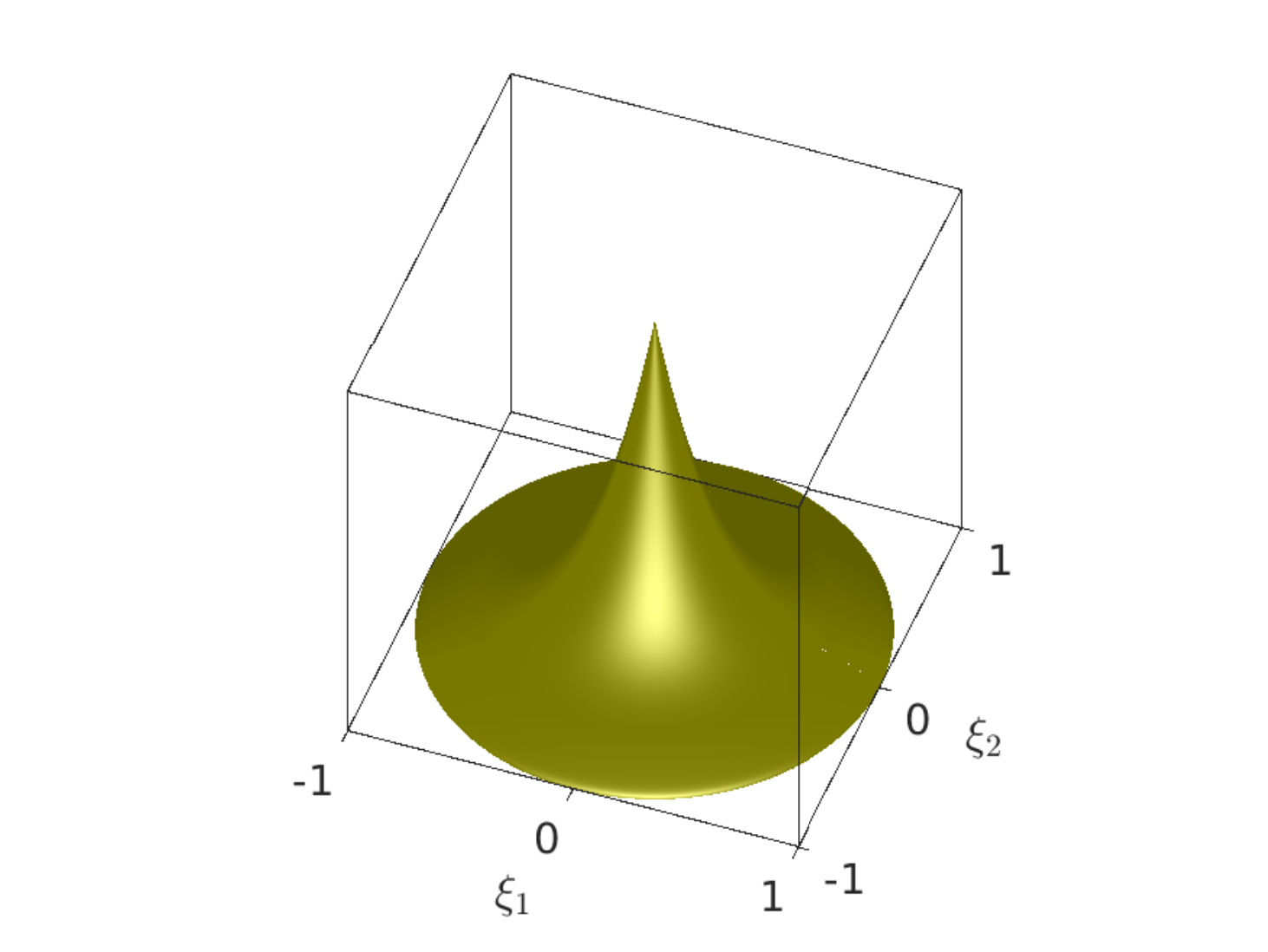} &
\includegraphics[scale=0.45,trim={3cm 0 3cm 0.5cm},clip]{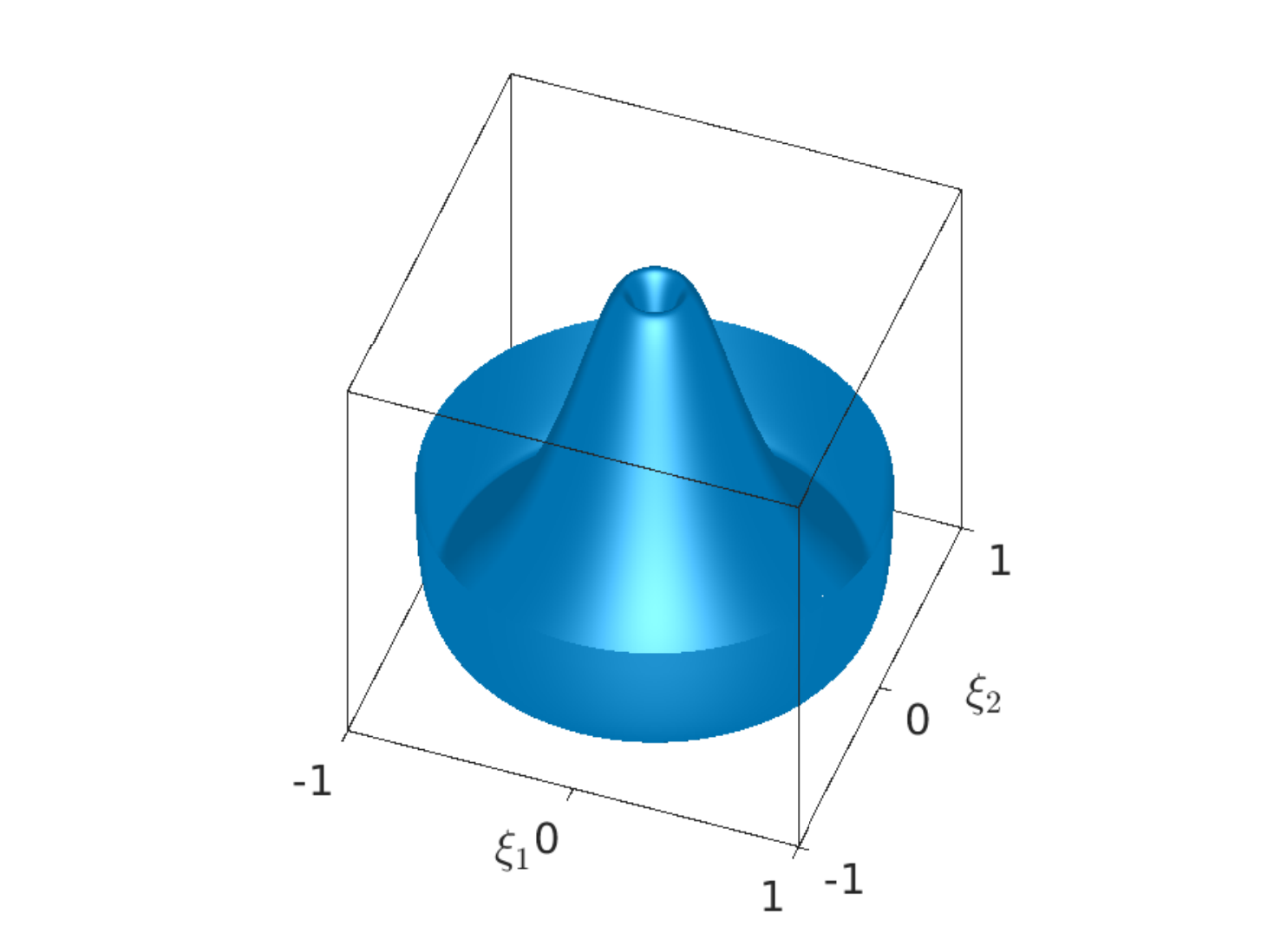}
\end{tabular} & \begin{tabular}{cc}
\includegraphics[scale=0.45,trim={3cm 0 3cm 0.5cm},clip]{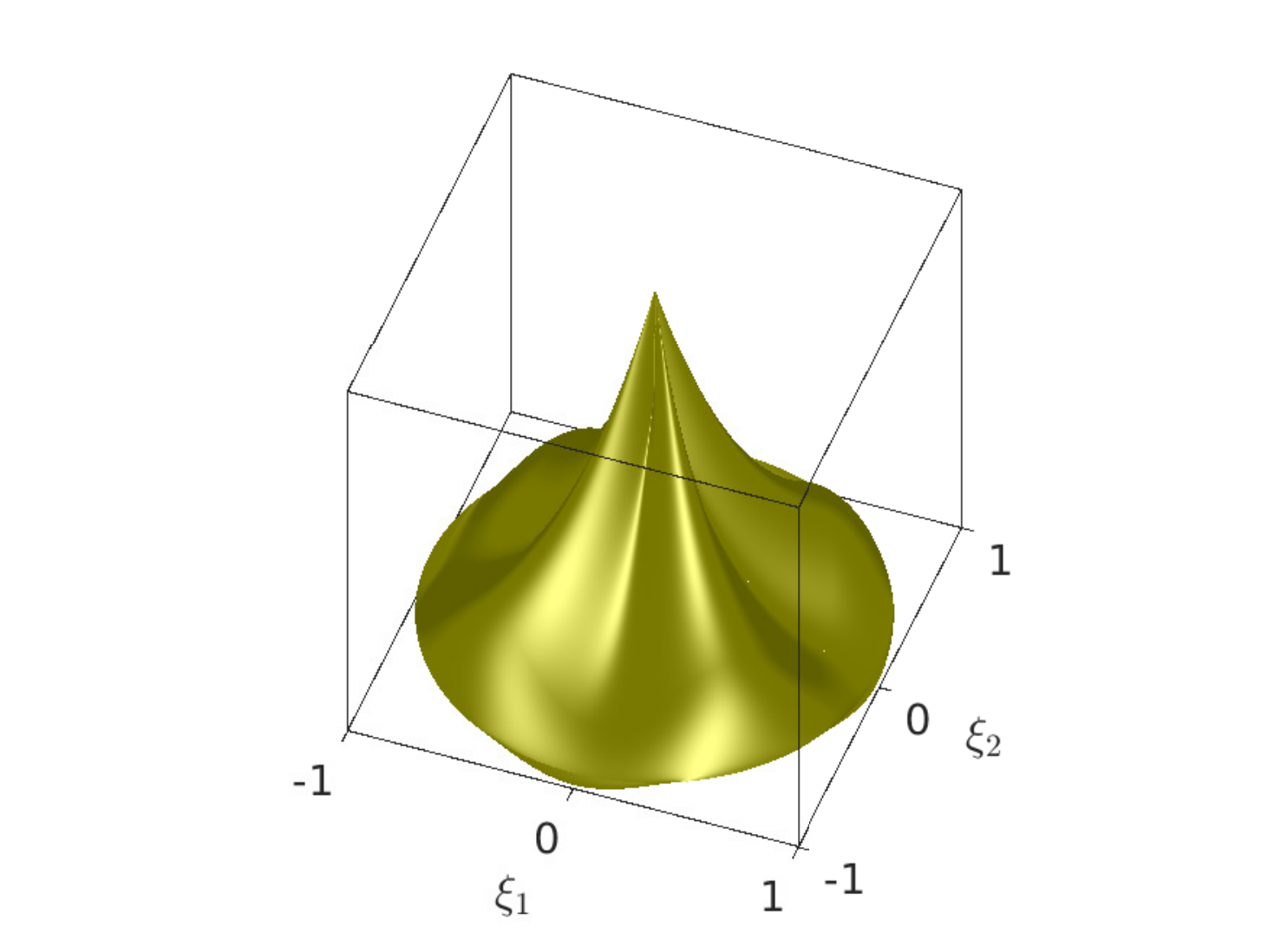} &
\includegraphics[scale=0.45,trim={3cm 0 3cm 0.5cm},clip]{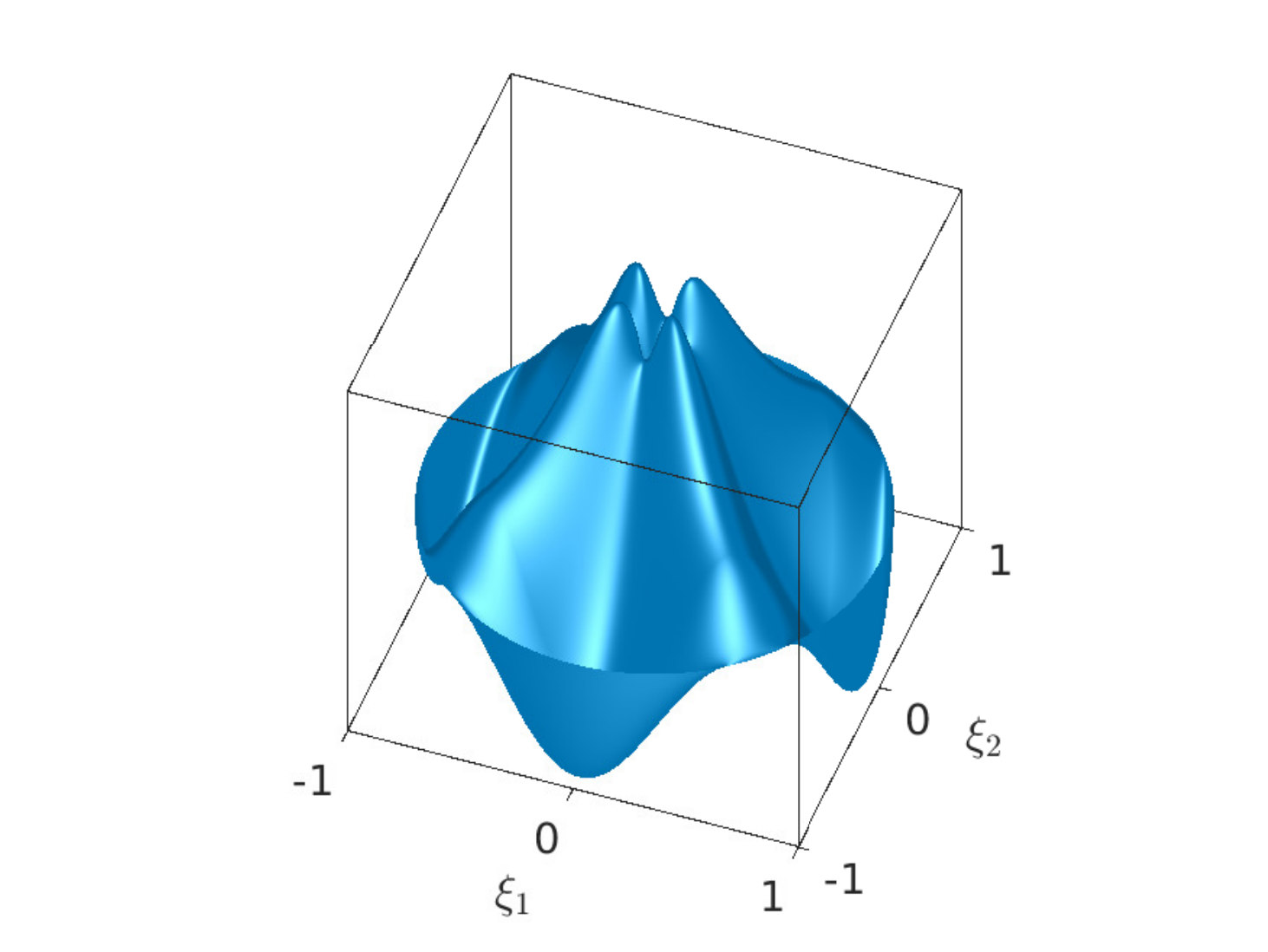}
\end{tabular} \\
Transversely Isotropic (MoN) & Cubic ($\text{MgAl}_2\text{O}_4$) \\
\multicolumn{2}{c}{
\begin{tabular}{cc}
\includegraphics[scale=0.45,trim={3cm 0 3cm 0.5cm},clip]{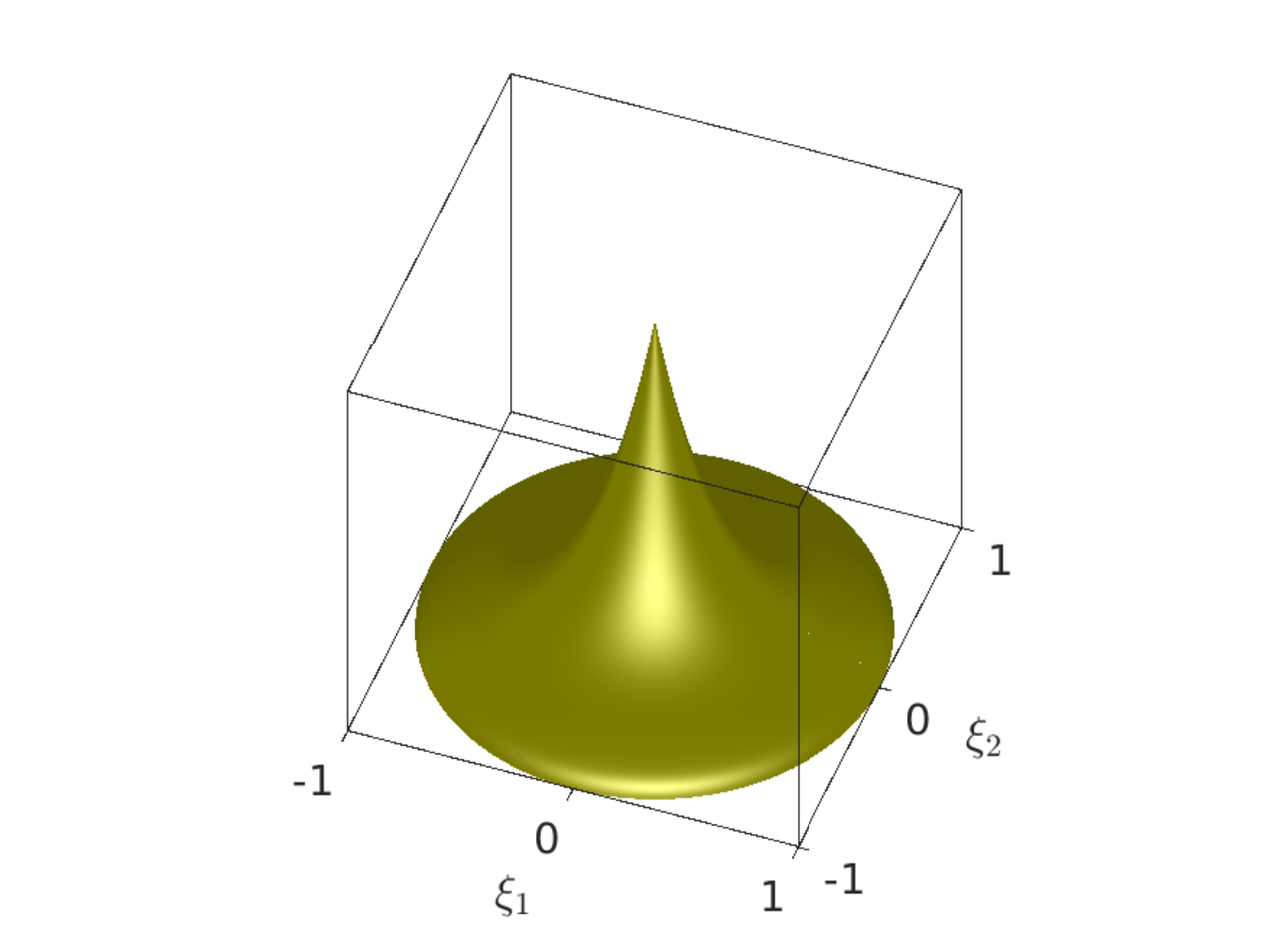} &
\includegraphics[scale=0.45,trim={3cm 0 3cm 0.5cm},clip]{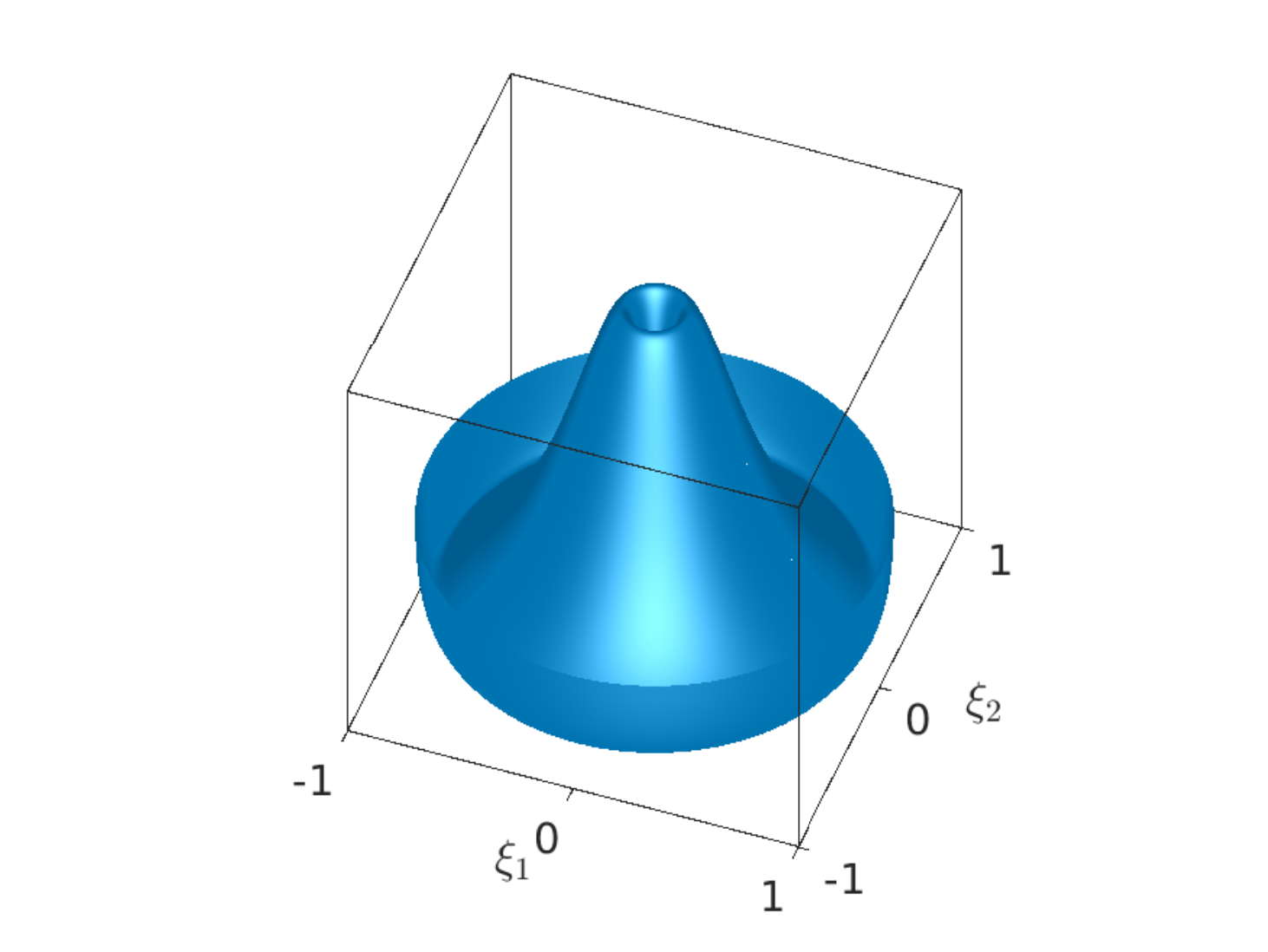}
\end{tabular} } \\
\multicolumn{2}{c}{Isotropic (Pyroceram 9608)}
\end{tabular}
\caption{Plots of $r^2 \lambda_{0} (\xi_1,\xi_2)$ ({\em cf.} \eqref{eqn:PlanestressLambda_0}) with $\omega( \|\bfxi \|) = \frac{1}{\| \bfxi \|}$ (in green) and $\omega( \| \bfxi \| ) = 1$ (in blue).}
\label{fig:planestresslamb0}
\end{figure}


\begin{figure}
\begin{tabular}{cc}
\begin{tabular}{cc}
\includegraphics[scale=0.45,trim={3cm 0 3cm 0.5cm},clip]{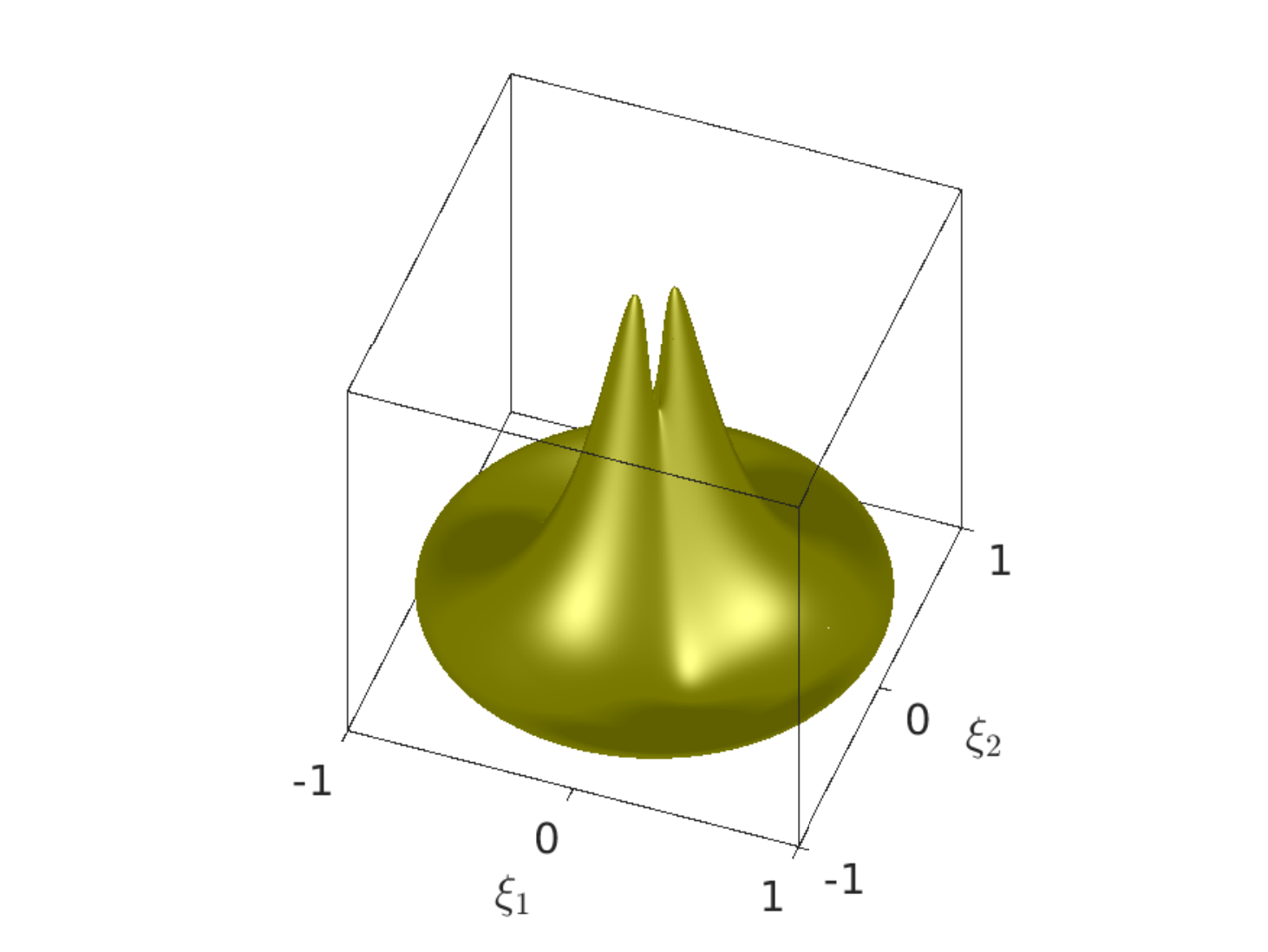} &
\includegraphics[scale=0.45,trim={3cm 0 3cm 0.5cm},clip]{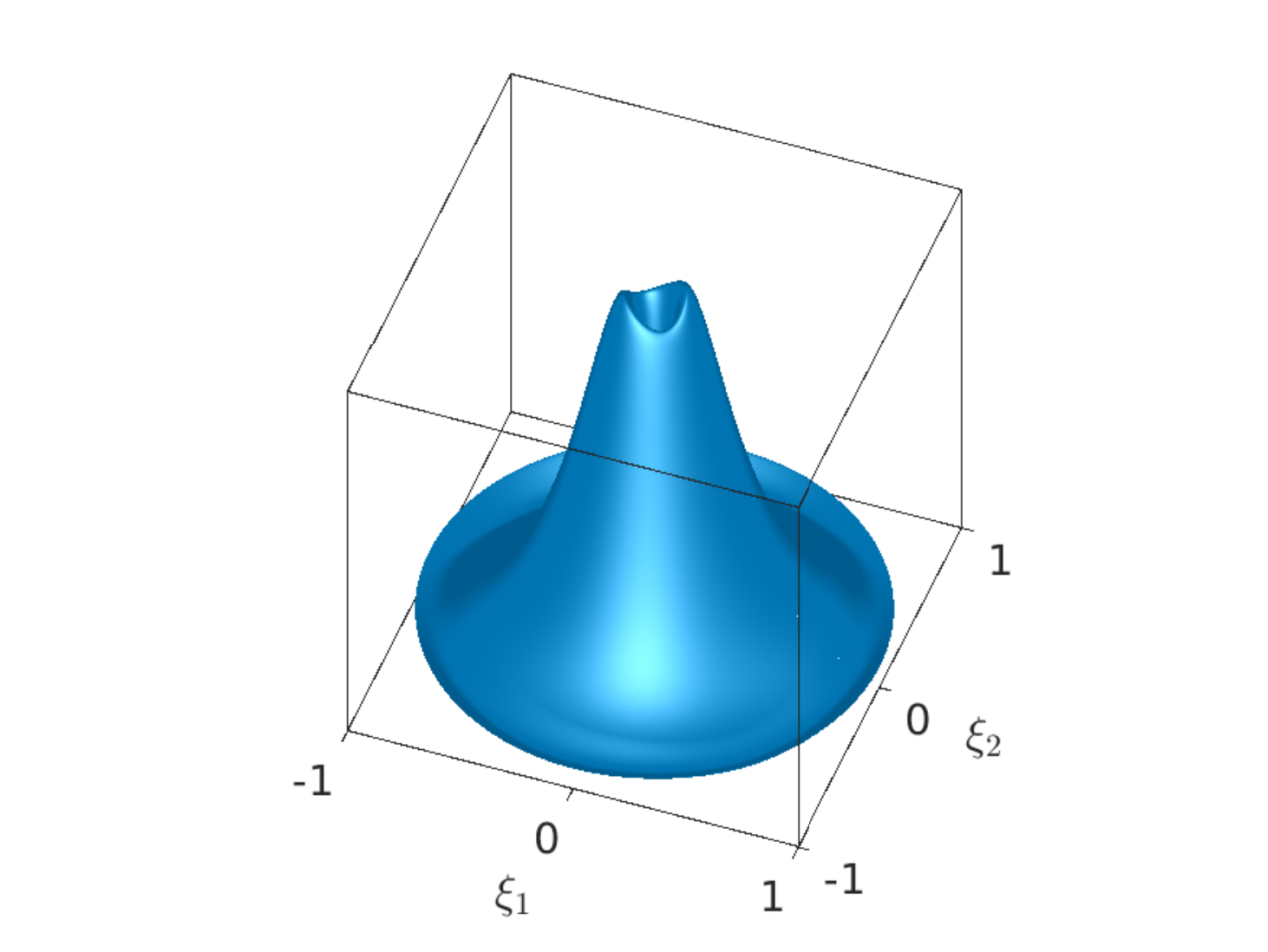}
\end{tabular} & \begin{tabular}{cc}
\includegraphics[scale=0.45,trim={3cm 0 3cm 0.5cm},clip]{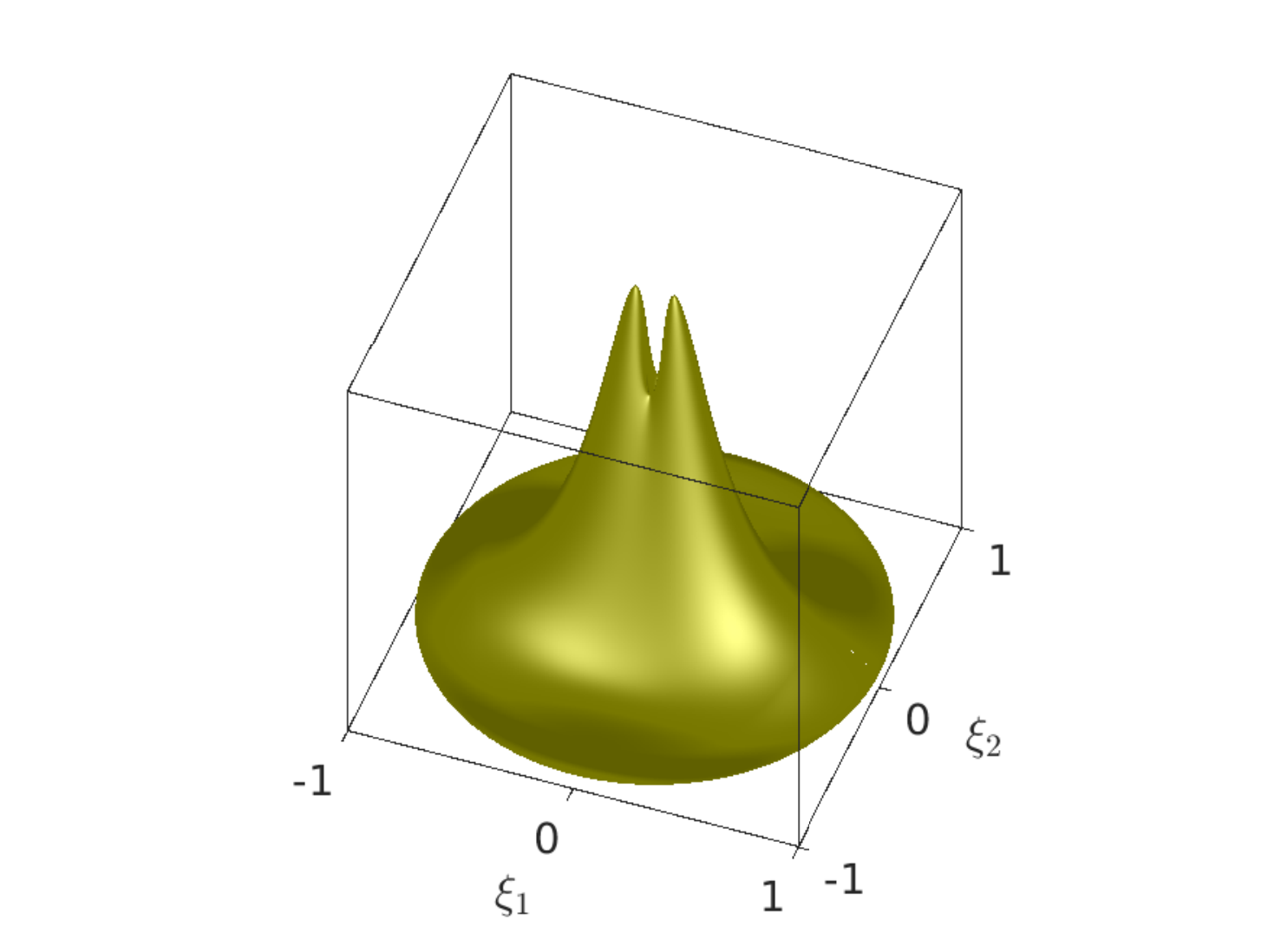} &
\includegraphics[scale=0.45,trim={3cm 0 3cm 0.5cm},clip]{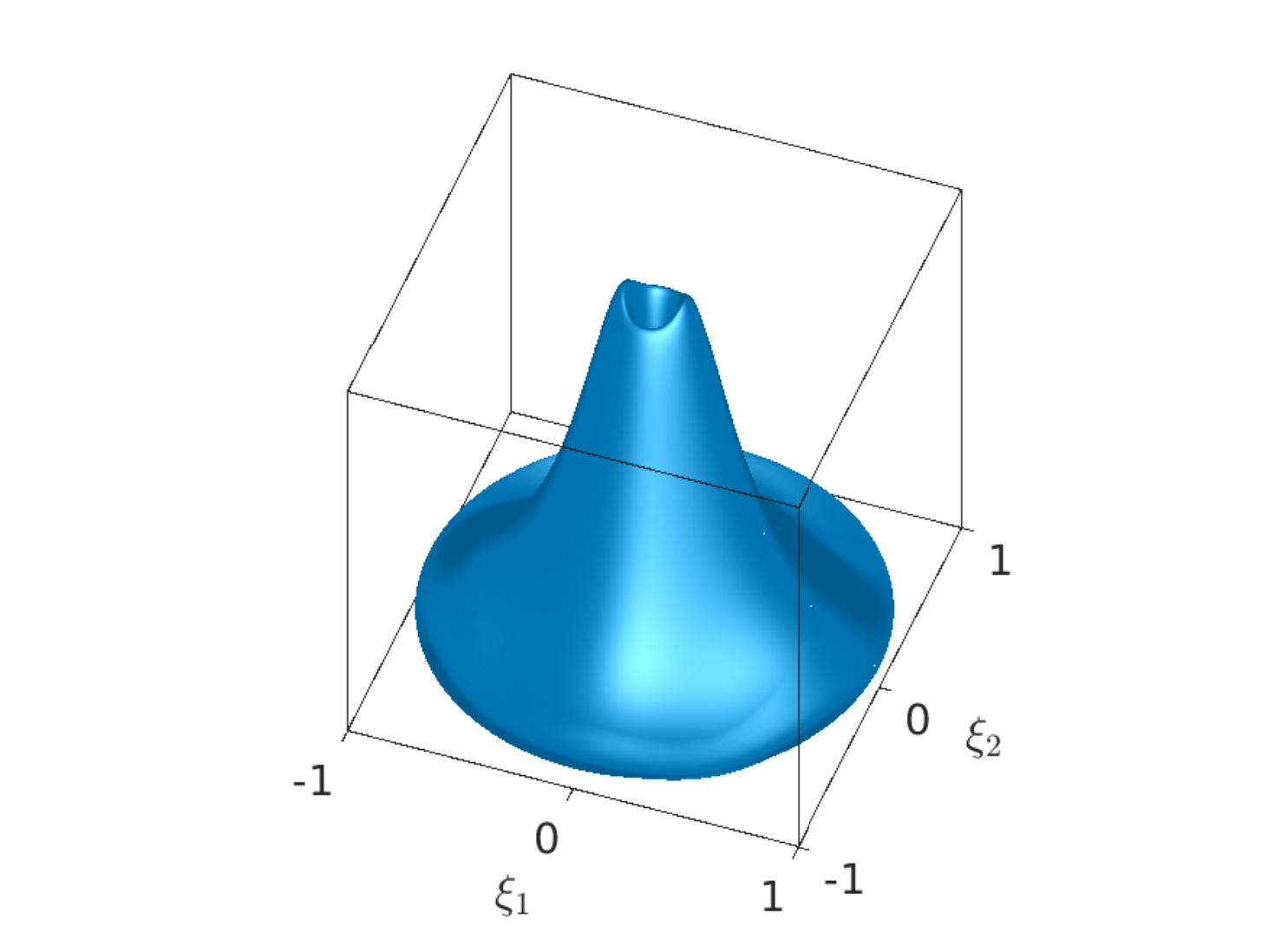}
\end{tabular} \\
Monoclinic ($\text{CoTeO}_4$) & Orthotropic ($\text{Te}_2\text{W}$) \\
\begin{tabular}{cc}
\includegraphics[scale=0.45,trim={3cm 0 3cm 0.5cm},clip]{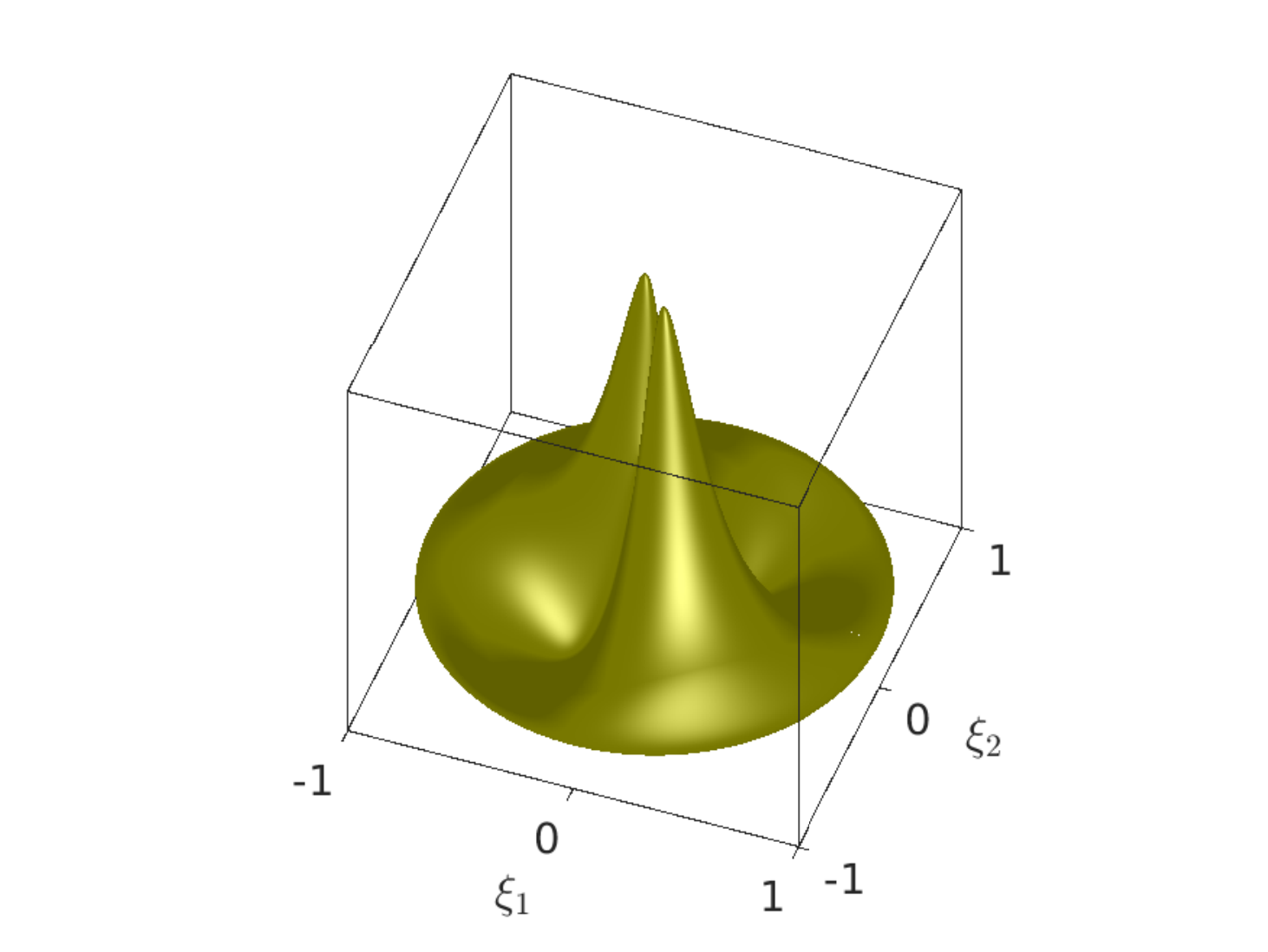} &
\includegraphics[scale=0.45,trim={3cm 0 3cm 0.5cm},clip]{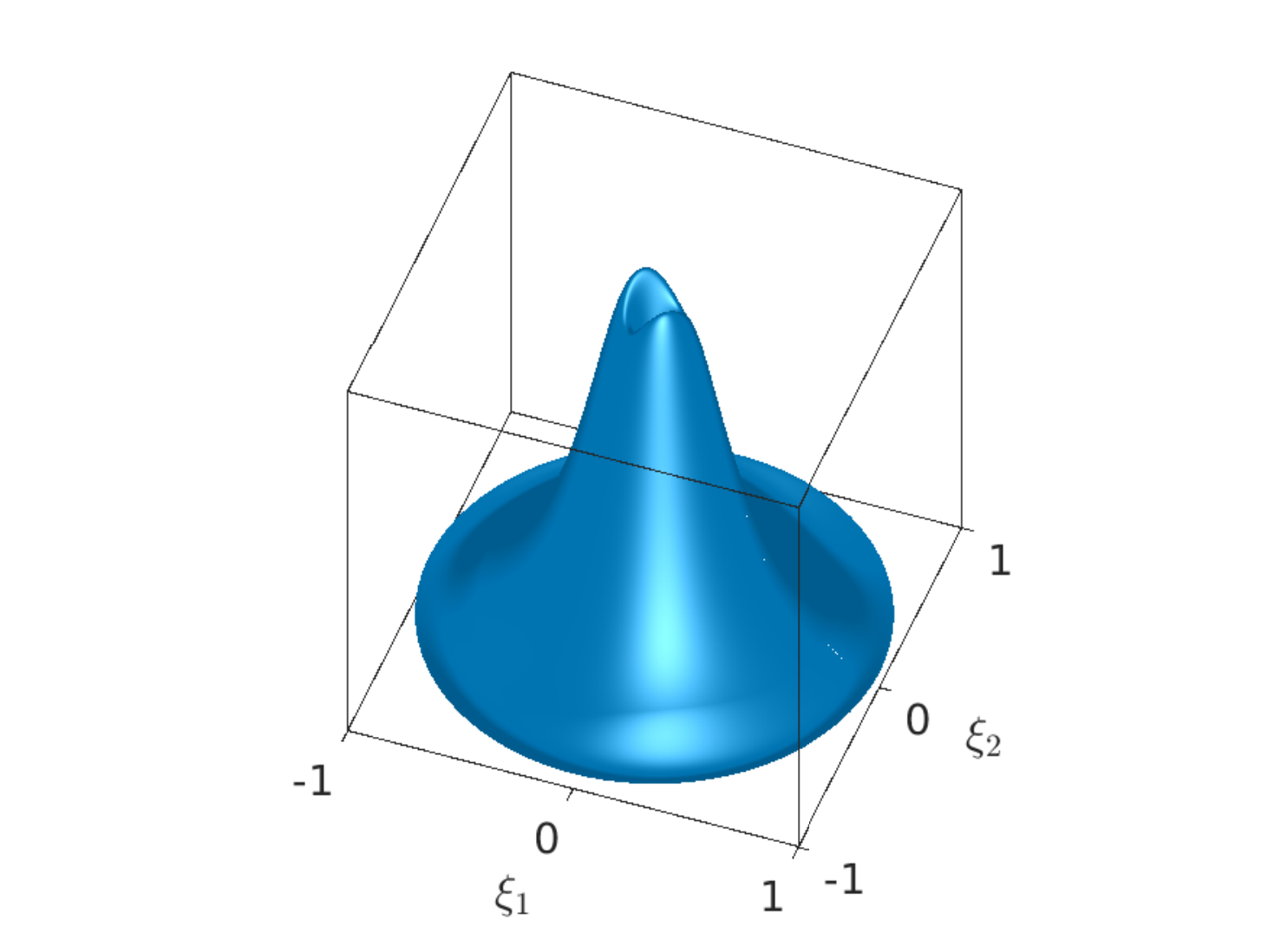}
\end{tabular} & \begin{tabular}{cc}
\includegraphics[scale=0.45,trim={3cm 0 3cm 0.5cm},clip]{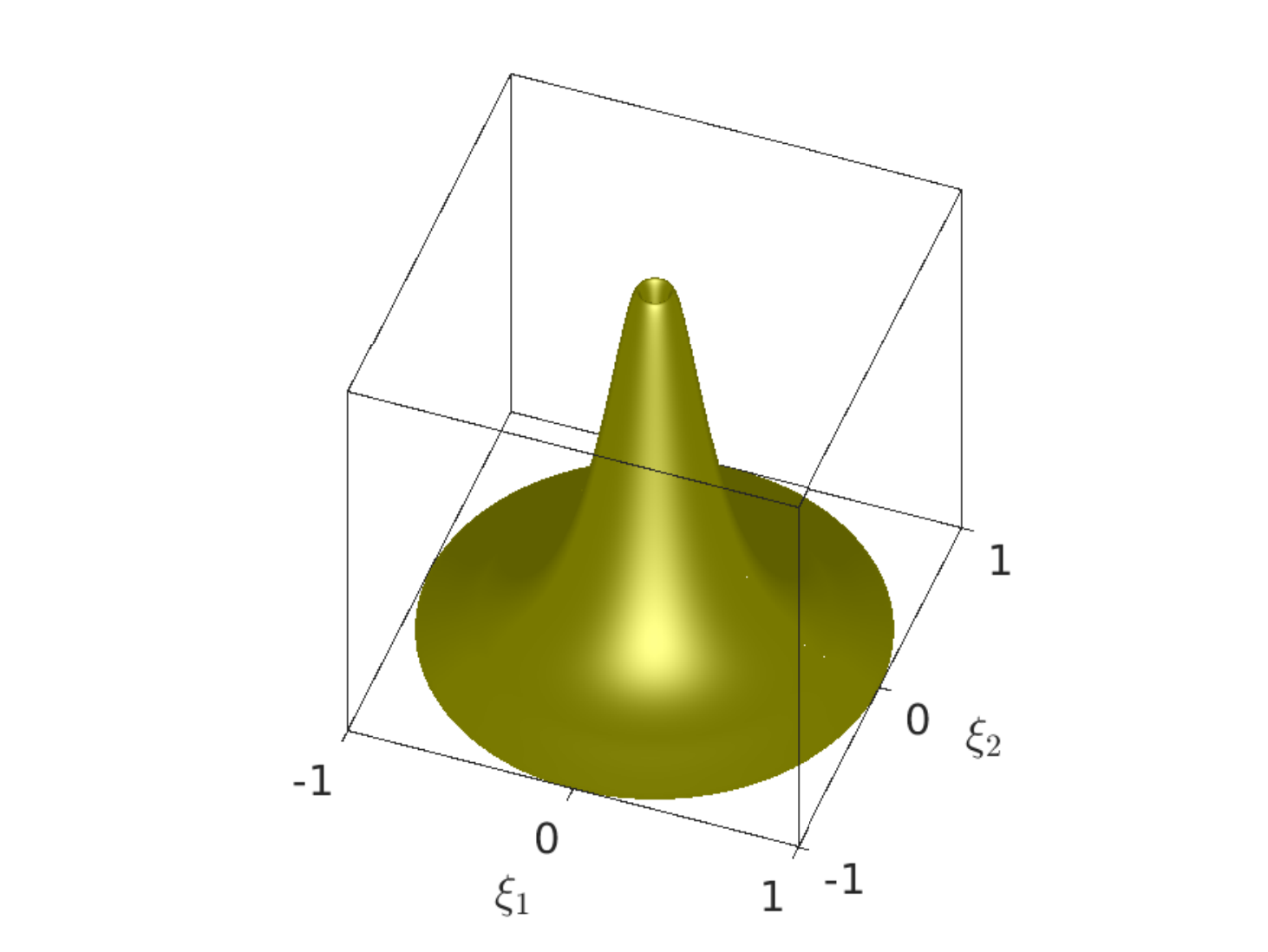} &
\includegraphics[scale=0.45,trim={3cm 0 3cm 0.5cm},clip]{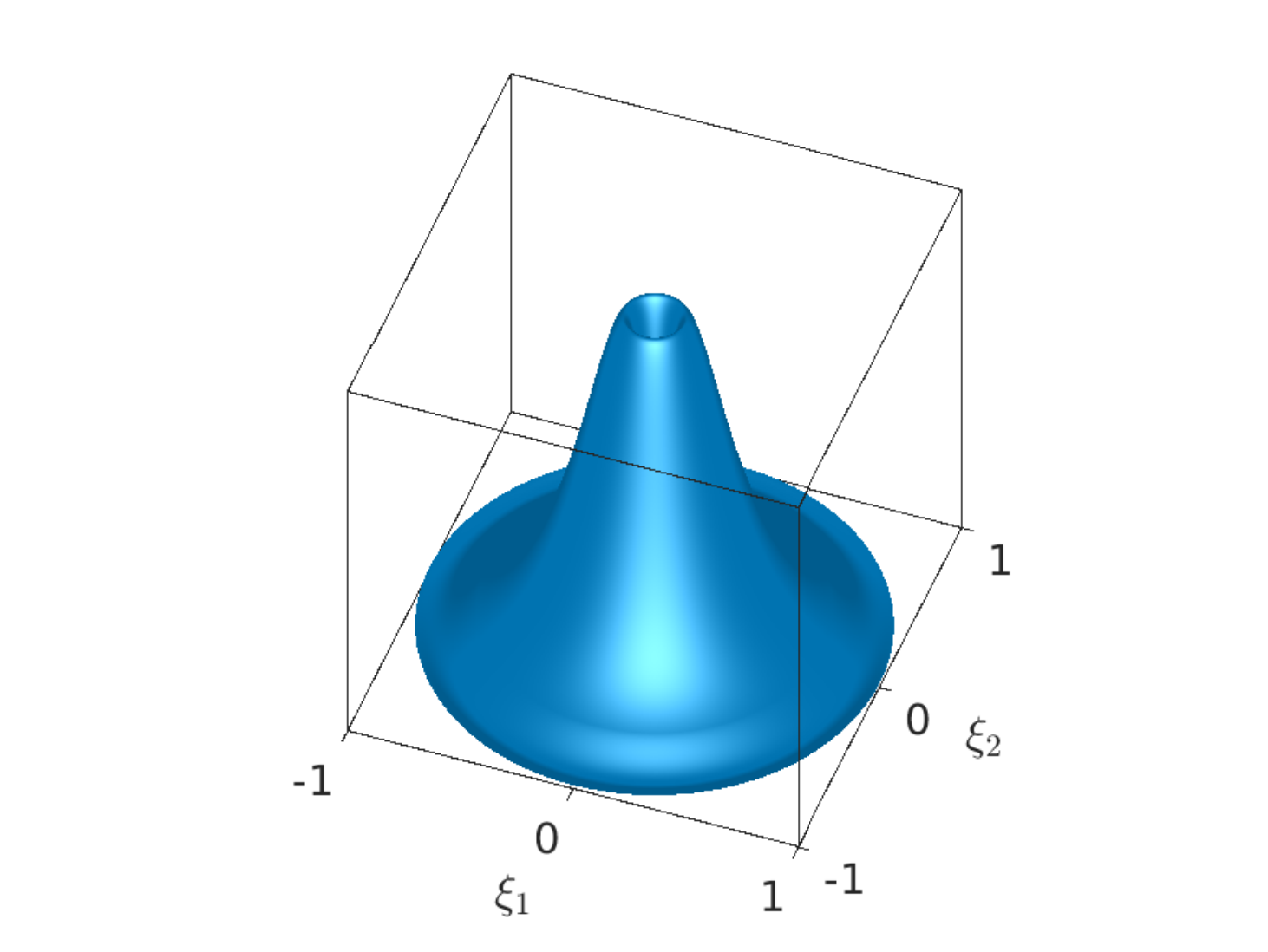}
\end{tabular} \\
Trigonal ($\text{Ta}_2\text{C}$) & Tetragonal (Si) \\
\begin{tabular}{cc}
\includegraphics[scale=0.45,trim={3cm 0 3cm 0.5cm},clip]{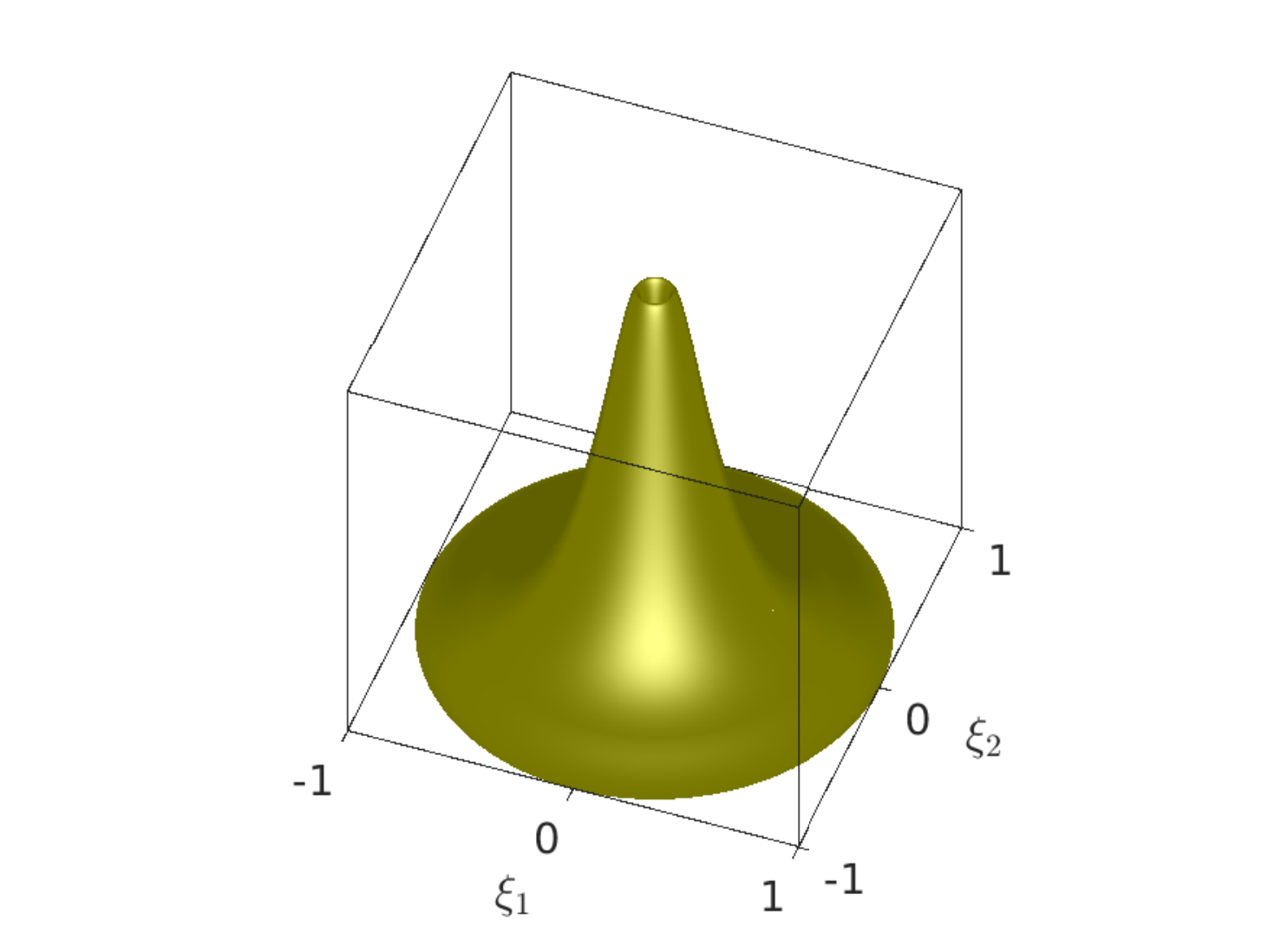} &
\includegraphics[scale=0.45,trim={3cm 0 3cm 0.5cm},clip]{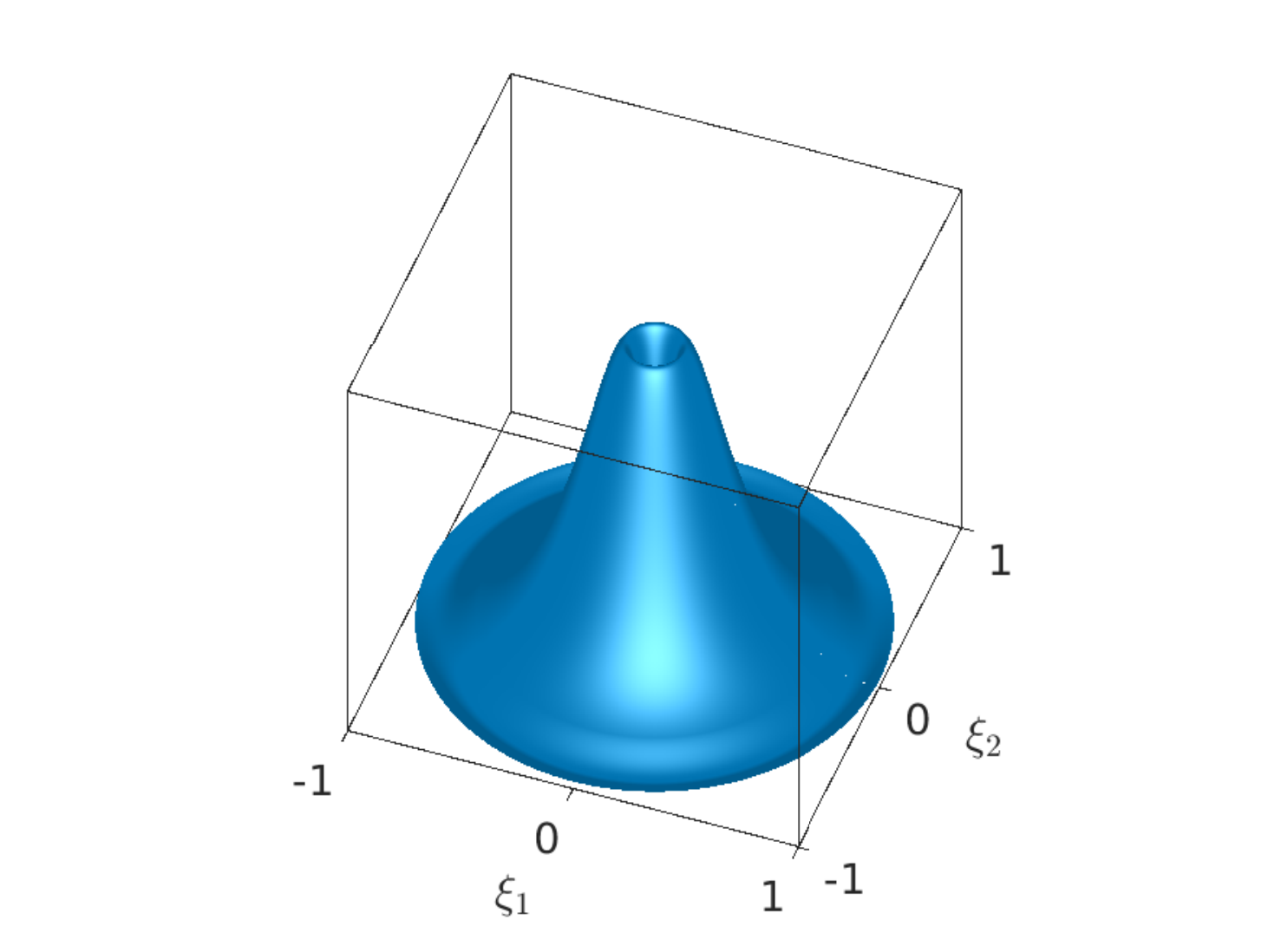}
\end{tabular} & \begin{tabular}{cc}
\includegraphics[scale=0.45,trim={3cm 0 3cm 0.5cm},clip]{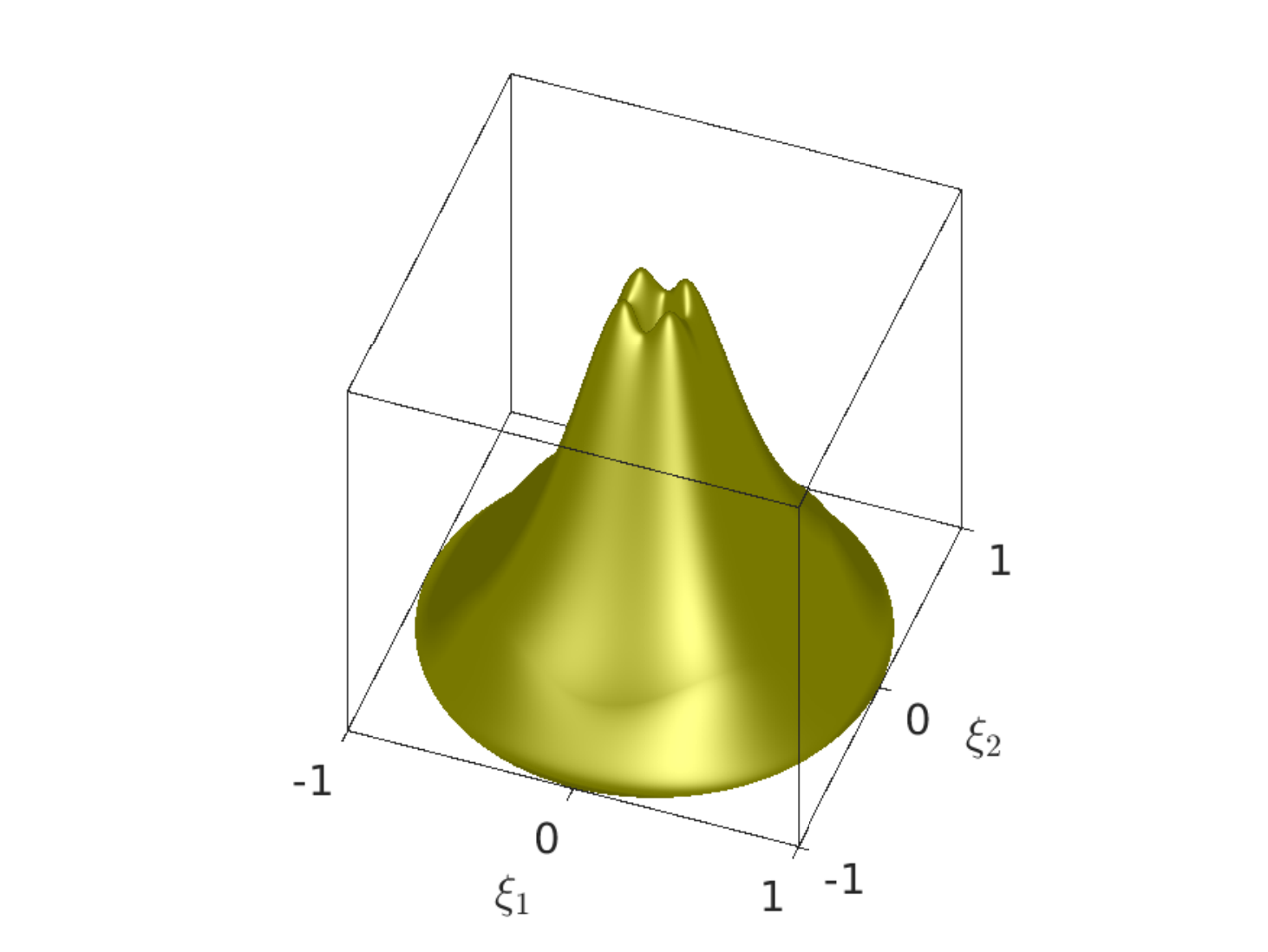} &
\includegraphics[scale=0.45,trim={3cm 0 3cm 0.5cm},clip]{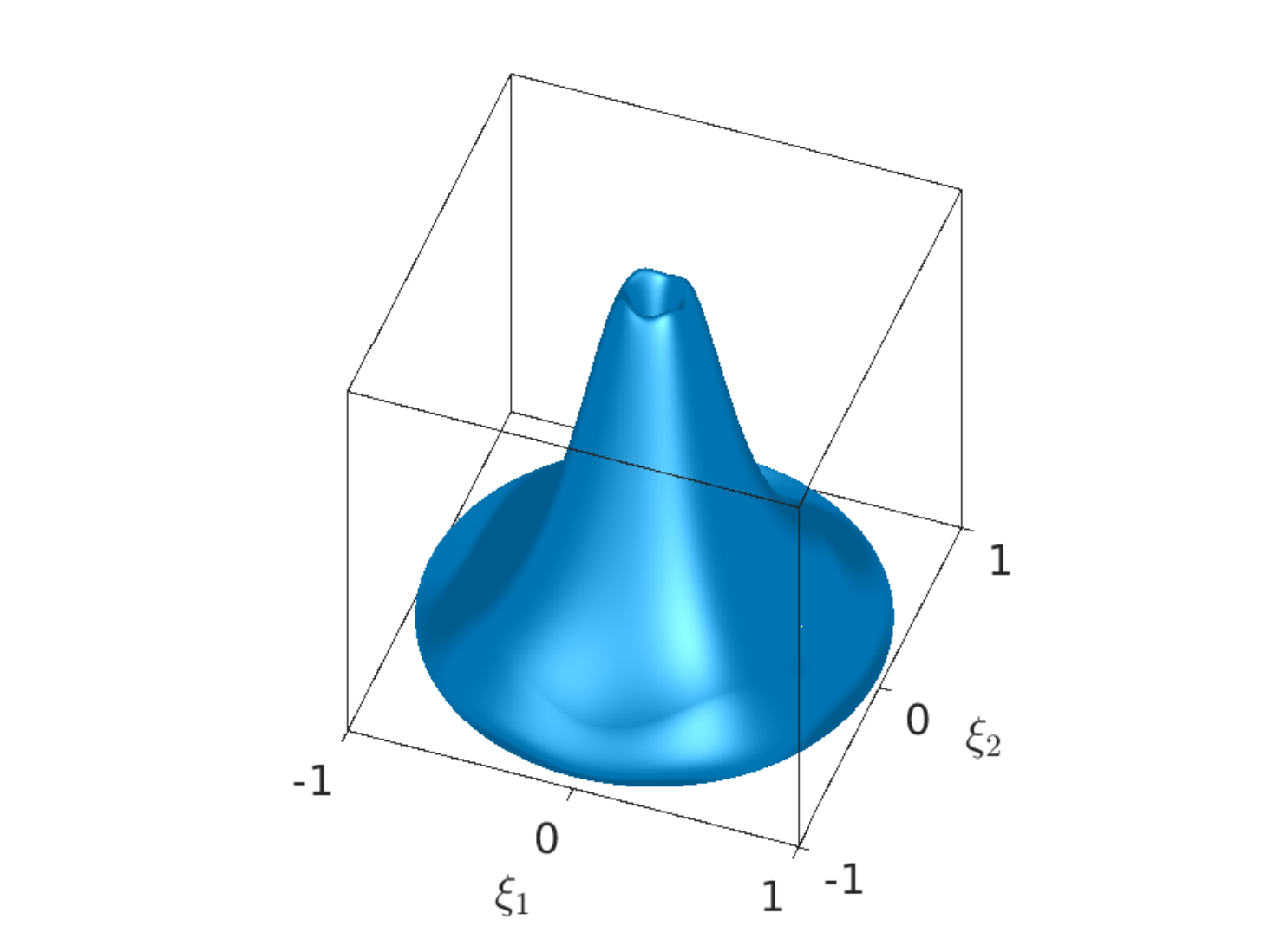}
\end{tabular} \\
Transversely Isotropic (MoN) & Cubic ($\text{MgAl}_2\text{O}_4$) \\
\multicolumn{2}{c}{
\begin{tabular}{cc}
\includegraphics[scale=0.45,trim={3cm 0 3cm 0.5cm},clip]{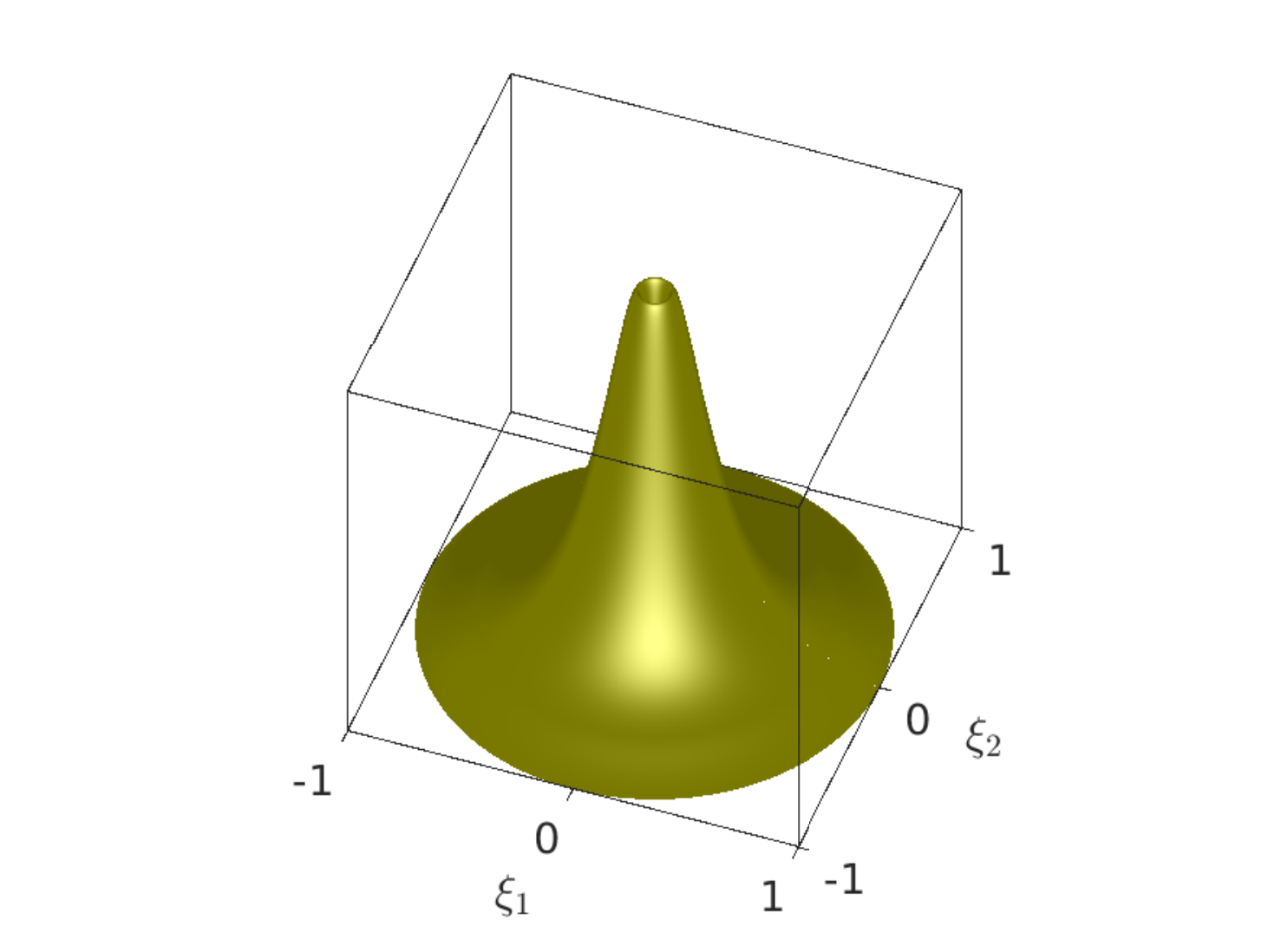} &
\includegraphics[scale=0.45,trim={3cm 0 3cm 0.5cm},clip]{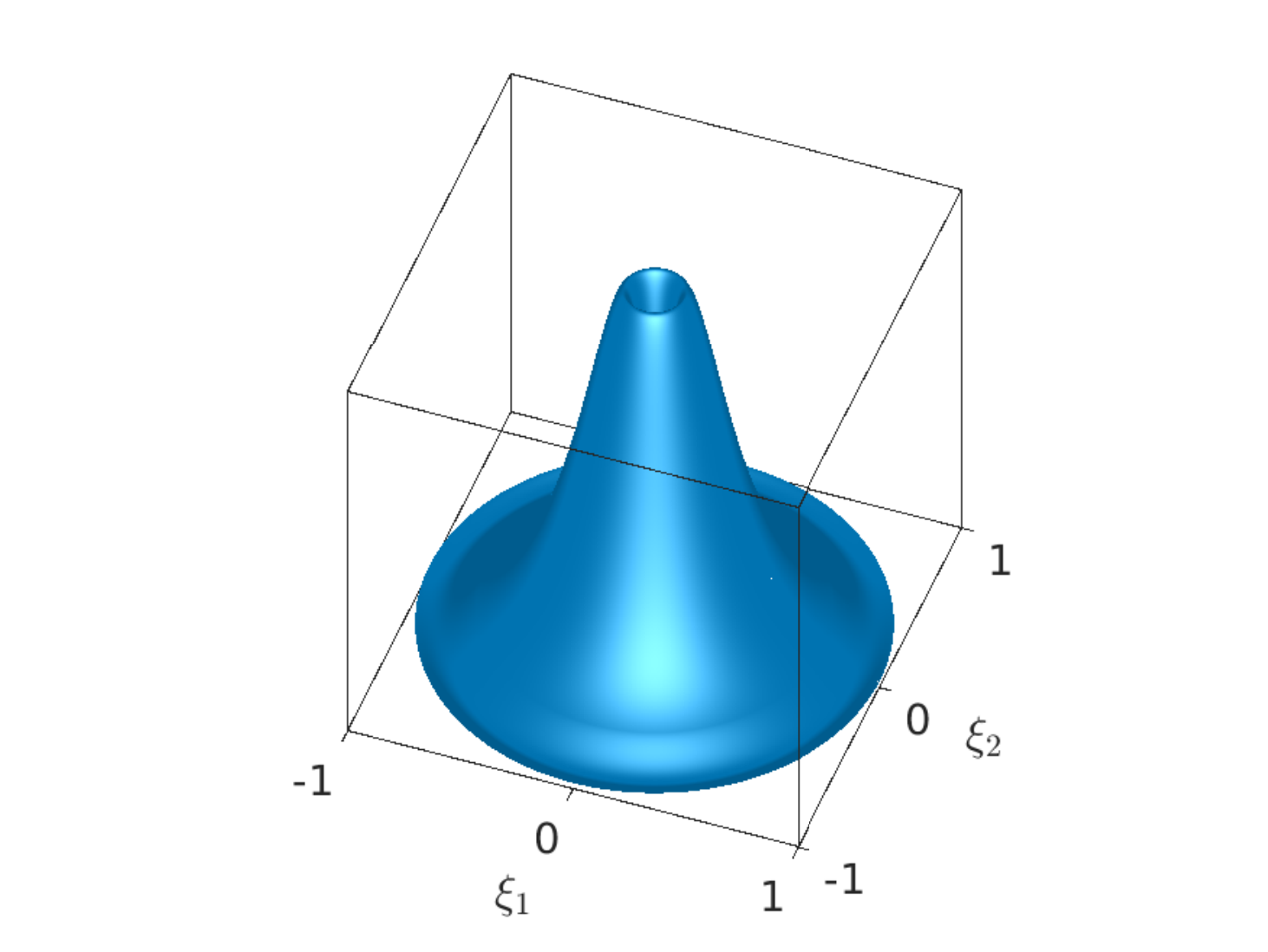}
\end{tabular} } \\
\multicolumn{2}{c}{Isotropic (Pyroceram 9608)}
\end{tabular}
\caption{Plots of $r \lambda_{2} (\xi_1,\xi_2)$ ({\em cf.} \eqref{eqn:PlanestressLambda_0}) with $\omega( \|\bfxi \|) = \frac{1}{\| \bfxi \|}$ (in green) and $\omega( \| \bfxi \| ) = 1$ (in blue).}
\label{fig:planestresslamb2}
\end{figure}


\textbf{Connections between the peridynamic and classical generalized plane stress equations}:

We now show that the peridynamic generalized plane stress model reduces to the classical generalized plane stress model under suitable assumptions.

\begin{proposition}\label{prop:relationplanestresstoclassical}
Suppose the micromodulus function $\lambda(\bfx',\bfx)$ is related to the elasticity tensor $\mathbb{C}$ through \eqref{eqn:CijklRelationAverageb}\footnote{Note that \eqref{eqn:CijklRelationAverageb} is a consequence of \eqref{eqn:lambdacondfullgenb} and therefore this result also holds under~\eqref{eqn:lambdacondfullgenb}.   }. Given a smooth deformation, under a second-order Taylor expansion of the displacement field, the peridynamic generalized plane stress model \eqref{eqn:PlaneStressFinal} reduces to the classical generalized plane stress model \eqref{eqn:planestressequationsofmotionclassical} with Cauchy's relations imposed.
\end{proposition}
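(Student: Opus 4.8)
The plan is to mirror the derivation of the classical generalized plane stress model given in Section~\ref{sec:ClassicalPlaneStress} (organized as in Remark~\ref{rmk:planestressteps}), but now starting from the peridynamic generalized plane stress model~\eqref{eqn:PlaneStressFinal} that was itself obtained through Steps~1--5. The key observation is that~\eqref{eqn:PlaneStressFinal} is written in terms of the reduced micromodulus functions $\lambda_0$, $\lambda_2$, $\lambda_4$, whose moments are related to the classical elasticity constants through~\eqref{eqn:CijklRelationAverageb}. So the core of the proof is a Taylor expansion combined with these moment identities, carried out in parallel for the $\lambda_0$-term, the $\lambda_2$-term, and the correction term $A(x',y',t)$.

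First I would perform a second-order Taylor expansion of $\overline{u}_1$ and $\overline{u}_2$ in~\eqref{eqn:PlaneStressFinal} about $(x,y,t)$. By the inversion symmetry $\lambda(-\bfxi) = \lambda(\bfxi)$ (which is inherited by $\lambda_0,\lambda_2,\lambda_4$ via~\eqref{eqn:lambdasymm} and Assumption~(P$\sigma$\ref{assump:PSs7Peri})), all odd-order terms in $\bfxi$ integrate to zero; in particular the first-order terms vanish and the term $-\tfrac12 \lambda_2(\bfxi) A(x+\xi_1,y+\xi_2,t)\xi_i$ reduces at leading order to $-\tfrac12 A(x,y,t)\int \lambda_2(\bfxi)\xi_i \, d\bfxi$, which is itself zero by antisymmetry, so that the surviving contribution of the $A$-term comes from expanding $A$ to first order. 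Then I would express the surviving second-order moments $\int_{B^{2D}_\delta(\bf0)}\lambda_0(\bfxi)\,\xi_i\xi_j\xi_k\xi_l\, d\bfxi$, $\int \lambda_2(\bfxi)\,\xi_i\xi_k\, d\bfxi$, and $\int \lambda_4(\bfxi)\, d\bfxi$ in terms of $C_{ijkl}$ using the relation between $\lambda_n(\xi_1,\xi_2)$ and the full three-dimensional micromodulus (i.e.\ $\int_{B^{2D}_\delta}\lambda_n(\xi_1,\xi_2)\xi_1^l\xi_2^m \, d\xi_1 d\xi_2 = \int_{-h}^{h}\int_{-h}^{h}\int_{B^{2D}_\delta}\lambda(\bfx',\bfx)\xi_1^l\xi_2^m\xi_3^n \, d\bfx' dz' dz$, the peridynamic analogue of the identity used in the proof of Proposition~\ref{prop:periplanestraintaylor}) together with the averaged moment conditions~\eqref{eqn:CijklRelationAverageb}. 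This converts every peridynamic integral into the appropriate $C_{ij33}$, $C_{33kl}$, $C_{3333}$, or $C_{ijkl}$ with Cauchy's relations built in.

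Next I would identify the Taylor expansion of the correction term. After expanding $A(x',y',t)$ in~\eqref{eqn:PlaneStressAtermFinal} to first order in $(\zeta_1,\zeta_2)$ and using antisymmetry, the numerator becomes a sum of $C_{33kl}$ times second derivatives of $\overline{u}_k$ while the denominator becomes $\int\lambda_4\,d\bfxi = 8h\,C_{3333}/(\text{appropriate constant})$; hence $A(x,y,t)$ collapses to $-2$ times the peridynamic analogue of the classical expression~\eqref{eqn:[u3]solnplanestress} for $[u_3]$, i.e.\ $A \approx -\tfrac{2}{C_{3333}}\bigl(C_{1133}\partial_x\overline{u}_1 + C_{3312}(\partial_y\overline{u}_1 + \partial_x\overline{u}_2) + C_{2233}\partial_y\overline{u}_2\bigr)$ differentiated once more, matching the role $[u_3]$ plays in the classical derivation. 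Substituting this back and collecting all second-derivative coefficients, I would verify term by term that the result is exactly~\eqref{eqn:planestressequationsofmotionclassicala}--\eqref{eqn:planestressequationsofmotionclassicalb}, with the reduced coefficients $\tilde C_{ijkl} = C_{ijkl} - C_{ij33}C_{33kl}/C_{3333}$ (cf.~\eqref{eqn:planestressto2Dcoefficientrelationship}) appearing automatically, and with Cauchy's relations ($C_{1212}=C_{1122}$, etc.) already imposed because of the complete symmetry forced by~\eqref{eqn:CijklRelationAverageb}.

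The main obstacle I anticipate is bookkeeping rather than anything conceptual: tracking which odd moments vanish, correctly expanding the nested integral defining $A(x',y',t)$ (a quotient of integrals, so one must be careful that the denominator is a nonzero constant independent of the expansion point and that the leading behavior of numerator and denominator are matched to the right order in $h$ and in the bond length), and then matching the resulting coefficients against the rather long classical expressions~\eqref{eqn:planestressequationsofmotionclassical}. A secondary subtlety is justifying that the $O(h^2)$ errors introduced in Steps~3--5 (through Lemmas~\ref{lem:PlaneStrainu3SubEstimates} and~\ref{lemma:getubar}) and the higher-order Taylor remainders are all negligible under the stated smoothness assumption~(P$\sigma$\ref{assump:PSs6Peri}) and the thinness assumption~(P$\sigma$\ref{assump:PSs1Peri}); this is handled exactly as in the proof of Proposition~\ref{prop:periplanestraintaylor}, where higher-order terms beyond second order are simply discarded. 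Once these are in hand, the identification with~\eqref{eqn:planestressequationsofmotionclassical} is immediate and completes the proof.
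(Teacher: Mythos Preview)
Your approach is essentially the paper's: Taylor expand, use the inversion symmetry $\lambda_n(-\xi_1,-\xi_2)=\lambda_n(\xi_1,\xi_2)$ (inherited from \eqref{eqn:lambdasymm} and (P$\sigma$\ref{assump:PSs7Peri})) to kill odd moments, and convert the surviving moments to elasticity constants via \eqref{eqn:defforlambiplanestress} and \eqref{eqn:CijklRelationAverageb}. The paper's only organizational difference is that it \emph{first} reduces $A(x',y',t)$ on its own to the closed form
\[
A(x',y',t)=\frac{C_{33j1}}{C_{3333}}\,\frac{\partial\overline u_j}{\partial x}(x',y',t)+\frac{C_{33j2}}{C_{3333}}\,\frac{\partial\overline u_j}{\partial y}(x',y',t)
\]
(the denominator is exactly $\int\lambda_4=4hC_{3333}$, not $8hC_{3333}$, and the numerator produces \emph{first} derivatives of $\overline u_j$, so $A=-[u_3]$ in the sense of \eqref{eqn:[u3]solnplanestress}, not $-2[u_3]$), and only \emph{then} substitutes this expression into \eqref{eqn:PlaneStressFinal} and Taylor expands about $(x,y,t)$. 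Separating the two expansions this way avoids the mix-up in your write-up between the $\bfzeta$-expansion inside $A$ (which yields first derivatives) and the subsequent $\bfxi$-expansion of $A(x+\xi_1,y+\xi_2,t)$ in the outer integral (which turns them into second derivatives); once the two stages are kept distinct the bookkeeping is routine and delivers \eqref{eqn:planestressequationsofmotionclassical} with Cauchy's relations imposed.
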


\begin{proof}
First, we simplify the denominator of $A(x',y',t)$ ({\em cf}.~\eqref{eqn:PlaneStressAtermFinal}) by noticing from \eqref{eqn:defforlambiplanestress} and \eqref{eqn:CijklRelationAverageb} that (recall \eqref{eqn:lambda=0outsideball})
\begin{equation*}
\int_{B_{\delta}^{2D} (\bf0)} \lambda_4(\zeta_1,\zeta_2) d \zeta_1 d \zeta_2 = \int_{-h}^h \int_{-h}^h \int_{B_{\delta}^{2D} (x,y)} \lambda(\bfx'',\bfx') \zeta_3^4 d x'' d y'' d z'' d z' = 4hC_{3333}.
\end{equation*}
Applying a Taylor expansion of $\overline{u}_j(x'+\zeta_1,y'+\zeta_2,t)$ about $(x',y',t)$ in the numerator of $A(x',y',t)$, we obtain
\begin{equation}\label{eqn:expansionAtermpre}
\begin{split}
A(x',y',t) ={}& \frac{ \int_{B^{2D}_{\delta}(\bf0)} \lambda_2(\zeta_1,\zeta_2)  \zeta_j \left[ \overline{u}_j(x'+\zeta_1,y'+\zeta_2,t)- \overline{u}_j(x',y',t)\right] d \zeta_1 d \zeta_2}{\int_{B^{2D}_{\delta}(\bf0)} \lambda_4(\zeta_1,\zeta_2) d \zeta_1 d \zeta_2 } \\
={}& \frac{1}{4h C_{3333}} \int_{B_{\delta}^{2D}(\bf0)} \lambda_2(\zeta_1,\zeta_2) \zeta_j \left[ \zeta_1 \frac{\partial \overline{u}_j}{\partial x}(x',y',t)  + \zeta_2 \frac{\partial \overline{u}_j}{\partial y}(x',y',t)  \right. \\
& \hspace*{1.8in} \left. + \frac{\zeta_1^2}{2} \frac{\partial^2 \overline{u}_j}{\partial x^2}(x',y',t) + \zeta_1\zeta_2 \frac{\partial^2 \overline{u}_j}{\partial x \partial y}(x',y',t) + \frac{\zeta_2^2}{2} \frac{\partial^2 \overline{u}_j}{\partial y^2}(x',y',t)  + \cdots \right] d \zeta_1 d \zeta_2 \\
={}& \frac{1}{4h C_{3333}} \int_{B_{\delta}^{2D}(\bf0)} \lambda_2(\zeta_1,\zeta_2) \zeta_j \left[ \zeta_1 \frac{\partial \overline{u}_j}{\partial x}(x',y',t)  + \zeta_2 \frac{\partial \overline{u}_j}{\partial y}(x',y',t)  \right] d \zeta_1 d \zeta_2,
\end{split}
\end{equation}
where the summation in $j$ is over $1$ and $2$. In the last line of \eqref{eqn:expansionAtermpre} we nullified terms with even-order derivatives due to antisymmetry in the transformation $(\zeta_1,\zeta_2) \rightarrow (-\zeta_1,-\zeta_2)$, using the fact that $\lambda_2(-\zeta_1,-\zeta_2) = \lambda_2(\zeta_1,\zeta_2)$ by Assumption (P$\sigma$\ref{assump:PSs7Peri}); we additionally supposed the remaining higher-order terms (above order two) are negligible. Recalling \eqref{eqn:defforlambiplanestress} and \eqref{eqn:CijklRelationAverageb}, we simplify \eqref{eqn:expansionAtermpre}:
\begin{equation}\label{eqn:expansionAterm}
\begin{split}
A(x',y',t) ={}&  \frac{C_{33j1}}{C_{3333}} \frac{\partial \overline{u}_j}{\partial x}(x',y',t)  + \frac{C_{33j2}}{C_{3333}} \frac{\partial \overline{u}_j}{\partial y}(x',y',t).
\end{split}
\end{equation}

Substituting \eqref{eqn:expansionAterm} into \eqref{eqn:PlaneStressFinal}, we find for $i=1,2$ that
\begin{equation}\label{eqn:PlaneStressConvergence1}
\begin{split}
\rho(x,y) \ddot{\overline{u}}_i(x,y,t) ={}& \frac{1}{2h}  \int_{B^{2D}_\delta(\bf0)} \lambda_0(\xi_1,\xi_2) \xi_i \xi_j (\overline{u}_j(x+\xi_1,y+\xi_2,t) - \overline{u}_j(x,y,t)) d \xi_1 d \xi_2  \\
&- \frac{1}{4h} \int_{B^{2D}_\delta (\bf0)}\lambda_2(\xi_1,\xi_2) \xi_i \left[ \frac{C_{33j1}}{C_{3333}} \frac{\partial \overline{u}_j}{\partial x}(x+\xi_1,y+\xi_2,t)  + \frac{C_{33j2}}{C_{3333}} \frac{\partial \overline{u}_j}{\partial y}(x+\xi_1,y+\xi_2,t) \right]   d \xi_1 d \xi_2 \\
&+ \overline{b}_i(x,y,t), 
\end{split}
\end{equation}
where the summation in $j$ is over $1$ and $2$.
Employing a Taylor expansion about $(x,y,t)$ of $\overline{u}_j(x',y',t)$ and its derivatives in \eqref{eqn:PlaneStressConvergence1}, eliminating terms antisymmetric with respect to the transformation $(\xi_1,\xi_2) \rightarrow (-\xi_1,-\xi_2)$, and supposing higher-order terms are negligible, we obtain
\begin{equation}\label{eqn:PlaneStressConvergence2}
\begin{split}
\rho(x,y) \ddot{\overline{u}}_i(x,y,t) ={}& \frac{1}{2h}  \int_{B^{2D}_\delta(\bf0)} \lambda_0(\xi_1,\xi_2) \xi_i \xi_j \left[ \frac{\xi_1^2}{2} \frac{\partial^2 \overline{u}_j}{\partial x^2}(x,y,t) + \xi_1\xi_2 \frac{\partial^2 \overline{u}_j}{\partial x \partial y}(x,y,t) + \frac{\xi_2^2}{2} \frac{\partial^2 \overline{u}_j}{\partial y^2}(x,y,t) \right] d \xi_1 d \xi_2  \\
&- \frac{1}{4h} \int_{B^{2D}_\delta (\bf0)}\lambda_2(\xi_1,\xi_2) \xi_i  \left[ \frac{C_{33j1}}{C_{3333}} \left( \xi_1 \frac{\partial^2 \overline{u}_j}{\partial x^2}(x,y,t) + \xi_2 \frac{\partial^2 \overline{u}_j}{\partial x \partial y}(x,y,t) \right)  \right. \\
&\hspace*{1.75in} \left. + \frac{C_{33j2}}{C_{3333}} \left( \xi_1 \frac{\partial^2 \overline{u}_j}{\partial x \partial y}(x,y,t) + \xi_2 \frac{\partial^2 \overline{u}_j}{\partial y^2}(x,y,t) \right) \right]   d \xi_1 d \xi_2 + \overline{b}_i(x,y,t). 
\end{split}
\end{equation}

Applying \eqref{eqn:defforlambiplanestress} and \eqref{eqn:CijklRelationAverageb} to \eqref{eqn:PlaneStressConvergence2}, we find
\begin{equation}\label{eqn:PlaneStressConvergence3}
\begin{split}
\rho(x,y) \ddot{\overline{u}}_i(x,y,t) ={}& C_{ij11} \frac{\partial^2 \overline{u}_j}{\partial x^2}(x,y,t) + 2C_{ij12} \frac{\partial^2 \overline{u}_j}{\partial x \partial y}(x,y,t)  + C_{ij22} \frac{\partial^2 \overline{u}_j}{\partial y^2}(x,y,t)  \\
&- \left[ \frac{C_{33i1}  C_{33j1}}{C_{3333}} \frac{\partial^2 \overline{u}_j}{\partial x^2}(x,y,t) + \frac{C_{33i2}  C_{33j1}}{C_{3333}} \frac{\partial^2 \overline{u}_j}{\partial x \partial y}(x,y,t) \right] \\
&- \left[ \frac{C_{33i1} C_{33j2}}{C_{3333}}  \frac{\partial^2 \overline{u}_j}{\partial x \partial y}(x,y,t) + \frac{C_{33i2} C_{33j2}}{C_{3333}} \frac{\partial^2 \overline{u}_j}{\partial y^2}(x,y,t)\right] + \overline{b}_i(x,y,t). 
\end{split} 
\end{equation}
Writing \eqref{eqn:PlaneStressConvergence3} out, we obtain
\begin{subequations}\label{eqn:planestresseqnmotioncauchy}
\begin{align}
\begin{split}
\rho \ddot{\overline{u}}_1 ={}&\left(C_{1111} - \frac{C_{1133}^2}{C_{3333}} \right)\frac{\partial^2 \overline{u}_1}{\partial x^2} + 2 \left( C_{1112} - \frac{C_{1133}C_{3312}}{C_{3333}} \right) \frac{\partial^2 \overline{u}_1}{\partial x \partial y}  +  \left( C_{1122}- \frac{C_{3312}^2}{C_{3333}}\right) \frac{\partial^2 \overline{u}_1}{\partial y^2}  \\
&+ \left( C_{1112} - \frac{C_{1133}C_{3312}}{C_{3333}} \right) \frac{\partial^2 \overline{u}_2}{\partial x^2} + \left( 2C_{1122} - \frac{C_{1133}C_{2233}}{C_{3333}} - \frac{C_{3312}^2}{C_{3333}} \right) \frac{\partial^2 \overline{u}_2}{\partial x \partial y} \\
&+\left(C_{2212}- \frac{C_{2233}C_{3312}}{C_{3333}}\right) \frac{\partial^2 \overline{u}_2}{\partial y^2} + \overline{b}_1, 
\end{split} \label{eqn:planestresseqnmotioncauchya} \\
\begin{split}
\rho \ddot{\overline{u}}_2 ={}& \left(C_{1112} - \frac{C_{1133} C_{3312} }{C_{3333}}\right) \frac{\partial^2 \overline{u}_1}{\partial x^2} + \left( 2C_{1122} - \frac{C_{1133}C_{2233}}{C_{3333}} - \frac{C_{3312}^2}{C_{3333}} \right) \frac{\partial^2 \overline{u}_1}{\partial x \partial y}    \\
&+ \left( C_{2212} - \frac{C_{2233}C_{3312}}{C_{3333}}\right) \frac{\partial^2 \overline{u}_1}{\partial y^2} + \left( C_{1122}- \frac{C_{3312}^2}{C_{3333}}\right) \frac{\partial^2 \overline{u}_2}{\partial x^2}    \\
&+ 2 \left( C_{2212} - \frac{C_{2233}C_{3312}}{C_{3333}}\right) \frac{\partial^2 \overline{u}_2}{\partial x \partial y} + \left(C_{2222}- \frac{C_{2233}^2}{C_{3333}}\right) \frac{\partial^2 \overline{u}_2}{\partial y^2} + \overline{b}_2, \label{eqn:planestresseqnmotioncauchyb}
\end{split}
\end{align}
\end{subequations}
where we have omitted the $(x,y,t)$ dependence for brevity and a more direct comparison to the classical model. Comparing \eqref{eqn:planestresseqnmotioncauchy} to \eqref{eqn:planestressequationsofmotionclassical} with Cauchy's relations ({\em cf.}~\eqref{eqn:3DCauchyRelations}) imposed completes the proof. \qed
\end{proof}

\begin{remark}
Notice in the equation for classical generalized plane stress, \eqref{eqn:planestressequationsofmotionclassicala} and \eqref{eqn:planestressequationsofmotionclassicalb},  only four Cauchy's relations are relevant:
\begin{equation*}
C_{1212} = C_{1122}, \quad C_{1313} = C_{1133}, \quad C_{2323} = C_{2233}, \text{ and } \; C_{2313} = C_{3312}.
\end{equation*}
These four Cauchy's relations are exactly the same ones that are relevant to the classical plane strain model \eqref{eqn:classicalmonoeqnmotion} and \eqref{eqn:classicalmonoeqnmotion3}. This is not surprising since the remaining Cauchy's relations are trivially satisfied by monoclinic symmetry, an assumption of both classical plane strain and classical generalized plane stress. In the in-plane equations of motion for plane strain, \eqref{eqn:classicalmonoeqnmotion1} and \eqref{eqn:classicalmonoeqnmotion2}, the only relevant Cauchy's relation is $C_{1212} = C_{1122}$ ({\em cf}.~Remark \ref{remark:planestrainCauchy}). The entirety of the discrepancy between which Cauchy's relations are relevant in classical generalized plane stress and the in-plane equations for classical plane strain is due to the fact that $u_3$ may be a function of $z$ in the case of classical generalized plane stress, whereas $u_3$ is independent of $z$ in classical plane strain. In fact, one can easily see in \eqref{eqn:AvgStressRelation} and \eqref{eqn:[u3]solnplanestress} that the contribution from the substitution for $[u_3]$ in the derivation of the classical generalized plane stress model is the sole reason for the relevance of the additional Cauchy's relations, $C_{1313} = C_{1133}, C_{2323} = C_{2233},$ and $C_{2313} = C_{3312}$.
\end{remark}

\section{Conclusions}\label{sec:Conclusions}

This paper has a twofold objective. First, it reviews pure two-dimensional, plane strain, and plane stress anisotropic models in classical linear elasticity. Second, it introduces novel formulations analogous to the classical models within the bond-based peridynamic theory of solid mechanics. 

Our review of classical linear elasticity began with a new elementary and self-contained proof that there are exactly four material symmetry classes in classical linear elasticity in two dimensions: oblique, rectangular, square, and isotropic. Then, we presented pure anisotropic two-dimensional classical linear elasticity models for each of those symmetry classes. We further discussed planar approximations of three-dimensional anisotropic models in classical linear elasticity, specifically plane strain and plan stress formulations. The former normally concerns thick structures, whereas the latter often applies to thin plates. In three dimensions, there are eight material symmetry classes: triclinic, monoclinic, trigonal, orthotropic, tetragonal, transversely isotropic, cubic, and isotropic. Under certain assumptions, planar approximations reduce three-dimensional models to two-dimensional formulations. We reviewed plane strain and plane stress formulations in classical linear elasticity and specialized those formulations to the various symmetry classes. These two-dimensional approximations are based on decoupling in-plane and out-of-plane deformations which is achieved by assuming the material has a plane of reflection symmetry, i.e., the material symmetry is at least monoclinic. To discuss connections between the two-dimensional planar approximation models and the pure two-dimensional models in anisotropic classical linear elasticity, we reviewed engineering constants. It turns out that classical generalized plane stress models are equivalent, in terms of engineering constants, to pure two-dimensional models. In contrast, even though the classical plane strain models look identical to the pure two-dimensional models when expressed in terms of elasticity constants, they differ from the pure two-dimensional models, because the meaning of the elasticity constants varies upon the dimension. We finalized our review of classical linear elasticity with a discussion of Cauchy's relations in order to connect classical models to bond-based peridynamic models. While there is a single Cauchy's relation in two dimensions, there are six Cauchy's relations in three dimensions. Overall, Cauchy's relations reduce the number of independent constants in classical linear  elasticity from 6 to 5, for general oblique models, and from 21 to 15, for general triclinic models. 

In the context of the bond-based peridynamic theory, we began by introducing new anisotropic models, which can accommodate all four material symmetry classes in two dimensions and discuss related micromodulus functions with corresponding visualizations. We then derived novel peridynamic plane strain and plane stress formulations. As opposed to common approaches for planar approximations in peridynamics, which are based on simply matching constants of two-dimensional peridynamic models to corresponding constants appearing in planar approximations in classical linear elasticity, our models directly apply peridynamic analogues of classical planar assumptions to reduce three-dimensional models to two-dimensional formulations. For this purpose, we resorted to the three-dimensional anisotropic peridynamic models developed in \cite{STG2019}. We discussed the resulting plane strain and plane stress micromodulus functions with corresponding visualizations, and we proved the convergence of our peridynamic plane strain and plane stress models to their counterparts in classical linear elasticity with imposed Cauchy's relations. 

It is interesting to observe that, as opposed to peridynamic plane strain, which due to the thickness of the structure can simply consider points in the bulk of a body, peridynamic plane stress deals with thin plates and thus requires examination of surface effects. Furthermore, the peridynamic plane stress approximations result in a state-based peridynamic formulation.

The work presented in this paper offers a framework for simulation of two-dimensional problems based on the bond-based peridynamic theory, concerning all material symmetry classes found in classical linear elasticity. The newly introduced peridynamic plane strain and plane stress models provide means of reducing fully anisotropic three-dimensional bond-based peridynamic problems to two-dimensional formulations, resulting in significant computational savings, while retaining the dynamics of the original three-dimensional problems under proper assumptions.


\appendix

\section{Poisson's ratio restriction in two-dimensional bond-based peridynamics}\label{appendix:poissonratio}

In this section, we utilize a peridynamic traction in order to compute the engineering constants for isotropic homogeneous materials in the sense of the two-dimensional classical theory of linear elasticity. We suppose the material undergoes a (static) homogeneous deformation given by $u_1(\bfx) = \epsilon_{11} x$ and $u_2(\bfx) \equiv 0$. Then, the pairwise force function ({\em cf}.~\eqref{eqn:generalBBpairwiseforceformfinal}) has components given by
\begin{equation}\label{eqn:pairwiseforcefor2Dtractionproof}
    \begin{split}
        &f_1(\bfu(\bfx')-\bfu(\bfx),\bfx'-\bfx) = \epsilon_{11} \lambda(\|\bfx'-\bfx \|) (x'-x)^3, \\
        &f_2(\bfu(\bfx')-\bfu(\bfx),\bfx'-\bfx) = \epsilon_{11} \lambda(\|\bfx'-\bfx \|) (x'-x)^2 (y'-y).
    \end{split}
\end{equation}
Analogously to the three-dimensional formulation in \cite{SILLING2000}, given a body $\mathcal{B}\subset \mathbb{R}^2$, we define the two-dimensional peridynamic traction $\boldsymbol{\tau}$ at a material point $\bfx \in \mathcal{B}$ in the direction of the unit vector $\bfn$ as
\begin{equation*}
\bftau(\bfx,t,\bfn) := \int_0^\delta \int_{\mathcal{B}_\bfx^+(\bfx-s \bfn)} \bff(\bfu(\bfx',t)-\bfu(\bfx - s \bfn,t), \bfx' - (\bfx-s \bfn)) d \bfx' ds,
\end{equation*}
where
\begin{equation*}
\mathcal{B}_{\bfx}^+(\bfx-s \bfn) = \left\{ \bfx' \in B_\delta^{\text{2D}}(\bfx-s \bfn):  (\bfx'-\bfx) \cdot \bfn \geqslant 0 \right\}
\end{equation*}
and $B_{\delta}^{2D}(\bfx)$ is the two-dimensional ball of radius $\delta$ centered at $\bfx$.
Under a (static) homogeneous deformation, we may introduce a peridynamic stress tensor $\boldsymbol{\sigma}^{\text{peri}}$, independent of $\bfx$, such that $\bftau(\bfx,\bfn) = \boldsymbol{\sigma}^{\text{peri}} \bfn$. In order to compute the engineering constants, we calculate $\sigma^{\text{peri}}_{ij}$ for a material point $\bfx$ in the bulk of the body and relate it to the corresponding component of the classical stress tensor, $\sigma_{ij}$. In the directions of $\bfe_1 = \left[ \begin{array}{c} 1 \\ 0 \end{array} \right]$ and $\bfe_2 = \left[ \begin{array}{c} 0 \\ 1 \end{array} \right]$, we find
\begin{equation}\label{eqn:2Dtraction}
\begin{split}
&\left[\begin{array}{c}\sigma^{\text{peri}}_{11} \\ \sigma^{\text{peri}}_{12} \end{array} \right] = \boldsymbol{\tau}(\bfx,\bfe_1) = \int_0^\delta \int_{\mathcal{B}_\bfx^+(\bfx-s \bfe_1)} \bff(\bfu(\bfx') - \bfu(\bfx- s \bfe_1), \bfx' - (\bfx-s \bfe_1)) d \bfx' ds, \\
&\left[\begin{array}{c}\sigma^{\text{peri}}_{21} \\ \sigma^{\text{peri}}_{22} \end{array} \right] = \boldsymbol{\tau}(\bfx,\bfe_2) = \int_0^\delta \int_{\mathcal{B}_\bfx^+(\bfx-s \bfe_2)} \bff(\bfu(\bfx') - \bfu(\bfx- s \bfe_2), \bfx' - (\bfx-s \bfe_2)) d \bfx' ds.
\end{split}
\end{equation}
In Figure \ref{fig:twodtractionregion}, we present an illustration of the regions $\mathcal{B}_\bfx^+(\bfx-s \bfe_1)$ and $\mathcal{B}_\bfx^+(\bfx-s \bfe_2)$. \\

\begin{figure}
\centering
\begin{tikzpicture}[scale=0.85]
\begin{scope}    
\path[clip] (0,4)--(0,-4)--(4,-4)--(4,4)--cycle;  
\path[fill=black!20] (-3/2,0) circle(4);
\draw (-3/2,0) circle(4);
\end{scope}

\node (bfxsbfe) at (-3/2,0)[circle,fill=black,thick,inner sep=1pt,minimum size=0.1cm,label=below:$\bfx-s \bfe_1$]{};
\node (bfx) at (0,0)[circle,fill=black,thick,inner sep=1pt,minimum size=0.1cm,label=below right:$\bfx$]{};
\draw[dashed] (-3/2,0) -- (0,0);
\draw[dashed] (-3/2,0) -- (0,3.7080992);
\draw (0,-3.7080992) -- (0,3.7080992);

\node (delta) at (-3/4-0.3,1.35)[label=$\delta$]{};
\node (R+) at (5/2,0)[label=right:$\mathcal{B}_{\bfx}^+(\bfx-s \bfe_1)$]{};


\begin{scope}    
\path[clip] (5,4)--(13,4)--(13,0)--(5,0)--cycle;  
\path[fill=black!20] (9,-3/2) circle(4);
\draw (9,-3/2) circle(4);
\end{scope}

\node (bfxsbfe) at (9,-3/2)[circle,fill=black,thick,inner sep=1pt,minimum size=0.1cm,label=below:$\bfx-s \bfe_2$]{};
\node (bfx) at (9,0)[circle,fill=black,thick,inner sep=1pt,minimum size=0.1cm,label=below right:$\bfx$]{};
\draw[dashed] (9,-3/2) -- (9,0);
\draw[dashed] (9,-3/2) -- (12.7080992,0);
\draw (5.2919008,0) -- (12.7080992,0);

\node (delta) at (10.35,-3/4-0.3)[label=$\delta$]{};
\node (R+) at (9,5/2)[label=above:$\mathcal{B}_{\bfx}^+(\bfx-s \bfe_2)$]{};
\end{tikzpicture}
\caption{Illustration of regions of integration for the two-dimensional peridynamic traction.} \label{fig:twodtractionregion}
\end{figure}
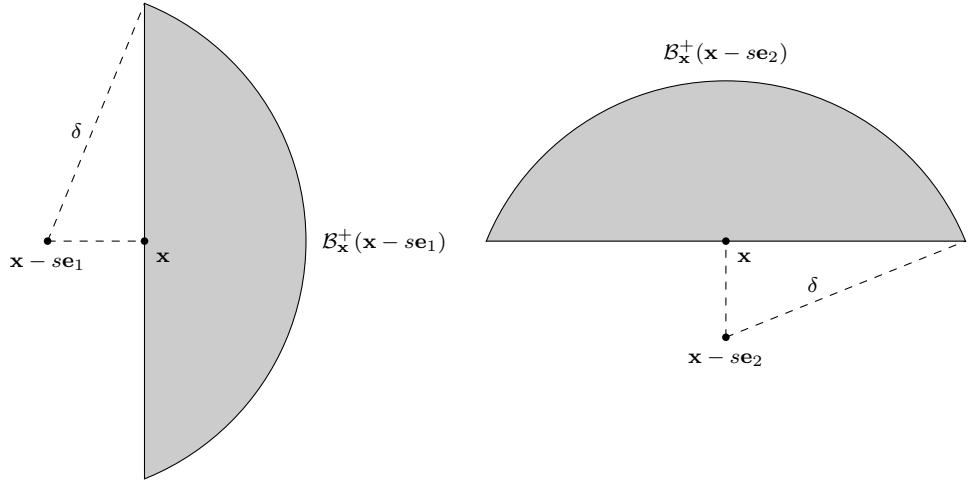

Now, we compute the stress components $\sigma_{ij}^{\text{peri}} = \tau_{i}(\bfx,\bfe_j)$ in \eqref{eqn:2Dtraction}. Combining~\eqref{eqn:pairwiseforcefor2Dtractionproof} and~\eqref{eqn:2Dtraction}, we find
\begin{equation}\label{eqn:tractionproof1}
    \begin{split}
        \tau_1(\bfx,\bfe_1) ={}& \epsilon_{11} \int_0^\delta \int_{\mathcal{B}_\bfx^+(\bfx-s \bfe_1)} \lambda(\|\bfx' - \hat{\bfx}\|) (x'-\hat{x})^3 d \bfx' ds\\
        ={}& \epsilon_{11} \int_0^\delta \int_x^{\hat{x}+\delta} \int_{y-\sqrt{\delta^2 - (x'-\hat{x})^2}}^{y+\sqrt{\delta^2 - (x'-\hat{x})^2}} \lambda(\|\bfx' - \hat{\bfx} \|) (x'-\hat{x})^3 d y' dx' ds, \\
        \tau_1(\bfx,\bfe_2) ={}& \epsilon_{11} \int_0^\delta \int_{\mathcal{B}_\bfx^+(\bfx-s \bfe_2)} \lambda(\|\bfx' - \hat{\bfx} \|) (x'-\hat{x})^3 d \bfx' ds\\
        ={}& \epsilon_{11} \int_0^\delta \int_y^{\hat{y}+\delta} \int_{x-\sqrt{\delta^2 - (y'-\hat{y})^2}}^{x+\sqrt{\delta^2 - (y'-\hat{y})^2}} \lambda(\|\bfx' - \hat{\bfx} \|) (x'-\hat{x})^3 d x' d y' ds, \\
        \tau_2(\bfx,\bfe_1) ={}& \epsilon_{11} \int_0^\delta \int_{\mathcal{B}_\bfx^+(\bfx-s \bfe_1)} \lambda(\|\bfx' - \hat{\bfx} \|) (x'-\hat{x})^2 (y'-\hat{y}) d \bfx' ds\\
        ={}& \epsilon_{11} \int_0^\delta \int_x^{\hat{x}+\delta} \int_{y-\sqrt{\delta^2 - (x'-\hat{x})^2}}^{y+\sqrt{\delta^2 - (x'-\hat{x})^2}} \lambda(\|\bfx' - \hat{\bfx} \|) (x'-\hat{x})^2 (y'-\hat{y}) d y' d x' ds, \\
        \tau_2(\bfx,\bfe_2) ={}& \epsilon_{11} \int_0^\delta \int_{\mathcal{B}_\bfx^+(\bfx-s \bfe_2)} \lambda(\|\bfx' - \hat{\bfx} \|) (x'-\hat{x})^2 (y'-\hat{y}) d \bfx' ds \\
        ={}& \epsilon_{11} \int_0^\delta \int_y^{\hat{y}+\delta} \int_{x-\sqrt{\delta^2 - (y'-\hat{y})^2}}^{x+\sqrt{\delta^2 - (y'-\hat{y})^2}} \lambda(\|\bfx' - \hat{\bfx} \|) (x'-\hat{x})^2 (y'-\hat{y}) d x' dy' ds,
    \end{split} 
\end{equation}
where in $\tau_i(\bfx,\bfe_j)$, we defined $\hat{\bfx} = \left[\begin{array}{c} \hat{x} \\ \hat{y} \end{array} \right]  := \bfx - s \bfe_j$.

To simplify \eqref{eqn:tractionproof1}, we apply the changes of variables $x' \rightarrow x' - s$ to the expressions for $\tau_i(\bfx,\bfe_1)$ and $y' \rightarrow y'-s$ to the expressions for $\tau_i(\bfx,\bfe_2)$. We follow this by applying the change of variables $\bfx' \rightarrow \bfx + \bfxi$ to all the resulting equations. This process yields the following:
\begin{equation}\label{eqn:tractionproof2}
    \begin{split}
        \tau_1(\bfx,\bfe_1) ={}& \epsilon_{11} \int_0^\delta \int_{x+s}^{x+\delta} \int_{y-\sqrt{\delta^2 - (x'-x)^2}}^{y+\sqrt{\delta^2 - (x'-x)^2}} \lambda(\|\bfx' - \bfx \|) (x'-x)^3 d y' dx' ds\\
        ={}& \epsilon_{11} \int_0^\delta \int_{s}^{\delta} \int_{-\sqrt{\delta^2 - \xi_1^2}}^{\sqrt{\delta^2 - \xi_1^2}} \lambda(\|\bfxi \|) \xi_1^3 d \xi_2 d \xi_1 ds, \\
        \tau_1(\bfx,\bfe_2) ={}& \epsilon_{11} \int_0^\delta \int_{y+s}^{y+\delta} \int_{x-\sqrt{\delta^2 - (y'-y)^2}}^{x+\sqrt{\delta^2 - (y'-y)^2}} \lambda(\|\bfx' - \bfx \|) (x'-x)^3 d x' d y' ds\\
        ={}& \epsilon_{11} \int_0^\delta \int_{s}^{\delta} \int_{-\sqrt{\delta^2 - \xi_2^2}}^{\sqrt{\delta^2 - \xi_2^2}} \lambda(\|\bfxi \|) \xi_1^3 d \xi_1 d\xi_2 ds = 0, \\
        \tau_2(\bfx,\bfe_1) ={}& \epsilon_{11} \int_0^\delta \int_{x+s}^{x+\delta} \int_{y-\sqrt{\delta^2 - (x'-x)^2}}^{y+\sqrt{\delta^2 - (x'-x)^2}} \lambda(\| \bfx' - \bfx \|) (x'-x)^2 (y'-y) d y' d x' ds\\
        ={}& \epsilon_{11} \int_0^\delta \int_{s}^{\delta} \int_{-\sqrt{\delta^2 - \xi_1^2}}^{\sqrt{\delta^2 - \xi_1^2}} \lambda(\| \bfxi \|) \xi_1^2 \xi_2 d \xi_2 d \xi_1 ds = 0, \\
        \tau_2(\bfx,\bfe_2) ={}& \epsilon_{11} \int_0^\delta \int_{y+s}^{y+\delta} \int_{x-\sqrt{\delta^2 - (y'-y)^2}}^{x+\sqrt{\delta^2 - (y'-y)^2}} \lambda(\|\bfx' - \bfx \|) (x'-x)^2 (y'-y) d x' dy' ds \\
        ={}& \epsilon_{11} \int_0^\delta \int_{s}^{\delta} \int_{-\sqrt{\delta^2 - \xi_2^2}}^{\sqrt{\delta^2 - \xi_2}} \lambda(\| \bfxi \|) \xi_1^2 \xi_2 d \xi_1 d \xi_2 ds.
    \end{split} 
\end{equation}
Above we found $\tau_1(\bfx,\bfe_2) = 0$ and $\tau_2(\bfx,\bfe_1) = 0$ by antisymmetry in $\xi_1$ and $\xi_2$, respectively. Applying the polar coordinate change of variables $(\xi_1,\xi_2) \rightarrow (r \cos(\theta),r \sin(\theta))$ to \eqref{eqn:tractionproof2}, we find
\begin{equation}\label{eqn:tractionproof3}
    \begin{split}
        \tau_1(\bfx,\bfe_1) ={}& \epsilon_{11} \int_0^\delta \int_{s}^{\delta} \int_{-\arccos\left( \frac{s}{r} \right)}^{\arccos\left( \frac{s}{r} \right)} \lambda(r) r^4 \cos^3(\theta) d \theta dr ds, \\
        \tau_2(\bfx,\bfe_2) ={}& \epsilon_{11} \int_0^\delta \int_{s}^{\delta} \int_{\frac{\pi}{2}-\arccos\left( \frac{s}{r} \right)}^{\frac{\pi}{2} + \arccos\left( \frac{s}{r} \right)} \lambda(r) r^4 \cos^2(\theta) \sin(\theta) d \theta dr ds.
    \end{split} 
\end{equation}
Changing the order of integration between $r$ and $s$, and then integrating in $\theta$ and $s$, we find
\begin{equation}\label{eqn:tractionproof4}
    \begin{split}
        \tau_1(\bfx,\bfe_1) ={}& \epsilon_{11} \int_0^\delta \int_{0}^{r} \int_{-\arccos\left( \frac{s}{r} \right)}^{\arccos\left( \frac{s}{r} \right)} \lambda(r) r^4 \cos^3(\theta) d \theta ds dr  = \frac{3 \pi \epsilon_{11}}{8} \int_0^\delta \lambda(r) r^5 dr, \\
        \tau_2(\bfx,\bfe_2) ={}& \epsilon_{11} \int_0^\delta \int_{0}^{r} \int_{\frac{\pi}{2}-\arccos\left( \frac{s}{r} \right)}^{\frac{\pi}{2} + \arccos\left( \frac{s}{r} \right)} \lambda(r) r^4 \cos^2(\theta) \sin(\theta) d \theta ds dr = \frac{\pi \epsilon_{11}}{8} \int_0^\delta \lambda(r) r^5 dr.
    \end{split} 
\end{equation}

Thus, the peridynamic stress tensor is given by
\begin{equation*}
    \boldsymbol{\sigma}^{\text{peri}} = \left[
    \begin{array}{cc}
    3 A & 0 \\
    0 & A
    \end{array}
    \right],
\end{equation*}
where
\begin{equation*}
    A = \frac{\pi \epsilon_{11}}{8} \int_0^\delta \lambda(r) r^5 dr.
\end{equation*}
Under the identical (static) homogeneous deformation, $\varepsilon_{11} = \epsilon_{11}, \varepsilon_{22} = 0$, and $\varepsilon_{12} = 0$, the classical stress tensor ({\em cf.} \eqref{eqn:2DElastTensor}) for an isotropic material ({\em cf.} \eqref{eqn:isotensor}) is given by 
\begin{equation*}
    \boldsymbol{\sigma} = \epsilon_{11} \left[
    \begin{array}{cc}
    C_{1111} & 0 \\
    0 & C_{1122}
    \end{array}
    \right].
\end{equation*}
Equating the peridynamic and classical stress tensors, we obtain the restriction $C_{1122} = \frac{1}{3} C_{1111}$, which is exactly Cauchy's relation for an isotropic elasticity tensor in two dimensions ({\em cf.} \eqref{eqn:IsoTensorCauchy}). From Table \ref{tab:CauchyRelationsTechnicalConstants}, we know that for two-dimensional isotropic materials satisfying Cauchy's relation, the Poisson's ratio $\nu = \frac{1}{3}$. 

While our derivation above relies on the introduction of the peridynamic traction, in Section \ref{sec:LinearElastPeri} we present a derivation, independent of the definition of peridynamic traction, that Cauchy's relations are imposed on bond-based peridynamic models. Consequently, $\nu = \frac{1}{3}$ is an intrinsic restriction on isotropic two-dimensional bond-based peridynamic models.  


\section{Proof of lemma for classical plane stress}

\subsection{Proof of Lemma \ref{lem:planestressymm}}\label{Appendix:classplanestresssymm}
 
\begin{proof}

Let $\bfu(\bfx,t)$ satisfy Assumption (C$\sigma$\ref{assump:PSsSymm}). Then,
\begin{equation}\label{eqn:usymmclassicalplanestress}
\bfv(\bfx,t) := \left[ \begin{array}{c} u_1(x,y,-z,t) \\ u_2(x,y,-z,t) \\ -u_3(x,y,-z,t) \end{array} \right]
\end{equation} 
and $\bfu(\bfx,t)$ satisfy the same initial and boundary conditions. We show $\bfv$ satisfies \eqref{eqn:eqnofmotionclassical} whenever $\bfu$ satisfies  \eqref{eqn:eqnofmotionclassical}. Since the solution to \eqref{eqn:eqnofmotionclassical} with appropriately defined initial and boundary conditions is unique, we may conclude $\bfv(\bfx,t) = \bfu(\bfx,t)$, and the result of the lemma follows.
Imposing (C$\sigma$\ref{assump:PSs6}) on the equation of motion \eqref{eqn:eqnofmotionclassical}, we obtain (for brevity, we drop the dependence on $x,y,$ and $t$)
\begin{equation}\label{eqn:usymmclassicalplanestressfulleqns}
\begin{split}
\rho(z) \ddot{u}_1(z) =& C_{1111}\frac{\partial^2 u_1}{\partial x^2} (z) + 2 C_{1112}\frac{\partial^2 u_1}{\partial x \partial y} (z) + C_{1212}\frac{\partial^2 u_1}{\partial y^2} (z) + C_{1313}\frac{\partial^2 u_1}{\partial z^2} (z) \\
&+ C_{1112}\frac{\partial^2 u_2}{\partial x^2} (z) +  \left(C_{1122} + C_{1212} \right)\frac{\partial^2 u_2}{\partial x \partial y} (z) + C_{2212}\frac{\partial^2 u_2}{\partial y^2} (z) + C_{2313}\frac{\partial^2 u_2}{\partial z^2} (z) \\
&+ \left(C_{1133} + C_{1313} \right) \frac{\partial^2 u_3}{\partial x \partial z} (z) + \left(C_{3312} + C_{2313} \right) \frac{\partial^2 u_3}{\partial y \partial z} (z) + b_1(z),
\\
\rho(z) \ddot{u}_2(z) ={}& C_{1112}\frac{\partial^2 u_1}{\partial x^2} (z) +  \left(C_{1122}+C_{1212} \right) \frac{\partial^2 u_1}{\partial x \partial y} (z) + C_{2212}\frac{\partial^2 u_1}{\partial y^2} (z) + C_{2313}\frac{\partial^2 u_1}{\partial z^2} (z) \\
&+ C_{1212}\frac{\partial^2 u_2}{\partial x^2} (z) + 2 C_{2212} \frac{\partial^2 u_2}{\partial x \partial y} (z) + C_{2222}\frac{\partial^2 u_2}{\partial y^2} (z) + C_{2323}\frac{\partial^2 u_2}{\partial z^2} (z) \\
&+ \left(C_{3312} + C_{2313} \right) \frac{\partial^2 u_3}{\partial x \partial z} (z) + \left(C_{2233} + C_{2323} \right) \frac{\partial^2 u_3}{\partial y \partial z} (z) + b_2(z),
\\
\rho(z) \ddot{u}_3(z) ={}& \left(C_{1133}+C_{1313} \right) \frac{\partial^2 u_1}{\partial x \partial z} (z) +  \left(C_{3312}+C_{2313} \right) \frac{\partial^2 u_1}{\partial y \partial z} (z)  \\
&+ \left( C_{3312} + C_{2313} \right) \frac{\partial^2 u_2}{\partial x \partial z} (z) + \left(C_{2233} + C_{2323} \right) \frac{\partial^2 u_2}{\partial y \partial z} (z) \\
&+C_{1313}\frac{\partial^2 u_3}{\partial x^2} (z) +  2C_{2313} \frac{\partial^2 u_3}{\partial x \partial y} (z) + C_{2323}\frac{\partial^2 u_3}{\partial y^2} (z) + C_{3333}\frac{\partial^2 u_3}{\partial z^2} (z) + b_3(z).
\end{split}
\end{equation}

By Assumption (C$\sigma$\ref{assump:PSs1}), we may perform the substitution $z \rightarrow -z$ in \eqref{eqn:usymmclassicalplanestressfulleqns}. In addition, we impose Assumptions (C$\sigma$\ref{assump:PSs5}) and (C$\sigma$\ref{assump:PSs2}) on \eqref{eqn:usymmclassicalplanestressfulleqns}, and then we multiply the equation for $u_3$ by negative one, to obtain

\begin{equation}\label{eqn:usymmclassicalplanestressfulleqns2}
\begin{split}
\rho \ddot{u}_1(-z) ={}& C_{1111}\frac{\partial^2 u_1}{\partial x^2} (-z) + 2 C_{1112}\frac{\partial^2 u_1}{\partial x \partial y} (-z) + C_{1212}\frac{\partial^2 u_1}{\partial y^2} (-z) + C_{1313}\frac{\partial^2 u_1}{\partial z^2} (-z) \\
&+ C_{1112}\frac{\partial^2 u_2}{\partial x^2} (-z) +  \left(C_{1122} + C_{1212} \right)\frac{\partial^2 u_2}{\partial x \partial y} (-z) + C_{2212}\frac{\partial^2 u_2}{\partial y^2} (-z) + C_{2313}\frac{\partial^2 u_2}{\partial z^2} (-z) \\
&+ \left(C_{1133} + C_{1313} \right) \frac{\partial^2 u_3}{\partial x \partial z} (-z) + \left(C_{3312} + C_{2313} \right) \frac{\partial^2 u_3}{\partial y \partial z} (-z) + b_1(z),
\\
\rho \ddot{u}_2(-z) ={}& C_{1112}\frac{\partial^2 u_1}{\partial x^2} (-z) +  \left(C_{1122}+C_{1212} \right) \frac{\partial^2 u_1}{\partial x \partial y} (-z) + C_{2212}\frac{\partial^2 u_1}{\partial y^2} (-z) + C_{2313}\frac{\partial^2 u_1}{\partial z^2} (-z) \\
&+ C_{1212}\frac{\partial^2 u_2}{\partial x^2} (-z) + 2 C_{2212} \frac{\partial^2 u_2}{\partial x \partial y} (-z) + C_{2222}\frac{\partial^2 u_2}{\partial y^2} (-z) + C_{2323}\frac{\partial^2 u_2}{\partial z^2} (-z) \\
&+ \left(C_{3312} + C_{2313} \right) \frac{\partial^2 u_3}{\partial x \partial z} (-z) + \left(C_{2233} + C_{2323} \right) \frac{\partial^2 u_3}{\partial y \partial z} (-z) + b_2(z),
\\
-\rho \ddot{u}_3(-z) ={}& -\left(C_{1133}+C_{1313} \right) \frac{\partial^2 u_1}{\partial x \partial z} (-z) -\left(C_{3312}+C_{2313} \right) \frac{\partial^2 u_1}{\partial y \partial z} (-z)  \\
& -\left( C_{3312} + C_{2313} \right) \frac{\partial^2 u_2}{\partial x \partial z} (-z) -\left(C_{2233} + C_{2323} \right) \frac{\partial^2 u_2}{\partial y \partial z} (-z) \\
&-C_{1313}\frac{\partial^2 u_3}{\partial x^2} (-z) -  2C_{2313} \frac{\partial^2 u_3}{\partial x \partial y} (-z) - C_{2323}\frac{\partial^2 u_3}{\partial y^2} (-z) - C_{3333}\frac{\partial^2 u_3}{\partial z^2} (-z).
\end{split}
\end{equation}
By noticing
\begin{equation*}
\frac{\partial^2}{\partial x_j \partial x_l} u_i(x,y,-z,t) = \left\{
\begin{array}{ll}
\frac{\partial^2}{\partial x_j \partial x_l} u_i(x,y,-z,t), & j,l \neq 3 \text{ or } j=l=3 \\
-\frac{\partial^2}{\partial x_j \partial x_l} u_i(x,y,-z,t), & \text{else}
\end{array}
\right. ,
\end{equation*}
and recalling \eqref{eqn:usymmclassicalplanestress}, we may rewrite \eqref{eqn:usymmclassicalplanestressfulleqns2} as
\begin{equation}\label{eqn:usymmclassicalplanestressfulleqns4}
\begin{split}
\rho \ddot{v}_1(z) ={}& C_{1111}\frac{\partial^2 v_1}{\partial x^2} (z) + 2 C_{1112}\frac{\partial^2 v_1}{\partial x \partial y} (z) + C_{1212}\frac{\partial^2 v_1}{\partial y^2} (z) + C_{1313}\frac{\partial^2 v_1}{\partial z^2} (z) \\
&+ C_{1112}\frac{\partial^2 v_2}{\partial x^2} (z) +  \left(C_{1122} + C_{1212} \right)\frac{\partial^2 v_2}{\partial x \partial y} (z) + C_{2212}\frac{\partial^2 v_2}{\partial y^2} (z) + C_{2313}\frac{\partial^2 v_2}{\partial z^2} (z) \\
&+ \left(C_{1133} + C_{1313} \right) \frac{\partial^2 v_3}{\partial x \partial z} (z) + \left(C_{3312} + C_{2313} \right) \frac{\partial^2 v_3}{\partial y \partial z} (z) + b_1(z), \\
\rho \ddot{v}_2(z) ={}& C_{1112}\frac{\partial^2 v_1}{\partial x^2} (z) +  \left(C_{1122}+C_{1212} \right) \frac{\partial^2 v_1}{\partial x \partial y} (z) + C_{2212}\frac{\partial^2 v_1}{\partial y^2} (z) + C_{2313}\frac{\partial^2 v_1}{\partial z^2} (z) \\
&+ C_{1212}\frac{\partial^2 v_2}{\partial x^2} (z) + 2 C_{2212} \frac{\partial^2 v_2}{\partial x \partial y} (z) + C_{2222}\frac{\partial^2 v_2}{\partial y^2} (z) + C_{2323}\frac{\partial^2 v_2}{\partial z^2} (z) \\
&+ \left(C_{3312} + C_{2313} \right) \frac{\partial^2 v_3}{\partial x \partial z} (z) + \left(C_{2233} + C_{2323} \right) \frac{\partial^2 v_3}{\partial y \partial z} (z) + b_2(z), \\
\rho \ddot{v}_3(z) ={}& \left(C_{1133}+C_{1313} \right) \frac{\partial^2 v_1}{\partial x \partial z} (z) +\left(C_{3312}+C_{2313} \right) \frac{\partial^2 v_1}{\partial y \partial z} (z)  \\
&+\left( C_{3312} + C_{2313} \right) \frac{\partial^2 v_2}{\partial x \partial z} (z) +\left(C_{2233} + C_{2323} \right) \frac{\partial^2 v_2}{\partial y \partial z} (z) \\
&+C_{1313}\frac{\partial^2 v_3}{\partial x^2} (z) +  2C_{2313} \frac{\partial^2 v_3}{\partial x \partial y} (z) + C_{2323}\frac{\partial^2 v_3}{\partial y^2} (z) + C_{3333}\frac{\partial^2 u_3}{\partial z^2} (z). 
\end{split}
\end{equation}

Comparing \eqref{eqn:usymmclassicalplanestressfulleqns} and \eqref{eqn:usymmclassicalplanestressfulleqns4}, we see under the assumptions of Lemma \ref{lem:planestressymm}, $\bfv$ satisfies \eqref{eqn:eqnofmotionclassical} whenever $\bfu$ does. \qed
\end{proof}


\section{Proofs of lemmas for peridynamic plane stress}\label{Sec:AppPeriPlaneStressLemmas}

\subsection{Proof of Lemma~\ref{lem:symmplanestress}}\label{appendix:symmplanestress}

\begin{proof}
Let $\bfu(\bfx,t)$ satisfies Assumption (P$\sigma$\ref{assump:PSsSymmPeri}). Then,
\begin{equation}\label{eqn:usymmperiplanestress}
\bfv(\bfx,t) := \left[ \begin{array}{c} u_1(x,y,-z,t) \\ u_2(x,y,-z,t) \\ -u_3(x,y,-z,t) \end{array} \right]
\end{equation} 
and $\bfu(\bfx,t)$ satisfy the same boundary and initial conditions. We show $\bfv$ satisfies \eqref{eqn:linearperieqn} whenever $\bfu$ satisfies  \eqref{eqn:linearperieqn}. Since the solution to \eqref{eqn:linearperieqn} with appropriately defined boundary and initial conditions is unique, we may conclude $\bfv(\bfx,t) = \bfu(\bfx,t)$ and the result of the lemma follows.

Imposing Assumptions (P$\sigma$\ref{assump:PSs1Peri}) and (P$\sigma$\ref{assump:PSs3Peri}) on \eqref{eqn:linearperieqn}, we obtain
\begin{equation}\label{eqn:planestressinitialeqnmotionlemmasymmetry}
\begin{split}
\rho(\bfx) \ddot{u}_i(\bfx,t) ={}& \int_{\mathcal{H}_\bfx} \lambda(\bfx',\bfx) (x_i'-x_i)(x_j'-x_j) (u_j(\bfx',t)-u_j(\bfx,t)) d \bfx' + b_i(\bfx,t) \\
={}& \int_{-h}^h \int_{B^{2D}_{r}(x,y)} \lambda(\bfx',\bfx) (x_i'-x_i)(x_j'-x_j) (u_j(\bfx',t)-u_j(\bfx,t)) d \bfx' + b_i(\bfx,t),
\end{split}
\end{equation} 
where $r = \sqrt{\delta^2 - (z'-z)^2}$ and $B_{r}^{2D}(\bfx)$ is the two-dimensional ball of radius $r$ centered at $\bfx$. For brevity, we omit the functional dependence of $\rho$, $\bfu$, $\lambda$, and $\bfb$ on $x,y,x',y',$ and~$t$. By Assumption (P$\sigma$\ref{assump:PSs1Peri}), \eqref{eqn:planestressinitialeqnmotionlemmasymmetry} also holds for $-z$ (since it holds all $z \in [-h,h]$). Substitute $z \rightarrow -z$ and then apply the change of variables $z' \rightarrow -z'$ to \eqref{eqn:planestressinitialeqnmotionlemmasymmetry} to obtain (note $r$ is unchanged):
\begin{equation}\label{eqn:planestressinitialeqnmotionlemmasymmetryterm1}
\begin{split}
\rho(-z) \ddot{u}_1(-z) ={}& \int_{-h}^h \int_{B^{2D}_{r}(x,y)} \lambda(-z',-z) (x'-x)^2 (u_1(-z')-u_1(-z)) d \bfx' \\
&+ \int_{-h}^h \int_{B^{2D}_{r}(x,y)} \lambda(-z',-z) (x'-x)(y'-y) (u_2(-z')-u_2(-z)) d \bfx' \\
&- \int_{-h}^h \int_{B^{2D}_{r}(x,y)} \lambda(-z',-z) (x'-x)(z'-z) (u_3(-z')-u_3(-z)) d \bfx' + b_1(-z), \\
\rho(-z) \ddot{u}_2(-z) ={}& \int_{-h}^h \int_{B^{2D}_{r}(x,y)} \lambda(-z',-z) (y'-y)(x'-x) (u_1(-z')-u_1(-z)) d \bfx' \\
&+ \int_{-h}^h \int_{B^{2D}_{r}(x,y)} \lambda(-z',-z) (y'-y)^2 (u_2(-z')-u_2(-z)) d \bfx' \\
&- \int_{-h}^h \int_{B^{2D}_{r}(x,y)} \lambda(-z',-z) (y'-y)(z'-z) (u_3(-z')-u_3(-z)) d \bfx' + b_2(-z), \\
\rho(-z) \ddot{u}_3(-z) ={}& -\int_{-h}^h \int_{B^{2D}_{r}(x,y)} \lambda(-z',-z) (z'-z)(x'-x) (u_1(-z')-u_1(-z)) d \bfx' \\
&- \int_{-h}^h \int_{B^{2D}_{r}(x,y)} \lambda(-z',-z) (z'-z)(y'-y) (u_2(-z')-u_2(-z)) d \bfx' \\
&+ \int_{-h}^h \int_{B^{2D}_{r}(x,y)} \lambda(-z',-z) (z'-z)^2 (u_3(-z')-u_3(-z)) d \bfx' + b_3(-z). \\
\end{split}
\end{equation}
Next, we apply Assumptions (P$\sigma$\ref{assump:PSs5Peri}), (P$\sigma$\ref{assump:PSs2Peri}), and (P$\sigma$\ref{assump:PSs7Peri}) to \eqref{eqn:planestressinitialeqnmotionlemmasymmetryterm1} to obtain
\begin{equation}\label{eqn:planestressinitialeqnmotionlemmasymmetryterm2}
\begin{split}
\rho \ddot{u}_1(-z) ={}& \int_{-h}^h \int_{B^{2D}_{r}(x,y)} \lambda(z',z) (x'-x)^2 (u_1(-z')-u_1(-z)) d \bfx' \\
&+ \int_{-h}^h \int_{B^{2D}_{r}(x,y)} \lambda(z',z) (x'-x)(y'-y) (u_2(-z')-u_2(-z)) d \bfx' \\
&- \int_{-h}^h \int_{B^{2D}_{r}(x,y)} \lambda(z',z) (x'-x)(z'-z) (u_3(-z')-u_3(-z)) d \bfx' + b_1(z), \\
\rho \ddot{u}_2(-z) ={}& \int_{-h}^h \int_{B^{2D}_{r}(x,y)} \lambda(z',z) (y'-y)(x'-x) (u_1(-z')-u_1(-z)) d \bfx' \\
&+ \int_{-h}^h \int_{B^{2D}_{r}(x,y)} \lambda(z',z) (y'-y)^2 (u_2(-z')-u_2(-z)) d \bfx' \\
&- \int_{-h}^h \int_{B^{2D}_{r}(x,y)} \lambda(z',z) (y'-y)(z'-z) (u_3(-z')-u_3(-z)) d \bfx' + b_2(z), \\
\rho \ddot{u}_3(-z) ={}& - \int_{-h}^h \int_{B^{2D}_{r}(x,y)} \lambda(z',z) (z'-z)(x'-x) (u_1(-z')-u_1(-z)) d \bfx' \\
&- \int_{-h}^h \int_{B^{2D}_{r}(x,y)} \lambda(z',z) (z'-z)(y'-y) (u_2(-z')-u_2(-z)) d \bfx' \\
&+ \int_{-h}^h \int_{B^{2D}_{r}(x,y)} \lambda(z',z) (z'-z)^2 (u_3(-z')-u_3(-z)) d \bfx'. \\
\end{split}
\end{equation}
Recalling \eqref{eqn:usymmperiplanestress}, we may rewrite \eqref{eqn:planestressinitialeqnmotionlemmasymmetryterm2} as 
\begin{equation}\label{eqn:planestressinitialeqnmotionlemmasymmetryterm3}
\begin{split}
\rho \ddot{v}_1(z) ={}& \int_{-h}^h \int_{B^{2D}_{r}(x,y)} \lambda(z',z) (x'-x)^2 (v_1(z')-v_1(z)) d \bfx' \\
&+ \int_{-h}^h \int_{B^{2D}_{r}(x,y)} \lambda(z',z) (x'-x)(y'-y) (v_2(z')-v_2(z)) d \bfx' \\
&+ \int_{-h}^h \int_{B^{2D}_{r}(x,y)} \lambda(z',z) (x'-x)(z'-z) (v_3(z')-v_3(z)) d \bfx' + b_1(z), \\
\rho \ddot{v}_2(z) ={}& \int_{-h}^h \int_{B^{2D}_{r}(x,y)} \lambda(z',z) (y'-y)(x'-x) (v_1(z')-u_1(z)) d \bfx' \\
&+ \int_{-h}^h \int_{B^{2D}_{r}(x,y)} \lambda(z',z) (y'-y)^2 (v_2(z')-v_2(z)) d \bfx' \\
&+ \int_{-h}^h \int_{B^{2D}_{r}(x,y)} \lambda(z',z) (y'-y)(z'-z) (v_3(z')-v_3(z)) d \bfx' + b_2(z), \\
\rho \ddot{v}_3(z) ={}& \int_{-h}^h \int_{B^{2D}_{r}(x,y)} \lambda(z',z) (z'-z)(x'-x) (v_1(z')-v_1(z)) d \bfx' \\
&+ \int_{-h}^h \int_{B^{2D}_{r}(x,y)} \lambda(z',z) (z'-z)(y'-y) (v_2(z')-v_2(z)) d \bfx' \\
&+ \int_{-h}^h \int_{B^{2D}_{r}(x,y)} \lambda(z',z) (z'-z)^2 (v_3(z')-v_3(z)) d \bfx' . \\
\end{split}
\end{equation}
Comparing \eqref{eqn:planestressinitialeqnmotionlemmasymmetryterm3} with \eqref{eqn:planestressinitialeqnmotionlemmasymmetry}, we see under the assumptions of Lemma \ref{lem:symmplanestress}, $\bfv$ satisfies \eqref{eqn:linearperieqn} whenever $\bfu$ does. \qed

\end{proof}


\subsection{Proof of Lemma~\ref{lemma:zeroavgtraction}}\label{lemma:zeroavgtractionapp}

\begin{proof}

We begin by writing out the limits of integration for the third component of the peridynamic traction \eqref{eqn:peritraction}:
\begin{equation*}
\tau_3(\bfx,t,\bfe_3) = \int_0^\delta \int_{z}^{z-s+\delta}  \int_{B^{\text{2D}}_{\hat{r}}(x,y)}  f_3 (\bfu(\bfx',t)-\bfu(\bfx-s \bfe_3,t),\bfx',(\bfx-s \bfe_3)) d \bfx' ds,
\end{equation*}
where $\hat{r} = \sqrt{\delta^2-(z'-(z-s))^2}$. Next, we apply the change of variable $s \rightarrow z- \hat{z}$ to obtain
\begin{equation*}
\tau_3 (\bfx,t,\bfe_3) = \int_{z-\delta}^{z} \int_{z}^{\hat{z}+\delta}  \int_{B^{\text{2D}}_{\hat{r}}(x,y)}  f_3 (\bfu(\bfx',t)-\bfu(\hat{\bfx},t),\bfx', \hat{\bfx} ) d \bfx' d\hat{z},
\end{equation*}
where $\hat{\bfx} = \left[ \begin{array}{c} x \\ y \\ \hat{z} \end{array} \right]$ and $\hat{r} = \sqrt{\delta^2 - (z'-\hat{z})^2}$. Taking the average of the stress $\tau_3(\bfx,t,\bfe_3)$ over the thickness of the plate, we find
\begin{equation}\label{eqn:avgstressinitial}
\overline{\tau_3 (\bfx,t,\bfe_3)} = \frac{1}{2h} \int_{-h}^h \int_{z-\delta}^{z} \int_{z}^{\hat{z}+\delta}  \int_{B^{\text{2D}}_{\hat{r}}(x,y)}  f_3(\bfu(\bfx',t)-\bfu(\hat{\bfx},t),\bfx',\hat{\bfx} ) d \bfx' d\hat{z} dz.
\end{equation}
Appealing to Assumptions (P$\sigma$\ref{assump:PSs1Peri}) and (P$\sigma$\ref{assump:PSs3Peri}), we see the integrand of \eqref{eqn:avgstressinitial} is null unless $|\hat{z}| \leqslant h$ and $|z| \leqslant h$. Thus, we may rewrite \eqref{eqn:avgstressinitial} as (noting $h < \hat{z}+\delta$ and $-h > z-\delta$)
\begin{equation}\label{eqn:avgstressnoexteral}
\overline{\tau_3(\bfx,t,\bfe_3)} = \frac{1}{2h} \int_{-h}^h \int_{-h}^{z} \int_{z}^{h}  \int_{B^{\text{2D}}_{\hat{r}}(x,y)}  f_3(\bfu(\bfx',t)-\bfu(\hat{\bfx},t),\bfx', \hat{\bfx} ) d \bfx' d\hat{z} dz.
\end{equation}
Next, we change the order of integration so that the integral in $z$ comes first. For clarity, we do this in steps by first changing the order of integration between $\hat{z}$ and $z$. Viewing Figure~\ref{Fig:FirstChangeIntegrationAverageStress}, we deduce the new limits of integration after changing the order of integration between $\hat{z}$ and~$z$: 
\begin{equation}\label{eqn:avgstressfirstchange}
\overline{\tau_3(\bfx,t,\bfe_3)} = \frac{1}{2h} \int_{-h}^h \int_{\hat{z}}^{h} \int_{z}^{h}  \int_{B^{\text{2D}}_{\hat{r}}(x,y)}  f_3 (\bfu(\bfx',t)-\bfu(\hat{\bfx},t),\bfx', \hat{\bfx} ) d \bfx' dz d\hat{z}. 
\end{equation}

\begin{figure}
\begin{center}
\begin{tikzpicture}[scale=0.8]
\draw[help lines, color=gray!30, dashed] (-4.9,-4.9) grid (4.9,4.9);
\draw[->,ultra thick] (-5,0)--(5,0) node[right]{$z$};
\draw[->,ultra thick] (0,-5)--(0,5) node[above]{$\hat{z}$};
\draw[-,ultra thick] (-0.2,4)--(0.2,4) node[right]{$h$};
\draw[-,ultra thick] (-0.2,-4)--(0.2,-4) node[right]{$-h$};
\draw[-,ultra thick] (4,-0.2)--(4,0.2) node[above]{$h$};
\draw[-,ultra thick] (-4,-0.2)--(-4,0.2) node[above]{$-h$};

\path[fill=blue!70,opacity = 0.3] (-4,-4) -- (4,-4) -- (4,4) --cycle;
\end{tikzpicture}
\caption{The $z$ and $\hat{z}$ region of integration for $\overline{\tau_3(\bfx,t,\bfe_3)}$ in \eqref{eqn:avgstressnoexteral}.} \label{Fig:FirstChangeIntegrationAverageStress}
\end{center}
\end{figure}
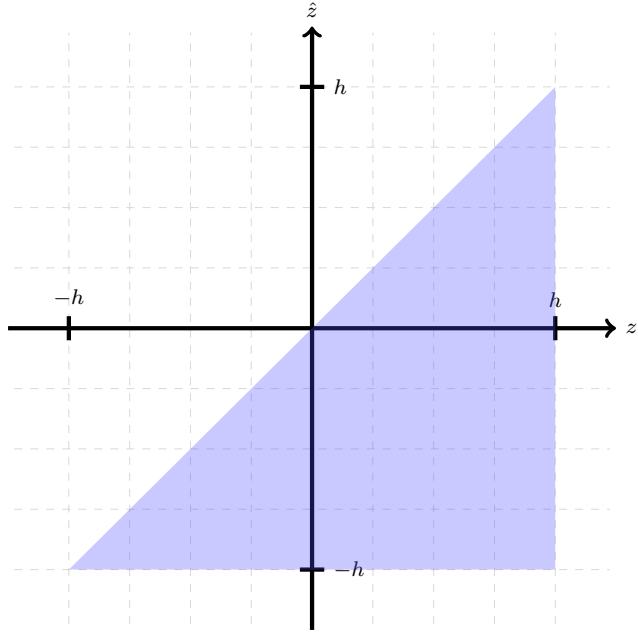

Viewing Figure \ref{Fig:SecondChangeIntegrationAverageStress}, we deduce the new limits of integration after changing the order of integration between $z'$ and $z$ in \eqref{eqn:avgstressfirstchange}:  
\begin{equation}\label{eqn:avgstresssecondchange}
\overline{\tau_3(\bfx,t,\bfe_3)} = \frac{1}{2h} \int_{-h}^h \int_{\hat{z}}^{h} \int_{\hat{z}}^{z'}  \int_{B^{\text{2D}}_{\hat{r}}(x,y)} f_3(\bfu(\bfx',t)-\bfu(\hat{\bfx},t),\bfx', \hat{\bfx} ) dx' dy' dz dz' d\hat{z}. 
\end{equation}

\begin{figure}
\begin{center}
\begin{tikzpicture}[scale=0.8]
\draw[help lines, color=gray!30, dashed] (-4.9,-4.9) grid (4.9,4.9);
\draw[->,ultra thick] (-5,0)--(5,0) node[right]{$z$};
\draw[->,ultra thick] (0,-5)--(0,5) node[above]{$z'$};
\draw[-,ultra thick] (-0.2,4)--(0.2,4) node[right]{$h$};
\draw[-,ultra thick] (-0.2,-4)--(0.2,-4) node[right]{$-h$};
\draw[-,ultra thick] (4,-0.2)--(4,0.2) node[above]{$h$};
\draw[-,ultra thick] (-4,-0.2)--(-4,0.2) node[above]{$-h$};
\draw[-,ultra thick] (-0.2,-2)--(0.2,-2) node[right]{$\hat{z}$};
\draw[-,ultra thick] (-2,-0.2)--(-2,0.2) node[above]{$\hat{z}$};

\path[fill=blue!70,opacity = 0.3] (-2,-2) -- (-2,4) -- (4,4) --cycle;
\end{tikzpicture}
\caption{The $z$ and $z'$ region of integration for $\overline{\tau_3(\bfx,t,\bfe_3)}$ in \eqref{eqn:avgstressfirstchange}.} \label{Fig:SecondChangeIntegrationAverageStress}
\end{center}
\end{figure}

Since $B^{\text{2D}}_{\hat{r}}(x,y)$ is independent of $z$ we may integrate over $z$ first to obtain
\begin{equation}\label{eqn:avgplanestressprechangofvariablestop}
\begin{split}
\overline{\tau_3(\bfx,t,\bfe_3)} ={}& \frac{1}{2h} \int_{-h}^h \int_{\hat{z}}^{h}  \int_{B^{\text{2D}}_{\hat{r}}(x,y)} \int_{\hat{z}}^{z'} f_3 (\bfu(\bfx',t)-\bfu(\hat{\bfx},t),\bfx',\hat{\bfx} ) dz dx' dy' dz' d\hat{z}\\
={}& \frac{1}{2h} \int_{-h}^h \int_{\hat{z}}^{h}  \int_{B^{\text{2D}}_{\hat{r}}(x,y)} (z'-\hat{z}) f_3 (\bfu(\bfx',t)-\bfu(\hat{\bfx},t),\bfx', \hat{\bfx} ) dx' dy' dz' d\hat{z}.
 \end{split} 
\end{equation}
Note that 
\begin{equation*}
    (z'-\hat{z}) f_3(\bfu(\bfx',t)-\bfu(\hat{\bfx},t),\bfx',\hat{\bfx}) = \lambda(\bfx',\hat{\bfx}) (z'-\hat{z})^2 (x_j' -\hat{x}_j) (u_j(\bfx',t)-u_j(\bfx,t))
\end{equation*} 
is invariant under the transformation $(z',\hat{z}) \rightarrow (-z',-\hat{z})$ by Lemma \ref{lem:symmplanestress} and \eqref{assump:MonoclinicPlaneStress}. Thus, applying the change of variables $z' \rightarrow -z'$ and $\hat{z} \rightarrow -\hat{z}$ to \eqref{eqn:avgplanestressprechangofvariablestop} results in 
\begin{equation}\label{eqn:avgplanestressprechangofvariablesbot}
\begin{split}
\overline{\tau_3(\bfx,t,\bfe_3)} ={}& \frac{1}{2 h} \int_{-h}^h \int_{-h}^{\hat{z}}  \int_{B^{\text{2D}}_{\hat{r}}(x,y)} (z'-\hat{z}) f_3 (\bfu(\bfx',t)-\bfu(\hat{\bfx},t),\bfx', \hat{\bfx} ) dx' dy' dz' d\hat{z}.
 \end{split} 
\end{equation}
We sum \eqref{eqn:avgplanestressprechangofvariablestop} and \eqref{eqn:avgplanestressprechangofvariablesbot} to obtain
\begin{equation}\label{eqn:avgplanestressprechangofvariablesfinal}
\begin{split}
2\overline{\tau_3(\bfx,t,\bfe_3)} ={}& \frac{1}{2 h} \int_{-h}^h \int_{-h}^{h}  \int_{B^{\text{2D}}_{\hat{r}}(x,y)} (z'-\hat{z}) f_3 (\bfu(\bfx',t)-\bfu(\hat{\bfx},t),\bfx', \hat{\bfx} ) dx' dy' dz' d\hat{z}.
 \end{split} 
\end{equation}
Since $\lambda(\bfx',\hat{\bfx}) = 0$ when $\| \bfx' - \hat{\bfx} \| \geqslant \delta$, we may extend the region of integration so that 
\begin{equation}\label{eqn:avgplanestressprechangofvariablesfinalfinal}
\begin{split}
2\overline{\tau_3(\bfx,t,\bfe_3)} ={}& \frac{1}{2 h} \int_{-h}^h \int_{-h}^{h}  \int_{B^{\text{2D}}_{\delta}(x,y)} (z'-\hat{z}) f_3 (\bfu(\bfx',t)-\bfu(\hat{\bfx},t),\bfx', \hat{\bfx} ) dx' dy' dz' d\hat{z}.
 \end{split} 
\end{equation}

Dividing both sides of \eqref{eqn:avgplanestressprechangofvariablesfinalfinal} by $2$ and then relabeling $\hat{z}$ with $z$ completes the proof. \qed

\end{proof}


\subsection{Proof of Lemma~\ref{lem:PlaneStrainu3SubEstimates}}\label{lem:PlaneStrainu3SubEstimatesapp}

\begin{proof}
We start by developing an estimate of $\frac{\partial u_3}{\partial z} (x,y,0,t)$. By Assumption (P$\sigma$\ref{assump:PSs4Peri}), $\overline{\tau_3(\bfx,t,\bfe_3)}~=~0$ and, appealing to Lemma \ref{lemma:zeroavgtraction}, we obtain
\begin{equation}\label{eqn:zerotractioncond}
\begin{split}
&0 = \frac{1}{4h} \int_{-h}^h \int_{-h}^h \int_{B^{2D}_{\delta}(x,y)} \lambda(\bfx',\bfx) \xi^2_3 \xi_j (u_j(\bfx',t)- u_j(\bfx,t)) d \bfx' d z.
\end{split}
\end{equation}
By solving for the integral containing $u_3$ in \eqref{eqn:zerotractioncond}, we find
\begin{equation}\label{eqn:avgperistresssolvedu3}
\begin{split}
&\int_{-h}^h \int_{-h}^h \int_{B^{2D}_{\delta}(x,y)} \lambda(\bfx',\bfx) \xi^3_3 (u_3(\bfx',t)- u_3(\bfx,t)) d \bfx' d z \\ &= -\int_{-h}^h \int_{-h}^h \int_{B^{2D}_{\delta}(x,y)} \lambda(\bfx',\bfx) \xi^2_3 \left[ \xi_1 (u_1(\bfx',t)- u_1(\bfx,t))  + \xi_2 (u_2(\bfx',t)- u_2(\bfx,t)) \right]  d \bfx' d z.
\end{split}
\end{equation}
Recalling Assumption (P$\sigma$\ref{assump:PSs6Peri}), we perform a second-order Taylor expansion in $u_3(\bfx',t)$ and $u_3(\bfx,t)$ about the point $(x,y,0,t)$ on the left-hand side of \eqref{eqn:avgperistresssolvedu3} to get
\begin{equation}\label{eqn:taylorexpandu3subterm1}
\begin{split}
&\int_{-h}^h \int_{-h}^h \int_{B^{2D}_{\delta}(x,y)} \lambda(\bfx',\bfx) \xi^3_3 (u_3(\bfx',t)- u_3(\bfx,t)) d \bfx' d z \\
&\approx \int_{-h}^h \int_{-h}^h \int_{B^{2D}_{\delta}(x,y)} \lambda(\bfx',\bfx) \xi^3_3  \left[ u_3 + \frac{\partial u_3}{\partial x} \xi_1 + \frac{\partial u_3}{\partial y} \xi_2 + \frac{\partial u_3}{\partial z} z'  \right. \\
&\hspace*{1.7in} + \frac{1}{2} \frac{\partial^2 u_3}{\partial x^2} \xi_1^2 + \frac{1}{2} \frac{\partial^2 u_3}{\partial y^2} \xi_2^2 + \frac{1}{2} \frac{\partial^2 u_3}{\partial z^2} z'^2 + \frac{\partial^2 u_3}{\partial x \partial y} \xi_1 \xi_2 + \frac{\partial^2 u_3}{\partial x \partial z} \xi_1 z' + \frac{\partial^2 u_3}{\partial y \partial z} \xi_2 z'  \\
&\hspace*{1.7in} \left. - \left( u_3 + \frac{\partial u_3}{\partial z} z + \frac{1}{2} \frac{\partial^2 u_3}{\partial z^2} z^2 \right) \right] d \bfx' d z,
\end{split}
\end{equation}
where each occurrence of $u_3$ or its derivatives is evaluated at $(x,y,0,t)$. Note that many of the terms in \eqref{eqn:taylorexpandu3subterm1} are antisymmetric in $(z,z')$, i.e., $g(z,z') = -g(-z,-z')$, by Assumption~(P$\sigma$\ref{assump:PSs7Peri}). Since we are integrating over the symmetric domain $[-h,h] \times [-h,h]$, those terms are negated under integration and \eqref{eqn:taylorexpandu3subterm1} simplifies to
\begin{equation}\label{eqn:taylorexpandu3subterm2}
\begin{split}
&\int_{-h}^h \int_{-h}^h \int_{B^{2D}_{\delta}(x,y)} \lambda(\bfx',\bfx) \xi^3_3 (u_3(\bfx',t)- u_3(\bfx,t)) d \bfx' d z \\
&\approx \int_{-h}^h \int_{-h}^h \int_{B^{2D}_{\delta}(x,y)} \lambda(\bfx',\bfx) \xi^3_3 \left( \frac{\partial u_3}{\partial z} (z' -z) + \frac{\partial^2 u_3}{\partial x \partial z} \xi_1 z' + \frac{\partial^2 u_3}{\partial y \partial z} \xi_2 z' \right) d \bfx' d z. \\
\end{split}
\end{equation}
Perform the change of variables $(x',y') \rightarrow (x + \xi_1,y+\xi_2)$ and recall \eqref{eqn:zposaware} to obtain
\begin{equation}\label{eqn:taylorexpandu3subterm3}
\begin{split}
&\int_{-h}^h \int_{-h}^h \int_{B^{2D}_{\delta}(x,y)} \lambda(\bfx',\bfx) \xi^3_3 (u_3(\bfx',t)- u_3(\bfx,t)) d \bfx' d z \\
&\approx \int_{-h}^h \int_{-h}^h \int_{B^{2D}_{\delta}(\bf0)} \lambda(\xi_1,\xi_2,z',z) \xi^3_3 \left( \frac{\partial u_3}{\partial z} \xi_3 +  \frac{\partial^2 u_3}{\partial x \partial z} \xi_1 z' + \frac{\partial^2 u_3}{\partial y \partial z} \xi_2 z' \right) d \xi_1 d \xi_2 dz' d z. \\
\end{split}
\end{equation}

From \eqref{eqn:lambdasymm} and \eqref{assump:MonoclinicPlaneStress} we have
\begin{equation}\label{eqn:lambdaeveninxi1xi2}
\lambda(\xi_1,\xi_2,z',z) = \lambda(-\xi_1,-\xi_2,z,z') = \lambda(-\xi_1,-\xi_2,z',z),
\end{equation}
i.e., $\lambda(\xi_1,\xi_2,z',z)$ is invariant under the transformation $(\xi_1,\xi_2) \rightarrow (-\xi_1,-\xi_2)$. Since the limits of integration in $(\xi_1,\xi_2)$ are over $B_\delta^{2D}(\bf0)$, we may employ antisymmetry to further reduce \eqref{eqn:taylorexpandu3subterm3} to

\begin{equation}\label{eqn:taylorexpandu3subterm4}
\begin{split}
\int_{-h}^h \int_{-h}^h \int_{B^{2D}_{\delta}(x,y)} \lambda(\bfx',\bfx) \xi^3_3 (u_3(\bfx',t)- u_3(\bfx,t)) d \bfx' d z \approx{}& \int_{-h}^h \int_{-h}^h \int_{B^{2D}_{\delta}(\bf0)} \lambda(\bfx',\bfx) \xi^4_3 d \xi_1 d \xi_2 d z' d z \frac{\partial u_3}{\partial z} (x,y,0,t).
\end{split}
\end{equation}
Solving for $\frac{\partial u_3}{\partial z}(x,y,0,t)$ in \eqref{eqn:taylorexpandu3subterm4} provides us with the estimate \footnote{Through the relationship \eqref{eqn:lambdacondfullgenb} or the relaxed formulation \eqref{eqn:CijklRelationAverage}, the denominator in~\eqref{eqn:taylorexpandu3subterm5} will be $4h C_{3333}$, which is positive for stable materials \cite{cowley1976acoustic}.}
\begin{equation}\label{eqn:taylorexpandu3subterm5}
\begin{split}
\frac{\partial u_3}{\partial z} (x,y,0,t) \approx \frac{ \int_{-h}^h \int_{-h}^h \int_{B^{2D}_{\delta}(x,y)} \lambda(\bfx',\bfx) \xi^3_3 (u_3(\bfx',t)- u_3(\bfx,t)) d \bfx' d z }{\int_{-h}^h \int_{-h}^h \int_{B^{2D}_{r}(\bf0)} \lambda(\bfx',\bfx) \xi^4_3 d \xi_1 d \xi_2 d z' d z} .
\end{split}
\end{equation}
Recalling \eqref{eqn:avgperistresssolvedu3}, we may replace the numerator of \eqref{eqn:taylorexpandu3subterm5} with an expression solely in $u_1$ and $u_2$:
\begin{equation}\label{eqn:planestressfirstestimate}
\begin{split}
\frac{\partial u_3}{\partial z} (x,y,0,t) \approx -\frac{\int_{-h}^h \int_{-h}^h \int_{B^{2D}_{\delta}(x,y)} \lambda(\bfx',\bfx) \xi^2_3 \left[ \xi_1 (u_1(\bfx',t)- u_1(\bfx,t))  + \xi_2 (u_2(\bfx',t)- u_2(\bfx,t)) \right]  d \bfx' d z}{\int_{-h}^h \int_{-h}^h \int_{B^{2D}_{r}(\bf0)} \lambda(\bfx',\bfx) \xi^4_3 d \xi_1 d \xi_2 d z' d z}.
\end{split}
\end{equation}

Now that we have the estimate \eqref{eqn:planestressfirstestimate}, the derivation of \eqref{eqn:PlaneStrainu3Subeqn2} is fairly straightforward. By again employing Assumption (P$\sigma$\ref{assump:PSs6Peri}), we apply a second-order Taylor expansion to the left-hand side of \eqref{eqn:PlaneStrainu3Subeqn2} about $(x',y',0,t)$ for $u_3(\bfx',t)$ and $(x,y,0,t)$ for $u_3(\bfx)$ to find (for $i = 1,2$)
\begin{equation}\label{eqn:eqn:taylorexpandu3subterm5}
\begin{split}
&\int_{-h}^h \int_{-h}^h \int_{B^{2D}_\delta(x,y)} \lambda(\bfx',\bfx) \xi_i \xi_3 (u_3(\bfx',t) - u_3(\bfx,t)) d \bfx' d z  \\
&\approx \int_{-h}^h \int_{-h}^h \int_{B^{2D}_\delta(x,y)} \lambda(\bfx',\bfx) \xi_i \xi_3 \left[ u_3(x',y',0,t) + z'\frac{\partial u_3}{\partial z} (x',y',0,t)  + \frac{z'^2}{2} \frac{ \partial^2 u_3}{\partial z'^2}(x',y',0,t) \right. \\
&\hspace*{1.8in} \left. -u_3(x,y,0,t) - z\frac{\partial u_3}{\partial z} (x,y,0,t)  - \frac{z^2}{2} \frac{ \partial^2 u_3}{\partial z^2}(x,y,0,t) \right] d \bfx' d z.
\end{split}
\end{equation}
We again exploit antisymmetry with respect to the transformation $(z',z) \rightarrow (-z',-z)$ to reduce \eqref{eqn:eqn:taylorexpandu3subterm5} to
\begin{equation}\label{eqn:approxbeforeantisymmremoval}
\begin{split}
&\int_{-h}^h \int_{-h}^h \int_{B^{2D}_\delta(x,y)} \lambda(\bfx',\bfx) \xi_i \xi_3 (u_3(\bfx',t) - u_3(\bfx,t)) d \bfx' d z \\
&\approx \int_{-h}^h \int_{-h}^h \int_{B^{2D}_\delta(x,y)} \lambda(\bfx',\bfx) \xi_i \xi_3 \left( z'\frac{\partial u_3}{\partial z} (x',y',0,t)  - z\frac{\partial u_3}{\partial z} (x,y,0,t)  \right) d \bfx' d z.
\end{split}
\end{equation}
Recall from \eqref{eqn:lambdaeveninxi1xi2} that $\lambda$ is symmetric with respect to $(\xi_1,\xi_2) \rightarrow (-\xi_1,-\xi_2)$. Thus, the second term in the integrand in \eqref{eqn:approxbeforeantisymmremoval} is antisymmetric in $(\xi_1,\xi_2)$ and therefore we may reduce \eqref{eqn:approxbeforeantisymmremoval} to  
\begin{equation}\label{eqn:approxafterantisymmremoval}
\begin{split}
&\int_{-h}^h \int_{-h}^h \int_{B^{2D}_\delta(x,y)} \lambda(\bfx',\bfx) \xi_i \xi_3 (u_3(\bfx',t) - u_3(\bfx,t)) d \bfx' d z \approx \int_{-h}^h \int_{-h}^h \int_{B^{2D}_r(x,y)} \lambda(\bfx',\bfx) \xi_i \xi_3  z'\frac{\partial u_3}{\partial z} (x',y',0,t)  d \bfx' d z.
\end{split}
\end{equation}
By interchanging $z$ and $z'$ and employing \eqref{assump:MonoclinicPlaneStress}, we have
\begin{equation}\label{eqn:getz'termalso}
\int_{-h}^h \int_{-h}^h \int_{B^{2D}_\delta(x,y)} \lambda(\bfx',\bfx) \xi_i \xi_3  z'\frac{\partial u_3}{\partial z} (x',y',0,t)  d \bfx' d z = - \int_{-h}^h \int_{-h}^h \int_{B^{2D}_\delta(x,y)} \lambda(\bfx',\bfx) \xi_i \xi_3  z \frac{\partial u_3}{\partial z} (x',y',0,t)  d \bfx' d z.
\end{equation}
Since the right-hand sides of \eqref{eqn:approxafterantisymmremoval} and \eqref{eqn:getz'termalso} are equivalent, we may sum them and divide by 2 in order to reformulate \eqref{eqn:approxafterantisymmremoval} as 
\begin{equation}\label{eqn:approxafterantisymmremovalfinal}
\begin{split}
\int_{-h}^h \int_{-h}^h \int_{B^{2D}_\delta(x,y)} \lambda(\bfx',\bfx) \xi_i \xi_3 (u_3(\bfx',t) - u_3(\bfx,t)) d \bfx' d z 
\approx{}& \frac{1}{2} \int_{-h}^h \int_{-h}^h \int_{B^{2D}_\delta(x,y)} \lambda(\bfx',\bfx) \xi_i \xi_3^2 \frac{\partial u_3}{\partial z} (x',y',0,t)  d \bfx' d z.
\end{split}
\end{equation}
Replacing the term $\frac{\partial u_3}{\partial z}(x',y',0,t)$ in \eqref{eqn:approxafterantisymmremovalfinal} with the corresponding estimate from \eqref{eqn:planestressfirstestimate} completes the proof. \qed
\end{proof}


\subsection{Proof of Lemma~\ref{lemma:getubar}}\label{lemma:getubarapp}

\begin{proof}
This proof has two steps. First, we relate $u_1(\bfx,t)$ and $u_2(\bfx,t)$ to their corresponding evaluations on the plane $z=0$, $u_1(x,y,0,t)$ and $u_2(x,y,0,t)$. This is followed by approximating $u_1(x,y,0,t)$ and $u_2(x,y,0,t)$ with their corresponding averages $\overline{u}_1(x,y,t)$ and $\overline{u}_2(x,y,t)$. Combining the two results completes the proof.

To relate $u_1(\bfx,t)$ and $u_2(\bfx,t)$ to their evaluations on the plane $z=0$, we appeal to Assumption (P$\sigma$\ref{assump:PSs6Peri}) to perform a Taylor expansion:
\begin{equation}\label{eqn:relateutouonplane}
\begin{split}
u_i(\bfx,t) ={}&  u_i(x,y,0,t) + \sum_{n=1}^\infty \frac{1}{n!} \frac{\partial^n u_i}{\partial z^n}(x,y,0,t)z^n \\
={}&  u_i(x,y,0,t) + \sum_{k=1}^\infty \frac{1}{(2k)!} \frac{\partial^{2k} u_i}{\partial z^{2k}}(x,y,0,t)z^{2k} \\
 ={}&  u_i(x,y,0,t) + O(h^2). \\
\end{split}
\end{equation}

In the second equality of \eqref{eqn:relateutouonplane}, we appealed to Lemma \ref{lem:symmplanestress} so that
\begin{equation}\label{eqn:oddderivatives}
\frac{\partial^n u_i}{\partial z^n} (x,y,z,t) = (-1)^n \frac{\partial^n u_i}{\partial z^n}(x,y,-z,t)  \Rightarrow \frac{\partial^n u_i}{\partial z^n}(x,y,0,t) = 0 \; \; \text{for } n \text{ odd}.
\end{equation}

Similarly, by appealing to Assumption (P$\sigma$\ref{assump:PSs6Peri}), we perform a Taylor expansion to find

\begin{equation}\label{eqn:almostlinearinuegivesavg}
\begin{split}
\overline{u}_i (x,y,t) =& \frac{1}{2h} \int_{-h}^h u_i (\bfx,t) d z \\
=& \frac{1}{2h} \int_{-h}^h u_i(x,y,0,t) + \sum_{n=1}^\infty \frac{1}{n!} \frac{\partial^n u_i}{\partial z^n} (x,y,0,t) z^n  d z  \\
=& \frac{1}{2h} \int_{-h}^h u_i(x,y,0,t) + \sum_{k=1}^\infty \frac{1}{(2k)!} \frac{\partial^{2k} u_i}{\partial z^{2k}} (x,y,0,t) z^{2k}  d z  \\
=& u_i(x,y,0,t) + \frac{1}{h} \sum_{k=1}^\infty \frac{1}{(2k)!} \frac{\partial^{2k} u_i}{\partial z^{2k}} (x,y,0,t) \frac{h^{2k+1}}{2k+1} \\
=& u_i(x,y,0,t) + O(h^2). \\
\end{split}
\end{equation}
In the third line of \eqref{eqn:almostlinearinuegivesavg}, we utilized \eqref{eqn:oddderivatives}. Combining \eqref{eqn:relateutouonplane} and \eqref{eqn:almostlinearinuegivesavg} completes the proof.

\hspace*{0in} \qed
\end{proof}

\begin{acknowledgements}
Research sponsored by the Laboratory Directed Research and Development Program of Oak Ridge National Laboratory, managed by UT-Battelle, LLC, for the U. S. Department of Energy.
\end{acknowledgements}

\bibliographystyle{spmpsci}      
\bibliography{JetBib1.bib}

\end{document}